\documentclass[10pt]{article}
\usepackage{hyperref}
\usepackage{amsmath,amsthm}
\usepackage{caption}
\captionsetup[table]{position=bottom}
\usepackage{fullpage}
\usepackage{amsfonts}
\usepackage{changepage}
\usepackage{bookmark}
\usepackage{graphicx}
\usepackage[absolute]{textpos}
\usepackage{tikz}

\usetikzlibrary{shapes,snakes}
\def\boxit#1{\vbox{\hrule\hbox{\vrule\kern4pt
  \vbox{\kern1pt#1\kern1pt}
\kern2pt\vrule}\hrule}}
\usetikzlibrary{arrows}
\bibliographystyle{plainurl}
\usetikzlibrary{decorations.markings}

\newtheorem{lemma}{Lemma}
\newtheorem{theorem}{Theorem}

\newtheorem{claim}{Claim}

\newcommand{\finaltreewidth}{24}

\def\dist(#1){\text{dist}(#1)}

\newcommand{\defproblem}[3]{\par
 \vspace{3mm}
\noindent\fbox{
 \begin{minipage}{0.96\textwidth}
 \begin{tabular*}{\textwidth}{@{\extracolsep{\fill}}lr} #1  \vspace{1mm} \\ \end{tabular*}
 {\textbf{Input:}} #2%
	\vspace{1mm}\\%
 {\textbf{Question:}} #3%
 \end{minipage}
 }
 \vspace{3mm}
\par
}

\begin{document}

\title{Hardness of Metric Dimension in Graphs of Constant Treewidth%
\thanks{This research is a part of a project that have received funding from the European Research Council (ERC) under the European Union's Horizon 2020 research and innovation programme
Grant Agreement 714704.}}

\author{Shaohua Li \thanks{Institute of Informatics, University of Warsaw, Poland, \texttt{Shaohua.Li@mimuw.edu.pl}.}
\and Marcin Pilipczuk \thanks{Institute of Informatics, University of Warsaw, Poland, \texttt{malcin@mimuw.edu.pl}.}}

 \date{}

\maketitle

\begin{textblock}{20}(0, 12.5)
\includegraphics[width=40px]{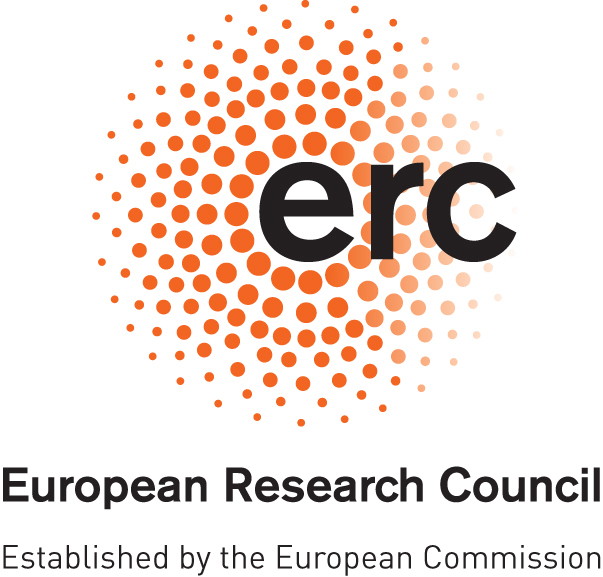}%
\end{textblock}
\begin{textblock}{20}(0, 13.4)
\includegraphics[width=40px]{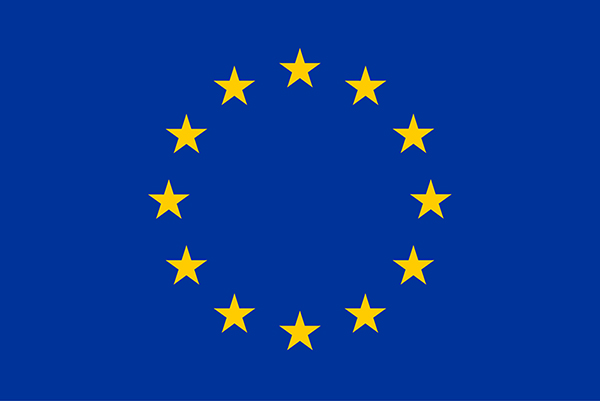}%
\end{textblock}

\begin{abstract}
The \textsc{Metric Dimension} problem asks for a minimum-sized \emph{resolving set} in a given (unweighted, undirected) graph $G$.
Here, a set $S \subseteq V(G)$ is \emph{resolving} if no two distinct vertices of $G$ have the same distance vector to $S$.
The complexity of \textsc{Metric Dimension} in graphs of bounded treewidth remained elusive in the past years.
Recently, Bonnet and Purohit~[IPEC 2019] showed that the problem is W[1]-hard under treewidth parameterization.
In this work, we strengthen their lower bound to show that \textsc{Metric Dimension} is NP-hard in graphs of treewidth $\finaltreewidth$.
\end{abstract}

\section{Introduction}
Let $G$ be an unweighted and undirected graph and let $S \subseteq V(G)$. For a vertex $v \in V(G)$, the \emph{distance vector}
from $v$ to $S$ is the assignment $S \ni w \mapsto \mathrm{dist}_G(v,w)$, where $\mathrm{dist}_G$ denotes the distance in the graph $G$.
The set $S$ is \emph{resolving} if a distance vector to $S$ uniquely determines the source vertex; that is, no two vertices of $G$ have the same distance vector to $S$.
The \textsc{Metric Dimension} problem asks for a resolving set of minimum possible size; such a set is sometimes called the \emph{metric basis} of $G$.
The decision version of \textsc{Metric Dimension} asks for a resolving set of size not exceeding a given threshold $k$.

\textsc{Metric Dimension} has already been introduced in 1970s~\cite{MDdef1,MDdef2}. Determining its computational complexity turned out to be quite challenging.
It is polynomial-time solvable on trees~\cite{MDdef1,MDdef2,KhullerRR96}, outerplanar graphs~\cite{DiazPSL17}, and chain graphs~\cite{FernauHHMS15}, but NP-hard for example on planar graphs~\cite{DiazPSL17} or split graphs~\cite{EpsteinLW15}.
From the parameterized complexity point of view, the FPT status of the \textsc{Metric Dimension} parameterized by the solution size has been open for a while and finally resolved
in negative by Hartung and Nichterlein~\cite{HartungN13}. In the search of a tractable structural parameterization, FPT algorithms has been shown for parameters: treelength plus maximum degree~\cite{BelmonteFGR17},
vertex cover number~\cite{HartungN13}, max leaf number~\cite{Eppstein15}, and modular-width~\cite{BelmonteFGR17}.

The above list misses probably the most important graph width measure, namely treewidth.
Determining the complexity of \textsc{Metric Dimension}, parameterized by treewidth, remained elusive in the last years, and has been asked a few times~\cite{BelmonteFGR17,DiazPSL17,Eppstein15}.
Bonnet and Purohit in a paper presented at IPEC 2019~\cite{BonnetP19} showed that the problem is W[1]-hard, even with pathwidth parameterization.
In this work we strengthed their result by proving para-NP-hardness of this parameterization.

\begin{theorem}\label{thm:main}
\textsc{Metric Dimension}, restricted to graphs of treewidth at most $\finaltreewidth$, is NP-hard.
\end{theorem}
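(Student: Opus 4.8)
The plan is to give a polynomial-time reduction from a suitable NP-complete problem --- for concreteness, \textsc{3-SAT} (a bounded-occurrence or multicolored variant, in the spirit of the Bonnet--Purohit construction, may be more convenient) --- that, from a formula $\varphi$ on variables $x_1,\dots,x_n$ with clauses $C_1,\dots,C_m$, produces a graph $G_\varphi$ and an integer $k$ such that $G_\varphi$ admits a resolving set of size at most $k$ if and only if $\varphi$ is satisfiable, while $\mathrm{tw}(G_\varphi)\le\finaltreewidth$ always. What makes this possible despite Courcelle-type meta-theorems is that metric dimension is a genuinely \emph{global} property: a pair of vertices in one region of the graph may be resolved --- or fail to be resolved --- by a vertex placed arbitrarily far away, the interaction being carried only by shortest-path distances. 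The construction exploits precisely this to shuttle information between variable gadgets and clause gadgets without ever routing an explicit ``wire'' between them, which is exactly what keeps the treewidth bounded by an absolute constant.

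Concretely, I would build $G_\varphi$ as a star of gadgets around a central vertex $r$. Hanging off $r$ is a long \emph{variable track}: a path of length $\Theta(n)$ decorated with constant-size gadgets, one per variable, each having two local configurations; the truth value of $x_i$ will be read from which vertices of the $i$-th gadget a resolving set is forced to contain. Forcing is the textbook mechanism: plant pairs of (false) twins --- e.g.\ two leaves on a common vertex --- so that every resolving set must contain at least one vertex of each such pair, and choose $k$ to equal the total number of these unavoidable picks plus the size of a small fixed ``reference'' gadget attached to $r$; the budget is then so tight that a size-$k$ resolving set has no freedom beyond choosing, for each planted pair, which of its two vertices to take, and these choices are exactly an assignment to $\varphi$. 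Also hanging off $r$, for each clause $C_j$ on variables $x_a,x_b,x_c$, is a constant-size \emph{clause gadget} from whose core grow three long \emph{probe paths} of lengths roughly $L-a$, $L-b$, $L-c$ for a common large scale $L$; the tip of the $a$-probe then lies at a distance from the $i$-th variable gadget that is an affine function of $i$, attaining a designated value exactly at $i=a$, one more at $i=a+1$, and so on, so that the tip ``senses'' the forced vertex placed around position $a$ and, through the offset or parity of the distance it observes, which of the two configurations was chosen there. Each clause gadget contains one designated hard-to-resolve pair, engineered so that it is resolved by some outside vertex if and only if at least one of the three probes senses a satisfying literal; if $C_j$ is violated, resolving that pair costs one extra vertex, which the budget forbids. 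Since no probe tip is ever joined by an edge to the variable track, the only real cut vertex between gadgets is $r$, and a tree decomposition shaped like this star --- with one bag per constant-size gadget, each containing $r$ --- has width a fixed constant; the explicit target $\finaltreewidth$ then reflects the internal complexity of the variable and clause gadgets needed to make the distance bookkeeping robust.

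Correctness has the two usual directions. For completeness, from a satisfying assignment I place in each variable gadget the twin vertices of the configuration encoding that variable's value, together with the fixed reference set, and verify that every pair of vertices of $G_\varphi$ is resolved: pairs inside one gadget by the locally planted vertices, pairs in different gadgets by distances routed through $r$, and the designated clause pairs by a probe sensing a true literal. For soundness, given a resolving set $S$ with $|S|\le k$, the tight budget forces $S$ to be exactly one vertex per planted twin pair plus the reference set --- hence $S$ codes an assignment --- and one shows that if some clause were left unsatisfied, its designated pair would be unresolved by $S$, a contradiction.

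I expect the main obstacle to be the exhaustive verification that \emph{all} pairs are resolved, carried out in lockstep with the treewidth bound. One must pin down, for every probe tip and every planted vertex, the exact distance between them; rule out ``accidental'' distance coincidences between unrelated gadgets (controlled by spacing the gadgets out along the track and by taking the scale $L$ generously large relative to everything else); and make sure that the decorations responsible for forcing the budget do not themselves introduce hard pairs that cannot be resolved within $k$. Doing all of this while never yielding to the temptation of adding a convenient long-range edge --- which would instantly wreck the treewidth --- is the delicate part; the rest is bookkeeping, and the constant that falls out of optimizing the gadgets is $\finaltreewidth$.
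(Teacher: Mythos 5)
Your sketch founders on two points, and they are precisely the two places where the actual construction has to work hardest. First, the star topology around a single central vertex $r$ cannot do what you need it to do. If every path from a variable gadget to a clause gadget passes through the single cut vertex $r$, then for any chosen vertex $v$ on the variable track and any pair $\{x,y\}$ inside a clause gadget we have $\dist(v,x)-\dist(v,y)=\dist(r,x)-\dist(r,y)$, a quantity independent of $v$. So either every outside vertex resolves the designated clause pair or none does; the probe paths, being dead-end paths hanging off the clause core, create no alternative routes and change nothing. A single separator vertex transmits only the one number $\dist(v,r)$, which cancels in every difference. This is exactly why the paper routes the two sides through a separator of nine vertices $\{a_r,b_r,c_r\}_{r\in\{1,2,3\}}$ with path lengths $\tfrac{M}{2}+10x$, $\tfrac{M}{2}+5x+1$, $\tfrac{M}{2}-10x$ calibrated so that \emph{which} of the three routes is shortest, and whether the middle route wins by its odd offset of $\pm 1$, depends on whether the index encoded by the chosen vertex matches the index of the pair (Lemma~\ref{resolve}). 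Getting a pair to be resolved by a faraway vertex \emph{conditionally} requires at least two competing routes through the separator whose relative order flips with the choice; your construction provides only one.

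Second, your encoding of the assignment is vacuous. If the two "configurations" of a variable gadget are the two members of a planted (false) twin pair, then every other vertex $z$ of the graph satisfies $\dist(z,u)=\dist(z,u')$ for the two twins $u,u'$; the choice between them is metrically invisible everywhere else and therefore cannot influence whether any clause pair is resolved. Twin pairs are useful only for forcing \emph{mandatory, information-free} vertices into the solution (the paper's forced vertex gadgets). To make the forced choice carry information, the paper instead uses forced \emph{set} gadgets: the pairs $\{p_i^1,q_i^1\},\{p_i^2,q_i^2\}$ are arranged (Lemma~\ref{forcedSet}) so that only a vertex of $X_i$ resolves both, which forces the solution to pick one of $m$ metrically \emph{distinct} candidates $s_i^1,\dots,s_i^m$, and it is the distances from that candidate to the separator that encode the choice. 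Your reduction from 3-SAT could in principle be repaired along these lines, but as written both the information channel (one cut vertex) and the information source (a twin choice) carry zero bits, so the soundness direction cannot go through.
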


Theorem~\ref{thm:main} brings us much closer to closing (unfortunately mostly in negative) the question of the complexity of \textsc{Metric Dimension}
in graphs of bounded treewidth. The remaining gap is to determine the exact treewidth value where the problem becomes hard: note that
it is open if \textsc{Metric Dimension} is polynomial-time solvable on graphs of treewidth $2$.

The proof of Theorem~\ref{thm:main} starts with a construction of a graph with  a separation of order $9$ over which \emph{a lot} of information on a partial solution to \textsc{Metric Dimension}
is transfered.
More formally, similarly as Bonnet and Purohit~\cite{BonnetP19}, we use the \textsc{Multicolored Resolving Set} problem as an auxiliary intermediate problem.
In this problem, the input graph is additionally equipped with an integer $k$, a tuple of $k$ disjoint vertex sets $X_1,X_2,\ldots,X_k$, and a set $\mathcal{P}$ of vertex pairs.
The goal is to choose a set $S$ consisting of exactly one vertex from each set $X_i$ so that for every $\{u,v\} \in \mathcal{P}$, the pair $\{u,v\}$ is resolved by $S$, that is,
$u$ and $v$ have different distance vectors to $S$.
In our construction, the sets $X_i$ are on one side of the said separation of order $9$, while the pairs $\mathcal{P}$ are on the second side. The crux of the construction
is to make every distance from a vertex of the separator to a chosen vertex of $S$ count: despite the fact that the separation has constant size, $S$ is of unbounded size, giving $\Omega(|S|)$ distances to work with.
Overall, the above gives a relatively clean reduction giving NP-hardness of \textsc{Multicolored Resolving Set} in graphs of constant treewidth, presented in Section~\ref{sec:MRS}.
This reduction is the main new insight and technical contribution of this paper.

Then, again similarly as in the work of Bonnet and Purohit~\cite{BonnetP19}, it takes a lot of effort (presented in Section~\ref{sec:MD})
  to turn the above reduction to \textsc{Multicolored Resolving Set} into a reduction to \textsc{Metric Dimension}. Here, there are two problems. First, one needs to introduce some gadgets to force the solution to take exactly one vertex from each set $X_i$. Second, one needs to ensure that the intended
solution resolves \emph{all} vertex pairs, not only the ones from $\mathcal{P}$.
For both problems, we borrow the tools from Bonnet and Purohit~\cite{BonnetP19}. In particular, the first problem is resolved by \emph{forced set gadgets} in a way very similar to~\cite{BonnetP19}.
The second problem is resolved by adding a number of new connections to the graph and \emph{forced vertex gadgets} of~\cite{BonnetP19}.
Thus, while the toolbox remains the same as in~\cite{BonnetP19}, the application is different as the graph we are working with is significantly different.
The construction is presented in Sections~\ref{ss:forced-set}-\ref{ss:forced-vertex}.

After applying all the modifications to obtain a \textsc{Metric Dimension} instance, one needs to check three aspects. First, one needs to ensure that the forced set gadgets work as intended,
forcing the solution to take one vertex from each $X_i$; this check is rather simple and very similar to the analogous check of~\cite{BonnetP19}.
Second, one needs to check that the introduced forced vertex gadgets, that contain extra vertices from the intended resolving set (apart from the ones in $X_i$s), do not accidentally
resolve any pair from $\mathcal{P}$. This check is not trivial, but still relatively simple. Note that the mentioned two properties already ensure one of the implications in the proof of the
correctness of the reduction: if the final \textsc{Metric Dimension} instance is a yes-instance, then the input instance of the source problem is a yes-instance.
These two checks are presented in Section~\ref{ss:easy-direction}.

Then one needs to check that every pair of vertices is resolved by an intended solution. Due to the complexity of the construction, this turned out to be very tedious
and spans essentially the second half of the volume of this paper (Section~\ref{ss:difficult-direction}).

Besides, we show that the treewidth of the constructed graph is bounded by a constant in Section~\ref{ss:treewidth}.

\section{Preliminaries}
In this paper, all graphs are undirected. In a graph $G$, let $V(G)$ be the set of vertices of $G$.
For a vertex $v\in V(G)$, we denote the open neighborhood and closed neighborhood of $v$ by $N_G(v)$ and $N_G[v]$ respectively (or just $N(v)$ and $N[v]$ if the graph is clear in the context).
For two vertices $u,v\in V(G)$, let $P(u,v)$ be a path from $u$ to $v$.
Since the graph is undirected, $P(u,v)$ and $P(v,u)$ denote the same path.
We denote the neighbor of $u$ on $P(u,v)$ by $N_u(u,v)$ (also the neighbor of $v$ on $P(u,v)$ by $N_v(u,v)$).
Similarly, if there is a path which is named as, for example, $P^{h}(i,j,x)$ such that $u$ is one endpoint of $P^{h}(i,j,x)$, we denote the neighbor of $u$ on $P^{h}(i,j,x)$ by $N^{h}_{u}(i,j,x)$.
For simplicity, we abuse the notation in the sense that for a path $P$, we use $P$ to denote the path or the vertex set of the path.
The meaning should be clear in the context.
We define the length of a path $P$ to be the number of edges on the path and denote it by $|P|$.
For two vertices $u,v\in V(G)$, we define the distance between $u$ and $v$ to be the length of any shortest path from $u$ to $v$, denoted by $\text{dist}_G(u,v)$.
Note that in this paper we use $\dist(u,v)$ to denote the distance between $u$ and $v$ mostly if the graph is clear in the context.
For a path $P$ of even length with two endpoints $u$ and $v$, let $w$ be the vertex on $P$ such that the length of the subpath of $P$ from $u$ to $w$ equals the length of the subpath of $P$ from $w$ to $v$.
Then we call $w$ the \emph{middle vertex} of $P$ and denote it by $\text{mid}(P)$.
We say that two distinct vertices $u,u'$ are \emph{false twins} if $N[u]=N[u']$.
Since a path decomposition is also a tree decomposition, the treewidth of a graph $G$ is at most the pathwidth of $G$.
In this paper, for convenience of the proof, we use the alternative characterization of pathwidth,
i.e. the pathwidth of a graph $G$ equals the node search number of $G$ minus one~\cite{kirousis1985interval}.

\section{Reduction from 3-Dimensional Matching to Multicolored Resolving Set}\label{sec:MRS}

Bonnet and Purohit introduced \textsc{$k$-Multicolored Resolving Set} as an intermediate problem in order to show the W[1]-hardness of \textsc{Metric Dimension} parameterized by treewidth~\cite{BonnetP19}.

\defproblem{\textsc{$k$-Multicolored Resolving Set}}
{An undirected graph $G=(V,E)$, an integer $k$, a set $\chi=\{X_1,...,X_k\}$ where $X_1,...,X_k$ are disjoint subsets of $V(G)$ and a set ${\cal P}=\{\{x_1,y_1\},...,\{x_h,y_h\}\}$ where $\{x_1,y_1\},...,\{x_h,y_h\}$ are vertex pairs of $G$. }
{Is there a set of $k$ vertices $S$ such that \\
(i) $|S\cap X_i|=1$ for every $i=1,...,k$, and \\
(ii) for every $\ell\in\{1,...,h\}$, there exists a vertex $v\in S$ such that dist($v$,$x_{\ell}$)$\neq$ dist($v$,$y_{\ell}$).}


We show that this problem is NP-hard on graphs of constant treewidth. We make a reduction from \textsc{$3$-dimensional matching}, which is well-known to be NP-hard~\cite{karp1972reducibility}.


\defproblem{\textsc{$3$-dimensional matching}}
{the universe $U=\{1,2,3\}\times [n]$ and a set ${\cal F}=\{A_1,...,A_m\}$ of tuples
such that for every $j\in [m]$, the tuple $A_j=\{(1,x),(2,y),(3,z)\}$ where $(1,x),(2,y),(3,z)\in U$. }
{are there $n$ tuples $A_{j_1},...,A_{j_n}$ such that $\bigcup\limits_{h=1}^{n} A_{j_h}=U$.}

Given an instance $(U,\cal F)$ of \textsc{$3$-dimensional matching} with the universe $U=\{1,2,3\}\times [n]$ and a set ${\cal F}$ of $m$ tuples $A_1,...,A_m\subseteq U$, we construct an instance $(G,n,\chi,\cal P)$ of \textsc{$n$-Multicolored Resolving Set} as follows. First, we create $m$ vertices $s_i^1,...,s_i^m$ as $X_i$ for each $i\in [n]$. Let $\chi=\{X_1,...,X_n\}$ and $X=\bigcup\limits_{i=1}^{n}X_i$.  Then we create $n$ vertex pairs $\{u_r^1,v_r^1\},...,\{u_r^n,v_r^n\}$ for each $r\in \{1,2,3\}$ and let ${\cal P}_r=\{\{u_r^i,v_r^i\}|i=1,...,n\}$. We create $3$ vertices $a_r,b_r,c_r$ and let $W_r=\{a_r,b_r,c_r\}$ for each $r\in\{1,2,3\}$. Let ${\cal P}={\cal P}_1\cup {\cal P}_2\cup {\cal P}_3$ and $W=W_1\cup W_2\cup W_3$. Finally, let $M=40(n+1)$. For each tuple $A_j=\{(1,x),(2,y),(3,z)\}$ ($j\in[m],x,y,z\in [n]$) of the given instance and each integer $i\in [n]$, we link $s_i^j$ to $a_1,b_1,c_1$ with three paths $P(s_i^j,a_1),P(s_i^j,b_1),P(s_i^j,c_1)$ of lengths $\frac{M}{2}+10x, \frac{M}{2}+5x+1$ and $\frac{M}{2}-10x$ respectively, link $s_i^j$ to $a_2,b_2,c_2$ with three paths $P(s_i^j,a_2),P(s_i^j,b_2),P(s_i^j,c_2)$ of lengths $\frac{M}{2}+10y, \frac{M}{2}+5y+1$ and $\frac{M}{2}-10y$ respectively, and link $s_i^j$ to $a_3,b_3,c_3$ with three paths $P(s_i^j,a_3),P(s_i^j,b_3),P(s_i^j,c_3)$ of lengths $\frac{M}{2}+10z, \frac{M}{2}+5z+1$ and $\frac{M}{2}-10z$ respectively. For every vertex pair $\{u_r^p,v_r^p\}$ ($p\in [n],r\in\{1,2,3\}$), we link $u_r^p$ to $a_r,b_r,c_r$ with three paths $P(u_r^p, a_r),P(u_r^p, b_r),P(u_r^p, c_r)$ of lengths $\frac{M}{2}-10p, \frac{M}{2}-5p-1$ and $\frac{M}{2}+10p$ respectively, and link $v_r^p$ to $a_r,b_r,c_r$ with three paths $P(v_r^p, a_r),P(v_r^p, b_r),P(v_r^p, c_r)$ of lengths $\frac{M}{2}-10p, \frac{M}{2}-5p-2$ and $\frac{M}{2}+10p$ respectively. This finishes the construction.

\begin{lemma} \label{resolve}
For an arbitrary vertex pair $\{u_r^x,v_r^x\}\in {\cal P}$ ($r\in \{1,2,3\}, x\in [n]$),$\{u_r^x,v_r^x\}$ is resolved by $s_i^j$ ($i\in [n],j\in [m]$) if and only if $(r,x)\in A_j$.
\end{lemma}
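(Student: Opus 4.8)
The plan is to pin down the two distances $\dist(s_i^j,u_r^x)$ and $\dist(s_i^j,v_r^x)$ exactly and then compare them. Write $A_j=\{(1,x_j),(2,y_j),(3,z_j)\}$ and let $w$ be the unique value in $[n]$ with $(r,w)\in A_j$ (so $w$ is $x_j$, $y_j$ or $z_j$ depending on $r$, and $(r,x)\in A_j$ is equivalent to $w=x$). With this notation the claim to prove becomes: $s_i^j$ resolves $\{u_r^x,v_r^x\}$ if and only if $w=x$.

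The first and only slightly subtle step is to show that a shortest path in $G$ from $s_i^j$ to $u_r^x$ (and likewise to $v_r^x$) is the concatenation of exactly two of the link paths created in the construction: a path $P(s_i^j,d)$ for some $d\in W_r=\{a_r,b_r,c_r\}$ followed by $P(u_r^x,d)$. Indeed, since the link paths are pairwise internally vertex-disjoint and all their internal vertices have degree $2$, every walk from $s_i^j$ to $u_r^x$ decomposes into a concatenation of entire link paths, alternately passing through a ``terminal'' vertex (some $s_{i'}^{j'}$, $u_{r'}^{p}$ or $v_{r'}^{p}$) and a ``hub'' vertex of $W$; hence it uses an even number, at least $2$, of link paths, and if it uses exactly $2$ then its middle vertex lies in $W_r$ because these are the only hubs incident to $u_r^x$. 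Every link path has length at least $\tfrac{M}{2}-10n>0$, so a walk using at least $4$ link paths has length at least $4(\tfrac{M}{2}-10n)=2M-40n>M$, the last inequality using $M=40(n+1)>40n$; on the other hand, the concatenation of $P(s_i^j,a_r)$ with $P(u_r^x,a_r)$ or of $P(s_i^j,c_r)$ with $P(u_r^x,c_r)$ is a walk of length $M+10(w-x)$ or $M-10(w-x)$, one of which is at most $M$. Therefore the shortest path uses exactly two link paths, so
\[
  \dist(s_i^j,u_r^x)=\min_{d\in W_r}\bigl(|P(s_i^j,d)|+|P(u_r^x,d)|\bigr),
\]
and analogously for $v_r^x$. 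Substituting the prescribed lengths yields
\[
  \dist(s_i^j,u_r^x)=\min\bigl(M+10(w-x),\;M+5(w-x),\;M-10(w-x)\bigr),
\]
\[
  \dist(s_i^j,v_r^x)=\min\bigl(M+10(w-x),\;M+5(w-x)-1,\;M-10(w-x)\bigr).
\]

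It then remains to do a short case analysis on the sign of $w-x$. If $w=x$ (i.e.\ $(r,x)\in A_j$), the first minimum equals $M$ while the second equals $M-1$, so $s_i^j$ resolves $\{u_r^x,v_r^x\}$. If $w\neq x$ (i.e.\ $(r,x)\notin A_j$), then in both minima the smallest term is $M-10|w-x|$: it is $M+10(w-x)$ when $w<x$ and $M-10(w-x)$ when $w>x$, while the middle terms are never smallest because $|5(w-x)|,|5(w-x)-1|<10|w-x|$ whenever $|w-x|\ge1$; hence $\dist(s_i^j,u_r^x)=\dist(s_i^j,v_r^x)$ and $s_i^j$ does not resolve the pair. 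I expect the only genuine obstacle to be the first step — ruling out ``detour'' shortest paths that wind through four or more link paths — and the magnitude $M=40(n+1)$ is calibrated precisely so that argument goes through; everything after it is a routine minimum of three affine expressions.
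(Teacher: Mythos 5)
Your proposal is correct and follows essentially the same route as the paper's proof: compute the lengths of the three candidate routes through $a_r,b_r,c_r$, rule out longer ``detour'' walks using the calibration of $M=40(n+1)$, and compare the resulting minima according to the sign of $p_r-x$. If anything you are more explicit than the paper about why a shortest path must consist of exactly two link paths (the degree-$2$ interior vertices and the even alternation between terminals and hubs), and your value $M+5(p_r-x)$ for the route through $b_r$ to $u_r^x$ is the one that actually follows from the stated path lengths (the paper writes $M+5(p_r-x)+1$ there, a harmless slip that does not affect which term attains the minimum).
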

\begin{proof}
On one hand, suppose that $(r,x)\in A_j$. For an arbitrary $i\in [n]$, the three paths from $s_i^j$ to $u_r^x$ via $a_r,b_r$ and $c_r$ have lengths $M,M+1$ and $M$ respectively. The three paths from $s_i^j$ to $v_r^x$ via $a_r,b_r$ and $c_r$ have lengths $M,M-1$ and $M$ respectively. Note that there could be other paths from $s_i^j$ to $v_r^x$ or $u_r^x$ that go repeatedly between vertices in $X$ and vertices in $W$. However, the lengths of such paths are at least $M-20n+M-10n>M$. As a result, the shortest paths from $s_i^j$ to $u_r^x$ and $v_r^x$ are of lengths $M$ and $M-1$ respectively. Thus $\{u_r^x,v_r^x\}$ is resolved by $s_i^j$.

On the other hand, for an arbitrary tuple $A_i=\{(1,p_1),(2,p_2),(3,p_3)\}$, the paths from the vertex $s_i^j$ ($i\in [n]$) to $u_r^x$ ($r\in \{1,2,3\}$) via $a_r,b_r$ and $c_r$ have lengths $M+10(p_r-x),M+5(p_r-x)+1$ and $M-10(p_r-x)$ respectively. The paths from the vertex $s_i^j$ ($i\in [n]$) to $v_r^x$ ($r\in \{1,2,3\}$) via $a_r,b_r$ and $c_r$ have lengths $M+10(p_r-x),M+5(p_r-x)-1$ and $M-10(p_r-x)$ respectively. Note that the paths from $s_i^j$ to $u_r^x$ (or $v_r^x$) that go repeatedly between the vertices in $X$ and the vertices in $W$ have lengths at least $M-20n+M-10n>M+10n$.  They are not the shortest paths from $s_i^j$ to $u_r^x$ (or $v_r^x$). If $p_r<x$, the shortest paths from $s_i^j$ to $u_r^x$ and $v_r^x$ both have lengths $M+10(p_r-x)$. If $p_r>x$, the shortest paths from $s_i^j$ to $u_r^x$ and $v_r^x$ both have lengths $M-10(p_r-x)$. If $p_r=x$, the shortest paths from $s_i^j$ to $u_r^x$ and $v_r^x$ have lengths $M$ and $M-1$ respectively. As a result, if $\{u_r^x,v_r^x\}$ is resolved by $s_i^j$, then $p_r=x$. According to the construction, $(r,x)\in A_j$.
\end{proof}

\begin{lemma} \label{multiRSnp}
The constructed instance $(G,n,\chi,\cal P)$ of \textsc{$n$-Multicolored Resolving Set} is a yes-instance if and only if the given instance $(U,\cal F)$ of \textsc{$3$-dimensional matching} is a yes-instance.
\end{lemma}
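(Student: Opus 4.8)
The plan is to use Lemma~\ref{resolve} to translate the combinatorial covering condition of \textsc{$3$-dimensional matching} directly into the resolving condition of \textsc{$n$-Multicolored Resolving Set}. Two structural observations set up the correspondence. First, by condition~(i) any solution $S$ has $|S|=n$ and meets each $X_i$ in exactly one vertex, so $S=\{s_1^{j_1},\dots,s_n^{j_n}\}$ with $s_i^{j_i}\in X_i$; equivalently, $S$ encodes a choice of $n$ tuples $A_{j_1},\dots,A_{j_n}$, one assigned to each color class (with repetitions a priori allowed, but see below). Second, $\mathcal{P}=\mathcal{P}_1\cup\mathcal{P}_2\cup\mathcal{P}_3$ contains exactly one pair $\{u_r^i,v_r^i\}$ for every element $(r,i)\in U$, so ``every pair of $\mathcal{P}$ is resolved by $S$'' will turn out to be the same as ``every element of $U$ is covered''.

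For the forward direction, suppose $(U,\mathcal{F})$ is a yes-instance, witnessed by tuples $A_{j_1},\dots,A_{j_n}$ with $\bigcup_{h=1}^n A_{j_h}=U$. Since $|U|=3n$ and each tuple has size $3$, these tuples are pairwise distinct and form a partition of $U$. Define $S=\{s_h^{j_h}:h\in[n]\}$, which satisfies condition~(i). For any pair $\{u_r^x,v_r^x\}\in\mathcal{P}$, the element $(r,x)$ lies in some $A_{j_h}$, so Lemma~\ref{resolve} gives that $s_h^{j_h}\in S$ resolves $\{u_r^x,v_r^x\}$; hence condition~(ii) holds and $(G,n,\chi,\mathcal{P})$ is a yes-instance.

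For the converse, let $S$ be a solution and write $S=\{s_i^{j_i}:i\in[n]\}$ as above (note $S\subseteq X$ because of condition~(i) and $|S|=n$). Fix an arbitrary element $(r,x)\in U$; the pair $\{u_r^x,v_r^x\}$ belongs to $\mathcal{P}$, so by condition~(ii) it is resolved by some vertex of $S$, which must be $s_i^{j_i}$ for some $i\in[n]$, and then Lemma~\ref{resolve} yields $(r,x)\in A_{j_i}$. As $(r,x)$ was arbitrary, $\bigcup_{i\in[n]}A_{j_i}=U$, so the $n$ tuples $A_{j_1},\dots,A_{j_n}$ witness that $(U,\mathcal{F})$ is a yes-instance.

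The argument is essentially bookkeeping once Lemma~\ref{resolve} is available; the only points needing a little care are the two opening observations---that $\mathcal{P}$ has precisely one pair per element of $U$, and that $n$ covering tuples over a universe of size $3n$ are automatically a perfect matching---which together mean no extra work is required to rule out ``wasteful'' solutions. I do not expect a genuine obstacle in this lemma: the real technical weight lies in Lemma~\ref{resolve} (already proved) and, separately, in bounding the treewidth of $G$.
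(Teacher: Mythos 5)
Your proof is correct and follows essentially the same route as the paper's: both directions are immediate applications of Lemma~\ref{resolve}, translating ``$(r,x)\in A_j$'' into ``$s_i^j$ resolves $\{u_r^x,v_r^x\}$'' and back. The extra observations you add (that $\mathcal{P}$ has one pair per element of $U$, and that $n$ covering triples over a $3n$-element universe automatically partition it) are harmless bookkeeping that the paper leaves implicit.
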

\begin{proof}

($\Leftarrow$)
For an arbitrary tuple $A_i=\{(1,x),(2,y),(3,z)\}$, according to Lemma~\ref{resolve}, pairs $\{u_1^x,v_1^x\}$,$\{u_2^y,v_2^y\}$ and $\{u_3^z,v_3^z\}$ are all resolved by $s_i^j$ for every $i\in [n]$. Suppose that the given instance of \textsc{$3$-dimensional matching} is a yes-instance, that is, there exists $A_{j_1},...,A_{j_n}$ satisfying that $\bigcup\limits_{h=1}^{n} A_{j_h}=U$. It follows that $S=\{s_h^{j_h}:h\in [n]\}$ is a solution for the constructed instance of \textsc{$n$-Multicolored Resolving Set}.

($\Rightarrow$)
Let $S=\{s_h^{j_h}:h\in [n]\}$ be a solution for the constructed instance of \textsc{$n$-Multicolored Resolving Set}. For an arbitrary pair $\{u_r^x,v_r^x\}$, since it is resolved by some $s_{h'}^{j_{h'}}\in S$, according to Lemma~\ref{resolve}, $(r,x)\in A_{j_{h'}}$. As a result, $\{A_{j_h}:h\in [n]\}$ is a solution for the instance of \textsc{$3$-dimensional matching}.
\end{proof}

It is well-known that the treewidth of a graph is bounded by the size of a minimum feedback vertex set of the graph. We can easily observe that $W$ is a feedback vertex set of size $9$ for $G$. It follows that the treewidth of $G$ is at most $10$. Then we have the following lemma.

\begin{lemma}
$k$-Multicolored Resolving Set is NP-hard even on graphs of treewidth at most $10$.
\end{lemma}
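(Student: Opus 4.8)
The plan is to assemble the lemma from three ingredients that are, by this point, essentially already in place: the polynomial-time computability of the construction preceding the statement, its correctness as recorded in Lemma~\ref{multiRSnp}, and the classical NP-hardness of \textsc{$3$-dimensional matching}~\cite{karp1972reducibility}.

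First I would check that the map $(U,\mathcal{F})\mapsto(G,n,\chi,\mathcal{P})$ described above is a polynomial-time reduction. Writing $|U|=3n$ and $|\mathcal{F}|=m$, every subdivision path created in the construction has length $\Theta(M)=\Theta(n)$ (recall $M=40(n+1)$); there are $nm$ vertices $s_i^j$ and $6n$ vertices $u_r^p,v_r^p$, and each of these is an endpoint of only a constant number of such paths. Hence $|V(G)|=O(n^2m)$, and the whole instance — graph, the family $\chi$, and the pair set $\mathcal{P}$ — can be produced in time polynomial in $n+m$. One small point worth noting explicitly is that all prescribed path lengths are well-defined positive integers: $\tfrac{M}{2}=20(n+1)\in\mathbb{Z}$, and quantities such as $\tfrac{M}{2}-10x$ or $\tfrac{M}{2}-5p-2$ are positive because $x,p\le n$, so the construction indeed yields a simple graph. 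Combined with Lemma~\ref{multiRSnp}, which states that $(G,n,\chi,\mathcal{P})$ is a yes-instance of \textsc{$n$-Multicolored Resolving Set} if and only if $(U,\mathcal{F})$ is a yes-instance of \textsc{$3$-dimensional matching}, this already gives NP-hardness of \textsc{$k$-Multicolored Resolving Set}; it remains only to bound the treewidth of $G$.

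For the treewidth bound I would verify that $W=W_1\cup W_2\cup W_3$, a set of exactly $9$ vertices, is a feedback vertex set of $G$. Apart from the interior vertices of the subdivision paths, the only vertices of $G$ are the ``hubs'' $s_i^j$ and $u_r^p,v_r^p$, and by construction the paths attached to two distinct hubs are internally vertex-disjoint and can meet only inside $W$. Consequently, after deleting $W$, each hub together with the (now pendant) leftovers of its incident paths induces a tree, and these trees are pairwise disjoint, so $G-W$ is a forest. Since the treewidth of a graph is at most the size of a feedback vertex set plus the treewidth of the remainder — take a width-$1$ tree decomposition of $G-W$ and add all of $W$ to every bag — we obtain $\mathrm{tw}(G)\le 9+1=10$. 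Putting the three ingredients together proves the lemma. I do not anticipate a genuine obstacle here: all the substantive work lies in Lemmas~\ref{resolve} and~\ref{multiRSnp}, and the only steps needing a moment of care are the positivity of the path lengths and the internal disjointness of the attached paths that underlies the feedback-vertex-set argument.
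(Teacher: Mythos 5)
Your proposal is correct and follows the same route as the paper: NP-hardness comes from the polynomial-time reduction together with Lemma~\ref{multiRSnp}, and the treewidth bound comes from observing that $W$ is a feedback vertex set of size $9$, giving $\mathrm{tw}(G)\le 9+1=10$. You merely spell out details the paper leaves implicit (polynomial size of the instance, positivity of the path lengths, and why $G-W$ is a forest), all of which check out.
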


\section{Reduction from Multicolored Resolving Set to Metric Dimension}\label{sec:MD}
In this section, we create in polynomial time an instance $(G',k)$ of \textsc{Metric Dimension} which is equivalent to the instance $(G,n,\chi,\cal P)$ of \textsc{$n$-Multicolored Resolving Set} we created in last section.
Roughly speaking, the reduction consists in adding gadgets on base of the constructed instance $(G,n,\chi,\cal P)$ to solve the following two issues:
(1) the solution for \textsc{Metric Dimension} could contain vertices not in any set of $\chi$ or more than one vertex from some set of $\chi$, which could spoil the desired reduction;
(2) we did not make sure that every pair of distinct vertices are resolved by the solution in an instance of \textsc{$n$-Multicolored Resolving Set}.
We find that similar strategies to those in~\cite{BonnetP19} can be used to solve these two issues.
More specifically, we solve the first issue by adding \emph{forced set gadgets}.
One such gadget contains two pairs of vertices such that they are only resolved simultaneously by a vertex of $X_i$ (where it is attached).
We solve the second issue by adding \emph{forced vertex gadgets}.
One such gadget contains a pair of pendant neighboring vertices (false twins).
Obviously one vertex of the false twins has to be chosen in the solution.
Thus the chosen vertices (\emph{forced vertices}) are designed to resolve the remaining unresolved vertex pairs.
Besides, we need to add a number of extra paths and set appropriate budget $k$ to make sure that the reduction works as described above.

\subsection{Construction of the forced set gadgets}\label{ss:forced-set}

Let $(G,n,\chi,\cal P)$ be an instance of \textsc{$n$-Multicolored Resolving Set} that we created in last section. For every $X_i\in \chi$ ($i\in [n]$), we add two pairs of isolated vertices $\{p_i^1,q_i^1\}$ and $\{p_i^2,q_i^2\}$.  Then we add two vertices $\pi_i^1$ and $\pi_i^2$ such that $p_i^1,q_i^1$ are adjacent to $\pi_i^1$, $p_i^2,q_i^2$ are adjacent to $\pi_i^2$. The vertex triples $p_i^1,q_i^1,\pi_i^1$ and $p_i^2,q_i^2,\pi_i^2$ ($i\in [n]$) form a forced set gadget. Then we create a path $P(s_i^j, p_i^1)$ of length $20(n+1)$ from $s_i^j$ to $p_i^1$ and create a path $P(s_i^j, p_i^2)$ of length $20(n+1)$ from $s_i^j$ to $p_i^2$ for each $i\in [n], j\in [m]$. In order to make sure that a vertex can resolve $p_i^1,q_i^1$ and $p_i^2,q_i^2$ simultaneously if and only if it belongs to $X_i$, we need to create $4$ paths of length $20(n+1)$ from $\pi_i^1$ to $N_{s_i^j}(s_i^j,a_r)$, from $\pi_i^1$ to $N_{s_i^j}(s_i^j,b_r)$, from $\pi_i^1$ to $N_{s_i^j}(s_i^j,c_r)$ and from $\pi_i^1$ to $N_{s_i^j}(s_i^j,p_i^2)$ respectively for each $i\in [n]$, $j\in [m]$ and $r\in \{1,2,3\}$. For simplicity, we name the four paths as $P^{1}(i,j,a_r)$, $P^{1}(i,j,b_r)$, $P^{1}(i,j,c_r)$ and $P^{1}(i,j,p_i^2)$ respectively. Symmetrically, we need to create $4$ paths of length $20(n+1)$ from $\pi_i^2$ to $N_{s_i^j}(s_i^j,a_r)$, from $\pi_i^2$ to $N_{s_i^j}(s_i^j,b_r)$, from $\pi_i^2$ to $N_{s_i^j}(s_i^j,c_r)$ and from $\pi_i^2$ to $N_{s_i^j}(s_i^j,p_i^1)$ respectively for each $i\in [n]$ and $r\in \{1,2,3\}$. For simplicity, we name the four paths as $P^{2}(i,j,a_r)$, $P^{2}(i,j,b_r)$, $P^{2}(i,j,c_r)$ and $P^{2}(i,j,p_i^1)$ respectively. Let $\Pi^{h}(i,j,r)=\{P^{h}(i,j,a_r), P^{h}(i,j,b_r), P^{h}(i,j,c_r), P^{h}(i,j,p_i^{3-h})\}$ for $i\in [n],j\in [m],r\in\{1,2,3\},h\in\{1,2\}$.

This completes the construction of the first phase.

\subsection{Construction of the forced vertex gadgets}\label{ss:forced-vertex}
A forced vertex gadget consists of a triangle, namely three vertices such that each vertex is adjacent to the other two vertices.
Two vertices of the triangle are false twins whose degrees are exactly $2$ and we call the other vertex in the triangle the \emph{connecting vertex} of the gadget.
When we say that we add a forced vertex gadget $F$ to a vertex $v$, we mean that we create a forced vertex gadget $F$ such that $v$ is identified with the connecting vertex of $F$.
For each $i\in [n],j\in [m],r\in\{1,2,3\},h\in\{1,2\}$,
we add a forced vertex gadget $F^{h}(i,j,a_r)$ to $N_{\pi_h}^{h}(i,j,a_r)$, $F^{h}(i,j,b_r)$ to $N_{\pi_h}^{h}(i,j,b_r)$, $F^{h}(i,j,c_r)$ to $N_{\pi_h}^{h}(i,j,c_r)$ and $F^{h}(i,j,p_i^{3-h})$ to $N_{\pi_h}^{h}(i,j,,p_i^{3-h})$.

In order to make sure that $f^{h}(i,j,b_r)$ for $i\in [n],j\in [m],r\in\{1,2,3\},h\in\{1,2\}$ does not resolve any vertex pair of $\cal P$, we create a path $P(\pi_i^h,a_r)$ and a path $P(\pi_i^h,c_r)$ both of length $10(n+1)$ for $i\in [n]$, $h\in \{1,2\}$ and $r\in \{1,2,3\}$.

For each $i\in [n],j\in [m],r\in\{1,2,3\},h\in\{1,2\}$, we add a forced vertex gadget $F(\pi_i^h,a_r)$ to $N_{a_r}(\pi_i^h,a_r)$ and a forced vertex gadget $F(\pi_i^h,c_r)$ to $N_{c_r}(\pi_i^h,c_r)$. For each $i\in [n],j\in [m],r\in\{1,2,3\}$, we add a forced vertex gadget $F(s_i^j,a_r)$ to $N_{a_r}(s_i^j,a_r)$ and a forced vertex gadget $F(s_i^j,c_r)$ to $N_{c_r}(s_i^j,c_r)$.

Let $\text{mid}(P^{h}(i,j,p_i^{3-h}))$ be the middle vertex of $P^{h}(i,j,p_i^{3-h})$ for $i\in [n],j\in [m],h\in\{1,2\}$.  In order to make sure that $f^{h}(i,j,p_i^{3-h})$ does not resolve the vertex pair $\{p_i^{3-h},q_i^{3-h}\}$, create a path $P(q_i^h,\text{mid}(P^{3-h}(i,j,p_i^{h})))$ from $q_i^h$ to $\text{mid}(P^{3-h}(i,j,p_i^{h}))$ of length $|P^{3-h}(i,j,p_i^h)|/2+|P(s_i^j, p_i^h)|-1$. Then add a forced vertex gadget $F^{mid}(i,j,h)$ to $\text{mid}(P^{h}(i,j,p_i^{3-h}))$.

For $i\in [n],j\in [m],r\in\{1,2,3\},h\in\{1,2\}$, add a forced vertex gadget $F^{ecc}(i,j,h,r)$ to the vertex $x\in P^{h}(i,j,a_r)$ such that $\dist(\pi_i^h,x)=10(n+1)+1$.

For each $i\in [n],r\in \{1,2,3\}$, create two forced vertex gadgets $F^1(u_r^i,v_r^i)$ and $F^2(u_r^i,v_r^i)$ for the vertex pair $\{u_r^i,v_r^i\}\in {\cal P}_r$. Then create an edge from the connecting vertex of $F^1(u_r^i,v_r^i)$ to $u_r^i$, to $v_r^i$, to $N_{u_r^i}(a_r,u_r^i)$ and to $N_{u_r^i}(c_r,u_r^i)$ respectively for $i\in [n],r\in \{1,2,3\}$. Create an edge from the connecting vertex of $F^2(u_r^i,v_r^i)$ to $u_r^i$, to $v_r^i$, to the vertex $x$ such that $x\in P(a_r,u_r^i)$ and $\dist(x,u_r^i)=2$, and to the vertex $y$ such that $y\in P(c_r,u_r^i)$ and $\dist(y,u_r^i)=2$. This completes the construction of the second phase.

Finally, let $G'$ be the graph constructed by above two phases and set $k=34nm+19n$.
This finishes constructing the instance $(G',k)$ of \textsc{Metric Dimension}.
Figure~\ref{fig:main} shows a part of $G'$.

\begin{figure}[hbtp]
\hspace{6mm}
 \scalebox{0.8}{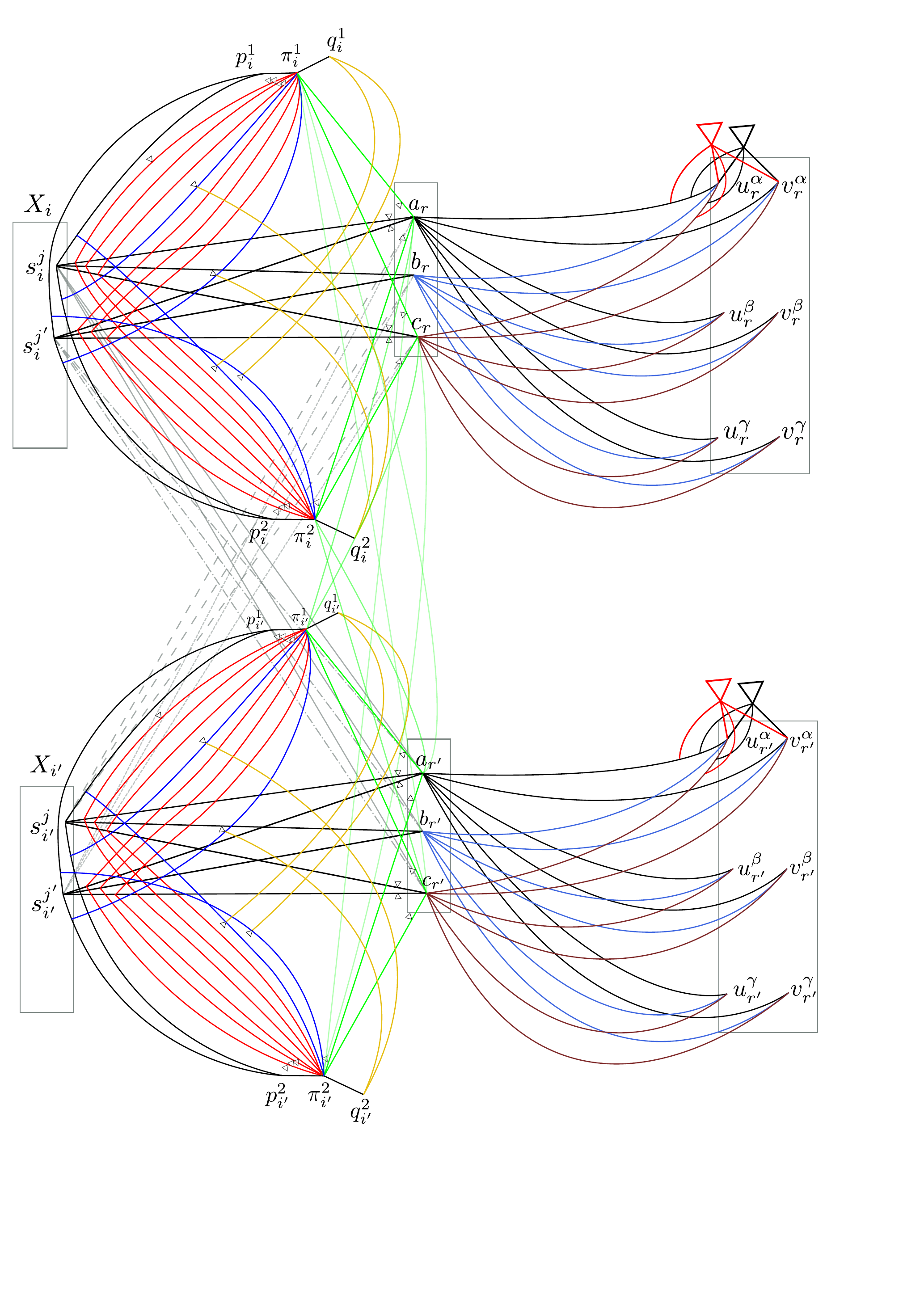}
\vspace*{-33mm}
\caption{An example showing a part of $G'$. Triangles represent corresponding forced vertex gadgets. For clarity, some forced vertex gadgets do not appear on the figure. Dotted or dashed lines are used in order for cleanness of the figure. }
\label{fig:main}
\end{figure}

\subsection{Soundness of the reduction}\label{ss:easy-direction}
First, we define the vertex sets to be used in the following parts.  For every $i\in [n],r\in\{1,2,3\},h\in\{1,2\}$, let
$$U_i^h=\bigcup_{j\in [m]}P(s_i^j,p_i^h),$$
$$H_{i,r}=\bigcup_{j\in [m]}P(s_i^j,a_r)\cup P(s_i^j,b_r)\cup P(s_i^j,c_r),$$
$$S_i^h=\bigcup_{r\in \{1,2,3\}}P(\pi_i^h,a_r)\cup P(\pi_i^h,c_r),$$
$$L_i^h=\bigcup_{j\in [m]}P(q_i^h,\text{mid}(P^{3-h}(i,j,p_i^{h}))),$$
$$R_r=\bigcup_{i\in [n]}P(a_r,u_r^i)\cup P(a_r,v_r^i)\cup P(b_r,u_r^i)\cup P(b_r,v_r^i)\cup P(c_r,u_r^i)\cup P(c_r,v_r^i),\text{ and}$$
$$\Pi^h(i,j,r)=P^{h}(i,j,a_r)\cup P^{h}(i,j,b_r)\cup P^{h}(i,j,c_r)\cup P^{h}(i,j,p_i^{3-h}).$$

For every $i\in [n]$, let
$$U_i=\bigcup_{h\in \{1,2\}}U_i^h \quad\quad\quad H_i=\bigcup_{r\in \{1,2,3\}}H_{i,r} \quad\quad\quad S_i=\bigcup_{h\in \{1,2\}}S_i^h$$
$$L_i=\bigcup_{h\in \{1,2\}}L_i^h \quad\quad\quad\quad\quad\quad \Pi_i=\bigcup_{j\in [m],r\in \{1,2,3\},h\in\{1,2\}}\Pi^h(i,j,r).$$

Let ${\cal F}$ be the union of all forced vertex gadgets, i.e. ${\cal F}=\bigcup_{i\in[n],j\in [m],r\in \{1,2,3\},h\in \{1,2\}}(F(s_i^j,a_r)\cup F(s_i^j,c_r)\cup F(\pi_i^h,a_r)\cup F(\pi_i^h,c_r)\cup F^h(u_r^i,v_r^i)\cup F^{h}(i,j,a_r)\cup F^{h}(i,j,b_r)\cup F^{h}(i,j,c_r)\cup F^{h}(i,j,p_i^{3-h})\cup F^{mid}(i,j,h)\cup F^{ecc}(i,j,h,r))$.

Next we introduce a lemma about forced set gadgets and this lemma is important for the correctness of the reduction.

\begin{lemma} \label{forcedSet}
The following three statements are true for the instance $(G',k)$.
\begin{enumerate}
     \item[(a)] The vertex $s_i^j$ for $i\in [n],j\in [m]$ resolves both pairs $\{p_i^1,q_i^1\}$ and $\{p_i^2,q_i^2\}$. Moreover, $s_i^j$ does not resolve any vertex pair $\{p_{i'}^h,q_{i'}^h\}$ such that $i'\in [n],h\in \{1,2\}$ and $i'\neq i$.
     \item[(b)] The vertices of any forced vertex gadget do not resolve any vertex pair of $\{\{p_i^h,q_i^h\}~|~i\in [n],h\in \{1,2\}\}$.
     \item[(c)] Any vertex $v\in V(G')\setminus (X_i\cup {\cal F})$ resolves at most one vertex pair of $\{\{p_i^h,q_i^h\}~|~i\in [n],h\in \{1,2\}\}$.
\end{enumerate}
\end{lemma}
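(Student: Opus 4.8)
The plan is to analyze each of the three statements by carefully computing the relevant shortest-path distances in $G'$, exploiting the fact that every newly added path has length a multiple of $10(n+1)$ or $20(n+1)$, which is large compared to the "spread" $O(n)$ of distances inside the original graph $G$. The guiding principle throughout is: a vertex $w$ fails to resolve a pair $\{x,y\}$ exactly when the shortest route from $w$ to $x$ and the shortest route from $w$ to $y$ enter the local neighborhood of $\{x,y\}$ through the same "port" and with the same residual distance; so for each statement I would identify the possible ports (the vertices $\pi_i^h$, the connecting vertices of gadgets, and the branch points on the long paths) and compare residuals.

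For part~(a), I would first observe that $s_i^j$ is joined to $p_i^h$ by the path $P(s_i^j,p_i^h)$ of length $20(n+1)$, and $p_i^h,q_i^h$ are both pendant at $\pi_i^h$. Since $\pi_i^h$ is not on $P(s_i^j,p_i^h)$ but is reached from $s_i^j$ only through the paths $P^h(i,j,\cdot)$ (all of length $20(n+1)$) attached at the neighbors of $s_i^j$, one checks $\dist(s_i^j,p_i^h)=20(n+1)$ while $\dist(s_i^j,q_i^h)$ is realized through $\pi_i^h$ and equals $20(n+1)+1+1 = 20(n+1)+2$ via the path of length $20(n+1)$ from a neighbor of $s_i^j$ to $\pi_i^h$ plus the edge to $q_i^h$; in any case the two distances differ, so $\{p_i^h,q_i^h\}$ is resolved. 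For the "moreover" part, I would show that for $i'\neq i$ every route from $s_i^j$ to $\pi_{i'}^h$ has to traverse at least two of the long $20(n+1)$-paths (it must leave the "$i$-part" and enter the "$i'$-part" of the gadget structure), so $\dist(s_i^j,p_{i'}^h)=\dist(s_i^j,q_{i'}^h)$ because both are attained by going to $\pi_{i'}^h$ and then taking the single extra edge — the pendant symmetry of $p_{i'}^h$ and $q_{i'}^h$ at $\pi_{i'}^h$ then forces equality.

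For part~(b), the key point is that every forced vertex gadget is attached to $G'$ at a single connecting vertex $c$, so for any vertex $w$ inside the gadget and any external pair $\{x,y\}$ we have $\dist(w,x)=\dist(w,c)+\dist(c,x)$ and likewise for $y$; hence $w$ resolves $\{x,y\}$ iff $c$ does. So it suffices to verify that no connecting vertex of any gadget in $\mathcal{F}$ resolves any pair $\{p_i^h,q_i^h\}$. I would go through the list of gadget locations — the midpoints $N_{\pi_h}^h(i,j,\cdot)$, the points at distance $10(n+1)+1$ from $\pi_i^h$, the $\mathrm{mid}$ vertices, the points near $u_r^i$, and the points near $a_r,c_r$ — and in each case exhibit that the shortest routes from that connecting vertex to $p_i^h$ and to $q_i^h$ both pass through $\pi_i^h$ and then split only on the final pendant edge; the auxiliary paths $P(q_i^h,\mathrm{mid}(\cdot))$ and $P(\pi_i^h,a_r),P(\pi_i^h,c_r)$ were introduced precisely to make these two distances equal, so the bulk of the work is bookkeeping the path lengths ($20(n+1)$ from $s_i^j$ to $p_i^h$, $|P^{3-h}(i,j,p_i^h)|/2+|P(s_i^j,p_i^h)|-1$ for the $L$-paths, etc.) and confirming the intended cancellations.

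For part~(c), I would stratify $V(G')\setminus(X_i\cup\mathcal{F})$ by which "region" the vertex $v$ lies in and argue that from any such $v$, at most one of the vertices $\pi_{i'}^{h'}$ (over all $i'\in[n]$, $h'\in\{1,2\}$) is "directly reachable" in the sense that a shortest $v$–$\pi_{i'}^{h'}$ path does not pass through another $\pi$; for every other pair $\{p_{i'}^{h'},q_{i'}^{h'}\}$ the route to both endpoints factors through a common cut vertex, so that pair is not resolved by $v$. Concretely: a vertex on one of the long $20(n+1)$- or $10(n+1)$-paths, or on a path $P(s_i^j,\cdot)$ or $P(u_r^i,\cdot)$, "sees" at most the one $\pi_i^h$ at the end of its own branch; a vertex among the $a_r,b_r,c_r,u_r^i,v_r^i$ or on an $R_r$-path sees the pair $\{p_{i'}^{h'},q_{i'}^{h'}\}$ only through $\pi_{i'}^{h'}$, and the symmetry at $\pi_{i'}^{h'}$ again kills resolution unless the asymmetric edge lengths around exactly one index $i'$ happen to break it — and the construction (lengths $20(n+1)$ on both $p_i^1$- and $p_i^2$-paths, the helper paths to $a_r,c_r$) ensures only the "own" index can do that, and only for one of the two pairs.

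\textbf{Main obstacle.} I expect part~(c) to be the hard part: it is a global statement quantifying over all vertices outside the forced sets and gadgets, and it requires ruling out accidental resolutions coming from the many interacting long paths. The danger is a vertex sitting near a junction (e.g. near some $N_{s_i^j}(s_i^j,a_r)$, which is simultaneously an endpoint of an $H$-path, of a $\Pi^h$-path, and carries a gadget) from which two different $\pi$'s are reachable with matching residuals; handling these junction cases cleanly — by showing the residual distances are forced apart by the $\pm1,\pm2$ offsets and the $10x$-type terms built into the original construction — is where the care is needed. Parts~(a) and~(b) are comparatively mechanical once the single-cut-vertex observation for gadgets is in place.
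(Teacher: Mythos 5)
Your proposal follows the paper's proof essentially step for step: the explicit computation $\dist(s_i^j,p_i^h)=20(n+1)<20(n+1)+2=\dist(s_i^j,q_i^h)$ for (a), the reduction of (b) to the connecting vertices followed by a case analysis over the gadget locations, and the region-by-region stratification (the sets $U$, $\Pi$, $S$, $L$, $H$, $R$) together with the ``shortest routes to every other pair factor through $\pi_{i'}^{h'}$'' argument for (c). The one point to tighten is that $p_{i'}^{h'}$ and $q_{i'}^{h'}$ are \emph{not} pendant at $\pi_{i'}^{h'}$ --- each carries $m$ further attached paths (of length $20(n+1)$ to the vertices $s_{i'}^{j}$, respectively $30(n+1)-1$ to the midpoints $\text{mid}(P^{3-h'}(i',j,p_{i'}^{h'}))$) --- so the equality of the two distances for $i'\neq i$ is not a ``pendant symmetry'' but the outcome of verifying that the route through $\pi_{i'}^{h'}$ beats these asymmetric alternatives; this is precisely the comparison that flips when $i'=i$, and your ``same port, same residual'' framework does cover it once the bookkeeping is carried out.
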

\begin{proof}
By the construction of $G'$, $\dist(s_i^j,q_i^h)=|P(s_i^j,p_i^h)|+2=20(n+1)+2>\dist(s_i^j,p_i^h)$ for $i\in [n],j\in [m]$ and $h\in\{1,2\}$.
Thus any vertex of $X_i$ resolves both pairs $\{p_i^1,q_i^1\}$ and $\{p_i^2,q_i^2\}$ for $i\in [n]$.
For a vertex pair $\{p_{i'}^{h'},q_{i'}^{h'}\}$ such that $i'\neq i$,
there is a shortest path from $s_i^j$ to $p_{i'}^{h'}$ or $q_{i'}^{h'}$ going through $c_{r'}$ and $\pi_{i'}^{h'}$ with some integer $r'\in \{1,2,3\}$.
Thus a vertex $s\in X_i$ resolves exactly two vertex pairs of $\{\{p_i^h,q_i^h\}:i\in [n], h\in \{1,2\}\}$.

First we claim that vertices of ${\cal F}$ do not resolve any vertex pair $\{p_{i'}^{h'},q_{i'}^{h'}\}$ for $i'\in [n],h'\in\{1,2\}$.
For any vertex $v\in F^h(u_r^i,v_r^i)$ for $i\in[n],r\in \{1,2,3\},h\in \{1,2\}$,
there is a shortest path from $v$ to $p_{i'}^{h'}$ or $q_{i'}^{h'}$ going through $a_r$ and $\pi_{i'}^{h'}$.
Thus $v$ does not resolve any vertex pair $\{p_{i'}^{h'},q_{i'}^{h'}\}$ for $i'\in [n],h'\in\{1,2\}$.
For any vertex $v\in F^{mid}(i,j,h)\cup F^{ecc}(i,j,h,r)$ for $i\in[n],j\in [m],h\in \{1,2\},r\in \{1,2,3\}$,
we can see that $\dist(v,p_{i'}^{h'})=\dist(v,q_{i'}^{h'})$ with $i'=i$.
There is a shortest path from $v$ to $p_{i'}^{h'}$ or $q_{i'}^{h'}$ going through $\pi_i^h$, $a_r$ and $\pi_{i'}^{h'}$ with $i'\neq i$.
Thus $v$ does not resolve any vertex pair $\{p_{i'}^{h'},q_{i'}^{h'}\}$ for $i'\in [n],h'\in\{1,2\}$.
For any vertex $v\in {\cal F}\setminus \bigcup_{i\in[n],j\in [m],r\in \{1,2,3\},h\in \{1,2\}}(F^h(u_r^i,v_r^i)\cup F^{ecc}(i,j,h,r)\cup F^{mid}(i,j,h))$,
there is a shortest path from $v$ to $p_{i'}^{h'}$ or $q_{i'}^{h'}$ going through $\pi_{i'}^{h'}$ with $i'=i$.
There is a shortest path from $v$ to $p_{i'}^{h'}$ or $q_{i'}^{h'}$ going through $c_r$ (or $a_r$) and $\pi_{i'}^{h'}$ with $i'\neq i$.
Thus $v$ does not resolve any pair $\{p_{i'}^{h'},q_{i'}^{h'}\}$.
As a result, vertices of ${\cal F}$ do not resolve any vertex pair $\{p_{i'}^{h'},q_{i'}^{h'}\}$ for $i'\in [n],h'\in\{1,2\}$.

Then we show that any vertex $v\in V(G')\setminus (X_i\cup {\cal F})$ resolves at most one pair of $\{p_i^1,q_i^1\}$ and $\{p_i^2,q_i^2\}$.

For a vertex $v\in U_i^h\setminus X_i$ for $i\in [n],h\in\{1,2\}$, $\dist(v,p_i^h)=\dist(v,q_i^h)-2<\dist(v,q_i^h)$.   $\dist(v,q_i^{3-h})=\dist(v,N_{s_i^j}(s_i^j,p_i^h))+|P^{3-h}(i,j,p_i^{h})|+1=\dist(v,p_i^{3-h})$.
For a vertex pair $\{p_{i'}^{h'},q_{i'}^{h'}\}$ such that $i'\neq i$,
there is a shortest path from $v$ to $p_{i'}^{h'}$ or $q_{i'}^{h'}$ going through $\pi_{i'}^{h'}$.
Thus $v\in U_i^h\setminus X_i$ for $i\in [n],h\in\{1,2\}$ resolves exactly one vertex pair of $\{\{p_i^h,q_i^h\}:i\in [n], h\in \{1,2\}\}$.

Let $P(\text{mid}(P^{3-h}(i,j,p_i^{h})),N_{s_i^j}(s_i^j,p_i^h))$ be the subpath of $P^{3-h}(i,j,p_i^{h})$ from $\text{mid}(P^{3-h}(i,j,p_i^{h}))$ to $N_{s_i^j}(s_i^j,p_i^h)$.
Let $\Lambda_i^h=(\bigcup_{j\in [m]}P(\text{mid}(P^{3-h}(i,j,p_i^{h})),N_{s_i^j}(s_i^j,p_i^h)))\setminus \{\text{mid}(P^{3-h}(i,j,p_i^{h}))~|~j\in [m]\}$.
For a vertex $v\in \Lambda_i^h$ for $i\in [n],h\in\{1,2\}$,
$\dist(v,p_i^h)=\dist(v,q_i^h)-2<\dist(v,q_i^h)$.
$\dist(v,q_i^{3-h})=\dist(v,\pi_i^{3-h})+1=\dist(v,p_i^{3-h})$.
For a vertex pair $\{p_{i'}^{h'},q_{i'}^{h'}\}$ such that $i'\neq i$,
there is a shortest path from $v$ to $p_{i'}^{h'}$ or $q_{i'}^{h'}$ going through $\pi_{i'}^{h'}$.
Thus $v\in \Lambda_i^h$ for $i\in [n],h\in\{1,2\}$ resolves exactly one vertex pair of $\{\{p_i^h,q_i^h\}:i\in [n], h\in \{1,2\}\}$.

For a vertex $v\in L_i^h\setminus \{\text{mid}(P^{h}(i,j,p_i^{3-h}))~|~j\in [m]\}$ for $i\in [n],h\in\{1,2\}$, $\dist(v,q_i^h)=\dist(v,p_i^h)-2<\dist(v,p_i^h)$.
There is a shortest path from $v$ to $p_i^{3-h}$ or $q_i^{3-h}$ going through $\pi_i^{3-h}$.
For a vertex pair $\{p_{i'}^{h'},q_{i'}^{h'}\}$ such that $i'\neq i$,
there is a shortest path from $v$ to $p_{i'}^{h'}$ or $q_{i'}^{h'}$ going through $\pi_{i'}^{h'}$.
Thus $v\in L_i^h\setminus \{\text{mid}(P^{h}(i,j,p_i^{3-h}))~|~j\in [m]\}$ for $i\in [n],h\in\{1,2\}$ resolves exactly one vertex pair of $\{\{p_i^h,q_i^h\}:i\in [n], h\in \{1,2\}\}$.

For a vertex $v\in \Pi_i\cup S_i\cup H_i\setminus (X_i\cup \Lambda_i^1\cup \Lambda_i^2)$ for $i\in [n]$,
there is a shortest path from $v$ to $p_{i'}^{h'}$ or $q_{i'}^{h'}$ going through $\pi_{i'}^{h'}$ with $i=i',h'\in \{1,2\}$.
For a vertex pair $\{p_{i'}^{h'},q_{i'}^{h'}\}$ such that $i'\neq i$,
there is a shortest path from $v$ to $p_{i'}^{h'}$ or $q_{i'}^{h'}$ going through $\pi_{i'}^{h'}$.
Thus $v$ does not resolve any vertex pair of $\{\{p_i^h,q_i^h\}:i\in [n], h\in \{1,2\}\}$.

For a vertex $v\in R_r$ for $r\in \{1,2,3\}$, there is a shortest path from $v$ to $p_i^h$ or $q_i^h$ for $i\in [n],h\in \{1,2\}$ going through $a_r$ and $\pi_i^h$.
Thus $v$ does not resolve any vertex pair of $\{\{p_i^h,q_i^h\}:i\in [n], h\in \{1,2\}\}$.
This completes the proof for the lemma.
\end{proof}

\begin{lemma} \label{forcedVertex}
The forced vertices do not resolve any vertex pair $\{u_r^i,v_r^i\}\in \cal P$ for $r\in \{1,2,3\}$ and $i\in [n]$.
\end{lemma}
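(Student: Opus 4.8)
The plan is to show that each type of forced vertex gadget, when one of its false twins is taken into the resolving set as a ``forced vertex'', yields the same distance to $u_r^i$ as to $v_r^i$ for every pair $\{u_r^i,v_r^i\}\in\mathcal{P}$. Fix such a pair. The key structural observation to exploit is that $u_r^i$ and $v_r^i$ are attached to $W_r=\{a_r,b_r,c_r\}$ by essentially parallel paths: the paths $P(u_r^i,a_r)$ and $P(v_r^i,a_r)$ have the same length $\tfrac{M}{2}-10i$, the paths via $c_r$ both have length $\tfrac{M}{2}+10i$, and only the lengths via $b_r$ differ (by exactly one). Hence for any vertex $w$ whose shortest path to $\{u_r^i,v_r^i\}$ is forced to route through $a_r$ or through $c_r$, we automatically get $\dist(w,u_r^i)=\dist(w,v_r^i)$, so $w$ does not resolve the pair. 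The whole argument therefore reduces to checking, gadget class by gadget class, that (i) the relevant shortest path does go through $a_r$ or $c_r$ (never through $b_r$), and (ii) the detours created by the newly added paths of Section~\ref{ss:forced-vertex} do not create a shorter alternative route through $b_r$.

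First I would dispose of the forced vertex gadgets whose connecting vertex lies far from any $W_r$: the gadgets of the form $F^{h}(i,j,a_r), F^{h}(i,j,b_r), F^{h}(i,j,c_r), F^{h}(i,j,p_i^{3-h})$, $F^{mid}(i,j,h)$, $F^{ecc}(i,j,h,r)$, $F(\pi_i^h,a_r)$, $F(\pi_i^h,c_r)$, $F(s_i^j,a_r)$, $F(s_i^j,c_r)$. For a forced vertex $f$ in any of these, a shortest path from $f$ to $u_r^i$ (or to $v_r^i$) must first reach $W_r$; I would argue it reaches $W_r$ through $a_r$ or $c_r$, using the length bookkeeping established for Lemma~\ref{resolve} (all alternative routes that bounce between $X$ and $W$ are too long, $> M + 10n$), together with the fact that the auxiliary paths $P(\pi_i^h,a_r), P(\pi_i^h,c_r)$ of length $10(n+1)$ and the connecting paths inside the forced set gadgets all have length a multiple of $10(n+1)$, hence never beat the ``$b_r$-penalty'' route. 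Once $f$ enters $W_r$ at $a_r$ or $c_r$, symmetry of the two parallel paths to $u_r^i$ and $v_r^i$ finishes it: $\dist(f,u_r^i)=\dist(f,v_r^i)$.

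Next I would handle the two gadgets attached directly at the pair, $F^1(u_r^i,v_r^i)$ and $F^2(u_r^i,v_r^i)$. Here the connecting vertex is adjacent to both $u_r^i$ and $v_r^i$ (and to points at distance $1$ or $2$ from $u_r^i$ on the $a_r$- and $c_r$-sides), so a forced vertex $f$ of such a gadget has $\dist(f,u_r^i)=\dist(f,v_r^i)=2$: both are reached in two steps via the connecting vertex, and one checks no shorter route exists (there is none, since $f$ has degree $2$ and its only non-twin neighbour is the connecting vertex). Actually the only subtlety is to confirm that these short distances are genuinely the shortest — i.e. that the new edges to $N_{u_r^i}(a_r,u_r^i)$, $N_{u_r^i}(c_r,u_r^i)$ etc.\ do not somehow let a gadget elsewhere resolve this pair — but that is already covered by the previous paragraph since those edges are incident only to $a_r/c_r$-side vertices and do not touch the $b_r$-path. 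Finally, the forced vertices in the forced \emph{set} gadgets (the $p_i^h,q_i^h$ themselves) are equidistant from $u_r^i,v_r^i$ by the same $a_r/c_r$-routing argument, since the paths $P(s_i^j,p_i^h)$, $P^h(i,j,\cdot)$ all have length a multiple of $10(n+1)$.

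The main obstacle I anticipate is not any single gadget but the sheer case analysis together with the need to rule out the ``$b_r$ shortcut'': for every forced vertex one must certify that the cheapest way into the $W_r$-neighbourhood of the pair does not pass through $b_r$, and the newly introduced paths (especially $P(\pi_i^h,a_r)$, $P(\pi_i^h,c_r)$, the $F^{ecc}$ and $F^{mid}$ placements, and the extra edges at $u_r^i$) create many candidate detours that each need a one-line length comparison. The calibration $M=40(n+1)$ versus the gadget-path length $10(n+1)$ or $20(n+1)$ is exactly what makes these comparisons go through — the $b_r$-route is only cheaper by an additive constant, while every genuinely different route costs at least $10(n+1)$ more — so the proof is a disciplined enumeration of cases, each closed by this fixed margin.
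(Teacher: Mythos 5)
Your overall strategy is the same as the paper's: a gadget-by-gadget check that every forced vertex reaches $u_r^i$ and $v_r^i$ by shortest paths entering through $a_r$ or $c_r$, where the two attachment paths have equal length, so the only asymmetric route (through $b_r$) must be shown non-competitive. That is exactly how Lemma~\ref{forcedVertex} is proved in the paper, by writing out the distance formulas class by class. However, two of the shortcuts you use to dismiss the case analysis are not correct as stated, and they concern precisely the hardest cases. First, the connecting vertices of $F^1(u_r^i,v_r^i)$ and $F^2(u_r^i,v_r^i)$ are adjacent to $u_r^i$ and $v_r^i$ themselves, which are endpoints of $P(b_r,u_r^i)$ and $P(b_r,v_r^i)$; so the claim that the new edges ``do not touch the $b_r$-path'' is false, and from $f^1(u_r^i,v_r^i)$ there genuinely is a route to another pair $\{u_r^{i'},v_r^{i'}\}$ via $v_r^i$ and $b_r$ of length $2+|P(b_r,v_r^i)|+|P(b_r,u_r^{i'})|$. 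Second, your asserted uniform margin of $10(n+1)$ fails here: comparing that $b_r$-route with the $a_r$-route $2+|P(a_r,u_r^i)|-1+|P(a_r,u_r^{i'})|$, the difference is only $5i+5i'-2$ (respectively $5i+5i'-3$ for $v_r^{i'}$), a small positive constant rather than $\Omega(n)$. The conclusion still holds, but only because $5i+5i'\geq 15$ for distinct $i,i'\in[n]$, and this needs to be computed, not waved away by a $10(n+1)$ calibration argument. A smaller point: the vertices $p_i^h,q_i^h$ of the forced set gadgets are not forced vertices (they are not false twins of a forced vertex gadget and need not be in the solution), so that final case in your sketch is not part of this lemma.
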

\begin{proof}
We fix arbitrary integers $i\in [n],j\in [m],r\in\{1,2,3\},h\in\{1,2\}$.
For the forced vertex $f^{h}(i,j,a_r)$, $\dist(f^{h}(i,j,a_r),u_{r'}^{i'})=2+|P(\pi_i^h,a_{r'})|+|P(a_{r'},u_{r'}^{i'})|=2+|P(\pi_i^h,a_{r'})|+|P(a_{r'},v_{r'}^{i'})|=\dist(f^{h}(i,j,a_r),v_{r'}^{i'})$ for $i'\in [n], r'\in \{1,2,3\}$.
Thus $f^{h}(i,j,a_r)$ does not resolve any vertex pair of $\cal P$.
Similarly, the forced vertices $f^{h}(i,j,b_r)$, $f^{h}(i,j,c_r)$ and $f^{h}(i,j,p_i^{3-h})$ do not resolve any vertex pair of $\cal P$.
For the forced vertex $f^{mid}(i,j,h)$, $\dist(f^{mid}(i,j,h),u_{r'}^{i'})=\dist(f^{mid}(i,j,h),v_{r'}^{i'})=|P^h(i,j,p_i^{3-h})|/2+|P(\pi_i^h,a_{r'})|+|P(a_{r'},u_{r'}^{i'})|$.
Thus $f^{mid}(i,j,h)$ does not resolve any vertex pair of $\cal P$.
For the forced vertex $f^{ecc}(i,j,h,r)$, $\dist(f^{ecc}(i,j,h,r),u_{r'}^{i'})=\dist(f^{ecc}(i,j,h,r),v_{r'}^{i'})=10(n+1)+1+|P(\pi_i^h,a_{r'})|+|P(a_{r'},u_{r'}^{i'})|$.
Thus $f^{ecc}(i,j,h,r)$ does not resolve any vertex pair of $\cal P$.

We fix arbitrary integers $i\in [n],j\in [m],r\in\{1,2,3\}$.
For the forced vertex $f(s_i^j,a_r)$, $\dist(f(s_i^j,a_r),u_r^{i'})=2+|P(a_r,u_r^{i'})|=2+|P(a_r,v_r^{i'})|=\dist(f(s_i^j,a_r),v_r^{i'})$ for $i'\in [n]$.
For the forced vertex $f(s_i^j,c_r)$, $\dist(f(s_i^j,c_r),u_r^{i'})=2+|P(c_r,u_r^{i'})|=2+|P(c_r,v_r^{i'})|=\dist(f(s_i^j,c_r),v_r^{i'})$ for $i'\in [n]$.
Thus $f(s_i^j,a_r)$ and $f(s_i^j,c_r)$ do not resolve any vertex pair of ${\cal P}_r$.
Similarly, $f(\pi_i^h,a_r)$ and $f(\pi_i^h,c_r)$ for $i\in [n],h\in \{1,2\},r\in\{1,2,3\}$ do not resolve any vertex pair of ${\cal P}_r$.
For vertex pairs of ${\cal P}_{r'}$ with $r'\in \{1,2,3\}$ and $r'\neq r$, $\dist(f(s_i^j,a_r),u_{r'}^{i'})=2+|P(a_r,\pi_i^1)|+|P(a_{r'},\pi_i^1)|+|P(a_{r'},u_r^{i'})|=2+|P(a_r,\pi_i^1)|+|P(a_{r'},\pi_i^1)|+|P(a_{r'},v_r^{i'})|
=\dist(f(s_i^j,a_r),v_{r'}^{i'})$ for $i'\in [n]$.
$\dist(f(s_i^j,c_r),u_{r'}^{i'})=2+|P(c_r,\pi_i^1)|+|P(a_{r'},\pi_i^1)|+|P(a_{r'},u_r^{i'})|=2+|P(c_r,\pi_i^1)|+|P(a_{r'},\pi_i^1)|+|P(a_{r'},v_r^{i'})|
=\dist(f(s_i^j,a_r),v_{r'}^{i'})$ for $i'\in [n]$.
Thus $f(s_i^j,a_r)$ and $f(s_i^j,c_r)$ do not resolve any vertex pair of ${\cal P}_{r'}$.


We fix arbitrary integers $i\in [n],r\in\{1,2,3\}$.
For the forced vertex $f^1(u_r^i,v_r^i)$ or $f^2(u_r^i,v_r^i)$, obviously it does not resolve the vertex pair $\{u_r^i,v_r^i\}$.
For a vertex pair $\{u_r^{i'},v_r^{i'}\}$ with $i'\in [n]\text{ and }i'\neq i$, $\dist(f^1(u_r^i,v_r^i),u_r^{i'})=2+|P(a_r,u_r^i)|-1+|P(a_r,u_r^{i'})|=2+|P(a_r,u_r^i)|-1+|P(a_r,v_r^{i'})|=\dist(f^1(u_r^i,v_r^i),v_r^{i'})$.
For a vertex pair $\{u_{r'}^{i'},v_{r'}^{i'}\}$ with $i'\in [n]$ and $r'\in \{1,2,3\}\text{ and }r'\neq r$, $\dist(f^1(u_r^i,v_r^i),u_{r'}^{i'})=2+|P(a_r,u_r^i)|-1+|P(\pi_i^1,a_r)|+|P(\pi_i^1,a_{r'})|+|P(a_{r'},u_{r'}^{i'})|=\dist(f^1(u_r^i,v_r^i),v_{r'}^{i'})$.
As a result, $f^1(u_r^i,v_r^i)$ does not resolve any vertex pair of $\cal P$.
For a vertex pair $\{u_r^{i'},v_r^{i'}\}$ with $i'\in [n]\text{ and }i'\neq i$, $\dist(f^2(u_r^i,v_r^i),u_r^{i'})=2+|P(a_r,u_r^i)|-2+|P(a_r,u_r^{i'})|=2+|P(a_r,u_r^i)|-2+|P(a_r,v_r^{i'})|=\dist(f^2(u_r^i,v_r^i),v_r^{i'})$.
For a vertex pair $\{u_{r'}^{i'},v_{r'}^{i'}\}$ with $i'\in [n]$, $r'\in \{1,2,3\}\text{ and }r'\neq r$, $\dist(f^2(u_r^i,v_r^i),u_{r'}^{i'})=2+|P(a_r,u_r^i)|-2+|P(\pi_i^1,a_r)|+|P(\pi_i^1,a_{r'})|+|P(a_{r'},u_{r'}^{i'})|=\dist(f^2(u_r^i,v_r^i),v_{r'}^{i'})$.
As a result, $f^2(u_r^i,v_r^i)$ does not resolve any vertex pair of $\cal P$.
This completes the proof for the lemma.
\end{proof}

\begin{lemma} \label{soundness}
If $G'$ has a resolving set of size at most $34nm+19n$, then $(G,n,\chi,\cal P)$ is a yes-instance.
\end{lemma}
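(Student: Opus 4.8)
The plan is to exploit the tightness of the budget $k = 34nm+19n$: I would show that any resolving set $S$ of $G'$ with $|S|\le k$ must consist of exactly one false twin from each forced vertex gadget together with exactly one vertex $s_i^{j_i}$ from each colour class $X_i$, and then check that $\{s_i^{j_i}:i\in[n]\}$ is a solution of the \textsc{$n$-Multicolored Resolving Set} instance $(G,n,\chi,\mathcal P)$. The first move is bookkeeping: counting the forced vertex gadgets created in Section~\ref{ss:forced-vertex} one gets $18nm$ gadgets among the families $F^h(i,j,a_r),F^h(i,j,b_r),F^h(i,j,c_r)$, plus $2nm$ of the form $F^h(i,j,p_i^{3-h})$, plus $12n$ of the form $F(\pi_i^h,a_r),F(\pi_i^h,c_r)$, plus $6nm$ of the form $F(s_i^j,a_r),F(s_i^j,c_r)$, plus $2nm$ of the form $F^{mid}(i,j,h)$, plus $6nm$ of the form $F^{ecc}(i,j,h,r)$, plus $6n$ of the form $F^1(u_r^i,v_r^i),F^2(u_r^i,v_r^i)$ — in total $34nm+18n$ forced vertex gadgets, whose pairs of degree-$2$ false twins are pairwise disjoint. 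Since false twins have equal closed neighbourhoods, the only vertices resolving such a pair are the two false twins themselves; hence $S$ contains at least one false twin from each of these $34nm+18n$ gadgets, all lying in $\mathcal F$ and pairwise distinct, so $S':=S\setminus\mathcal F$ satisfies $|S'|\le k-(34nm+18n)=n$, and $S'$ contains no gadget vertex at all.

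Next I would pin down $S'$ using the forced set gadgets. The $2n$ pairs $\{p_i^h,q_i^h\}$ ($i\in[n]$, $h\in\{1,2\}$) are all resolved by $S$. By Lemma~\ref{forcedSet}(b) none of them is resolved by any vertex of $\mathcal F$, so each is resolved by a vertex of $S'$. By Lemma~\ref{forcedSet}(a) a vertex of $X_i$ resolves exactly the two pairs indexed by $i$ (and nothing indexed by $i'\neq i$), and by Lemma~\ref{forcedSet}(c) a vertex of $S'$ lying in no $X_i$ resolves at most one of the $2n$ pairs. Counting ($2n$ pairs, at most $n$ resolvers in $S'$, each resolving at most two and two only by covering exactly the pair indexed by a single $X_i$) forces $|S'|=n$, forces $S'$ to contain exactly one vertex $s_i^{j_i}\in X_i$ for every $i$, and hence $S'=\{s_i^{j_i}:i\in[n]\}$ with $|S\cap X_i|=1$. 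This also makes the budget tight: $|S|\ge (34nm+18n)+n=k$, so $S\cap\mathcal F$ is exactly one false twin per gadget. In particular the set $S'$ already satisfies condition~(i) of \textsc{$n$-Multicolored Resolving Set}.

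It remains to verify condition~(ii). Fix any pair $\{u_r^x,v_r^x\}\in\mathcal P$; it is resolved by some $w\in S$. If $w\in S\cap\mathcal F$ then $w$ is one of the selected false twins, i.e.\ a forced vertex, contradicting Lemma~\ref{forcedVertex}; so $w=s_i^{j_i}\in S'$ for some $i$, that is $\mathrm{dist}_{G'}(s_i^{j_i},u_r^x)\neq\mathrm{dist}_{G'}(s_i^{j_i},v_r^x)$. Here I would invoke the fact that the extra paths and gadgets added in building $G'$ from $G$ are all too long — this is precisely why $M=40(n+1)$ is chosen — to create any shorter path from an $X$-vertex to a vertex $u_r^x$ or $v_r^x$ than already exists in $G$; consequently $\mathrm{dist}_{G'}(s_i^{j_i},u_r^x)=\mathrm{dist}_{G}(s_i^{j_i},u_r^x)$ and likewise for $v_r^x$, so $s_i^{j_i}$ resolves $\{u_r^x,v_r^x\}$ in $G$ too, and Lemma~\ref{resolve} yields $(r,x)\in A_{j_i}$. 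Running over all pairs, $S'$ resolves every pair of $\mathcal P$ in $G$ (equivalently $\bigcup_{i\in[n]}A_{j_i}=U$, which via Lemma~\ref{multiRSnp} gives the same conclusion), so $(G,n,\chi,\mathcal P)$ is a yes-instance.

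I expect the main obstacle to be the combinatorial bookkeeping of the first two steps: one must count the forced vertex gadgets precisely enough to know that the budget leaves room for exactly $n$ additional vertices, and then combine parts~(a)--(c) of Lemma~\ref{forcedSet} carefully to conclude that those $n$ vertices are forced to be a single vertex from each colour class $X_i$. The distance-preservation observation needed in the last step to transfer Lemma~\ref{resolve} from $G$ to $G'$ is conceptually routine but should be checked explicitly against every family of paths and gadgets added in the construction.
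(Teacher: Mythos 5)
Your proposal is correct and follows essentially the same route as the paper's proof: count the $34nm+18n$ forced vertex gadgets to bound $|S\setminus\mathcal F|$ by $n$, use Lemma~\ref{forcedSet} to force exactly one vertex per $X_i$, use Lemma~\ref{forcedVertex} to transfer resolution of the pairs in $\mathcal P$ to those $X$-vertices, and conclude via distance preservation between $G$ and $G'$. Your write-up is in fact somewhat more explicit than the paper's (which compresses the gadget count and the covering argument for the $2n$ pairs $\{p_i^h,q_i^h\}$ into single sentences), but there is no substantive difference in approach.
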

\begin{proof}
Suppose that $S$ is a resolving set for $G'$ of size at most $34nm+19n$. Let $\hat{S}=S\cap X$.
We claim that $\hat{S}$ is solution for $(G,n,\chi,\cal P)$.
Note that for the false twins $\{u,u'\}$ of a forced vertex gadget, no vertex resolves the vertex pair $\{u,u'\}$ except $u$ (or $u'$).
It follows that $S$ contains $34nm+18n$ forced vertices since there are $34nm+18n$ forced vertex gadgets in $G'$.
Since $X$ has no intersection with the vertex set of all forced vertex gadgets, $|\hat{S}|\leq n$.
By Lemma~\ref{forcedSet}, we get that $|\hat{S}\cap X_i|=1$ for each $i\in [n]$. Thus $|\hat{S}|=n$.
By Lemma ~\ref{forcedVertex} and the assumption that $S$ is a resolving set for $G'$,
$\hat{S}$ resolves every pair $\{u_r^i,v_r^i\}$ in $G'$ for $r\in \{1,2,3\}$ and $i\in [n]$.
We can check that the distance between $s_i^j$ and $u_r^{i'}$ in $G'$ (and the distance between $s_i^j$ and $v_r^{i'}$ in $G'$) for $i\in [n],j\in [m],i'\in [n],r\in \{1,2,3\}$ is the same as that in $G$.
Thus $\hat{S}$ is a solution for $(G,n,\chi,\cal P)$.
\end{proof}

\subsection{Treewidth bound of the graph} \label{ss:treewidth}
Since the completeness proof takes a large amount of space, before proceeding to that, we first show that $G'$ is of constant treewidth.
In fact, we will prove a slightly stronger statement that $G'$ is of constant pathwidth by giving a search strategy with a constant number of searchers.

\begin{lemma}  \label{pathwidth}
The pathwidth of $G'$ is at most $\finaltreewidth$.
\end{lemma}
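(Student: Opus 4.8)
The plan is to exhibit an explicit node-search strategy (in the sense of \cite{kirousis1985interval}) that cleans $G'$ using at most $\finaltreewidth+1$ searchers, which by the node-search-number characterization of pathwidth gives $\mathrm{pw}(G') \le \finaltreewidth$, and hence $\mathrm{tw}(G') \le \finaltreewidth$. The key structural observation driving the strategy is that $G'$ is, at its core, a collection of long internally-disjoint paths glued at a small set of ``hub'' vertices, with small triangle gadgets (the forced vertex gadgets) hanging off interior vertices of these paths. The permanent ``core'' that we will keep occupied almost the whole time consists of the nine vertices of $W = W_1 \cup W_2 \cup W_3$, together with a bounded number of auxiliary hub vertices that are active for a given color $i$; we then sweep one color class $i \in [n]$ at a time, and within a color class one index $j \in [m]$ at a time, cleaning the paths and their attached gadgets and retracting before moving on.

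\textbf{Steps.} First I would fix searchers on $W_1, W_2, W_3$ (nine searchers) for the entire duration of the search; this disconnects every $s_i^j$, $u_r^p$, $v_r^p$ from the rest and makes each of the long paths $P(s_i^j,\cdot)$, $P(u_r^p,\cdot)$, $P(v_r^p,\cdot)$ into a branch that can be swept with a constant overhead. Second, for a fixed $i \in [n]$, I would bring in searchers to occupy the color-$i$ hubs $\pi_i^1, \pi_i^2$, and place searchers on the $q_i^1, q_i^2$ and $p_i^1, p_i^2$ vertices as needed; the paths $S_i^h = P(\pi_i^h, a_r) \cup P(\pi_i^h, c_r)$ and the gadgets $F(\pi_i^h,a_r), F(\pi_i^h,c_r)$ are then cleaned by pushing a searcher along each such path (a path plus a pendant triangle needs only a constant number of extra searchers over the two endpoints). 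Third, within color $i$ I would iterate over $j \in [m]$: for each $j$, the relevant new structure consists of $P(s_i^j,a_r), P(s_i^j,b_r), P(s_i^j,c_r)$ for $r\in\{1,2,3\}$, the two paths $P(s_i^j, p_i^h)$, the path-bundles $\Pi^h(i,j,r)$ from $\pi_i^h$ to the second vertices of those paths, all the forced vertex gadgets $F^h(i,j,\cdot)$, $F(s_i^j,\cdot)$, $F^{ecc}(i,j,h,r)$, $F^{mid}(i,j,h)$, and the extra path $P(q_i^h, \mathrm{mid}(P^{3-h}(i,j,p_i^h)))$ together with the gadget $F^{mid}$; crucially all of these attach only to $W$, to $\pi_i^1, \pi_i^2$, to $q_i^1, q_i^2$, to $s_i^j$ itself, and to each other along the $j$-th path bundle — so by additionally occupying $s_i^j$ and a bounded number of its path-neighbors, I can sweep them all and then retract before incrementing $j$. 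Fourth and last, after all $j$ and both $h$ are done for color $i$, the vertices $u_r^p, v_r^p$ and their gadgets $F^1(u_r^p,v_r^p), F^2(u_r^p,v_r^p)$ and the path bundles $R_r$ are cleaned once (they depend only on $W$, not on $i$), at a point convenient in the sweep; then I retract all color-$i$ searchers and move to color $i+1$.

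\textbf{Accounting and the main obstacle.} The work then reduces to a careful but routine count: nine permanent searchers on $W$; a bounded number (at most four or so: $\pi_i^1,\pi_i^2$ and one or two of $p_i^h,q_i^h$) held for the duration of processing color $i$; a bounded number held for processing a fixed $(i,j)$ bundle ($s_i^j$ and $O(1)$ of its neighbors); and a bounded ``moving'' overhead of $O(1)$ searchers to push along each individual path and to enter/clean/exit each pendant triangle — noting that a triangle attached at its connecting vertex $v$, with $v$ already guarded, costs only two extra searchers (one on each false twin, momentarily). Summing these contributions gives a constant; the whole point of the $\finaltreewidth = 24$ in the statement is that this constant, computed exactly, comes out to $25 = \finaltreewidth+1$ searchers, i.e. pathwidth $\finaltreewidth$. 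I expect the main obstacle to be purely bookkeeping: one must verify that at the transition points — when retracting the searchers devoted to $(i,j)$ before moving to $(i,j+1)$, and when retracting color-$i$ searchers before color $i+1$ — no already-cleaned vertex has an uncleaned neighbor (which would recontaminate it), and this requires being precise that every attachment of a $j$-local or $i$-local piece to the rest of $G'$ goes through a vertex we are still guarding ($W$, or $\pi_i^h$, or $q_i^h$) at the moment of retraction. Once the adjacency pattern of the construction is laid out explicitly, checking this is mechanical, and the searcher count is then a finite sum that I would tabulate to confirm it does not exceed $\finaltreewidth+1$.
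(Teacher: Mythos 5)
Your proposal follows essentially the same route as the paper's proof: nine permanent searchers on $W$, a per-color round that guards the hubs $\pi_i^1,\pi_i^2,p_i^1,p_i^2,q_i^1,q_i^2$, an inner sweep over $j\in[m]$ using a constant number of searchers around $s_i^j$ and the midpoints $\text{mid}(P^h(i,j,p_i^{3-h}))$, and a separate pass for the $R_r$ side, for a total of $25$ searchers and hence pathwidth at most $\finaltreewidth$. The one imprecision is your estimate of ``four or so'' per-color hub searchers where all six of $p_i^h,q_i^h,\pi_i^h$ ($h\in\{1,2\}$) must be guarded throughout round $i$ (each is an attachment point for every $j\in[m]$), but this is exactly the bookkeeping you defer, and the correctly tabulated total still comes out to $25$.
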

\begin{proof}
We give a search strategy with 25 searchers.
First, we put 9 searchers on $\bigcup_{r\in \{1,2,3\}}\{a_r,b_r,c_r\}$.
The 9 searchers remain there until the end of the whole search process.
The search process consists of two phases.
We search the ``left'' part of $G'$ in the first phase and the ``right'' part of $G'$ in the second phase.

The first phase of the search process consists of $n$ rounds.
At the beginning of the $i$-th round ($i\in [n]$), we put 6 searchers on $\bigcup_{h\in \{1,2\}}\{p_i^h,q_i^h,\pi_i^h\}$.
Here when we say that we clean a path, there are already two searchers guarding at the endpoints (or the neighbor of the endpoints) of this path and
we use 3 extra searchers $x,y,z$ such that $x,y$ move alternately from one end of the path to the other end to clean the edges of the path.
When a searcher, say $x$ arrives at the connecting point of a forced vertex gadget,
we put $y,z$ on the false twins of this forced vertex gadget to clean the edges of this gadget and then after removing $y,z$, put $y$ ahead of $x$ to continue the alternating process if $x$ does not reach the endpoint of this path.
Then for each $j\in [m]$, we
\begin{itemize}
     \item put 5 vertices on $N_{G'}(s_i^j)$.
     \item put 2 vertices on $\text{mid}(P^h(i,j,p_i^{3-h}))$ for $h\in \{1,2\}$.
     \item use 3 extra searchers to clean the paths $P(s_i^j,p_i^h)$ for $h\in \{1,2\}$, the paths $P(s_i^j,a_r)$, $P(s_i^j,b_r)$, $P(s_i^j,c_r)$ for $r\in \{1,2,3\}$, the paths $P^{h}(i,j,a_r)$, $P^{h}(i,j,b_r)$, $P^{h}(i,j,c_r)$, $P^{h}(i,j,p_i^{3-h})$ for $h\in \{1,2\},r\in \{1,2,3\}$ successively (including all forced vertex gadgets attached to the vertices on these paths).
     \item remove the above 10 searchers that are still on the graph.
\end{itemize}
At the end of the $i$-th round, we remove the 6 searchers on $\bigcup_{h\in \{1,2\}}\{p_i^h,q_i^h,\pi_i^h\}$.

The second phase of the search process consists of $3$ rounds.
During the $r$-th round ($r\in \{1,2,3\}$), we operate as follows. For each $i\in [n]$, we
\begin{itemize}
     \item put 4 searchers on $u_r^i$, $v_r^i$ and the connecting point of $F^h(u_r^i,v_r^i)$ for $h\in \{1,2\}$.
     \item use 2 extra searchers to clean the paths $P(a_r,u_r^i),P(b_r,u_r^i),P(c_r,u_r^i),P(a_r,v_r^i),P(b_r,v_r^i)$ and $P(c_r,v_r^i)$ (including the forced vertex gadgets $F^h(u_r^i,v_r^i)$ for $h\in \{1,2\}$ and the edges between $F^h(u_r^i,v_r^i)$ and the paths listed above).
     \item remove the above 6 searchers that are still on the graph.
\end{itemize}
This completes the description of the the search strategy.

As a result, the node search number of $G'$ is at most 25. It follows that the pathwidth of $G'$ is bounded by $\finaltreewidth$.
\end{proof}

\subsection{Completeness of the reduction}\label{ss:difficult-direction}
For every forced vertex gadget of $G'$, we choose a vertex from the false twins arbitrarily as a forced vertex and let the set of all chosen forced vertices be $F$.
In this section, we show that if $(G,n,\chi,\cal P)$ has a solution $S$, then $S'=S\cup F$ is a resolving set of size at most $34nm+19n$ for $G'$.
Formally, we will prove the following lemma.
\begin{lemma} \label{completeness}
If $(G,n,\chi,\cal P)$ is a yes-instance, then $G'$ has a resolving set of size at most $34nm+19n$.
\end{lemma}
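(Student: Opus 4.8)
The plan is to take the set $S' = S \cup F$ described above and verify it is a resolving set of the right size. The size bound is the easy part: there are $34nm+18n$ forced vertex gadgets by construction, so $|F| = 34nm+18n$, and $|S| = n$ since $S$ picks one vertex from each $X_i$; hence $|S'| \le 34nm+19n$. The content is entirely in showing that $S'$ resolves \emph{every} pair of distinct vertices of $G'$, not just the pairs in $\mathcal{P}$.

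The approach I would take is a systematic case analysis over all unordered pairs $\{x,y\}$ of distinct vertices of $G'$, organized by where $x$ and $y$ live in the partition of $V(G')$ into the named sets ($X$, $W$, the $U_i^h$, $H_{i,r}$, $S_i^h$, $L_i^h$, $R_r$, $\Pi^h(i,j,r)$, the forced set gadget vertices $p_i^h,q_i^h,\pi_i^h$, and the forced vertex gadgets $\mathcal{F}$). The workhorse observation is that $M = 40(n+1)$ and all auxiliary path lengths are $\Theta(n)$, while the ``detour'' paths that bounce between $X$ and $W$ are strictly longer than any direct route; this is exactly the mechanism already exploited in Lemma~\ref{resolve}, and it means distances from a chosen $s_i^j \in S$ to vertices on the ``right'' side are computed along a predictable shortest path through exactly one of $a_r,b_r,c_r$. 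For each case I would exhibit one vertex of $S'$ — typically a forced vertex of a gadget hanging off a subpath near one of $x,y$, or the chosen $s_i^j$, or a chosen vertex of a pair gadget — whose distance to $x$ differs from its distance to $y$. The forced vertices are abundant and placed at eccentric / mid positions precisely so that for any two vertices on the same path, or on two ``parallel'' paths, some nearby gadget breaks the tie by a parity or $\pm 1$ argument.

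The key steps, in order: (1) record the exact distance formulas in $G'$ from each type of potential resolver in $S'$ to each type of target vertex, reusing the ``no short detours'' estimate throughout; (2) handle pairs where $x,y$ lie on the same named path or on paths attached to the same anchor vertex $s_i^j$ or $\pi_i^h$ — here a single forced vertex gadget on that path (e.g. $F^{ecc}$, $F^{mid}$, or one of the $F(s_i^j,\cdot)$) resolves the pair by a $\pm1$ or length-parity discrepancy; (3) handle pairs separated by the order-$9$ separator $W$, where the distance vector through $W$ to any chosen $s_i^j$ already distinguishes them unless they are genuinely symmetric — and genuine symmetry is exactly the situation covered by Lemmas~\ref{forcedSet} and \ref{forcedVertex}, which guarantee that the forced vertices added to kill spurious resolutions do not themselves create new unresolved pairs; (4) finally handle the ``designed'' pairs $\{u_r^i,v_r^i\} \in \mathcal{P}$: since $S$ solves $(G,n,\chi,\mathcal{P})$, some $s_i^j \in S$ resolves each such pair in $G$, and by the last sentence of Lemma~\ref{soundness}'s proof the relevant distances are unchanged in $G'$, so $s_i^j$ still resolves it.

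The main obstacle is step (3) combined with the sheer number of cases: one must check that \emph{no} pair of distinct vertices — including pairs of internal path vertices on opposite sides of $W$, pairs involving forced-gadget vertices, and pairs straddling two different color classes $X_i$, $X_{i'}$ — is left unresolved, and that the many extra paths added in Section~\ref{ss:forced-vertex} (the length-$10(n+1)$ paths $P(\pi_i^h,a_r)$, $P(\pi_i^h,c_r)$, the mid-connecting paths, the pair-gadget edges) interact correctly so that forced vertices resolve precisely the pairs they are meant to and nothing is accidentally symmetrized. This is why the authors flag it as spanning the second half of the paper: the geometry is clean but the bookkeeping over all $\binom{|V(G')|}{2}$ pair-types, with the $M$, the $\pm1,\pm2$ offsets, and the gadget placements all needing to line up, is genuinely long. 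I would therefore structure the proof as a sequence of sub-claims, one per region-pair, each with a short distance computation, rather than as a monolithic argument.
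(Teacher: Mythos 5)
Your proposal correctly identifies the strategy that the paper actually uses: take $S'=S\cup F$, note $|F|=34nm+18n$ and $|S|=n$ for the size bound, and then verify resolution of every pair by partitioning $V(G')$ into the named regions ($U$, $\Pi$, $S$, $L$, $H$, $R$, plus the forced-gadget vertices) and exhibiting, for each region pair, a resolver in $S'$ via explicit distance formulas. That is exactly the paper's decomposition (its Table of lemmas indexed by region pairs), and your step (4) for the pairs of $\mathcal{P}$ matches the paper's Claim on $\{u_r^{i'},v_r^{i'}\}$.

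However, what you have written is a plan, not a proof, and for this particular lemma the plan is the easy part: the entire mathematical content lies in the case-by-case distance computations, and you have not carried out a single one. This matters because many of the cases are not settled by a one-line ``nearby gadget breaks the tie'' argument. The paper repeatedly needs the pattern ``if the pair is not resolved by $f_1$, then its distances satisfy a specific linear relation, and under that relation $f_2$ resolves it'' (e.g.\ pairs $\{y,y'\}$ on two $b_r$-paths need both $f^{mid}(i,j,1)$ and $f^{mid}(i',j',1)$; pairs in $U\times H$ need $f(s_i^j,a_r)$ together with $f(\pi_i^h,a_r)$; some $L\times\Pi$ pairs need the $F^{ecc}$ gadgets, which exist only for this purpose). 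Correctness also hinges on certain numerical coincidences being impossible (the paper derives contradictions like $20i=40(n+1)-2$), and on the chosen vertex $s_i^{j^*}\in S$ itself being used as a resolver for some pairs inside $U_i^h\times L_i^h$ and on the $b_r$-side of $R_r$ --- facts your outline gestures at generically but never verifies. Without executing at least a representative sample of these computations, one cannot conclude that the gadget placements and the $\pm1,\pm2$ offsets actually line up, which is precisely the claim of the lemma; so as a proof the proposal has a genuine gap, even though its architecture coincides with the paper's.
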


To prove the above lemma, we need to show that every pair of distinct vertices of $G'$ is resolved by some vertex of $S'$.
First of all, We have the following claim.
\begin{claim} \label{basic}
Every vertex pair $\{u_r^{i'},v_r^{i'}\}$ in $G'$ for $r\in \{1,2,3\}, i'\in [n]$ is resolved by $S'$.
\end{claim}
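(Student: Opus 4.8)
The plan is to exhibit, for each vertex pair $\{u_r^{i'},v_r^{i'}\}$, a single vertex of $S'$ that resolves it, and the natural candidate is a vertex of $S\cap X$. Recall that $S$ is a solution to $(G,n,\chi,\cal P)$, so by definition there is some $s_h^{j_h}\in S$ with $\dist_G(s_h^{j_h},u_r^{i'})\neq \dist_G(s_h^{j_h},v_r^{i'})$; by Lemma~\ref{resolve} this means $(r,i')\in A_{j_h}$, and in fact the three paths from $s_h^{j_h}$ to $u_r^{i'}$ via $a_r,b_r,c_r$ have lengths $M,M+1,M$ while those to $v_r^{i'}$ have lengths $M,M-1,M$, so $\dist_G(s_h^{j_h},u_r^{i'})=M$ and $\dist_G(s_h^{j_h},v_r^{i'})=M-1$.

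First I would argue that these distances are unchanged in $G'$. The construction of $G'$ only attaches new pendant-ish structures (forced set gadgets, forced vertex gadgets, and the extra paths $P^h(i,j,\cdot)$, $S_i^h$, $L_i^h$, etc.), and any $s_h^{j_h}$--$u_r^{i'}$ walk that uses one of these new pieces must leave and re-enter through one of the separator vertices $W=\bigcup_r\{a_r,b_r,c_r\}$ or through an $s$-vertex, and such a detour is far longer than $M$ — this is exactly the ``goes repeatedly between $X$ and $W$'' estimate already used in Lemmas~\ref{resolve} and \ref{multiRSnp}, where such walks have length $>M+10n$. I would also need to rule out shortcuts through the new paths attached at $N_{s_i^j}(\cdot)$ and at $\pi_i^h$: each such path has length $20(n+1)$ or $10(n+1)$, far exceeding any savings, and leads into dead-end gadgets or back to $W$, so it cannot shorten a distance that is already $M=40(n+1)$ or less. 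Hence $\dist_{G'}(s_h^{j_h},u_r^{i'})=M$ and $\dist_{G'}(s_h^{j_h},v_r^{i'})=M-1$, so $s_h^{j_h}\in S'$ resolves the pair.

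The main obstacle is the bookkeeping for the ``no shortcut'' claim: one must check that none of the newly introduced paths or gadget edges, in combination, create a shorter $s_h^{j_h}$--$u_r^{i'}$ or $s_h^{j_h}$--$v_r^{i'}$ path than the intended length-$M$ (resp.\ $M-1$) one. The key quantitative facts that make this routine are that $M=40(n+1)$, every original connecting path has length $\frac{M}{2}\pm O(n)$, every new auxiliary path has length $20(n+1)$ or $10(n+1)$, and all $x,y,z,i,p$-indices lie in $[n]$, so all ``detour'' lengths are bounded below by $M-30n > $ any quantity we care about minus a term that is $O(n)$, comfortably larger than $M$. I would organize this as a short case analysis by which ``new region'' (as named in Section~\ref{ss:easy-direction}: $U_i$, $H_i$, $S_i$, $L_i$, $R_r$, $\Pi_i$, or $\cal F$) a hypothetical shortcut would pass through, observing that in each case entering and leaving that region costs strictly more than $M$, and concluding that the distances from $s_h^{j_h}$ coincide with those computed in $G$. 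This gives the claim.
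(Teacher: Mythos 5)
Your proposal is correct and follows essentially the same route as the paper: take the vertex $s_h^{j_h}\in S$ that resolves $\{u_r^{i'},v_r^{i'}\}$ in $G$, observe that the distances from $s_h^{j_h}$ to $u_r^{i'}$ and $v_r^{i'}$ are unchanged in $G'$, and conclude that $s_h^{j_h}\in S'$ still resolves the pair. The only difference is that you spell out the ``no shortcut through the new gadgets'' estimate, which the paper merely asserts with ``we can check that the distance \ldots is the same as that in $G$''.
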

\begin{proof}
Since $(G,n,\chi,\cal P)$ is a yes-instance,
$\text{dist}_{G}(s_i^j,u_r^{i'})=\text{dist}_{G'}(s_i^j,u_r^{i'})$, $\text{dist}_{G}(s_i^j,v_r^{i'})=\text{dist}_{G'}(s_i^j,v_r^{i'})$ for $i,i'\in [n],j\in [m],r\in \{1,2,3\}$,
every vertex pair $\{u_r^{i'},v_r^{i'}\}$ in $G'$ for $r\in \{1,2,3\}$ and $i'\in [n]$ is resolved by some vertex of $S\subset S'$.
\end{proof}

Suppose that $V(G')=V_1\cup V_2\cup...\cup V_t$.
Our general method is to show that for each $i\in [t]$, every internal vertex pair of $V_i$ is resolved by $S'$ and every vertex pair of $V_{i'}\times V_i$ for each $i'<i$ is resolved by $S'$.
Note that when we mention the vertex pairs of $V_{i'}\times V_i$, we ignore the vertex pairs with two identical vertices by default as it's meaningless in our problem.
In the following parts, we give a series of lemmas to show that Lemma~\ref{completeness} is true. Let $U=\bigcup_{i\in [n]}U_i,\Pi=\bigcup_{i\in [n]}\Pi_i,H=\bigcup_{i\in [n]}H_i,S=\bigcup_{i\in [n]}S_i,L=\bigcup_{i\in [n]}L_i$ and $R=\bigcup_{r\in \{1,2,3\}}R_r$. Table~\ref{tab} shows the indexes of the corresponding lemmas.

\begin{table}
\centering
\begin{tabular}{ |c|c|c|c|c|c|c| }
 \hline
      & $U$ & $\Pi$ & $S$ & $L$ & $H$ & $R$ \\
 \hline
  $U$ & \ref{UxU} & \ref{UxPi} & \ref{UxS} & \ref{UxL} & \ref{UxH} & \ref{UxR} \\
 \hline
 $\Pi$&          & \ref{PixPi} & \ref{PixS} & \ref{PixL} & \ref{PixH} & \ref{PixR} \\
 \hline
  $S$ &           &            & \ref{SxS} & \ref{LxS} & \ref{SxH} &  \ref{SxR} \\
 \hline
  $L$ &           &            &           & \ref{LxL} & \ref{LxH} & \ref{LxR} \\
 \hline
  $H$ &           &            &           &           & \ref{HxH} & \ref{HxR} \\
 \hline
  $R$ &           &            &           &           &           &  \ref{RxR} \\
 \hline
\end{tabular}
\caption{Indexes of the lemmas for the completeness of the reduction. }
\centering
\label{tab}
\end{table}

\begin{lemma} \label{UxU}
Every pair of distinct vertices $x,y\in\bigcup_{i\in [n],h\in\{1,2\}}U_i^h$ is resolved by $S'$.
\end{lemma}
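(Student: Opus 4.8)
The plan is to fix two distinct vertices $x,y\in\bigcup_{i\in[n],h\in\{1,2\}}U_i^h$ and distinguish cases according to whether they lie on paths attached to the same index $i$ or to different indices. Recall that $U_i^h=\bigcup_{j\in[m]}P(s_i^j,p_i^h)$, so each of $x,y$ lies on some path $P(s_i^j,p_i^h)$ of length $20(n+1)$ connecting $s_i^j$ to $p_i^h$, and these paths all share the endpoint $p_i^h$ (and, for fixed $i$, the vertices $s_i^j$ are the ``choice'' vertices of $X_i$). The natural resolving vertices to use are: (1) the unique chosen vertex $s_i^{j^*}\in S\cap X_i$, which sits at one end of every $P(s_i^j,p_i^h)$ and lies in $S'$; and (2) the forced vertices sitting inside the forced vertex gadgets $F^{mid}(i,j,h)$ attached to the middle vertex of $P^h(i,j,p_i^{3-h})$, together with the forced vertices on the gadgets $F^{ecc}$, $F^h(i,j,a_r)$, etc. I expect most subcases to be handled by one of $s_i^{j^*}$ or $f^{mid}(i,j,h)$.

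The key steps, in order, are as follows. \textbf{Step 1 (same $i$, same $h$, same $j$).} If $x,y$ lie on the same path $P(s_i^j,p_i^h)$, they are at different distances from the endpoint $p_i^h$ (equivalently from $s_i^j$), so either $s_i^{j^*}$ resolves them directly (when $j=j^*$), or else we route through $p_i^h$: distances from any vertex not on this path to $x$ and $y$ that are forced to go through $p_i^h$ differ because $\dist(p_i^h,x)\neq\dist(p_i^h,y)$; we invoke the forced vertex $f^{mid}(i,j,h)$ (or any forced vertex whose shortest paths to $x,y$ pass through $p_i^h$) to get different distance values, using that $P(q_i^h,\text{mid}(P^{3-h}(i,j,p_i^h)))$ has length $|P^{3-h}(i,j,p_i^h)|/2+|P(s_i^j,p_i^h)|-1$ to control which route is shortest. \textbf{Step 2 (same $i$, same $h$, different $j$).} Now $x\in P(s_i^{j_1},p_i^h)$ and $y\in P(s_i^{j_2},p_i^h)$ with $j_1\neq j_2$. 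Here $s_i^{j^*}$ is the workhorse: $\dist(s_i^{j^*},x)=\min(\dist(s_i^{j^*},s_i^{j_1})+\dist(s_i^{j_1},x),\ \dist(p_i^h,x)+20(n+1))$ and similarly for $y$; since $s_i^{j^*}$ reaches each $s_i^{j}$ only through the $W$-vertices at distance $\ge M/2-10n$ while $\dist(p_i^h,\cdot)\le 20(n+1)<M/4$, the shortest route to $x,y$ goes through $p_i^h$, giving $\dist(s_i^{j^*},x)=20(n+1)+\dist(p_i^h,x)$, and if this coincidentally equals the value for $y$ we break the tie with $f^{mid}(i,j_1,h)$, whose distance to $x$ uses the local path while its distance to $y$ must detour through $p_i^h$. \textbf{Step 3 (same $i$, different $h$).} If $x\in U_i^1$ and $y\in U_i^2$, then $s_i^{j^*}$ reaches $x$ via $p_i^1$ and $y$ via $p_i^2$; the two branch paths have equal length $20(n+1)$, so we compare $\dist(p_i^1,x)$ with $\dist(p_i^2,y)$ and, if equal, use a forced vertex such as $f^{mid}(i,j,1)$ which is ``close'' to $U_i^1$ via the shortcut path to $q_i^1$ but far from $U_i^2$. \textbf{Step 4 (different $i$).} If $x\in U_{i_1}^{h_1}$, $y\in U_{i_2}^{h_2}$ with $i_1\neq i_2$, then the chosen vertex $s_{i_1}^{j^*}\in S\cap X_{i_1}$ is at distance $\le 20(n+1)+20(n+1)$ from $x$ (through $p_{i_1}^{h_1}$) but any path from it to $y$ must leave the $i_1$-side through the $W$-vertices, costing $\ge M/2-10n=20(n+1)-10n+\ldots$, so the distances are forced apart by the gap between $M$ and $40(n+1)$; here I should double-check the exact inequality $2\cdot 20(n+1) < M/2-10n$, which holds since $M/2=20(n+1)$ gives $M/2-10n=20(n+1)-10n$; if that is too tight I instead observe $x$ is within $40(n+1)$ of $s_{i_1}^{j^*}$ while $y$ is at distance $\ge 40(n+1)+1$, or use the forced vertices $f^{h_1}(i_1,j,p_{i_1}^{3-h_1})$ which are pinned near $U_{i_1}$.

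The main obstacle I anticipate is Step 2, the ``same $i,h$, different $j$'' case, together with verifying the arithmetic that the shortest paths from $s_i^{j^*}$ (and from the various forced vertices) to a vertex of $U_i^h$ genuinely route through $p_i^h$ rather than through $W$ or through the forced-set gadget connections at $N_{s_i^j}(s_i^j,\cdot)$; this requires carefully comparing $20(n+1)$ with the lengths involving $M=40(n+1)$ and the $\pm 10x,\pm 5x$ perturbations, and checking that no ``zig-zag'' path through repeated $X$–$W$ alternations becomes competitive (the bound $M-20n+M-10n>M+10n$ established in Lemma~\ref{resolve} is the template). Once the routing is pinned down, each subcase reduces to comparing two explicit integer distances, and the gadget paths $P(q_i^h,\text{mid}(\cdot))$ and the eccentricity gadgets $F^{ecc}$ were constructed precisely so that at least one resolving vertex always exists.
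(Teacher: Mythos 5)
Your overall plan---a case analysis on whether $x,y$ share the index $i$, the side $h$, and the attachment vertex $s_i^j$, with each case settled by the chosen vertex $s_i^{j^*}$ or a nearby forced vertex---is the same skeleton the paper uses, but several of the concrete routing claims your cases rest on are false, and since the entire content of this lemma \emph{is} the distance arithmetic, these are genuine gaps. The clearest one is Step 2: the shortest path from $s_i^{j^*}$ to $x\in P(s_i^{j_1},p_i^h)$ does \emph{not} in general go through $p_i^h$. Writing $\ell_x=\dist(s_i^{j_1},x)$ along the path, the route through $p_i^h$ costs $40(n+1)-\ell_x$, while the route through a $W$-vertex costs $\dist(s_i^{j^*},s_i^{j_1})+\ell_x$ with $\dist(s_i^{j^*},s_i^{j_1})=\min_r\bigl(|P(s_i^{j^*},c_r)|+|P(s_i^{j_1},c_r)|\bigr)\le 40(n+1)-20$; so already for $x=s_i^{j_1}$ the $W$-route wins and your formula $\dist(s_i^{j^*},x)=20(n+1)+\dist(p_i^h,x)$ fails. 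The paper does not use $s_i^{j^*}$ at all here: it uses $f^{mid}(i,j,3-h)$, which enters $P(s_i^j,p_i^h)$ through $N_{s_i^j}(s_i^j,p_i^h)$ at distance $10(n+1)+\ell_x\le 30(n+1)$, but is at distance at least $30(n+1)$ from every vertex of $P(s_i^{j'},p_i^h)$ with $j'\ne j$, with equality only at $p_i^h$.

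Two further steps break for similar reasons. In Step 1 your named witness $f^{mid}(i,j,h)$ is \emph{not} forced through $p_i^h$: it reaches $s_i^j$ in $10(n+1)+2$ steps via $N_{s_i^j}(s_i^j,p_i^{3-h})$, so $\dist(f^{mid}(i,j,h),x)=\min\bigl(2+10(n+1)+\ell_x,\ 2+30(n+1)-\ell_x\bigr)$, which takes equal values at positions $\ell_x$ and $20(n+1)-\ell_x$ and therefore misses exactly the pairs symmetric about the midpoint of $P(s_i^j,p_i^h)$. A witness that genuinely only has access via $\pi_i^h\to p_i^h$ is $f^h(i,j',a_1)$ for $j'\ne j$, giving the injective value $3+20(n+1)-\ell_x$; this is what the paper uses. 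In Step 4, your fallback claim that every $y\in U_{i_2}$ is at distance at least $40(n+1)+1$ from $s_{i_1}^{j^*}$ is also false, since $\dist(s_{i_1}^{j^*},s_{i_2}^{j})\le 40(n+1)-20$ via a $c_r$-vertex; the paper instead uses $f^{mid}(i,j,h)$, whose depth of $10(n+1)+2$ from both $s_i^j$ and $\pi_i^h$ forces distance greater than $1+30(n+1)$ to anything in $U_{i'}$ with $i'\ne i$ while keeping distance at most $2+20(n+1)$ to $U_i^h$. So the toolbox you identify is the right one, but the specific witnesses and routing estimates must be replaced before any of the four steps goes through.
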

\begin{proof}
First, we show that every pair of distinct vertices $x,y\in\bigcup_{h\in\{1,2\}}U_i^h$ for $i\in [n]$ is resolved by $S'$.
We fix arbitrary integers $i\in [n],j,j'\in [m]$ and $h\in \{1,2\}$ such that $j'\neq j$.\

Suppose that $x,x'\in P(s_i^j,p_i^h)$.
For a vertex $x\in P(s_i^j,p_i^h)$, let $P(s_i^j,x)$ be the subpath of $P(s_i^j,p_i^h)$ from $s_i^j$ to $x$ and $|P(s_i^j,x)|=\ell_x$.
Since $\dist(f^h(i,j',a_1),x)=3+20(n+1)-\ell_x$,
$f^h(i,j',a_1)$ resolves every pair $\{x,x'\}$ such that $\ell_x\neq \ell_{x'}$.

Suppose that $x\in P(s_i^j,p_i^h)$ and $y\in P(s_i^{j'},p_i^h)$ such that $j'\in [m]$ and $j'\neq j$.
We define $\ell_x$ and $\ell_y$ in a similar way to that of $\ell_x$ in the second paragraph.
$\dist(f^{mid}(i,j,3-h),x)=10(n+1)+\ell_x$ if $\ell_x\geq 1$ and $\dist(f^{mid}(i,j,3-h),x)=10(n+1)+2$ if $\ell_x=1$.
$\dist(f^{mid}(i,j,3-h),y)=\text{min }(2+|P^{3-h}(i,j,p_i^h)|/2+|P(s_i^{j'},p_i^{3-h})|+\ell_y,|P^{3-h}(i,j,p_i^h)|/2+|P(s_i^j,p_i^h)|+|P(s_i^{j'},p_i^h)|-\ell_y)=\text{min }(2+30(n+1)+\ell_y,50(n+1)-\ell_y)\geq 30(n+1)\geq \dist(f^{mid}(i,j,3-h),x)$ and the equalities hold if and only if $x=y=p_i^h$.
Thus every pair $\{x,y\}$ is resolved by $f^{mid}(i,j,3-h)$.



Suppose that $x\in P(s_i^j,p_i^h)\setminus \{s_i^j\}$ and $y\in P(s_i^j,p_i^{3-h})\setminus \{s_i^j\}$.
We define $\ell_x$ and $\ell_y$ in a similar way to that of $\ell_x$ in the second paragraph.
Then $\dist(f^h(i,j',a_1),x)=3+20(n+1)-\ell_x\leq 2+20(n+1)$ and $\dist(f^h(i,j',a_1),y)=\text{min }(1+|P^h(i,j,p_i^{3-h})|+\ell_y,3+|P(\pi_i^h,c_1)|+|P(\pi_i^{3-h},c_1)|+|P(s_i^j,p_i^{3-h})|-\ell_y)=\text{min }(1+20(n+1)+\ell_y,3+40(n+1)-\ell_y)\geq 2+20(n+1)$.
We see from the two equations that $\dist(f^h(i,j,a_1),x)\neq\dist(f^h(i,j,a_1),y)$ unless $\ell_x=\ell_y=1$.
We can check that $f^h(i,j,p_i^{3-h})$ resolves the pair $\{x,y\}$ with $\ell_x=\ell_y=1$.
Thus every pair $\{x,y\}$ is resolved by $f^h(i,j',a_1)$ or $f^h(i,j,p_i^{3-h})$.

Suppose that $x\in P(s_i^j,p_i^h)$ and $y\in P(s_i^{j'},p_i^{3-h})$.
We define $\ell_x$ and $\ell_y$ in a similar way to that of $\ell_x$ in the second paragraph.
Then $\dist(f^{mid}(i,j,h),x)=\text{min }(2+|P^h(i,j,p_i^{3-h})|/2+\ell_x,2+|P^h(i,j,p_i^{3-h})|/2+|P(s_i^j,p_i^h)|-\ell_x)=\text{min }(2+10(n+1)+\ell_x,2+30(n+1)-\ell_x)\leq 2+20(n+1)$.
$\dist(f^{mid}(i,j,h),y)=\text{min }(|P^h(i,j,p_i^{3-h})|/2+|P(s_i^j,p_i^{3-h})|+|P(s_i^{j'},p_i^{3-h})|-\ell_y,2+|P^h(i,j,p_i^{3-h})|/2+|P(s_i^{j'},p_i^h)|+\ell_y)=\text{min }(50(n+1)-\ell_y,2+30(n+1)+\ell_y)\geq 2+30(n+1)$.
Thus every pair $\{x,y\}$ is resolved by $f^{mid}(i,j,h)$.

Finally we show that every pair of distinct vertices $\{x,y\}\in \bigcup_{h\in\{1,2\}}U_i^h\times \bigcup_{h'\in\{1,2\}}U_{i'}^{h'}$ with $i,i'\in [n]$ and $i\neq i'$ is resolved by $S'$.
We fix arbitrary integers $i,i'\in [n], j,j'\in [m]$ and $h,h'\in \{1,2\}$ such that $i\neq i'$.
Let $x\in P(s_i^j,p_i^h)$ and $y\in P(s_{i'}^{j'},p_{i'}^{h'})$.
We define $\ell_x$ and $\ell_y$ in a similar way to that of $\ell_x$ in the second paragraph.
Then as we show in last paragraph, $\dist(f^{mid}(i,j,h),x)=\text{min }(2+10(n+1)+\ell_x,2+30(n+1)-\ell_x)\leq 2+20(n+1)$.
$\dist(f^{mid}(i,j,h),s_{i'}^{j'})=\text{min}_{r\in \{1,2,3\}}(1+|P^h(i,j,p_i^{3-h})|/2+|P(\pi_i^h,c_r)|+|P(c_r,s_{i'}^{j'})|)$.
$\dist(f^{mid}(i,j,h),y)=
\text{min }(\dist(f^{mid}(i,j,h),s_{i'}^{j'})+\ell_y,2+|P^h(i,j,p_i^{3-h})|/2+|P(\pi_i^h,c_1)|+|P(\pi_{i'}^{h'},c_1)|+|P(s_{i'}^{j'},p_{i'}^{h'})|-\ell_y)
> 1+30(n+1)>\dist(f^{mid}(i,j,h),x)$.
Thus every pair $\{x,y\}$ is resolved by $f^{mid}(i,j,h)$.
This completes the proof for the lemma.
\end{proof}

\begin{lemma} \label{HxH}
Every pair of distinct vertices $x,y\in\bigcup_{i\in [n],r\in\{1,2,3\}}H_{i,r}$ is resolved by $S'$.
\end{lemma}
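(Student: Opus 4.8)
The plan is to mimic the structure of Lemma~\ref{UxU}, this time for the ``$H$-block'' of vertices, namely the union of the paths $P(s_i^j,a_r),P(s_i^j,b_r),P(s_i^j,c_r)$ over all $i\in[n]$, $j\in[m]$, $r\in\{1,2,3\}$. First I would fix $i\in[n]$, $r\in\{1,2,3\}$ and handle pairs $x,y$ lying on the same ``star'' $H_{i,r}$; then pairs on $H_{i,r}$ versus $H_{i,r'}$ with $r'\ne r$ (same $i$); then pairs on $H_{i,r}$ versus $H_{i',r'}$ with $i'\ne i$. In each subcase I introduce a length parameter $\ell_x=|P(s_i^j,x)|$ measuring how far $x$ sits from $s_i^j$ along its path (and symmetrically $\ell_y$), and exhibit a single forced vertex $f(\cdot)$ from $\mathcal F$ whose distance to a vertex of $H_{i,r}$ is a strictly monotone affine function of $\ell$, so that two vertices at different depths on the same path are immediately separated, while two vertices on different paths are separated because one distance is forced to be ``shallow'' ($\le$ roughly $M/2+\text{const}$) and the other ``deep'' ($\ge M/2+\text{const}$), using the fact that $M=40(n+1)$ dominates all the additive perturbations $10x,5x\pm 1,\dots$ and all the lengths $20(n+1)$, $10(n+1)$ of the auxiliary paths.

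The natural candidates for the separating forced vertices are the ones hanging off $N_{s_i^j}(s_i^j,\cdot)$ and off the auxiliary paths near $s_i^j$: $f^h(i,j,a_r)$, $f^h(i,j,b_r)$, $f^h(i,j,c_r)$ (at distance $3$ from $s_i^j$ via $\pi_i^h$), together with $f(s_i^j,a_r)$ and $f(s_i^j,c_r)$ (at distance $2$ from $s_i^j$ along $P(s_i^j,a_r)$ resp.\ $P(s_i^j,c_r)$), and $f^{ecc}(i,j,h,r)$ / $f^{mid}(i,j,h)$ for the cross-star cases. Concretely: for $x,x'$ on the same path $P(s_i^j,a_r)$, the vertex $f(s_i^j,c_r)$ is at distance $2+\ell_x$ from $x$ if the shortest route leaves $s_i^j$, so it resolves any pair with $\ell_x\ne\ell_{x'}$ as long as this ``through-$s_i^j$'' route is shortest — which it is, since the only alternative route (going up to some $W$-vertex and back) costs $\ge M-20n+\cdots>M$, far larger. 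For $x\in P(s_i^j,a_r)$ and $y\in P(s_i^j,b_r)$ at the same depth, one picks $f(s_i^j,a_r)$: it is at distance $2+\ell_x$ (route through $s_i^j$, then down the same $a_r$-path) from $x$ but $\ge 2+2\ell_x$-ish — more precisely, $2$ plus the length of the shortest detour from the first vertex of the $a_r$-path to $y$, which must go back through $s_i^j$ — hence strictly larger for $\ell_x\ge 1$; the boundary case $\ell_x=\ell_y=0$ is just $x=y$. For $x\in H_{i,r}$, $y\in H_{i',r'}$ with $i'\ne i$, the forced vertex $f^{mid}(i,j,h)$ (or $f^{ecc}$) again gives a ``shallow'' value $O(n)+\ell_x$ to $x$ but a ``deep'' value $\ge M/2+\Omega(n)$ to $y$, because any $s_i^j$-to-$s_{i'}^{j'}$ shortest path must cross $W$, paying $\approx M$, and this dominates.

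The key technical observations I would isolate up front, to reuse across subcases, are: (1) for any two vertices $u,v$ in $X\cup W\cup(\text{attached paths})$, every shortest path between an $X$-vertex and another $X$-vertex, or between an $X$-vertex and a $\mathcal P$-vertex, uses exactly one $W$-vertex as its ``turning point'' and has length $\approx M$ (this is essentially the content of Lemma~\ref{resolve}'s ``repeated-crossing'' remark), so all relevant distances have a clean closed form; and (2) a forced vertex attached at distance $d$ from its host vertex $v$ satisfies $\dist(f,w)=d+\dist(v,w)$ whenever $v$ lies on every shortest $v$--$w$ path of interest, which holds here because the hosts are cut vertices separating their gadget from the rest. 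With these two facts the per-case arguments reduce to comparing two affine-in-$\ell$ expressions and checking the boundary equality case forces $x=y$ — exactly the pattern already used in Lemma~\ref{UxU}.

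I expect the main obstacle to be bookkeeping rather than conceptual: one must be careful that the chosen forced vertex actually lies in $S'$ (it does, since $F\subseteq S'$) and that the ``through-$s_i^j$'' route really is a shortest route for both endpoints — the competing routes through $\pi_i^h$ or through a $W$-vertex must be checked to be longer, which is where the inequality $M=40(n+1)\gg 20(n+1)>10(n+1)\gg 10n$ is used repeatedly. A secondary subtlety is the same-path subcase when $x$ or $y$ is very close to $s_i^j$ (depth $0$ or $1$): there the ``min'' in the distance formula switches branch, so I would, as in Lemma~\ref{UxU}, note that the switch happens only at a single depth and pick a second forced vertex (e.g.\ $f(s_i^j,a_r)$ versus $f(s_i^j,c_r)$, which are symmetric across $s_i^j$) to cover that corner, so that between the two of them every pair is resolved. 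Modulo these routine checks, the lemma follows.
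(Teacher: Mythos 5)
Your high-level plan (decompose by star, compare affine-in-$\ell$ distance expressions, use a second forced vertex at the branch-switch) matches the paper's, but two of your concrete claims about the construction are wrong, and they are load-bearing. The gadget $F(s_i^j,a_r)$ is attached to $N_{a_r}(s_i^j,a_r)$, the neighbor of $a_r$ on $P(s_i^j,a_r)$ (and $F(s_i^j,c_r)$ to the neighbor of $c_r$), while $F^{h}(i,j,a_r)$ is attached to $N^{h}_{\pi_h}(i,j,a_r)$, the neighbor of $\pi_i^h$; none of these sits at distance $2$ or $3$ from $s_i^j$ — in fact no forced vertex is anywhere near $s_i^j$. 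So your formula $\dist(f(s_i^j,c_r),x)=2+\ell_x$ for $x\in P(s_i^j,a_r)$ is false; with the actual placement one gets $\dist(f(s_i^j,c_r),x)=\min\bigl(|P(s_i^j,c_r)|+\ell_x,\;2+20(n+1)+|P(s_i^j,a_r)|-\ell_x\bigr)$, which is \emph{not} monotone in $\ell_x$ (the min switches branch near the $a_r$ end), so this vertex does not even resolve all pairs on a single path. The paper instead uses $f(\pi_i^1,a_r)$, sitting two steps from $a_r$ on $P(\pi_i^1,a_r)$, whose distance $2+|P(s_i^j,a_r)|-\ell_x$ is genuinely monotone because the competing route through $\pi_i^1$ costs at least $30(n+1)$.

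The second, more substantial gap is that your ``shallow versus deep'' dichotomy only separates a vertex near $s_i^j$ from a vertex near $s_{i'}^{j'}$; it says nothing about pairs in which \emph{both} vertices are deep, i.e.\ close to $W$. The critical instance is $y\in P(s_i^j,b_r)$ versus $y'\in P(s_{i'}^{j'},b_{r'})$: from every forced vertex on the left side both distances are of order $30(n+1)$ and can coincide, and $b_r,b_{r'}$ carry no adjacent forced vertex of their own. The paper resolves this with the paired gadgets $f^1(u_r^{i_r},v_r^{i_r})$ and $f^2(u_r^{i_r},v_r^{i_r})$ (attached at distances $1$ and $2$ from $u_r^{i_r}$ on $P(a_r,u_r^{i_r})$ and $P(c_r,u_r^{i_r})$), exploiting that they are equidistant from every vertex reached through $b_r$ but differ by exactly one on vertices reached through $a_r$ or $c_r$ — an off-by-one argument your sketch never invokes (you do not mention these gadgets at all); the same-$r$ subcase additionally needs playing $f^{mid}(i,j,1)$ against $f^{mid}(i',j',1)$. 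These are precisely the cases the lemma exists to handle, so they cannot be absorbed into ``routine checks''; as written, the proposal leaves them unproved.
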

\begin{proof}
First we show that every vertex pair of $H_{i,r}\times H_{i',r}$ such that $i',i\in [n]$ and $r\in \{1,2,3\}$ is resolved by $S'$.
We fix arbitrary integers $i\in [n],j\in [m]$ and $r\in \{1,2,3\}$.

Given two distinct vertices $x_1,x_2\in P(s_i^j,a_r)$, we can verify that $f(\pi_i^1,a_r)$ resolves the pair $\{x_1,x_2\}$.
Similarly, two distinct vertices $x'_1,x'_2\in P(s_i^j,c_r)$ are distinguished by $f(\pi_i^1,c_r)$.
Suppose that $x\in P(s_i^j,a_r)\setminus \{s_i^j\}$, $y\in P(s_i^j,b_r)\setminus \{s_i^j\}$ and $z\in P(s_i^j,c_r)\setminus \{s_i^j\}$.
Let $P(s_i^j,x)$, $P(s_i^j,y)$ and $P(s_i^j,z)$ be the subpath of $P(s_i^j,a_r)$, $P(s_i^j,b_r)$ and $P(s_i^j,c_r)$ respectively.
Let $|P(s_i^j,x)|=\ell_x$, $|P(s_i^j,y)|=\ell_y$ and $|P(s_i^j,z)|=\ell_z$.
Similarly, for a vertex $y'\in P(s_i^j,b_r)$, we define $|P(s_i^j,y')|=\ell_{y'}$.
Since $\dist(f^{mid}(i,j,1),y)=2+|P^{1}(i,j,p_i^{2})|/2+\ell_y$ and $\dist(f^{mid}(i,j,1),y')=2+|P^{1}(i,j,p_i^{2})|/2+\ell_{y'}$,
two distinct vertices $y$ and $y'$ are distinguished by $f^{mid}(i,j,1)$.
$\dist(f(\pi_i^1,a_r),x)=2+|P(s_i^j,a_r)|-\ell_x$.
$\dist(f(\pi_i^1,a_r),b_r)=\text{min}_{j'\in [m]}(2+|P(s_i^{j'},a_r)|+|P(s_i^{j'},b_r)|)=2+20(n+1)+10+20(n+1)+5$.
$\dist(f(\pi_i^1,a_r),y)=\text{min }(2+|P(s_i^j,a_r)|+\ell_y,\dist(f(\pi_i^1,a_r),b_r)+|P(s_i^j,b_r)|-\ell_y)$.
For a vertex pair $\{x,y\}$ such that $\dist(f(\pi_i^1,a_r),y)=2+|P(s_i^j,a_r)|+\ell_y$, obviously the pair is resolved by $f(\pi_i^1,a_r)$.
For a vertex pair $\{x,y\}$ such that $\dist(f(\pi_i^1,a_r),y)=\dist(f(\pi_i^1,a_r),b_r)+|P(s_i^j,b_r)|-\ell_y$,
$\dist(f(\pi_i^1,a_r),y)\geq 40(n+1)+17>\dist(f(\pi_i^1,a_r),x)$.
Thus every pair $\{x,y\}$ is resolved by $f(\pi_i^1,a_r)$.
Similarly, every pair $\{y,z\}$ is resolved by $f(\pi_i^1,c_r)$.
For a vertex pair $\{x,z\}$, $\dist(f(\pi_i^1,c_r),z)=2+|P(s_i^j,c_r)|-\ell_z<20(n+1)$ if $z\neq c_r$ and
$\dist(f(\pi_i^1,c_r),c_r)=2$.
$\dist(f(\pi_i^1,c_r),x)=\text{min }(2+|P(s_i^j,c_r)|+\ell_x,|P(\pi_i^1,c_r)|+|P(\pi_i^1,a_r)|+|P(s_i^j,a_r)|-\ell_x)\geq 2+|P(s_i^j,c_r)|>\dist(f(\pi_i^1,c_r),z)$.
Thus every pair $\{x,z\}$ is resolved by $f(\pi_i^1,c_r)$.

Let $i'\in [n],j'\in [m]$ be integers such that $i'\neq i$ or $j'\neq j$.
Suppose that $x\in P(s_i^j,a_r)\setminus \{a_r\}$, $y\in P(s_i^j,b_r)\setminus \{b_r\}$ and $z\in P(s_i^j,c_r)\setminus \{c_r\}$.
Suppose that $x'\in P(s_{i'}^{j'},a_r)\setminus \{a_r\}$, $y'\in P(s_{i'}^{j'},b_r)\setminus \{b_r\}$
and $z'\in P(s_{i'}^{j'},c_r)\setminus \{c_r\}$.
We define $\ell_x$, $\ell_y$, $\ell_z$, $\ell_{x'}$, $\ell_{y'}$ and $\ell_{z'}$ in a similar way to that of $\ell_{x}$ in the second paragraph.
For a pair $\{x,x'\}$, since $\dist(f(\pi_i^1,a_r),x)=2+|P(s_i^j,a_r)|-\ell_x$ and $\dist(f(\pi_i^1,a_r),x')=2+|P(s_{i'}^{j'},a_r)|-\ell_{x'}$,
$f(\pi_i^1,a_r)$ resolves every pair $\{x,x'\}$ such that $|P(s_i^j,a_r)|-\ell_x \neq |P(s_{i'}^{j'},a_r)|-\ell_{x'}$.
Since $\dist(f(s_i^j,a_r),x)=|P(s_i^j,a_r)|-\ell_x$ and $\dist(f(s_i^j,a_r),x')=2+|P(s_{i'}^{j'},a_r)|-\ell_{x'}$,
$f(s_i^j,a_r)$ resolves every pair $\{x,x'\}$ such that $|P(s_i^j,a_r)|-\ell_x=|P(s_{i'}^{j'},a_r)|-\ell_{x'}$.
As a result, every pair $\{x,x'\}$ is resolved by $f(\pi_i^1,a_r)$ or $f(s_i^j,a_r)$.
Similarly, every pair $\{z,z'\}$ is resolved by $f(\pi_i^1,c_r)$ or $f(s_i^j,c_r)$.
For a pair $\{y,y'\}$, there are two cases.
Case 1: $i=i'$ and $j\neq j'$.
$\dist(f^{mid}(i,j,1),y)=2+|P^{1}(i,j,p_i^{2})|/2+\ell_y$.
$\dist(f^{mid}(i,j,1),y')=\text{min }(2+|P^{1}(i,j,p_i^{2})|/2+|P(s_i^j,b_r)|+|P(s_{i'}^{j'},b_r)|-\ell_{y'},1+|P^{1}(i,j,p_i^{2})|/2+|P^1(i',j',b_r)|+\ell_{y'}-1)$ if $y'\neq s_{i'}^{j'}$ and $\dist(f^{mid}(i,j,1),s_{i'}^{j'})=2+30(n+1)$.
If a pair $\{y,y'\}$ satisfies that $\dist(f^{mid}(i,j,1),y')=2+|P^{1}(i,j,p_i^{2})|/2+|P(s_i^j,b_r)|+|P(s_{i'}^{j'},b_r)|-\ell_{y'}$,
obviously $f^{mid}(i,j,1)$ resolves this pair.
Thus if a pair $\{y,y'\}$ is not resolved by $f^{mid}(i,j,1)$,
then we have $\dist(f^{mid}(i,j,1),y)=2+10(n+1)+\ell_y=\dist(f^{mid}(i,j,1),y')=30(n+1)+\ell_{y'}$, i.e. $\ell_y-\ell_{y'}=20(n+1)-2$.
For this pair,
$\dist(f^{mid}(i',j',1),y')=2+10(n+1)+\ell_{y'}<\dist(f^{mid}(i',j',1),y)=\text{min }(2+10(n+1)+|P(s_{i'}^{j'},b_r)|+|P(s_i^j,b_r)|-\ell_y,30(n+1)+\ell_y)$.
Thus in this case, every pair $\{y,y'\}$ is resolved by $f^{mid}(i,j,1)$ or $f^{mid}(i',j',1)$.
Case 2: $i\neq i'$.
$\dist(f^{mid}(i,j,1),y)=2+|P^{1}(i,j,p_i^{2})|/2+\ell_y$.
$\dist(f^{mid}(i,j,1),s_{i'}^{j'})=1+|P^{1}(i,j,p_i^{2})|/2+\text{min}_{r'\in \{1,2,3\}}(|P(\pi_i^1,c_{r'})|+|P(s_{i'}^{j'},c_{r'})|)$.
$\dist(f^{mid}(i,j,1),y')=\text{min }(2+|P^{1}(i,j,p_i^{2})|/2+|P(s_i^j,b_r)|+|P(s_{i'}^{j'},b_r)|-\ell_{y'},\dist(f^{mid}(i,j,1),s_{i'}^{j'})+\ell_{y'})$.
If a pair $\{y,y'\}$ satisfies that $\dist(f^{mid}(i,j,1),y')=2+|P^{1}(i,j,p_i^{2})|/2+|P(s_i^j,b_r)|+|P(s_{i'}^{j'},b_r)|-\ell_{y'}$,
obviously $f^{mid}(i,j,1)$ resolves this pair.
Thus if a pair $\{y,y'\}$ is not resolved by $f^{mid}(i,j,1)$,
then we have $\dist(f^{mid}(i,j,1),y)=2+10(n+1)+\ell_y=\dist(f^{mid}(i,j,1),y')=\dist(f^{mid}(i,j,1),s_{i'}^{j'})+\ell_{y'}$,
i.e. $\ell_y-\ell_{y'}=\dist(\pi_i^1,s_{i'}^{j'})-1$.
For this pair,
$\dist(f^{mid}(i',j',1),y')=2+10(n+1)+\ell_{y'}<\dist(f^{mid}(i',j',1),y)=\text{min }(2+10(n+1)+|P(s_{i'}^{j'},b_r)|+|P(s_i^j,b_r)|-\ell_y, \dist(f^{mid}(i',j',1),s_i^j)+\ell_y)$.
Thus in this case, every pair $\{y,y'\}$ is resolved by $f^{mid}(i,j,1)$ or $f^{mid}(i',j',1)$.
It follows that every pair $\{y,y'\}$ is resolved by $f^{mid}(i,j,1)$ or $f^{mid}(i',j',1)$.
For a pair $\{x,y'\}$, there are two cases.
Case 1: $|P(s_i^j,a_r)|>20(n+1)+10\cdot 1=\min_{\alpha\in [m]}|P(s_i^{\alpha},a_r)|$.
Then $\dist(f(\pi_i^1,a_r),y')=\text{min }(2+|P(s_{i'}^{j'},a_r)|+\ell_{y'},2+20(n+1)+10\cdot 1+20(n+1)+5\cdot 1+1+|P(s_{i'}^{j'},b_r)|-\ell_{y'})=\dist(f(s_i^j,a_r),y')$
and $\dist(f(\pi_i^1,a_r),x)=2+|P(s_i^j,a_r)|-\ell_x=\dist(f(s_i^j,a_r),x)+2$.
In this case, $\{x,y'\}$ is resolved by $f(\pi_i^1,a_r)$ or $f(s_i^j,a_r)$.
Case 2: $|P(s_i^j,a_r)|=20(n+1)+10\cdot 1=\min_{p\in [m]}|P(s_i^p,a_r)|$.
Then $\dist(f(s_i^j,a_r),x)\leq 20(n+1)+10<\dist(f(s_i^j,a_r),y')$.
Thus in this case, $\{x,y'\}$ is resolved by $f(s_i^j,a_r)$.
It follows that every pair $\{x,y'\}$ is resolved by $f(\pi_i^1,a_r)$ or $f(s_i^j,a_r)$.
For a pair $\{x,z'\}$, $\dist(f(\pi_i^1,a_r),x)=2+|P(s_i^j,a_r)|-\ell_x=\dist(f(s_i^j,a_r),x)+2$, and $\dist(f(\pi_i^1,a_r),z')=\min{(2+|P(s_{i'}^{j'},a_r)|+\ell_{z'}, |P(\pi_i^1,a_r)|+|P(\pi_i^1,c_r)|+|P(s_{i'}^{j'},c_r)|-\ell_{z'})}\leq \dist(f(s_i^j,a_r),z')=\text{min }(2+|P(s_{i'}^{j'},a_r)|+\ell_{z'},2+|P(\pi_i^1,a_r)|+|P(\pi_i^1,c_r)|+|P(s_{i'}^{j'},c_r)|-\ell_{z'})$.
It follows that every pair $\{x,z'\}$ is resolved by $f(\pi_i^1,a_r)$ or $f(s_i^j,a_r)$.
For a pair $\{y,z'\}$, there are two cases.
Case 1: $|P(s_{i'}^{j'},c_r)|+|P(s_{i'}^{j'},b_r)|>20(n+1)-10n+20(n+1)+5n+1=\min_{\alpha\in [m]}{(|P(s_{i'}^{\alpha},c_r)|+|P(s_{i'}^{\alpha},b_r)|)}$.
Then $\dist(f(\pi_i^1,c_r),y)=\text{min }(2+|P(s_i^j,c_r)|+\ell_y,3+40(n+1)-5n+|P(s_i^j,b_r)|-\ell_y)=\dist(f(s_{i'}^{j'},c_r),y)$ and $\dist(f(\pi_i^1,c_r),z')=2+|P(s_{i'}^{j'},c_r)|-\ell_{z'}=\dist(f(s_{i'}^{j'},c_r),z')+2$.
In this case, $\{y,z'\}$ is resolved by $f(\pi_i^1,c_r)$ or $f(s_{i'}^{j'},c_r)$.
Case 2: $|P(s_{i'}^{j'},c_r)|+|P(s_{i'}^{j'},b_r)|=20(n+1)-10n+20(n+1)+5n+1$.
Then $\dist(f(s_{i'}^{j'},c_r),y)\geq 2+20(n+1)-10n>\dist(f(s_{i'}^{j'},c_r),z')$.
It follows that every pair $\{y,z'\}$ is resolved by $f(\pi_i^1,c_r)$ or $f(s_{i'}^{j'},c_r)$.
This completes the proof to show that every vertex pair of $H_{i,r}\times H_{i',r}$ such that $i',i\in [n]$ and $r\in \{1,2,3\}$ is resolved by $S'$.

Next we show that every vertex pair of $H_{i,r}\times H_{i',r'}$ such that $i,i'\in [n]$, $r,r'\in \{1,2,3\}$ and $r\neq r'$ is resolved by $S'$.
We fix arbitrary integers $i,i'\in [n],j,j'\in [m]$ and $r,r'\in \{1,2,3\}$ such that $r\neq r'$.
Suppose that $x\in P(s_i^j,a_r)$, $y\in P(s_i^j,b_r)$, $z\in P(s_i^j,c_r)$, $x'\in P(s_{i'}^{j'},a_{r'})$, $y'\in P(s_{i'}^{j'},b_{r'})$ and $z'\in P(s_{i'}^{j'},c_{r'})$.
We define $\ell_x$, $\ell_y$, $\ell_z$, $\ell_{x'}$, $\ell_{y'}$ and $\ell_{z'}$ in a similar way to that of $\ell_{x}$ in the second paragraph.
For a vertex pair $\{x,x'\}$, there are two cases.
Case 1: $i=i'$ and $j=j'$.
Then $\dist(f(\pi_i^1,a_r),x)=2+|P(s_i^j,a_r)|-\ell_x$,
$\dist(f(\pi_i^1,a_r),x')=\text{min }(2+|P(s_{i'}^{j'},a_r)|+\ell_{x'},|P(\pi_i^1,a_r)|+|P(\pi_i^1,a_{r'})|+|P(s_{i'}^{j'},a_r)|-\ell_{x'})$.
$\dist(f(s_i^j,a_r),x)=|P(s_i^j,a_r)|-\ell_x$ when $x\neq a_r$ and $\dist(f(s_i^j,a_r),a_r)=1$.
$\dist(f(s_i^j,a_r),x')=\text{min }(|P(s_i^j,a_r)|+\ell_{x'},2+|P(\pi_i^1,a_r)|+|P(\pi_i^1,a_{r'})|+|P(s_i^j,a_{r'})|-\ell_{x'})$.
Thus for the vertex pair $\{x,x'\}$ which is not resolved by $f(s_i^j,a_r)$, i.e.,$\dist(f(s_i^j,a_r),x)=\dist(f(s_i^j,a_r),x')=2+|P(\pi_i^1,a_r)|+|P(\pi_i^1,a_{r'})|+|P(s_i^j,a_{r'})|-\ell_{x'}$, it satisfies that $\dist(f(\pi_i^1,a_r),x)>\dist(f(s_i^j,a_r),x)$ and $\dist(f(\pi_i^1,a_r),x')<\dist(f(s_i^j,a_r),x')$.
Thus in this case, every pair every pair $\{x,x'\}$ is resolved by $f(s_i^j,a_r)$ or $f(\pi_i^1,a_r)$.
Case 2: $i\neq i'$ or $j\neq j'$.
$\dist(f(s_i^j,a_r),x)=|P(s_i^j,a_r)|-\ell_x$ when $x\neq a_r$ and $\dist(f(s_i^j,a_r),a_r)=1$.
$\dist(f(s_i^j,a_r),x')=\text{min }(2+|P(s_{i'}^{j'},a_r)|+\ell_{x'},2+|P(\pi_i^1,a_r)|+|P(\pi_i^1,a_{r'})|+|P(s_{i'}^{j'},a_{r'})|-\ell_{x'})$.
For the vertex pair $\{x,x'\}$ which is not resolved by $f(s_i^j,a_r)$, i.e., $\dist(f(s_i^j,a_r),x)=\dist(f(s_i^j,a_r),x')$, it satisfies that $\dist(f(\pi_i^1,a_r),x)>\dist(f(s_i^j,a_r),x)$ and $\dist(f(\pi_i^1,a_r),x')\leq \dist(f(s_i^j,a_r),x')$.
Thus in this case, every pair every pair $\{x,x'\}$ is resolved by $f(s_i^j,a_r)$ or $f(\pi_i^1,a_r)$.
It follows that every pair $\{x,x'\}$ is resolved by $f(s_i^j,a_r)$ or $f(\pi_i^1,a_r)$.
For a vertex pair $\{z,z'\}$, similarly it is resolved by $f(s_i^j,c_r)$ or $f(\pi_i^1,c_r)$.
For a vertex pair $\{y,y'\}$, let $\{u_r^{i_r},v_r^{i_r}\}$ be the vertex pair resolved by $s_i^j$, i.e. $|P(s_i^j,b_r)|=20(n+1)+5i_r+1$.
Then $\dist(f^1(u_r^{i_r},v_r^{i_r}),y)=40(n+1)+1-\ell_y=\dist(f^2(u_r^{i_r},v_r^{i_r}),y)$ when $y\neq s_i^j$.
We observe that there is a shortest path from $f^h(u_r^{i_r},v_r^{i_r})$ ($h\in\{1,2\}$) to $y'$ which
either goes through one vertex of $\{a_r,c_r\}$, then goes through $s_{i'}^{j'}$, finally reaches $y'$
or goes through one vertex of $\{a_r,c_r\}$, then goes through some vertex $s_{i''}^{j''}$ ($i''\in [n],j''\in [m]$), then goes through $b_{r'}$, finally reaches $y'$. Thus we get that $\dist(f^1(u_r^{i_r},v_r^{i_r}),y')=\dist(f^2(u_r^{i_r},v_r^{i_r}),y')+1$.
Thus every vertex pair $\{y,y'\}$ such that $y\neq s_i^j$ is resolved by $f^1(u_r^{i_r},v_r^{i_r})$ or $f^2(u_r^{i_r},v_r^{i_r})$.
For a pair $\{s_i^j,y'\}$, obviously $\dist(f^{mid}(i,j,1),s_i^j)<dist(f^{mid}(i,j,1),y')$.
It follows that every vertex pair $\{y,y'\}$ is resolved by $f^1(u_r^{i_r},v_r^{i_r})$, $f^2(u_r^{i_r},v_r^{i_r})$ or $f^{mid}(i,j,1)$.
For a vertex pair $\{x,y'\}$, there are two cases.
Case 1: $i\neq i'$ or $j\neq j'$.
$\dist(f(\pi_i^1,a_r),x)=2+|P(s_i^j,a_r)|-\ell_x$.
$\dist(f(\pi_i^1,a_r),b_{r'})=2+\text{min}_{\alpha\in [m]} (|P(s_i^{\alpha},a_r)|+|P(s_i^{\alpha},b_{r'})|)$.
Then $\dist(f(\pi_i^1,a_r),y')=\text{min }(2+|P(s_{i'}^{j'},a_r)|+\ell_{y'},\dist(f(\pi_i^1,a_r),b_{r'})+|P(s_{i'}^{j'},b_{r'})|-\ell_{y'})$.
$\dist(f(s_i^j,a_r),x)=|P(s_i^j,a_r)|-\ell_x<\dist(f(\pi_i^1,a_r),x)$ if $x\neq a_r$ and $\dist(f(s_i^j,a_r),a_r)=2$.
$\dist(f(s_i^j,a_r),y')=\dist(f(\pi_i^1,a_r),y')$.
Thus in this case, every pair $\{x,y'\}$ is resolved by $f(\pi_i^1,a_r)$ or $f(s_i^j,a_r)$.
Case 2: $i=i'$ and $j=j'$.
$\dist(f(s_i^j,a_r),y')=\text{min }(|P(s_i^j,a_r)|+\ell_{y'},\dist(f(\pi_i^1,a_r),b_{r'})+|P(s_{i'}^{j'},b_{r'})|-\ell_{y'})$.
If $\dist(f(s_i^j,a_r),y')=|P(s_i^j,a_r)|+\ell_{y'}$, then $\{x,y'\}$ is obviously resolved by $f(s_i^j,a_r)$.
Otherwise, for a vertex pair $\{x,y'\}$ which is not resolved by $f(s_i^j,a_r)$,
$\dist(f(s_i^j,a_r),x)=\dist(f(s_i^j,a_r),y')=\dist(f(\pi_i^1,a_r),y')<\dist(f(\pi_i^1,a_r),x)$.
Thus in this case, every pair $\{x,y'\}$ is resolved by $f(\pi_i^1,a_r)$ or $f(s_i^j,a_r)$.
It follows that every pair $\{x,y'\}$ is resolved by $f(\pi_i^1,a_r)$ or $f(s_i^j,a_r)$.
For a vertex pair $\{y,z'\}$, similarly we can show that every pair $\{y,z'\}$ is resolved by $f(\pi_i^1,c_r)$ or $f(s_i^j,c_r)$.
For a vertex pair $\{x,z'\}$, there are two cases.
Case 1: $i\neq i'$ or $j\neq j'$.
$\dist(f(\pi_i^1,a_r),x)=2+|P(s_i^j,a_r)|-\ell_x>\dist(f(s_i^j,a_r),x)$ if $x\neq a_r$.
$\dist(f(\pi_i^1,a_r),z')=\text{min }(2+|P(s_{i'}^{j'},a_r)|+\ell_{z'},|P(\pi_i^1,a_r)|+|P(\pi_i^1,c_{r'})|+|P(s_{i'}^{j'},c_{r'})|-\ell_{z'})$.
$\dist(f(s_i^j,a_r),z')=\text{min }(2+|P(s_{i'}^{j'},a_r)|+\ell_{z'},2+|P(\pi_i^1,a_r)|+|P(\pi_i^1,c_{r'})|+|P(s_{i'}^{j'},c_{r'})|-\ell_{z'})\geq \dist(f(\pi_i^1,a_r),z')$.
Thus in this case, every pair $\{x,z'\}$ is resolved by $f(\pi_i^1,a_r)$ or $f(s_i^j,a_r)$.
Case 2: $i=i'$ and $j=j'$.
$\dist(f(s_i^j,a_r),z')=\text{min }(|P(s_i^j,a_r)|+\ell_{z'},2+|P(\pi_i^1,a_r)|+|P(\pi_i^1,c_{r'})|+|P(s_{i'}^{j'},c_{r'})|-\ell_{z'})$.
If $\dist(f(s_i^j,a_r),z')=|P(s_i^j,a_r)|+\ell_{z'}$
Then $\{x,z'\}$ ($x\neq a_r$) is obviously resolved by $f(s_i^j,a_r)$.
Otherwise, for a vertex pair $\{x,z'\}$ which is not resolved by $f(s_i^j,a_r)$,
$\dist(f(\pi_i^1,a_r),x)>\dist(f(s_i^j,a_r),x)=\dist(f(s_i^j,a_r),z')>\dist(f(\pi_i^1,a_r),z')$.
Thus every pair $\{x,z'\}$ is resolved by $f(\pi_i^1,a_r)$ or $f(s_i^j,a_r)$.
It follows that every pair $\{x,z'\}$ is resolved by $f(\pi_i^1,a_r)$ or $f(s_i^j,a_r)$.
This completes the proof for the lemma.
\end{proof}

\begin{lemma} \label{SxS}
Every pair of distinct vertices $x,y\in\bigcup_{i\in [n],h\in\{1,2\}}S_i^h$ is resolved by $S'$.
\end{lemma}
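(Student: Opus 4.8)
The plan is to follow the template of Lemmas~\ref{UxU} and~\ref{HxH}: first resolve pairs lying inside a single $S_i^h$, then pairs split between $S_i^h$ and $S_{i'}^{h'}$. Recall that $S_i^h$ is the union of six ``legs'', each a path of length $10(n+1)$ sharing the endpoint $\pi_i^h$, namely $P(\pi_i^h,a_r)$ and $P(\pi_i^h,c_r)$ for $r\in\{1,2,3\}$; for a vertex $x$ on such a leg write $\ell_x=\dist(\pi_i^h,x)\in\{0,\dots,10(n+1)\}$. The forced vertices I would use are $f(\pi_i^h,a_r)$ and $f(\pi_i^h,c_r)$ (which sit one edge away from the neighbour of $a_r$, resp.\ $c_r$, on the relevant leg), together with $f(s_i^j,a_r)$ and $f(s_i^j,c_r)$. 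The first step is to compute the distances from these vertices to all of $S=\bigcup_{i,h}S_i^h$. Using $M=40(n+1)$ — hence leg length $10(n+1)$, paths $P^h(i,j,\cdot)$ of length $20(n+1)$, and any path through an $s_i^j$ of length at least $M/2=20(n+1)$ — one checks that every detour (leaving a leg through $a_r$, $c_r$ or $\pi_i^h$ and re-entering it, or passing through some $s_i^j$, or through the forced-set side) costs an extra $\Omega(n)$. This yields $\dist(f(\pi_i^h,a_r),x)=10(n+1)-\ell_x$ for $x\in P(\pi_i^h,a_r)\setminus\{a_r\}$, $\dist(f(\pi_i^h,a_r),a_r)=2$, and $\dist(f(\pi_i^h,a_r),x)=10(n+1)+\ell_x$ for $x$ on a leg of $S_i^h$ other than $P(\pi_i^h,a_r)$; while for $x$ on a leg of $S_{i'}^{h'}$ with $(i',h')\ne(i,h)$ one gets $\dist(f(\pi_i^h,a_r),x)=2+(10(n+1)-\ell_x)$ if that leg is $P(\pi_{i'}^{h'},a_r)$, and $\dist(f(\pi_i^h,a_r),x)\ge 2+10(n+1)$ otherwise. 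Symmetric formulas hold for $f(\pi_i^h,c_r)$, and $\dist(f(s_i^j,a_r),x)=2+(10(n+1)-\ell_x)$ for every $x$ on any leg $P(\pi_i^h,a_r)$. The point is that $f(\pi_i^h,a_r)$ sees every vertex of $S_i^h$ at distance at most $10(n+1)$, and every vertex outside $S_i^h$ at distance strictly more than $10(n+1)$, \emph{except} those on a leg that also ends at $a_r$.

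Given these formulas, the case analysis is routine. For $x,y$ on the same leg, say $P(\pi_i^h,a_r)$: if neither is $a_r$, then $f(\pi_i^h,a_r)$ resolves them since $\ell_x\ne\ell_y$; the only leftover pair is $\{a_r,y\}$ with $\ell_y=10(n+1)-2$ (both at distance $2$ from $f(\pi_i^h,a_r)$), and then $f(s_i^j,a_r)$ resolves it, seeing $a_r$ at distance $2$ and $y$ at distance $4$. For $x,y$ on different legs of the same $S_i^h$, say $x\in P(\pi_i^h,a_r)$, we have $\dist(f(\pi_i^h,a_r),x)\le 10(n+1)\le\dist(f(\pi_i^h,a_r),y)$ with equality only if $\ell_x=\ell_y=0$, i.e.\ $x=y=\pi_i^h$, a contradiction; if instead $x$ lies on a $c$-leg, use $f(\pi_i^h,c_r)$.

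Finally, let $x\in S_i^h$ and $y\in S_{i'}^{h'}$ with $(i,h)\ne(i',h')$, and say the leg of $S_i^h$ through $x$ ends at $a_r$ (the $c$-case is symmetric). If $y\notin P(\pi_{i'}^{h'},a_r)$, then $\dist(f(\pi_i^h,a_r),x)\le 10(n+1)<2+10(n+1)\le\dist(f(\pi_i^h,a_r),y)$, so $f(\pi_i^h,a_r)$ resolves the pair. If $y\in P(\pi_{i'}^{h'},a_r)$, then, both vertices lying on legs sharing the endpoint $a_r$, we have $\dist(f(\pi_i^h,a_r),x)=10(n+1)-\ell_x$ and $\dist(f(\pi_i^h,a_r),y)=2+(10(n+1)-\ell_y)$, which agree exactly when $\ell_y=\ell_x+2$; symmetrically $f(\pi_{i'}^{h'},a_r)$ fails only when $\ell_x=\ell_y+2$; since these cannot both hold, one of these two forced vertices resolves $\{x,y\}$. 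The degenerate subcases, where one of $x,y$ is a hub $\pi_i^h$ or some $a_r$ or $c_r$, fall out of the same inequalities once the special values such as $\dist(f(\pi_i^h,a_r),a_r)=2$ are substituted. This exhausts all pairs in $S$.

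I expect the real obstacle to be not any individual case but the first step: confirming that the ``obvious'' shortest path from each forced vertex to each vertex of $S$ is genuinely shortest. For every forced vertex used one must rule out routes through the shared vertices $a_r,b_r,c_r$ that leave and re-enter a leg, routes through the length-$20(n+1)$ paths $P^h(i,j,\cdot)$, routes through some $s_i^j$ with its incident paths of length $\approx M/2$, and routes through the forced-set-gadget side; each such detour is strictly longer precisely because $M=40(n+1)$ is large relative to the leg length $10(n+1)$. Carrying this out in full, and treating the boundary vertices $\pi_i^h$, $a_r$, $c_r$ as explicit special cases, is where nearly all the length of the proof will go.
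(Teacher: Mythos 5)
Your proposal is correct and follows essentially the same route as the paper: the same forced vertices $f(\pi_i^h,a_r)$, $f(\pi_i^h,c_r)$, $f(s_i^j,a_r)$, $f(s_i^j,c_r)$, the same distance bookkeeping along the legs, and the same "one of two forced vertices must succeed" pattern, with only a cosmetic difference in how the cases are sliced (you group by leg/hub, the paper groups by shared terminal $a_r$ or $c_r$) and in which vertex plays the primary role. The degenerate pairs you flag (e.g.\ $\{a_r,y\}$ with $y$ at distance $2$ from $a_r$, and the hub pair $\{\pi_i^h,\pi_{i'}^{h'}\}$) are exactly the ones the paper also has to patch, and your patches are valid.
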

\begin{proof}
Let $x\in P(\pi_i^h,a_r)$ for arbitrary integers $i\in [n],h\in \{1,2\},r\in \{1,2,3\}$.
Let $x'\in P(\pi_{i'}^{h'},a_r)$ for arbitrary integers $i'\in [n],h'\in \{1,2\}$.
We fix an arbitrary integer $j\in [m]$.
Let $P(x,a_r)$ be the subpath of $P(\pi_i^h,a_r)$ and $\ell_x=|P(x,a_r)|$.
Let $P(x',a_{r})$ be the subpath of $P(\pi_{i'}^{h'},a_{r})$ and $\ell_{x'}=|P(x',a_{r})|$.
For a vertex pair $\{x,x'\}$, $\dist(f(s_i^j,a_r),x)=2+\ell_x$ and $\dist(f(s_i^j,a_r),x')=2+\ell_{x'}$.
Then every vertex pair $\{x,x'\}$ such that $\ell_x\neq \ell_{x'}$ is resolved by $f(s_i^j,a_r)$.
Since $\dist(f(\pi_i^h,a_r),x)=\ell_x$ if $x\neq a_r$ and $\dist(f(\pi_i^h,a_r),x')=2+\ell_{x'}$ if $i\neq i'$ or $h\neq h'$,
every vertex pair $\{x,x'\}$ such that $\ell_x=\ell_{x'}$ is resolved by $f(\pi_i^h,a_r)$.
It follows that every vertex pair $\{x,x'\}$ is resolved by $f(\pi_i^h,a_r)$ or $f(s_i^j,a_r)$.
Let $y\in P(\pi_i^h,c_r)$ for arbitrary integers $i\in [n],h\in \{1,2\},r\in \{1,2,3\}$.
Let $y'\in P(\pi_{i'}^{h'},c_r)$ for arbitrary integers $i'\in [n],h'\in \{1,2\}$.
Similarly, we can show that every vertex pair $\{y,y'\}$ is resolved by $f(\pi_i^h,c_r)$ or $f(s_i^j,c_r)$.

Let $x_1\in P(\pi_i^h,a_r)$ for arbitrary integers $i\in [n],h\in \{1,2\},r\in \{1,2,3\}$.
Let $x_2\in P(\pi_{i'}^{h'},a_{r'})$ for arbitrary integers $i'\in [n],h'\in \{1,2\},r'\in \{1,2,3\}$.
We fix an arbitrary integer $j\in [m]$.
We define $\ell_{x_1}$ and $\ell_{x_2}$ in a similar way to that of $\ell_x$ in the first paragraph.
For a vertex pair $\{x_1,x_2\}$ such that $r\neq r'$, $\dist(f(s_i^j,a_r),x_1)=2+\ell_{x_1}$ and $\dist(f(s_i^j,a_r),x_2)=2+|P(\pi_{i'}^{h'},a_r)|+|P(\pi_{i'}^{h'},a_{r'})|-\ell_{x_2}=2+20(n+1)-\ell_{x_2}>\dist(f(s_i^j,a_r),x_1)$ unless $x_1=\pi_i^h, x_2=\pi_{i'}^{h'}$ and $\pi_i^h\neq \pi_{i'}^{h'}$.
The vertex pair $\{\pi_i^h,\pi_{i'}^{h'}\}$ is obviously resolved by $f^h(i,j,a_r)$.
Thus every vertex pair $\{x_1,x_2\}$ such that $r\neq r'$ is resolved by $f(s_i^j,a_r)$ or $f^h(i,j,a_r)$.
Let $y_1\in P(\pi_i^h,c_r)$ for arbitrary integers $i\in [n],h\in \{1,2\},r\in \{1,2,3\}$.
Let $y_2\in P(\pi_{i'}^{h'},c_{r'})$ for arbitrary integers $i'\in [n],h'\in \{1,2\},r'\in \{1,2,3\}$ such that $r\neq r'$.
We define $\ell_{y_1}$ and $\ell_{y_2}$ in a similar way to that of $\ell_x$ in last paragraph.
Similarly, we can show that every vertex pair $\{y_1,y_2\}$ such that $r\neq r'$ is resolved by $f(s_i^j,c_r)$ or $f^h(i,j,a_r)$.
For a pair $\{x_1,y_2\}$, $\dist(f(s_i^j,a_r),x_1)=2+\ell_{x_1}$ and $\dist(f(s_i^j,a_r),y_2)=2+|P(\pi_{i'}^{h'},a_r)|+|P(\pi_{i'}^{h'},c_{r'})|-\ell_{y_2}=2+20(n+1)-\ell_{y_2}>\dist(f(s_i^j,a_r),x_1)$ unless $x_1=\pi_i^h, y_2=\pi_{i'}^{h'}$ and $\pi_i^h\neq \pi_{i'}^{h'}$.
The vertex pair $\{\pi_i^h,\pi_{i'}^{h'}\}$ is obviously resolved by $f^h(i,j,a_r)$.
Thus every vertex pair $\{x_1,y_2\}$ is resolved by $f(s_i^j,a_r)$ or $f^h(i,j,a_r)$.
This completes the proof for the lemma.
\end{proof}

\begin{lemma} \label{PixPi}
Every pair of distinct vertices $x,y\in\bigcup_{i\in [n],h\in\{1,2\},j\in [m],r\in \{1,2,3\}}\Pi^h(i,j,r)$ is resolved by $S'$.
\end{lemma}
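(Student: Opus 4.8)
The plan is to follow the same template as Lemmas~\ref{UxU}, \ref{HxH} and~\ref{SxS}. Each $\Pi^h(i,j,r)$ is a union of four internally disjoint paths of length $20(n+1)$ all sharing the endpoint $\pi_i^h$; I parametrize a vertex $x\in\Pi^h(i,j,r)$ by the branch $\beta_x\in\{a_r,b_r,c_r,p_i^{3-h}\}$ it lies on and by its position $\ell_x:=\dist(\pi_i^h,x)\in\{0,\dots,20(n+1)\}$. The workhorse will be the forced vertices $f^h(i,j,\beta)$, each of which sits one step past the $\pi_i^h$-endpoint of the branch $\beta$ of $\Pi^h(i,j,r)$. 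A routine computation gives $\dist(f^h(i,j,\beta),x)=\ell_x$ for $x$ on that branch — the only collision being $\ell_x\in\{0,2\}$, which both give distance $2$ — and $\dist(f^h(i,j,\beta),x)=2+\ell_x$ for $x$ on any other branch emanating from the same $\pi_i^h$. The point that makes these formulae correct is that every alternative route — around a branch through a far endpoint $N_{s_i^j}(s_i^j,\cdot)$, between two groups $W_r$ through the $s$-vertices, or through the vertex $q_i^{3-h}$ that joins the middles of the $p$-branches — costs at least about $2\cdot 20(n+1)$, which exceeds the length of any intended route by more than the $O(1)$ additive terms appearing, since every path length is a multiple of $10(n+1)$ up to such a term. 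I split according to how $(i,h)$ for $x\in\Pi^h(i,j,r)$ compares to $(i',h')$ for $y\in\Pi^{h'}(i',j',r')$.

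If $i=i'$ and $h=h'$ (so both branches hang off the common vertex $\pi_i^h$, though $j,j'$ and the branches may differ), I use the two-ruler argument. When $x,y$ lie on one branch, $\ell_x\neq\ell_y$ and $f^h(i,j,\beta_x)$ resolves $\{x,y\}$ unless it is the pair $\{\pi_i^h,v\}$ with $\dist(\pi_i^h,v)=2$, which is then resolved by $f^h(i,j,\beta')$ for any other branch $\beta'$ (it puts $\pi_i^h$ at distance $2$ and $v$ at distance $4$). When $x,y$ lie on different branches, $\dist(f^h(i,j,\beta_x),x)=\ell_x$ while $\dist(f^h(i,j,\beta_x),y)=2+\ell_y$, so a surviving pair must have $\ell_x=\ell_y+2$, and the symmetric ruler $f^h(i,j',\beta_y)$ then sees $\ell_y$ at $y$ versus $2+\ell_x=4+\ell_y$ at $x$ and finishes; the sub-cases $x=\pi_i^h$ or $y=\pi_i^h$ (position $0$) are covered by the same accounting.

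If $i=i'$ with $h\neq h'$, or $i\neq i'$, a single ruler suffices: take $f$ to be the forced vertex at the $\pi_i^h$-endpoint of the branch containing $x$, so $\dist(f,x)\leq 20(n+1)$. Every walk from $f$ to $y$ must first pass $\pi_i^h$; in the first case it must then pass $\pi_i^{h'}$, and since $\dist(\pi_i^h,\pi_i^{h'})=20(n+1)$ — both $\pi_i^1$ and $\pi_i^2$ are joined to every $a_r$ and $c_r$ by a path of length $10(n+1)$, and all other connections are strictly longer — we get $\dist(f,y)\geq 2+20(n+1)$; in the second case it must exit through some $a_{r''}$ or $c_{r''}$ at cost $\geq 10(n+1)$ and re-enter $\Pi_{i'}$ through $\pi_{i'}^{h'}$ at cost $\geq 10(n+1)$, so again $\dist(f,y)\geq 2+20(n+1)$. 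In both cases $\dist(f,x)\leq 20(n+1)<2+20(n+1)\leq\dist(f,y)$, so $f$ resolves $\{x,y\}$.

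The combinatorial skeleton above is a direct reuse of the one- and two-ruler arguments from the earlier lemmas, so I expect the only real work to be the bookkeeping behind the displayed distance formulae: verifying uniformly, over every branch (including the $p_i^{3-h}$-branch, which additionally carries the gadget $F^{mid}(i,j,h)$ and the length-$(30(n+1)-1)$ pendant to $q_i^{3-h}$) and every configuration of the indices, that the ``around-a-branch'', ``cross-$W_r$'', and ``$q_i^{3-h}$-hub'' detours never tie with or beat the intended shortest route. That tedious-but-routine case check is the main obstacle.
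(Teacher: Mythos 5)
Your proposal is correct and follows essentially the same strategy as the paper's proof: it uses the forced vertices $f^h(i,j,\cdot)$ sitting next to $\pi_i^h$ as rulers, with a two-ruler argument for pairs hanging off the same $\pi_i^h$ and a single ``$x$ is near, $y$ is at least $2+20(n+1)$ away'' ruler for pairs with different $(i,h)$, together with the same routine verification that detours through the far endpoints, through $W$, or through the $q$-paths are never shorter than the intended routes. The only (cosmetic) difference is your choice of the ruler on $x$'s own branch in the cross-$\pi$ case, which gives $\dist(f,x)\le 20(n+1)$ strictly below $2+20(n+1)\le\dist(f,y)$ and thereby avoids the boundary case ($\ell_x=20(n+1)$, $\ell_{y'}=0$) that the paper patches with a second forced vertex.
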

\begin{proof}
Let $x_1,x_2\in P^{h}(i,j,a_r)$ be two distinct vertices for arbitrary integers $i\in [n],j\in [m],h\in \{1,2\},r\in \{1,2,3\}$.
Let $j'\in [m]$ be an integer such that $j\neq j'$.
Obviously the pair $\{x_1,x_2\}$ is resolved by $f^{h}(i,j',a_r)$.
Similarly, the vertex pair $\{y_1,y_2\}$ for two distinct vertices $y_1,y_2\in P^{h}(i,j,b_r)$,
the vertex pair $\{z_1,z_2\}$ for two distinct vertices $z_1,z_2\in P^{h}(i,j,c_r)$, and
the vertex pair $\{w_1,w_2\}$ for two distinct vertices $w_1,w_2\in P^{h}(i,j,p_i^{3-h})$ are resolved by $f^{h}(i,j',a_r)$.

Let $x\in P^{h}(i,j,a_r),y\in P^{h}(i,j,b_r),z\in P^{h}(i,j,c_r)$ and $w\in P^{h}(i,j,p_i^{3-h})$ for arbitrary integers $i\in [n],j\in [m],h\in \{1,2\},r\in \{1,2,3\}$.
Let $x'\in P^{h'}(i',j',a_{r'}),y'\in P^{h'}(i',j',b_{r'}),z'\in P^{h'}(i',j',c_{r'})$ and $w'\in P^{h'}(i',j',p_{i'}^{3-h'})$ for arbitrary integers $i'\in [n],j'\in [m],h'\in \{1,2\},r'\in \{1,2,3\}$.
We define $\ell_x=\dist(x,\pi_i^h)$. In a similar way, we define $\ell_y,\ell_z,\ell_w,\ell_{x'},\ell_{y'},\ell_{z'},\ell_{w'}$.
For a pair $\{x,y\}$, $\dist(f^{h}(i,j,c_r),x)=2+\ell_x$ and $\dist(f^{h}(i,j,c_r),y)=2+\ell_y$.
Thus $f^{h}(i,j,c_r)$ resolves every pair $\{x,y\}$ such that $\ell_x\neq \ell_y$.
Since $\dist(f^{h}(i,j,a_r),x)=\ell_x$ if $x\neq \pi_i^h$, $\dist(f^{h}(i,j,a_r),\pi_i^h)=2$ and $\dist(f^{h}(i,j,c_r),y)=2+\ell_y$, $f^{h}(i,j,a_r)$ resolves every pair $\{x,y\}$ such that $\ell_x=\ell_y$.
Thus every pair $\{x,y\}$ is resolved by $f^{h}(i,j,a_r)$ or $f^{h}(i,j,c_r)$.
In a similar way, we can show that two distinct vertices from $\bigcup_{j\in [m],r\in \{1,2,3\}}\Pi^h(i,j,r)$ are distinguished by $S'$.
For a pair $\{x,y'\}$ with $i=i'$ and $h\neq h'$, $\dist(f^{h}(i,j,c_r),x)=2+\ell_x<\dist(f^{h}(i,j,c_r),y')=\text{min }(2+|P(\pi_i^h,a_r)|+|P(\pi_i^{h'},a_r)|+\ell_{y'},2+|P(\pi_i^h(i,j',b_{r'}))|+|P(\pi_i^{h'}(i,j',b_{r'})|-\ell_{y'})$.
Thus every pair $\{x,y'\}$ with $i=i'$ and $h\neq h'$ is resolved by $f^{h}(i,j,c_r)$.
Similarly we can show that every vertex pair of $\bigcup_{j\in [m],r\in \{1,2,3\}}\Pi^h(i,j,r)\times \bigcup_{j\in [m],r\in \{1,2,3\}}\Pi^{3-h}(i,j,r)$ is resolved by $S'$.
For a pair $\{x,y'\}$ with $i\neq i'$, $\dist(f^{h}(i,j,c_r),x)=2+\ell_x$.
$\dist(f^{h}(i,j,c_r),s_{i'}^{j'})=\text{min}_{d\in \{1,2,3\}}(2+|P(\pi_i^h,c_d)|+|P(s_{i'}^{j'},c_d)|)$.
Thus $\dist(f^{h}(i,j,c_r),y')=\text{min }(2+|P(\pi_i^h,a_r)|+|P(\pi_i^{h'},a_r)|+\ell_{y'},\dist(f^{h}(i,j,c_r),s_{i'}^{j'})+1+|P^{h'}(i',j',b_{r'})|-\ell_{y'})$.
We can see that $\dist(f^{h}(i,j,c_r),x)<\dist(f^{h}(i,j,c_r),y')$ unless $\ell_x=20(n+1)$ and $\ell_{y'}=0$.
If $\ell_x=20(n+1)$ and $\ell_{y'}=0$, $\dist(f^{h}(i,j,a_r),y')=2+20(n+1)>\dist(f^{h}(i,j,a_r),x)=20(n+1)$.
Thus every pair $\{x,y'\}$ with $i\neq i'$ is resolved by $f^{h}(i,j,c_r)$ or $f^{h}(i,j,a_r)$.
Similarly we can show that every vertex pair of $\bigcup_{j\in [m],r\in \{1,2,3\}}\Pi^h(i,j,r)\times \bigcup_{j'\in [m],r'\in \{1,2,3\}}\Pi^{h'}(i',j',r')$ with $i\neq i'$ is resolved by $S'$.
This completes the proof for the lemma.
\end{proof}

\begin{lemma} \label{LxL}
Every pair of distinct vertices $x,y\in\bigcup_{i\in [n],h\in\{1,2\}}L_i^h$ is resolved by $S'$.
\end{lemma}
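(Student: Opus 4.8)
The plan is to follow the scheme of Lemmas~\ref{UxU} and~\ref{HxH}: for every pair of distinct vertices in $\bigcup_{i,h}L_i^h$ produce one (occasionally two) forced vertices sitting next to it whose distances to the two endpoints of the pair are easy to compute and are seen to differ. For a vertex $v$ on the path $P(q_i^h,\text{mid}(P^{3-h}(i,j,p_i^h)))$, which has length $|P^{3-h}(i,j,p_i^h)|/2+|P(s_i^j,p_i^h)|-1=30(n+1)-1$, let $d_j(v)$ be the distance along that path from $v$ to the endpoint $\text{mid}(P^{3-h}(i,j,p_i^h))$, so that the distance from $v$ to $q_i^h$ is $30(n+1)-1-d_j(v)$ and $d_j(v)=30(n+1)-1$ exactly when $v=q_i^h$. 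The basic observation is that $\dist(f^{mid}(i,j,3-h),v)=1+d_j(v)$: the gadget $F^{mid}(i,j,3-h)$ is attached exactly at $\text{mid}(P^{3-h}(i,j,p_i^h))$, and any walk from $f^{mid}(i,j,3-h)$ to $v$ that does not descend straight down this $L$-path must re-enter through that same endpoint, or leave through $\pi_i^{3-h}$ and come back to $q_i^h$ via some $c_r$ and $\pi_i^h$, both strictly longer. In particular $f^{mid}(i,j,3-h)$ already resolves every pair of distinct vertices lying on a common path $P(q_i^h,\text{mid}(P^{3-h}(i,j,p_i^h)))$, since $d_j$ is injective there.

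Next, for $v,v'$ on two different paths of the same $L_i^h$, say $v$ on path $j$ and $v'$ on path $j'$ with $j\neq j'$ (the case $v=q_i^h$ or $v'=q_i^h$ being immediate from the identity above), I would compute $\dist(f^{mid}(i,j,3-h),v')=\min\{\,1+20(n+1)+d_{j'}(v'),\ 60(n+1)-1-d_{j'}(v')\,\}$, the two routes going through $\pi_i^{3-h}$ into $P^{3-h}(i,j',p_i^h)$ and then along $L_i^h$, respectively all the way round through $q_i^h$. If $f^{mid}(i,j,3-h)$ fails to resolve $\{v,v'\}$ this equals $1+d_j(v)\le 30(n+1)$; the second route would force $d_j(v)=d_{j'}(v')=30(n+1)-1$, i.e.\ $v=v'=q_i^h$, so the first route is the smaller and $d_j(v)=20(n+1)+d_{j'}(v')$, whence $d_{j'}(v')\le 10(n+1)-1$. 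Feeding this into the mirror formula, the gadget $f^{mid}(i,j',3-h)$ has distances $1+d_{j'}(v')$ and $60(n+1)-1-d_j(v)=40(n+1)-1-d_{j'}(v')$ to $v'$ and $v$, which coincide only for $d_{j'}(v')=20(n+1)-1$, a contradiction; so one of $f^{mid}(i,j,3-h),f^{mid}(i,j',3-h)$ resolves the pair.

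For $v\in L_i^h$ and $v'\in L_{i'}^{h'}$ with $i\neq i'$, I would use that $W=\bigcup_{r}\{a_r,b_r,c_r\}$ separates the piece of $G'$ hanging off index $i$ from the piece hanging off index $i'$. Every $w\in W$ satisfies $\dist(f^{mid}(i,j,3-h),w)\ge 1+20(n+1)$ (near $f^{mid}(i,j,3-h)$ the only cheap way out is through $\pi_i^{3-h}$ and one of its length-$10(n+1)$ paths to $a_r$ or $c_r$), and every $w\in W$ satisfies $\dist(w,v')\ge 10(n+1)+1$ (the point of $L_{i'}^{h'}$ closest to a hub vertex is $q_{i'}^{h'}$, at distance $10(n+1)+1$). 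Hence $\dist(f^{mid}(i,j,3-h),v')\ge 30(n+1)+2>1+d_j(v)=\dist(f^{mid}(i,j,3-h),v)$ and $f^{mid}(i,j,3-h)$ resolves the pair.

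The remaining case, $v\in L_i^h$ on path $j$ and $v'\in L_i^{3-h}$ on path $j'$, is the one I expect to be the genuine obstacle. Here $L_i^{3-h}$ is reached from $f^{mid}(i,j,3-h)$ most cheaply by descending to $\pi_i^{3-h}$ and then $q_i^{3-h}$, giving $\dist(f^{mid}(i,j,3-h),v')=40(n+1)+1-d_{j'}(v')$ (with $d_{j'}(v')$ the distance from $v'$ to the endpoint $\text{mid}(P^{h}(i,j',p_i^{3-h}))$ of $L_i^{3-h}$), so $f^{mid}(i,j,3-h)$ fails exactly on the locus $d_j(v)+d_{j'}(v')=40(n+1)$, i.e.\ when $v$ and $v'$ sit symmetrically near the midpoints of their $L$-paths; crucially, the mirror gadget $f^{mid}(i,j',h)$ fails on precisely the same locus, so it cannot break the tie. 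To finish I would invoke the auxiliary forced vertex $f^{3-h}(i,j,p_i^h)$, which sits two edges from $\pi_i^{3-h}$ on $P^{3-h}(i,j,p_i^h)$: one gets $\dist(f^{3-h}(i,j,p_i^h),v)=\min\{\,10(n+1)+d_j(v),\ 50(n+1)+2-d_j(v)\,\}$ and $\dist(f^{3-h}(i,j,p_i^h),v')=30(n+1)+2-d_{j'}(v')$, and substituting $d_{j'}(v')=40(n+1)-d_j(v)$ on the bad locus these differ by $20(n+1)-2$ or by $60(n+1)-2d_j(v)$, both nonzero since $0\le d_j(v)\le 30(n+1)-1$. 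Collecting the four cases proves Lemma~\ref{LxL}. As in Lemma~\ref{HxH}, almost all of the work is the careful enumeration of alternative routes needed to justify each displayed distance; the one new idea is the use of $f^{3-h}(i,j,p_i^h)$ to resolve the symmetric midpoint configuration on which the two natural $f^{mid}$ gadgets fail.
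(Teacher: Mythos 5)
Your overall strategy is the paper's: $f^{mid}$ gadgets handle pairs inside one $L_i^h$ and pairs with $i\neq i'$, and an auxiliary forced vertex is brought in for the hard case $L_i^h\times L_i^{3-h}$. Cases 1--3 check out (your use of the two mirror gadgets $f^{mid}(i,j,3-h)$ and $f^{mid}(i,j',3-h)$ in case 2 is a harmless variant of the paper's $f^h(i,j,a_1)$ plus one $f^{mid}$). The problem is in case 4.

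There, your formula $\dist(f^{mid}(i,j,3-h),v')=40(n+1)+1-d_{j'}(v')$ is wrong when $j'=j$. Besides the route through $\pi_i^{3-h}$ and $q_i^{3-h}$, there is a route that descends $P^{3-h}(i,j,p_i^{h})$ from its midpoint to $N_{s_i^j}(s_i^j,p_i^h)$, crosses the two edges at $s_i^j$ to $N_{s_i^j}(s_i^j,p_i^{3-h})$, climbs $P^{h}(i,j,p_i^{3-h})$ to $\text{mid}(P^{h}(i,j,p_i^{3-h}))$ --- the far endpoint of $v'$'s own $L$-path --- and then walks $d_j(v')$ further steps; its length from $f^{mid}(i,j,3-h)$ is $20(n+1)+3+d_j(v')$, which beats $40(n+1)+1-d_j(v')$ whenever $d_j(v')<10(n+1)-1$. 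So the true distance is the minimum of the two, and the failure locus of $f^{mid}(i,j,3-h)$ has a second branch $d_j(v)-d_j(v')=20(n+1)+2$ (the paper's ``Case~2'', $\ell_y-\ell_x=20(n+1)+2$) in addition to $d_j(v)+d_j(v')=40(n+1)$. Your proof never treats pairs on that second branch; moreover your assertion that the mirror gadget $f^{mid}(i,j,h)$ ``fails on precisely the same locus'' is also false --- its asymmetric failure branch is $d_j(v')-d_j(v)=20(n+1)+2$, which is disjoint from this one. The omission is repairable: on the missed branch $d_j(v)\geq 20(n+1)+2$, so $\dist(f^{3-h}(i,j,p_i^h),v)=50(n+1)+2-d_j(v)$ while $\dist(f^{3-h}(i,j,p_i^h),v')=50(n+1)+4-d_j(v)$, a difference of $2$; alternatively the mirror $f^{mid}(i,j,h)$ itself separates these pairs. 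The paper handles exactly this configuration via its two-case evaluation of $\dist(f^{mid}(i,j,3-h),y)$ followed by $f^{3-h}(i,j,a_1)$. As written, however, your argument for $L_i^h\times L_i^{3-h}$ with $j'=j$ is incomplete.
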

\begin{proof}
First we show that every vertex pair of $L_i^h\times L_i^h$ is resolved by $S'$ for $i\in [n],h\in \{1,2\}$.
We fix arbitrary integers $i\in [n],j\in [m]$ and $h\in \{1,2\}$.
For a vertex $x\in P(q_i^h,\text{mid}(P^{3-h}(i,j,p_i^{h})))$,
let $P(q_i^h,x)$ be the subpath of $P(q_i^h,\text{mid}(P^{3-h}(i,j,p_i^{h})))$ from $q_i^h$ to $x$ and
let $|P(q_i^h,x)|=\ell_x$.
For two distinct vertices $x_1,x_2\in P(q_i^h,\text{mid}(P^{3-h}(i,j,p_i^{h})))$,
$\dist(f^{mid}(i,j,3-h),x_1)=1+|P(q_i^h,\text{mid}(P^{3-h}(i,j,p_i^{h})))|-\ell_{x_1}=30(n+1)-\ell_{x_1}$ and
$\dist(f^{mid}(i,j,3-h),x_2)=30(n+1)-\ell_{x_2}$.
Thus $f^{mid}(i,j,3-h)$ resolves every pair $\{x_1,x_2\}$.
Let $x\in P(q_i^h,\text{mid}(P^{3-h}(i,j,p_i^{h})))$ and $x'\in P(q_i^h,\text{mid}(P^{3-h}(i,j',p_i^{h})))$ with some integer $j'\neq j$.
$\dist(f^h(i,j,a_1),x)=3+\ell_x$ and $\dist(f^h(i,j,a_1),x')=3+\ell_{x'}$.
Thus $f^h(i,j,a_1)$ resolves every pair $\{x,x'\}$ such that $\ell_x\neq \ell_{x'}$.
For a pair $\{x,x'\}$ such that $\ell_x=\ell_{x'}$,
$\dist(f^{mid}(i,j,3-h),x)=30(n+1)-\ell_{x}$ and
$\dist(f^{mid}(i,j,3-h),x')=\text{min }(1+|P(q_i^h,\text{mid}(P^{3-h}(i,j,p_i^{h})))|+\ell_{x'},1+|P^{3-h}(i,j,p_i^h)|/2+|P^{3-h}(i,j',p_i^h)|/2+|P(q_i^h,\text{mid}(P^{3-h}(i,j',p_i^{h})))|-\ell_{x'})=\text{min }(30(n+1)+\ell_{x'},50(n+1)-\ell_{x'})$.
Thus $\dist(f^{mid}(i,j,3-h),x)\neq \dist(f^{mid}(i,j,3-h),x')$ and $f^{mid}(i,j,3-h)$ resolves this pair.
It follows that every pair $\{x,x'\}$ is resolved by $f^{mid}(i,j,3-h)$ or $f^h(i,j,a_1)$.

Next we show that every vertex pair of $L_i^h\times L_i^{3-h}$ is resolved by $S'$ for $i\in [n],h\in \{1,2\}$.
We fix arbitrary integers $i\in [n],j\in [m]$ and $h\in \{1,2\}$.
Let $x\in P(q_i^h,\text{mid}(P^{3-h}(i,j,p_i^{h})))$ and $y\in P(q_i^{3-h},\text{mid}(P^{h}(i,j,p_i^{3-h})))$.
We define $\ell_x$ and $\ell_y$ in a similar way to that of $\ell_x$ in last paragraph.
For a pair $\{x,y\}$, $\dist(f^{mid}(i,j,3-h),x)=30(n+1)-\ell_{x}$ and
$\dist(f^{mid}(i,j,3-h),y)=\text{min }(2+|P^{3-h}(i,j,p_i^h)|/2+\ell_y,3+|P^{3-h}(i,j,p_i^h)|/2+|P^h(i,j,p_i^{3-h})|/2+|P(q_i^{3-h},\text{mid}(P^h(i,j,p_i^{3-h})))|-\ell_y)$.
For a pair $\{x,y\}$ which is not resolved by $f^{mid}(i,j,3-h)$, there are two cases.
Case 1: $\dist(f^{mid}(i,j,3-h),y)=2+|P^{3-h}(i,j,p_i^h)|/2+\ell_y=2+10(n+1)+\ell_y\leq 2+50(n+1)-\ell_y$ when $\ell_y\leq 20(n+1)$.
We have $\dist(f^{mid}(i,j,3-h),x)=30(n+1)-\ell_{x}=\dist(f^{mid}(i,j,3-h),y)=2+10(n+1)+\ell_y$, i.e. $\ell_x+\ell_y=20(n+1)-2$.
Case 2: $\dist(f^{mid}(i,j,3-h),y)=2+50(n+1)-\ell_y$ when $\ell_y>20(n+1)$.
We have $\dist(f^{mid}(i,j,3-h),x)=30(n+1)-\ell_{x}=\dist(f^{mid}(i,j,3-h),y)=2+50(n+1)-\ell_y$, i.e. $\ell_y-\ell_x=20(n+1)+2$.
$\dist(f^{3-h}(i,j,a_1),y)=3+\ell_y$.
$\dist(f^{3-h}(i,j,a_1),x)=3+|P(\pi_i^{3-h},a_1)|+|P(\pi_i^{h},a_1)|+\ell_x$ when $\ell_x\leq 10(n+1)$ and
$\dist(f^{3-h}(i,j,a_1),x)=2+|P^{3-h}(i,j,p_i^h)|/2+|P(q_i^{h},\text{mid}(P^{3-h}(i,j,p_i^{h})))|-\ell_x$ when $\ell_x>10(n+1)$.
For a pair $\{x,y\}$ which is not resolved by $f^{3-h}(i,j,a_1)$, there are two cases.
Case 1: $\dist(f^{3-h}(i,j,a_1),y)=\dist(f^{3-h}(i,j,a_1),x)$ when $\ell_x\leq 10(n+1)$, i.e. $\ell_y-\ell_x=3+20(n+1)$.
Case 2: $\dist(f^{3-h}(i,j,a_1),y)=\dist(f^{3-h}(i,j,a_1),x)$ when $\ell_x>10(n+1)$, i.e. $\ell_y+\ell_x=40(n+1)-2$.
Thus we see that if a pair $\{x,y\}$ is not resolved by $f^{mid}(i,j,3-h)$, then it is resolved by $f^{3-h}(i,j,a_1)$.
It follows that every pair $\{x,y\}$ is resolved by $f^{mid}(i,j,3-h)$ and $f^{3-h}(i,j,a_1)$.
Let $x'\in P(q_i^{h},\text{mid}(P^{3-h}(i,j',p_i^{h})))$ with some integer $j'\in [m]$ and $j'\neq j$.
Then $\dist(f^{mid}(i,j',3-h),x')=30(n+1)-\ell_{x'}$ and $\dist(f^{mid}(i,j',3-h),y)=2+|P^{3-h}(i,j',p_i^h)|/2+\ell_{y}=2+10(n+1)+\ell_y$.
For a pair $\{x',y\}$ which is not resolved by $f^{mid}(i,j',3-h)$, it satisfies that $30(n+1)-\ell_{x'}=2+10(n+1)+\ell_y$, i.e. $\ell_{x'}+\ell_y=20(n+1)-2$.
Similar to vertex $x$, for vertex $x'$, $\dist(f^{3-h}(i,j,a_1),x')=3+|P(\pi_i^{3-h},a_1)|+|P(\pi_i^{h},a_1)|+\ell_{x'}$ when $\ell_{x'}\leq 10(n+1)$ and
$\dist(f^{3-h}(i,j,a_1),x')=2+|P^{3-h}(i,j',p_i^h)|/2+|P(q_i^{h},\text{mid}(P^{3-h}(i,j',p_i^{h})))|-\ell_{x'}$ when $\ell_{x'}>10(n+1)$.
Thus for a pair $\{x',y\}$ which is not resolved by $f^{3-h}(i,j,a_1)$, there are two cases.
Case 1: $\dist(f^{3-h}(i,j,a_1),y)=\dist(f^{3-h}(i,j,a_1),x')$ when $\ell_{x'}\leq 10(n+1)$, i.e. $\ell_y-\ell_{x'}=3+20(n+1)$.
Case 2: $\dist(f^{3-h}(i,j,a_1),y)=\dist(f^{3-h}(i,j,a_1),x')$ when $\ell_{x'}>10(n+1)$, i.e. $\ell_y+\ell_{x'}=40(n+1)-2$.
Thus we see that if a pair $\{x',y\}$ is not resolved by $f^{mid}(i,j',3-h)$, then it is resolved by $f^{3-h}(i,j,a_1)$.
It follows that every pair $\{x',y\}$ is resolved by $f^{mid}(i,j',3-h)$ and $f^{3-h}(i,j,a_1)$.

Finally we show that every vertex pair of $L_i^h\times L_{i'}^{h'}$ is resolved by $S'$ for $i,i'\in [n],h,h'\in \{1,2\}$ and $i\neq i'$.
We fix arbitrary integers $i,i'\in [n],j,j'\in [m],h,h'\in \{1,2\}$ such that $i\neq i'$.
Let $x\in P(q_i^h,\text{mid}(P^{3-h}(i,j,p_i^h)))$ and $y\in P(q_{i'}^{h'},\text{mid}(P^{3-h'}(i',j',p_{i'}^{h'})))$.
We define $\ell_x$ and $\ell_y$ in a similar way to that of $\ell_x$ in the first paragraph.
For a pair $\{x,y\}$, $\dist(f^{mid}(i,j,3-h),x)=30(n+1)-\ell_{x}$ and
$\dist(f^{mid}(i,j,3-h),y)=\text{min }(2+|P^{3-h}(i,j,p_i^h)|/2+|P(\pi_i^{3-h},a_1)|+|P(\pi_{i'}^{3-h'},a_1)|+\ell_y,
1+|P^{3-h}(i,j,p_i^h)|/2+|P(\pi_i^{3-h},a_1)|+|P(\pi_{i'}^{3-h'},a_1)|+|P^{3-h'}(i',j',p_{i'}^{h'})|+|P(q_{i'}^{h'},\text{mid}(P^{3-h'}(i',j',p_{i'}^{h'})))|-\ell_y)
\geq 1+30(n+1)>\dist(f^{mid}(i,j,3-h),x)$.
It follows that every pair $\{x,y\}$ is resolved by $f^{mid}(i,j,3-h)$.
This completes the proof for the lemma.
\end{proof}

\begin{lemma} \label{RxR}
Every pair of distinct vertices $x,y\in\bigcup_{r\in\{1,2,3\}}R_r$ is resolved by $S'$.
\end{lemma}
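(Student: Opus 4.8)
The plan is to follow the template of the preceding lemmas (Lemma~\ref{HxH}, Lemma~\ref{LxL}, and so on): group the vertices of $\bigcup_r R_r$ according to the pair $\{u_r^i,v_r^i\}$ they hang off, and resolve every pair of vertices by exhibiting an explicit member of $S'$. Write $R_{r,i}$ for the union of the six paths $P(a_r,u_r^i)$, $P(b_r,u_r^i)$, $P(c_r,u_r^i)$, $P(a_r,v_r^i)$, $P(b_r,v_r^i)$, $P(c_r,v_r^i)$; then $R_r=\bigcup_{i\in[n]}R_{r,i}$, two distinct $R_{r,i}$'s with the same $r$ meet only in the hubs $a_r,b_r,c_r$ and with different $r$ are disjoint, and each $R_{r,i}$ carries its own forced vertices $f^1(u_r^i,v_r^i)$ and $f^2(u_r^i,v_r^i)$ (at distance two from $u_r^i$ and from $v_r^i$; $f^1$ also reaches the neighbours of $u_r^i$ on the $a_r$- and $c_r$-branches, $f^2$ the vertices two steps further along those two branches). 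The remaining relevant forced vertices are the hub-attached gadgets $f(\pi_p^h,a_r)$, $f(\pi_p^h,c_r)$, $f(s_p^j,a_r)$, $f(s_p^j,c_r)$, each of which has distance $2+\dist(a_r,\cdot)$ (resp.\ $2+\dist(c_r,\cdot)$) to every vertex of $R_r$; and occasionally a solution vertex of $S$, used only to break ties inside $b_r$-branches.

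First I would handle a pair $\{x,y\}$ lying inside a single $R_{r,i}$. If $x,y$ are on the same $a_r$- or $c_r$-branch, the pair of distances to $f^1(u_r^i,v_r^i)$ and $f^2(u_r^i,v_r^i)$ is an injective function of the position along the branch (the two gadgets are offset by one step, so from $u_r^i$ outward the pairs read $(2,2),(2,3),(3,2),(4,3),(5,4),\ldots$), so one of these two vertices resolves $\{x,y\}$; on a $b_r$-branch $f^1(u_r^i,v_r^i)$ alone is strictly monotone. If $x,y$ lie on different branches of $R_{r,i}$, I compare their distances to $f^1(u_r^i,v_r^i)$, which strictly decrease towards $u_r^i$ along every branch and whose values at the three hubs are pairwise distinct and of size $\Theta(n)$; the only coincidences that survive are between the $b_r$-branch of $u_r^i$ and the $b_r$-branch of $v_r^i$, whose lengths differ by one, and there I invoke the solution vertex $s^{\star}\in S$ whose $r$-th coordinate equals $i$ (it exists because $S$ encodes a perfect matching of $U$). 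With the estimates behind Lemma~\ref{resolve}, all three $s^{\star}$-to-$u_r^i$ paths have length $M$ while the shortest $s^{\star}$-to-$v_r^i$ path is the $b_r$-route of length $M-1$, so the distance from $s^{\star}$ to any vertex of a $b_r$-branch equals $\tfrac M2+5i+1$ plus its distance to $b_r$, with a one-step offset between the $u$- and the $v$-copy, which is exactly what kills the last coincidences.

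Finally I would handle a pair $\{x,y\}$ with $x\in R_{r,i}$, $y\in R_{r',i'}$ and $(r,i)\neq(r',i')$. The key tool is a separator estimate: any $u_r^i$-to-$y$ path leaves $R_{r,i}$ through one of $a_r,b_r,c_r$ and enters $R_{r',i'}$ through one of $a_{r'},b_{r'},c_{r'}$, and since $\dist(a_r,a_{r'})=\dist(a_r,c_{r'})=\dist(c_r,c_{r'})=20(n+1)$ (through any $\pi$-vertex) while $\dist(u_r^i,a_r)=20(n+1)-10i$, one gets $\dist(u_r^i,y)\ge 40(n+1)-10i$, whereas $\dist(u_r^i,x)\le 20(n+1)+10i+2$ for every $x\in R_{r,i}$. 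When $r\neq r'$ these ranges are disjoint for all $i\le n$, so $f^1(u_r^i,v_r^i)$ resolves $\{x,y\}$ at once. The delicate subcase — and, as in Lemmas~\ref{HxH} and~\ref{LxL}, the one I expect to be the main obstacle — is $r=r'$, $i\neq i'$, where $x$ and $y$ may both sit far from their own $u$-vertex (say, near $c_r$) so the crude bound is useless. I would first test $f^1(u_r^i,v_r^i)$ and $f^1(u_r^{i'},v_r^{i'})$; if neither resolves $\{x,y\}$ then, comparing against $\min_h\dist(f^1(u_r^i,v_r^i),h)$ over the three hubs, both $x$ and $y$ are forced to lie on the part of their subtree close to one particular common hub, and I finish with the gadget attached there: $f(s_p^j,a_r)$ and $f(s_p^j,c_r)$ separate vertices by their distance to $a_r$ (resp.\ $c_r$), which is small on the branch rooted there but at least $20(n+1)$ on anything reached through the opposite hub; and the residual ``both near $b_r$'' situation is handled again by a solution vertex $s^{\star}$ with $r$-th coordinate in $\{i,i'\}$, whose distance along a $b_r$-branch carries the additive constant $\tfrac M2+5\cdot(\text{that coordinate})+1$, which differs between $R_{r,i}$ and $R_{r,i'}$. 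Carrying out this last case cleanly — pinning down the ``near which hub'' dichotomy and checking every $b_r$-branch offset — is the bulk of the argument; everything else is the same routine distance accounting as in the earlier lemmas.
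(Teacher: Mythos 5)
Your overall strategy (explicit forced-vertex resolvers per pair gadget, hub gadgets, and a solution vertex to break $b_r$-branch ties) is the same as the paper's, but there is a concrete false step in the within-$R_{r,i}$ case. You claim that after comparing distances to $f^1(u_r^i,v_r^i)$ (and $f^2$) the only surviving coincidences inside $R_{r,i}$ are between the two $b_r$-branches. That is not so: by construction the connecting vertex of $F^1(u_r^i,v_r^i)$ is adjacent to \emph{both} $N_{u_r^i}(a_r,u_r^i)$ and $N_{u_r^i}(c_r,u_r^i)$, and that of $F^2$ to the distance-$2$ vertices on \emph{both} of those branches, so the two gadgets are completely symmetric under swapping the $a_r$- and $c_r$-branches at $u_r^i$. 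Hence for every $d$, the vertex of $P(a_r,u_r^i)$ and the vertex of $P(c_r,u_r^i)$ at distance $d$ from $u_r^i$ satisfy $\dist(f^1,\cdot)=d+1$ and $\dist(f^2,\cdot)=\max(d,2)$ for both, and the same degeneracy occurs for the two branches at $v_r^i$. Your plan assigns no resolver to these pairs (you reserve the solution vertex $s^\star$ for $b_r$-branches only), so they fall through. The paper resolves exactly these pairs with the hub gadgets: $\dist(f(\pi_1^1,a_r),\cdot)=2+\dist(a_r,\cdot)$ is at most about $2+20(n+1)-10i$ on the $a_r$-branch but at least roughly $20(n+1)$ on anything reached through $c_r$, which is the $a$/$c$ asymmetry your argument is missing. (Relatedly, your blanket claim that $f(\pi_p^h,a_r)$ is at distance $2+\dist(a_r,w)$ from \emph{every} $w\in R_r$ is false for $w$ near $c_r$ or $b_r$, where the route through $\pi_p^h$ is shorter; this matters when you later want to use these gadgets near other hubs.)

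In the cross-index case $r=r'$, $i\neq i'$, your reduction ``if neither $f^1$ resolves the pair then both vertices are near a common hub and the hub gadget finishes'' is also not the right dichotomy: the hub gadget measures distance to that hub and therefore fails precisely on pairs equidistant from the hub, and the genuinely delicate pairs are those where one vertex's shortest route to $f^1(u_r^i,v_r^i)$ goes through $a_r$ while the other's goes through $c_r$ (or through its own $u/v$-endpoint), so that two routes of comparable length tie. Ruling out such ties is an arithmetic matter — the paper derives contradictions of the form $i=2(n+1)$ or $20i=40(n+1)-2$ from the assumed equalities — and this is the actual content of the lemma rather than a routine afterthought. You acknowledge deferring this, but the deferral hides the step where the proof could fail, so as written the argument is incomplete there as well.
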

\begin{proof}
Let's fix an arbitrary integer $r\in \{1,2,3\}$.

First, we show that every pair of distinct vertices of $\bigcup_{i\in [n]}(P(a_r,u_r^i)\cup P(a_r,v_r^i))$ is resolved by $S'$.
Let's fix an arbitrary integer $i\in [n]$.
Let $x_u\in P(a_r,u_r^i)$.
Let $P(a_r,x_u)$ be the subpath of $P(a_r,u_r^i)$ from $a_r$ to $x_u$ and let $\ell_{x_u}=|P(a_r,x_u)|$.
Since $\dist(f(\pi_1^1,a_r),x_u)=2+\ell_{x_u}$, obviously two distinct vertices of $P(a_r,u_r^i)$ are distinguished by $f(\pi_1^1,a_r)$.
Let $x_v\in P(a_r,v_r^i)$.
Let $P(a_r,x_v)$ be the subpath of $P(a_r,v_r^i)$ from $a_r$ to $x_v$ and let $\ell_{x_v}=|P(a_r,x_v)|$.
Since $\dist(f(\pi_1^1,a_r),x_v)=2+\ell_{x_v}$, obviously two distinct vertices of $P(a_r,v_r^i)$ are distinguished by $f(\pi_1^1,a_r)$.
For the pair $\{x_u,x_v\}$, if $\ell_{x_u}\neq \ell_{x_v}$, then it is resolved by $f(\pi_1^1,a_r)$.
Otherwise, if $\ell_{x_u}=\ell_{x_v}<|P(a_r,u_r^i)|$, then $\dist(f^1(u_r^i,v_r^i),x_u)=1+|P(a_r,u_r^i)|-\ell_{x_u}<\dist(f^1(u_r^i,v_r^i),x_v)=2+|P(a_r,v_r^i)|-\ell_{x_v}$.
By Claim~\ref{basic}, the vertex pair $\{u_r^i,v_r^i\}$ is resolved by $S'$.
Thus every pair $\{x_u,x_v\}$ is resolved by $S'$.
Let $x_u'\in P(a_r,u_r^{i'})$ and $x_v'\in P(a_r,v_r^{i'})$ for some integer $i'\in [n]$ such that $i'\neq i$.
We define $\ell_{x_u'}$ and$\ell_{x_v'}$ in a similar way to that of $\ell_{x_u}$ and $\ell_{x_v}$.   
For a pair $\{x_u,x_u'\}$, $\dist(f^1(u_r^i,v_r^i),x_u)=1+|P(a_r,u_r^i)|-\ell_{x_u}<\dist(f^1(u_r^i,v_r^i),x_u')=1+|P(a_r,u_r^i)|+\ell_{x_u'}$.
Thus every pair $\{x_u,x_u'\}$ is resolved by $f^1(u_r^i,v_r^i)$.
Similarly, we can show that every pair $\{x_u,x_v'\}$, $\{x_v,x_v'\}$ is resolved by $f^1(u_r^i,v_r^i)$.

Then we show that every pair of distinct vertices of $\bigcup_{i\in [n]}(P(c_r,u_r^i)\cup P(c_r,v_r^i))$ is resolved by $S'$.
Let's fix arbitrary integers $i,i'\in [n]$ such that $i\neq i'$.
Let $z_u\in P(c_r,u_r^i)$, $z_v\in P(c_r,v_r^i)$, $z_u'\in P(c_r,u_r^{i'})$ and $z_v'\in P(c_r,v_r^{i'})$.
We define $\ell_{z_u}$, $\ell_{z_v}$, $\ell_{z_u'}$ and $\ell_{z_v'}$ in a similar way to that of $\ell_{x_u}$ and $\ell_{z_v}$ in last paragraph.
Since $\dist(f(\pi_1^1,c_r),z_u)=2+\ell_{z_u}$, obviously two distinct vertices of $P(c_r,u_r^i)$ are distinguished by $f(\pi_1^1,c_r)$.
Since $\dist(f(\pi_1^1,c_r),z_v)=2+\ell_{z_v}$, two distinct vertices of $P(c_r,v_r^i)$ are distinguished by $f(\pi_1^1,c_r)$.
For a pair $\{z_u,z_v\}$, if $\ell_{z_u}\neq \ell_{z_v}$, then it is resolved by $f(\pi_1^1,c_r)$.
Otherwise, if $\ell_{z_u}=\ell_{z_v}<|P(c_r,u_r^i)|$, then $\dist(f^1(u_r^i,v_r^i),z_u)=1+|P(c_r,u_r^i)|-\ell_{z_u}<\dist(f^1(u_r^i,v_r^i),z_v)=2+|P(c_r,v_r^i)|-\ell_{z_v}$.
Thus every pair $\{z_u,z_v\}$ is resolved by $S'$.
For a pair $\{z_u,z_u'\}$, if $\ell_{z_u}\neq \ell_{z_u'}$, then it is resolved by  $f(\pi_1^1,c_r)$.
Otherwise if $\ell_{z_u}=\ell_{z_u'}$, then
$\dist(f^1(u_r^i,v_r^i),z_u')=\text{min }(1+|P(c_r,u_r^i)|+\ell_{z_u'},1+|P(a_r,u_r^i)|+|P(a_r,u_r^{i'})|+|P(c_r,u_r^{i'})|-\ell_{z_u'}-2)$ if $|P(c_r,u_r^{i'})|-\ell_{z_u'}\geq 2$.
$\dist(f^1(u_r^i,v_r^i),z_u')=1+|P(a_r,u_r^i)|+|P(a_r,u_r^{i'})|+|P(c_r,u_r^{i'})|-\ell_{z_u'}$ if $|P(c_r,u_r^{i'})|-\ell_{z_u'}<2$.
If $\dist(f^1(u_r^i,v_r^i),z_u')=1+|P(c_r,u_r^i)|+\ell_{z_u'}$, then $\{z_u,z_u'\}$ is resolved by $f^1(u_r^i,v_r^i)$.
Otherwise, if $\dist(f^1(u_r^i,v_r^i),z_u')=1+|P(a_r,u_r^i)|+|P(a_r,u_r^{i'})|+|P(c_r,u_r^{i'})|-\ell_{z_u'}$,
suppose that there is a pair $\{z_u,z_u'\}$ which is not resolved by $f^1(u_r^i,v_r^i)$.
We get that $i=2(n+1)$, a contradiction.
If $\dist(f^1(u_r^i,v_r^i),z_u')=1+|P(a_r,u_r^i)|+|P(a_r,u_r^{i'})|+|P(c_r,u_r^{i'})|-\ell_{z_u'}-2$,
suppose that there is a pair $\{z_u,z_u'\}$ which is not resolved by $f^1(u_r^i,v_r^i)$.
We get that $20i=40(n+1)-2$, a contradiction.
Thus every pair $\{z_u,z_u'\}$ is resolved by $f^1(u_r^i,v_r^i)$ or $f(\pi_1^1,c_r)$.
Similarly, we can show that every pair $\{z_u,z_v'\}$ and $\{z_v,z_v'\}$ is resolved by $f^1(u_r^i,v_r^i)$ or $f(\pi_1^1,c_r)$.


Next we show that every pair of distinct vertices of $\bigcup_{i\in [n]}(P(b_r,u_r^i)\cup P(b_r,v_r^i))$ is resolved by $S'$.
Let's fix an arbitrary integer $i,i'\in [n]$ such that $i'\neq i$
Let $y_u\in P(b_r,u_r^i)$, $y_v\in P(b_r,v_r^i)$, $y_u'\in P(b_r,u_r^{i'})$ and $y_v'\in P(b_r,v_r^{i'})$.
We define $\ell_{y_u}$, $\ell_{y_v}$, $\ell_{y_u'}$ and $\ell_{y_v'}$ in a similar way to that of $\ell_{x_u}$ and $\ell_{x_v}$ in the second paragraph.
Since $(G,n,\chi,\cal P)$ is a YES-instance, by Claim~\ref{basic}, the pair $\{u_r^i,v_r^i\}$ is resolved by some vertex of $S$, say $s_{\eta}^{\tau}$.
Since $\dist(s_{\eta}^{\tau},y_u)=|P(s_{\eta}^{\tau},b_r)|+\ell_{y_u}=20(n+1)+5i+1+\ell_{y_u}$, every vertex pair of $P(b_r,u_r^i)$ is resolved by $s_{\eta}^{\tau}$.
Since $\dist(s_{\eta}^{\tau},y_v)=|P(s_{\eta}^{\tau},b_r)|+\ell_{y_v}=20(n+1)+5i+1+\ell_{y_v}$, every vertex pair of $P(b_r,v_r^i)$ is resolved by $s_{\eta}^{\tau}$.
For a pair $\{y_u,y_v\}$, if $\ell_{y_u}\neq \ell_{y_v}$, then it is resolved by $s_{\eta}^{\tau}$.
For a pair $\{y_u,y_v\}$ such that $\ell_{y_u}=\ell_{y_v}$,
$\dist(f^1(u_r^i,v_r^i),y_u)=2+|P(b_r,u_r^i)|-\ell_{y_u}=1+20(n+1)-5i-\ell_{y_u}>\dist(f^1(u_r^i,v_r^i),y_v)=2+|P(b_r,v_r^i)|-\ell_{y_v}=20(n+1)-5i-\ell_{y_v}$.
Thus every pair $\{y_u,y_v\}$ is resolved by $s_{\eta}^{\tau}$ or $f^1(u_r^i,v_r^i)$.
For a pair $\{y_u,y_u'\}$ such that $y_u\neq b_r$ and $y_u'\neq b_r$, $\dist(f^1(u_r^i,v_r^i),y_u)=1+20(n+1)-5i-\ell_{y_u}\leq 20(n+1)-5i$.
$\dist(f^1(u_r^i,v_r^i),y_u')=\text{min }(2+|P(b_r,v_r^i)|+\ell_{y_u'},1+|P(a_r,u_r^i)|+|P(a_r,u_r^{i'})|+|P(b_r,u_r^{i'})|-\ell_{y_u'})>20(n+1)-5i$.
Thus every pair $\{y_u,y_u'\}$ such that $y_u\neq b_r$ and $y_u'\neq b_r$ is resolved by $f^1(u_r^i,v_r^i)$.
Similarly, every pair $\{y_u,y_v'\}$ and $\{y_v,y_v'\}$ is resolved by $f^1(u_r^i,v_r^i)$.

Then we show that every pair of distinct vertices of $R_r$ is resolved by $S'$.
Let's fix arbitrary integers $i,i'\in [n]$ such that $i'\neq i$.
Let $x_u\in P(a_r,u_r^i), x_v\in P(a_r,v_r^i), y_u\in P(b_r,u_r^i), y_v\in P(b_r,v_r^i), z_u\in P(c_r,u_r^i)$ and $z_v\in P(c_r,v_r^i)$.
Let $x_u'\in P(a_r,u_r^{i'}), x_v'\in P(a_r,v_r^{i'}), y_u'\in P(b_r,u_r^{i'}), y_v'\in P(b_r,v_r^{i'}), z_u'\in P(c_r,u_r^{i'})$ and $z_v'\in P(c_r,v_r^{i'})$.
We define $\ell_{x_u},\ell_{x_v},\ell_{y_u},\ell_{y_v},\ell_{z_u},\ell_{z_v},\ell_{x_u'},\ell_{x_v'},\ell_{y_u'},\ell_{y_v'},\ell_{z_u'}$ and $\ell_{z_v'}$ in a similar way to that of $\ell_{x_u}$ and $\ell_{x_v}$ in the second paragraph.
For a pair $\{x_u,y_u\}$, $\dist(f(\pi_1^1,a_r),x_u)=2+\ell_{x_u}<\dist(f(\pi_1^1,a_r),y_u)=2+|P(a_r,u_r^i)|+|P(b_r,u_r^i)|-\ell_{y_u}$.
Thus every pair $\{x_u,y_u\}$ is resolved by $f(\pi_1^1,a_r)$.
Similarly, every pair $\{x_u,y_v\}$, $\{x_v,y_v\}$ and $\{x_v,y_u\}$ are resolved by $f(\pi_1^1,a_r)$.
For a pair $\{x_u,z_u\}$, $\dist(f(\pi_1^1,a_r),x_u)=2+\ell_{x_u}$. 
$\dist(f(\pi_1^1,a_r),z_u)=\text{min }(2+|P(a_r,u_r^i)|+|P(c_r,u_r^i)|-\ell_{z_u}-2,|P(\pi_1^1,a_r)|+|P(\pi_1^1,c_r)|+\ell_{z_u})$ if $|P(c_r,u_r^i)|-\ell_{z_u}\geq 2$.
$\dist(f(\pi_1^1,a_r),z_u)=2+|P(a_r,u_r^i)|+|P(c_r,u_r^i)|-\ell_{z_u}$ if $|P(c_r,u_r^i)|-\ell_{z_u}<2$.
It follows that $\dist(f(\pi_1^1,a_r),x_u)=2+\ell_{x_u}<\dist(f(\pi_1^1,a_r),z_u)$. 
Thus every pair $\{x_u,z_u\}$ is resolved by $f(\pi_1^1,a_r)$.
Similarly, every pair $\{x_u,z_v\}$, $\{x_v,z_v\}$ and $\{x_v,z_u\}$ are resolved by $f(\pi_1^1,a_r)$.
For a pair $\{y_u,z_u\}$, $\dist(f(\pi_1^1,c_r),z_u)=2+\ell_{z_u}$.
$\dist(f(\pi_1^1,c_r),b_r)=\text{min}_{j\in [m]}(|P(c_r,s_1^j)|+|P(s_1^j,b_r)|)=35n+41$.
$\dist(f(\pi_1^1,c_r),y_u)=\text{min }(2+|P(c_r,u_r^i)|+|P(b_r,u_r^i)|-\ell_{y_u},\dist(f(\pi_1^1,c_r),b_r)+\ell_{y_u})>\dist(f(\pi_1^1,c_r),z_u)$.
Thus every pair $\{y_u,z_u\}$ is resolved by $f(\pi_1^1,a_r)$.
Similarly, every pair $\{y_u,z_v\}$, $\{y_v,z_v\}$ and $\{y_v,z_u\}$ are resolved by $f(\pi_1^1,c_r)$.
For a pair $\{x_u,y_u'\}$, $\dist(f^1(u_r^i,v_r^i),x_u)=1+|P(a_r,u_r^i)|-\ell_{x_u}$ if $x_u\neq u_r^i$ and
$\dist(f^1(u_r^i,v_r^i),u_r^i)=2$.
$\dist(f^1(u_r^i,v_r^i),y_u')=\text{min }(1+|P(a_r,u_r^i)|+|P(a_r,u_r^{i'})|+|P(b_r,u_r^{i'})|-\ell_{y_u'},2+|P(b_r,v_r^i)|+\ell_{y_u'})>\dist(f^1(u_r^i,v_r^i),x_u)$.
Thus every pair $\{x_u,y_u'\}$ is resolved by $f^1(u_r^i,v_r^i)$.
Similarly, every pair $\{x_u,y_v'\}$, $\{x_v,y_v'\}$, $\{x_v,y_u'\}$, $\{x_u,z_u'\}$, $\{x_u,z_v'\}$, $\{x_v,z_v'\}$ and $\{x_v,z_u'\}$ are resolved by $f^1(u_r^i,v_r^i)$.
For a pair $\{y_u,z_u'\}$, $\dist(f^1(u_r^i,v_r^i),y_u)=2+|P(b_r,u_r^i)|-\ell_{y_u}$ if $y_u\neq b_r$ and
$\dist(f^1(u_r^i,v_r^i),y_u)=1+|P(b_r,u_r^i)|$ if $y_u=b_r$.
$\dist(f^1(u_r^i,v_r^i),z_u')=\text{min }(1+|P(a_r,u_r^i)|+|P(a_r,u_r^{i'})|+|P(c_r,u_r^{i'})|-2-\ell_{z_u'},1+|P(c_r,u_r^i)|+\ell_{z_u'})$ if $|P(c_r,u_r^{i'})|-\ell_{z_u'}\geq 2$.
$\dist(f^1(u_r^i,v_r^i),z_u')=\text{min }(1+|P(a_r,u_r^i)|+|P(a_r,u_r^{i'})|+|P(c_r,u_r^{i'})|-\ell_{z_u'},1+|P(c_r,u_r^i)|+\ell_{z_u'})$ if $|P(c_r,u_r^{i'})|-\ell_{z_u'}<2$.
It follows that $\dist(f^1(u_r^i,v_r^i),z_u')>20(n+1)>\dist(f^1(u_r^i,v_r^i),y_u)$.
Thus every pair $\{y_u,z_u'\}$ is resolved by $f^1(u_r^i,v_r^i)$.
Similarly, every pair $\{y_u,z_v'\}$, $\{y_v,z_v'\}$ and $\{y_v,z_u'\}$ are resolved by $f^1(u_r^i,v_r^i)$.
As a result, every pair of distinct vertices of $R_r$ is resolved by $S'$.

Finally, we show that every vertex pair of $R_r\times R_{r'}$ with $r'\in \{1,2,3\}$ and $r'\neq r$ is resolved by $S'$.
Let's fix arbitrary integers $i,i'\in [n]$ and $r'\in \{1,2,3\}$ such that $r'\neq r$.
Let $x_u\in P(a_r,u_r^i), x_v\in P(a_r,v_r^i), y_u\in P(b_r,u_r^i), y_v\in P(b_r,v_r^i), z_u\in P(c_r,u_r^i)$ and $z_v\in P(c_r,v_r^i)$.
Let $x_u'\in P(a_{r'},u_{r'}^{i'}), x_v'\in P(a_{r'},v_{r'}^{i'}), y_u'\in P(b_{r'},u_{r'}^{i'}), y_v'\in P(b_{r'},v_{r'}^{i'}), z_u'\in P(c_{r'},u_{r'}^{i'})$ and $z_v'\in P(c_{r'},v_{r'}^{i'})$.
We define $\ell_{x_u},\ell_{x_v},\ell_{y_u},\ell_{y_v}$, $\ell_{z_u},\ell_{z_v},\ell_{x_u'},\ell_{x_v'},\ell_{y_u'},\ell_{y_v'},\ell_{z_u'}$ and $\ell_{z_v'}$ in a similar way to that of $\ell_{x_u}$ and $\ell_{x_v}$ in the second paragraph.
For a pair $\{x_u,x_u'\}$, $\dist(f(s_1^1,a_r),x_u)=2+\ell_{x_u}<20(n+1)<\dist(f(s_1^1,a_r),x_u')=2+|P(\pi_1^1,a_r)|+|P(\pi_1^1,a_{r'})|+\ell_{x_u'}$.
Thus every pair $\{x_u,x_u'\}$ is resolved by $f(s_1^1,a_r)$.
Similarly, every pair $\{x_u,x_v'\}$, $\{x_v,x_u'\}$ and $\{x_v,x_v'\}$ are resolved by $f(s_1^1,a_r)$.
For a pair $\{x_u,z_u'\}$, $\dist(f(s_1^1,a_r),x_u)=2+\ell_{x_u}<20(n+1)$.
$\dist(f(s_1^1,a_r),z_u')=\text{min }(2+|P(\pi_1^1,a_r)|+|P(\pi_1^1,c_{r'})|+\ell_{z_u'},2+|P(\pi_1^1,a_r)|+|P(\pi_1^1,a_{r'})|+|P(a_{r'},u_{r'}^{i'})|-2+|P(c_{r'},u_{r'}^{i'})|-\ell_{z_u'})$ if $|P(c_{r'},u_{r'}^{i'})|-\ell_{z_u'}\geq 2$.
$\dist(f(s_1^1,a_r),z_u')=\text{min }(2+|P(\pi_1^1,a_r)|+|P(\pi_1^1,c_{r'})|+\ell_{z_u'},2+|P(\pi_1^1,a_r)|+|P(\pi_1^1,a_{r'})|+|P(a_{r'},u_{r'}^{i'})|+|P(c_{r'},u_{r'}^{i'})|-\ell_{z_u'})$ if $|P(c_{r'},u_{r'}^{i'})|-\ell_{z_u'}<2$.
It follows that $\dist(f(s_1^1,a_r),z_u')>20(n+1)$.
Thus every pair $\{x_u,z_u'\}$ is resolved by $f(s_1^1,a_r)$.
Similarly, every pair $\{x_u,z_v'\}$, $\{x_v,z_u'\}$ and $\{x_v,z_v'\}$ are resolved by $f(s_1^1,a_r)$.
For a pair $\{x_u,y_u'\}$ such that $y_u'\neq b_{r'}$, $\dist(f(\pi_1^1,a_r),x_u)=2+\ell_{x_u}<20(n+1)$.
$\dist(f(\pi_1^1,a_r),b_{r'})=2+\text{min}_{j\in [m]}(|P(s_1^j,a_r)|+|P(s_1^j,b_{r'})|)$.
$\dist(f(\pi_1^1,a_r),y_u')=\text{min }(\dist(f(\pi_1^1,a_r),b_{r'})+\ell_{z_u'},|P(\pi_1^1,a_r)|+|P(\pi_1^1,a_{r'})|+|P(a_{r'},u_{r'}^{i'})|+|P(b_{r'},u_{r'}^{i'})|-\ell_{y_u'})>20(n+1)$.
Thus every pair $\{x_u,y_u'\}$ such that $y_u'\neq b_{r'}$ is resolved by $f(\pi_1^1,a_r)$.
Similarly, every pair $\{x_v,y_u'\}$ such that $y_u'\neq b_{r'}$, every pair $\{x_v,y_v'\}$ and $\{x_u,y_v'\}$ are resolved by $f(\pi_1^1,a_r)$.
For a pair $\{y_u,y_u'\}$ such that $y_u\neq b_r$, $\dist(f^1(u_r^i,v_r^i),y_u)=2+|P(b_r,u_r^i)|-\ell_{y_u}<20(n+1)$.
$\dist(f^1(u_r^i,v_r^i),b_{r'})=\text{min }(2+|P(b_r,v_r^i)|+\text{min}_{j\in [m]}(|P(s_1^j,b_r)|+|P(s_1^j,b_{r'})|),1+|P(a_r,v_r^i)|+\text{min}_{j'\in [m]}(|P(s_1^{j'},a_r)|+|P(s_1^{j'},b_{r'})|))$.
$\dist(f^1(u_r^i,v_r^i),y_u')=\text{min }(\dist(f^1(u_r^i,v_r^i),b_{r'})+\ell_{y_u'},1+|P(a_r,v_r^i)|+|P(\pi_1^1,a_r)|+|P(\pi_1^1,a_{r'})|+|P(a_{r'},u_{r'}^{i'})|+|P(b_{r'},u_{r'}^{i'})|-\ell_{y_u'})>20(n+1)$.
Thus every pair $\{y_u,y_u'\}$ such that $y_u\neq b_r$ is resolved by $f^1(u_r^i,v_r^i)$.
Similarly, every pair $\{y_u,y_v'\}$ such that $y_u\neq b_r$, every pair $\{y_v,y_v'\}$ and $\{y_v,y_u'\}$ are resolved by $f^1(u_r^i,v_r^i)$.
For a pair $\{y_u,z_u'\}$, $\dist(f^1(u_r^i,v_r^i),y_u)<20(n+1)$.
$\dist(f^1(u_r^i,v_r^i),z_u')=\text{min }(1+|P(a_r,u_r^i)|+|P(\pi_1^1,a_r)|+|P(\pi_1^1,c_{r'})|+\ell_{z_u'},
1+|P(a_r,u_r^i)|+|P(\pi_1^1,a_r)|+|P(\pi_1^1,a_{r'})|+|P(a_{r'},u_{r'}^{i'})|+|P(c_{r'},u_{r'}^{i'})|-\ell_{z_u'}-2)$ if $|P(c_{r'},u_{r'}^{i'})|-\ell_{z_u'}\geq 2$.
$\dist(f^1(u_r^i,v_r^i),z_u')=\text{min }(1+|P(a_r,u_r^i)|+|P(\pi_1^1,a_r)|+|P(\pi_1^1,c_{r'})|+\ell_{z_u'},
1+|P(a_r,u_r^i)|+|P(\pi_1^1,a_r)|+|P(\pi_1^1,a_{r'})|+|P(a_{r'},u_{r'}^{i'})|+|P(c_{r'},u_{r'}^{i'})|-\ell_{z_u'})$ if $|P(c_{r'},u_{r'}^{i'})|-\ell_{z_u'}<2$.
It follows that $\dist(f^1(u_r^i,v_r^i),z_u')>30(n+1)$.
Thus every pair $\{y_u,z_u'\}$ is resolved by $f^1(u_r^i,v_r^i)$.
Similarly, every pair $\{y_u,z_v'\}$, $\{y_v,z_u'\}$ and $\{y_v,z_v'\}$ are resolved by $f^1(u_r^i,v_r^i)$.
For a pair $\{z_u,z_u'\}$, $\dist(f^1(u_r^i,v_r^i),z_u)=1+|P(c_r,u_r^i)|-\ell_{z_u}$ if $z_u\neq u_r^i$ and $\dist(f^1(u_r^i,v_r^i),z_u)=2$ if $z_u=u_r^i$.
Thus $\dist(f^1(u_r^i,v_r^i),z_u)<30(n+1)<\dist(f^1(u_r^i,v_r^i),z_u')$ and every pair $\{z_u,z_u'\}$ is resolved by $f^1(u_r^i,v_r^i)$.
Similarly, every pair $\{z_u,z_v'\}$, $\{z_v,z_u'\}$ and $\{z_v,z_v'\}$ are resolved by $f^1(u_r^i,v_r^i)$.
As a result, every vertex pair of $R_r\times R_{r'}$ with $r'\in \{1,2,3\}$ and $r'\neq r$ is resolved by $S'$.
This completes the proof for the lemma.
\end{proof}

\begin{lemma} \label{UxPi}
Every pair $\{x,y\}\in \bigcup_{i\in [n]}U_i\times \bigcup_{i\in [n]}\Pi_i$ is resolved by $S'$.
\end{lemma}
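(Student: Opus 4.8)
The plan is to follow the template of Lemma~\ref{UxU}, now with $x$ lying on a path $P(s_i^j,p_i^h)\subseteq U_i^h$ and $y$ lying on one of the four paths $P^{h'}(i',j',a_{r'})$, $P^{h'}(i',j',b_{r'})$, $P^{h'}(i',j',c_{r'})$, $P^{h'}(i',j',p_{i'}^{3-h'})$ making up $\Pi^{h'}(i',j',r')$. As before I parametrise $x$ by $\ell_x=|P(s_i^j,x)|$ (its distance from $s_i^j$ along that path) and $y$ by $\ell_y=\dist(y,\pi_{i'}^{h'})$, and the only resolvers I use are forced vertices in $F\subseteq S'$, chiefly the ``middle'' gadgets $f^{mid}(i,j,h)$ and the ``endpoint'' gadgets $f^{h}(i,j,a_1)$ and $f^{h}(i,j,p_i^{3-h})$, together with their counterparts obtained by varying $j$ and $h$.

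I would first handle the case $i\neq i'$. The gadget $f^{mid}(i,j,h)$, attached to $\text{mid}(P^{h}(i,j,p_i^{3-h}))$, satisfies $\dist(f^{mid}(i,j,h),x)\le 2+20(n+1)$, exactly the bound already computed in Lemma~\ref{UxU}. On the other hand every shortest path from $f^{mid}(i,j,h)$ to a vertex of $\Pi_{i'}$ must leave ``column $i$'' by first reaching $\pi_i^h$ (distance $1+10(n+1)$) and then travelling to $\pi_{i'}^{h'}$ through some $a_{r''}$ or $c_{r''}$ (distance $10(n+1)+10(n+1)$), so $\dist(f^{mid}(i,j,h),y)\ge 1+30(n+1)>\dist(f^{mid}(i,j,h),x)$ and $\{x,y\}$ is resolved. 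Throughout I reuse the standing fact, going back to Lemma~\ref{resolve} and used again in Lemma~\ref{forcedSet}, that the choice $M=40(n+1)$ makes every walk that oscillates between $X$ and $W$ strictly longer than the intended route, so only the ``canonical'' routes enter the distance comparisons.

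The substantive case is $i=i'$, which I would split according to whether $h'=h$ and according to which of the four paths contains $y$. When $h'=h$ and $y$ lies on $P^{h}(i,j',a_{r'})$, $P^{h}(i,j',b_{r'})$ or $P^{h}(i,j',c_{r'})$, an endpoint gadget (chosen with index $j$ so that $y$ is not on the gadget's own path) sits at distance $2$ from $\pi_i^h$, so its distance to $y$ is essentially $2+\ell_y$ while its distance to $x$ is $3+20(n+1)-\ell_x$; this resolves the pair except in finitely many boundary situations, where I would instead invoke $f^{mid}(i,j,h)$, whose distance to $x$ is at most $2+20(n+1)$ whereas its distance to a vertex reached only through $\pi_i^h$ is at least $1+10(n+1)$ plus the relevant offset. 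When $y$ lies on $P^{h}(i,j',p_i^{3-h})$ I would take $f^{mid}(i,j',h)$, which sits on that very path, as the primary resolver, and bring in the auxiliary path $P(q_i^{h},\text{mid}(P^{3-h}(i,j',p_i^{h})))$ of the construction for the equality cases. Finally, when $h'=3-h$ the two sets lie on opposite sides of $\pi_i^h$ and $\pi_i^{3-h}$; $f^{mid}(i,j,h)$ is again close to $x$ and far from everything that can only be reached via $\pi_i^{3-h}$, with $f^{h}(i,j,p_i^{3-h})$ or $f^{3-h}(i,j',a_1)$ serving as the fallback.

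The main obstacle, exactly as in the other completeness lemmas, is these equality cases. Because the construction deliberately attaches several length-$20(n+1)$ paths to every $\pi_i^h$, a single forced vertex frequently gives identical distances to a $U$-vertex and to a $\Pi$-vertex, and for each such coincidence one must exhibit a second forced vertex whose two distances differ by one of the $\pm1$ or $\pm2$ offsets engineered into the path lengths $\frac{M}{2}\pm 5x\pm 1$ and into the gadget attachments. Enumerating these sub-sub-cases so that every coincidence is covered, and checking in each of them that no shortcut through $\{a_r,b_r,c_r\}$ beats the intended route, is where essentially all the work lies; the arithmetic itself is routine, since $M=40(n+1)$ and every gadget and auxiliary path has length a multiple of $10(n+1)$ up to a small additive constant.
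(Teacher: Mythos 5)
Your treatment of the case $i\neq i'$ is sound: you anchor the resolver at $f^{mid}(i,j,h)$ near $x$ and show every vertex of $\Pi_{i'}$ is at distance at least $1+30(n+1)$, whereas the paper anchors at $f^{h'}(i',j^*,a_r)$ near $y$ and shows $x$ is far; both work. The gap is in the central case $i=i'$, $h=h'$. With your parametrization, an endpoint gadget $f^{h}(i,j'',a_{r''})$ whose own path avoids $y$ gives $\dist(f^{h}(i,j'',a_{r''}),x)=3+20(n+1)-\ell_x$ and $\dist(f^{h}(i,j'',a_{r''}),y)=2+\ell_y$, so the unresolved pairs are exactly those with $\ell_x+\ell_y=20(n+1)+1$. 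That is not ``finitely many boundary situations'' but a one-parameter family of size $\Theta(n)$ for each pair of paths, and your proposed fallback $f^{mid}(i,j,h)$ provably fails on half of it: for $\ell_x\geq 10(n+1)$ one has $\dist(f^{mid}(i,j,h),x)=2+30(n+1)-\ell_x$, while the corresponding $\ell_y=20(n+1)+1-\ell_x\leq 10(n+1)+1$ gives $\dist(f^{mid}(i,j,h),y)=1+10(n+1)+\ell_y=2+30(n+1)-\ell_x$ as well, so the two distances coincide identically along that whole sub-family. A concrete unresolved instance is $x=p_i^h$ and $y=N^{h}_{\pi_i^h}(i,j',a_{r'})$: both of your resolvers are at distance $3$ (respectively $2+10(n+1)$) from each of the two vertices.

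The repair is the one the paper uses: as the tie-breaking resolver take the endpoint gadget sitting on $y$'s \emph{own} path, e.g.\ $f^{h}(i,j',a_{r'})$ when $y\in P^{h}(i,j',a_{r'})$. Its distance to $x$ is unchanged, $3+20(n+1)-\ell_x$, but its distance to $y$ drops by two, to $\ell_y$, so its coincidence locus is $\ell_x+\ell_y=20(n+1)+3$, disjoint from the first. The same shift-by-two trick handles $y$ on $P^{h}(i,j',b_{r'})$, $P^{h}(i,j',c_{r'})$ and $P^{h}(i,j',p_i^{3-h})$. For $h'=3-h$ the paper gets by with the single resolver $f^{3-h}(i,j',a_r)$ and a direct inequality; your sketch for that sub-case also needs care, because vertices of $\Pi^{3-h}(i,j,r')$ near the $s_i^{j}$ end are reached from $f^{mid}(i,j,h)$ through $s_i^j$ rather than through $\pi_i^{3-h}$, so they are not uniformly ``far''.
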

\begin{proof}
First, we show that every pair $\{x,y\}\in U_i^h\times \bigcup_{j\in [m],r\in \{1,2,3\}}\Pi^h(i,j,r)$ for $i\in [n],h\in \{1,2\}$ is resolved by $S'$.
We fix arbitrary integers $i\in [n],j,j'\in [m]$, $h\in \{1,2\}$ and $r\in \{1,2,3\}$.
Suppose that $x\in P(s_i^j,p_i^h)$ and $y\in P^h(i,j',a_r)$.
For a vertex $x\in P(s_i^j,p_i^h)$, let $P(x,p_i^h)$ be the subpath of $P(s_i^j,p_i^h)$ from $x$ to $p_i^h$ and $|P(x,p_i^h)|=\ell_x$.
For a vertex $y\in P^h(i,j',a_r)$, let $P(y,\pi_i^h)$ be the subpath of $P^h(i,j',a_r)$ from $y$ to $\pi_i^h$ and $|P(y,\pi_i^h)|=\ell_y$.
Let $j^*\in [m]$ be an integer such that $j^*\neq j$ and $j^*\neq j'$.
Then $\dist(f^h(i,j^*,a_r),x)=3+\ell_x$ and $\dist(f^h(i,j^*,a_r),y)=2+\ell_y$.
Thus every pair $\{x,y\}$ is resolved by $f^h(i,j^*,a_r)$ unless $\ell_y-\ell_x=1$.
For the pair $\{x,y\}$ such that $\ell_y-\ell_x=1$, there are two cases.
Case 1: $j=j'$.
Since $\dist(f^h(i,j',a_r),x)=3+\ell_x$ if $x\neq s_i^j$, $\dist(f^h(i,j',a_r),s_i^j)=20(n+1)+1$ and
$\dist(f^h(i,j',a_r),y)=\ell_y$ if $y\neq \pi_i^h$, $\dist(f^h(i,j',a_r),\pi_i^h)=2$,
every pair $\{x,y\}$ such that $\ell_y-\ell_x=1$ is resolved by $f^h(i,j',a_r)$.
Case 2: $j\neq j'$.
Since $\dist(f^h(i,j',a_r),x)=3+\ell_x$,
$\dist(f^h(i,j',a_r),y)=\ell_y$ if $y\neq \pi_i^h$ and $\dist(f^h(i,j',a_r),\pi_i^h)=2$,
every pair $\{x,y\}$ such that $\ell_y-\ell_x=1$ is resolved by $f^h(i,j',a_r)$.
It follows that every pair $\{x,y\}$ is resolved by $f^h(i,j^*,a_r)$ or $f^h(i,j',a_r)$.
Similarly, we can show that every vertex pair of $P(s_i^j,p_i^h)\times P^h(i,j',b_r)$, $P(s_i^j,p_i^h)\times P^h(i,j',c_r)$ and $P(s_i^j,p_i^h)\times P^h(i,j',p_i^{3-h})$ are resolved by $S'$.

Next we show that every pair $\{x,y\}\in U_i^h\times \bigcup_{j\in [m],r\in \{1,2,3\}}\Pi^{3-h}(i,j,r)$ for $i\in [n],h\in \{1,2\}$ is resolved by $S'$.
We fix arbitrary integers $i\in [n],j,j'\in [m]$, $h\in \{1,2\}$ and $r\in \{1,2,3\}$.
Suppose that $x\in P(s_i^j,p_i^h)\setminus \{s_i^j\}$ and $y\in P^{3-h}(i,j',a_r)$.
We define $\ell_x$ and $\ell_y$ in a similar way to that of $\ell_x$ in the first paragraph.
There are two cases.
Case 1: $j=j'$.
$\dist(f^{3-h}(i,j',a_r),y)=\ell_y$ if $y\neq \pi_i^h$, $\dist(f^h(i,j',a_r),\pi_i^h)=2$.
$\dist(f^{3-h}(i,j',a_r),x)=\text{min }(|P^{3-h}(i,j,a_r)|+1+|P(s_i^j,p_i^h)|-\ell_x,3+|P(\pi_i^{3-h},c_r)|+|P(c_r,\pi_i^h)|+\ell_x)=
\text{min }(40(n+1)+1-\ell_x,20(n+1)+3+\ell_x)\geq 20(n+1)+1>\dist(f^{3-h}(i,j',a_r),y)$.
Thus in this case, every pair $\{x,y\}$ is resolved by $f^{3-h}(i,j',a_r)$.
Case 2: $j\neq j'$.
$\dist(f^{3-h}(i,j',a_r),y)=\ell_y$ if $y\neq \pi_i^h$, $\dist(f^h(i,j',a_r),\pi_i^h)=2$.
$\dist(f^{3-h}(i,j',a_r),x)=\text{min }(3+|P(s_i^j,p_i^{3-h})|+|P(s_i^j,p_i^h)|-\ell_x,3+|P(\pi_i^{3-h},c_r)|+|P(c_r,\pi_i^h)|+\ell_x)=
\text{min }(40(n+1)+3-\ell_x,20(n+1)+3+\ell_x)\geq 20(n+1)+3>\dist(f^{3-h}(i,j',a_r),y)$.
Thus in this case, every pair $\{x,y\}$ is resolved by $f^{3-h}(i,j',a_r)$.
It follows that every pair $\{x,y\}$ is resolved by $f^{3-h}(i,j',a_r)$.
Similarly, we can show that every vertex pair of $P(s_i^j,p_i^h)\times P^{3-h}(i,j',b_r)$, $P(s_i^j,p_i^h)\times P^{3-h}(i,j',c_r)$ and $P(s_i^j,p_i^h)\times P^{3-h}(i,j',p_i^h)$ are resolved by $S'$.

Finally we show that every pair $\{x,y\}\in U_i^h\times \bigcup_{j\in [m],r\in \{1,2,3\}}\Pi^{h'}(i',j,r)$ for $i,i'\in [n],h,h'\in \{1,2\}$ such that $i\neq i'$ is resolved by $S'$.
We fix arbitrary integers $i,i'\in [n],j,j'\in [m]$, $h,h'\in \{1,2\}$ and $r\in \{1,2,3\}$ such that $i\neq i'$.
Suppose that $x\in P(s_i^j,p_i^h)$ and $y\in P^{h'}(i',j',a_r)$.
We define $\ell_x$ and $\ell_y$ in a similar way to that of $\ell_x$ in the first paragraph.
Let $j^*\in [m]$ be an integer such that $j^*\neq j'$.
Then $\dist(f^{h'}(i',j^*,a_r),y)=2+\ell_y$.
$\dist(f^{h'}(i',j^*,a_r),s_i^j)=\text{min}_{r'\in \{1,2,3\}}(2+|P(\pi_{i'}^{h'},c_{r'})|+|P(s_i^j,c_{r'})|)$.
$\dist(f^{h'}(i',j^*,a_r),x)=\text{min }(\dist(f^{h'}(i',j^*,a_r),s_i^j)+|P(s_i^j,\pi_i^h)|-\ell_x,3+|P(\pi_{i'}^{h'},c_r)|+|P(\pi_i^h,c_r)|+\ell_x)>2+20(n+1)\geq \dist(f^{h'}(i',j^*,a_r),y)$.
Thus every pair $\{x,y\}$ is resolved by $f^{h'}(i',j^*,a_r)$.
Similarly, we can show that every vertex pair of $P(s_i^j,p_i^h)\times P^{h'}(i',j',b_r)$, $P(s_i^j,p_i^h)\times P^{h'}(i',j',c_r)$ and $P(s_i^j,p_i^h)\times P^{h'}(i',j',p_{i'}^{3-h'})$ are resolved by $f^{h'}(i',j^*,a_r)$.
This completes the proof for the lemma.
\end{proof}

\begin{lemma} \label{UxL}
Every pair $\{x,y\}\in \bigcup_{i\in [n]}U_i\times \bigcup_{i\in [n]}L_i$ is resolved by $S'$.
\end{lemma}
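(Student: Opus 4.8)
The plan is to decompose the claim, as in the previous lemmas, along the block structure: first handle pairs $\{x,y\}$ with $x\in U_i^h$ and $y\in L_i^h$ for the same $i$ and $h$; then $x\in U_i^h$, $y\in L_i^{3-h}$ (same $i$, different $h$); then $x\in U_i^h$, $y\in L_{i'}^{h'}$ with $i\ne i'$. In each case I would parametrize $x$ by $\ell_x=|P(s_i^j,x)|$ on the path $P(s_i^j,p_i^h)$ (for the appropriate $j$) and $y$ by $\ell_y=|P(q_i^h,y)|$ on the path $P(q_i^h,\mathrm{mid}(P^{3-h}(i,j',p_i^{h})))$ (for the appropriate $j'$), and then evaluate distances from carefully chosen forced vertices $f^{mid}(\cdot,\cdot,\cdot)$, $f^{h}(\cdot,\cdot,a_r)$, or $f(\pi_i^h,a_r)$. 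The key structural fact I would repeatedly invoke is that every shortest path between a vertex of $U$ and a vertex of $L$ must pass either through $s_i^j$ (hence through $N_{s_i^j}(s_i^j,p_i^h)$ and along the $P^{3-h}$-path spanning the midpoint), through $\pi_i^{h}$, or—crucially when $i\ne i'$—through one of the bottleneck vertices $c_{r'}\in W$; since $W$ is a separator of order $9$, any crossing between distinct indices $i,i'$ costs at least roughly $M=40(n+1)$, which dwarfs the ``local'' distances of order $20(n+1)$, and this is what forces resolution.

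Concretely, for the same-$i$, same-$h$ case I expect $f^{mid}(i,j',3-h)$ to do most of the work: one checks that $\dist(f^{mid}(i,j',3-h),y)$ is monotone-ish in $\ell_y$ and of a different magnitude (or parity) than $\dist(f^{mid}(i,j',3-h),x)$, which lies in the range governed by $\ell_x$; when the coarse magnitude bound fails (the ``equality case'' where both sides coincide), I would fall back to a second forced vertex such as $f^{h}(i,j',a_1)$ or $f^{h}(i,j,p_i^{3-h})$, exactly as in the proof of Lemma~\ref{LxL}, to pin down the residual diagonal relation $\ell_x+\ell_y=\text{const}$ and break it. For the same-$i$, different-$h$ case, the two natural paths from a probe in $\Pi^{h}$ or from $f^{mid}(i,j,h)$ to $x\in U_i^h$ versus $y\in L_i^{3-h}$ route through $\pi_i^{h}$ and $\pi_i^{3-h}$ respectively, giving distances that differ by a fixed offset (the length $|P(\pi_i^{h},c_r)|+|P(\pi_i^{3-h},c_r)|$ detour versus the direct route), so a single $f^{mid}$ or $f^{h}(i,j,a_1)$ should separate them outright, with at most one equality case to mop up. For the $i\ne i'$ case I would use a forced vertex ``on the $i$-side'', e.g. $f^{mid}(i,j,h)$, and show $\dist(f^{mid}(i,j,h),x)\le 2+20(n+1)$ while $\dist(f^{mid}(i,j,h),y)>1+30(n+1)$ because any $s_i^j$-to-$y$ or $\pi_i^h$-to-$y$ path must cross $W$ to reach index $i'$; this is the same mechanism as the final paragraph of Lemma~\ref{UxU} and Lemma~\ref{LxL}, so it should be the easiest of the three cases.

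The main obstacle I anticipate is the same-$i$ analysis—both the same-$h$ and the cross-$h$ subcases—because there the relevant distances are all of the same order $\Theta(n)$ and no $W$-crossing bottleneck saves us; resolution hinges on exact arithmetic of path lengths ($|P(s_i^j,p_i^h)|=20(n+1)$, $|P^{3-h}(i,j,p_i^h)|=20(n+1)$, $|P(q_i^h,\mathrm{mid}(\cdot))|=|P^{3-h}(i,j,p_i^h)|/2+|P(s_i^j,p_i^h)|-1=30(n+1)-1$) and on the $\pm1,\pm2$ offsets built into the $q$-vs-$p$ and $a$-vs-$c$ constructions. I would therefore be careful to split off the degenerate endpoints ($x=s_i^j$, $y=q_i^h$, $y$ a midpoint) where a distance formula changes by a constant, handle each with an ad hoc probe (typically $f^{h}(i,j,p_i^{3-h})$ or $f(\pi_i^h,c_r)$), and otherwise reduce every non-resolved pair to at most one linear Diophantine relation in $\ell_x,\ell_y$ that a second forced vertex then contradicts—mirroring the bookkeeping already carried out in Lemmas~\ref{UxU}, \ref{LxL}, and \ref{HxH}.
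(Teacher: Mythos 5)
Your decomposition (same $i$ same $h$ / same $i$ cross-$h$ / $i\neq i'$), your parametrization by $\ell_x,\ell_y$, and your treatment of the cross-$h$ and $i\neq i'$ cases all match the paper's proof in substance. But there is a genuine gap in the same-$i$, same-$h$ case: your toolbox there consists entirely of forced vertices ($f^{mid}(\cdot,\cdot,\cdot)$, $f^{h}(\cdot,\cdot,a_1)$, $f^{h}(i,j,p_i^{3-h})$, $f(\pi_i^h,c_r)$), and no forced vertex can finish this case. The pair $\{p_i^h,q_i^h\}$ itself lies in $U_i^h\times L_i^h$ ($p_i^h$ is an endpoint of $P(s_i^j,p_i^h)$, $q_i^h$ of $P(q_i^h,\mathrm{mid}(\cdot))$), and Lemma~\ref{forcedSet}(b) states precisely that forced-vertex-gadget vertices resolve \emph{no} pair $\{p_i^h,q_i^h\}$ --- that is by design, since otherwise the forced set gadget would not force a vertex of $X_i$ into the solution. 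The problem is not confined to this single endpoint pair: for $x\in P(s_i^j,p_i^h)$ and $y\in P(q_i^h,\mathrm{mid}(P^{3-h}(i,j,p_i^{h})))$ with the \emph{same} $j$, every forced vertex you name sees $x$ and $y$ symmetrically along the diagonal $\ell_x=\ell_y$ (measuring $\ell_x$ from $p_i^h$ and $\ell_y$ from $q_i^h$): e.g.\ $\dist(f^h(i,j,a_1),x)=3+\ell_x$ and $\dist(f^h(i,j,a_1),y)=3+\ell_y$; $\dist(f^{mid}(i,j,3-h),x)=30(n+1)-\ell_x$ and $\dist(f^{mid}(i,j,3-h),y)=30(n+1)-\ell_y$; $\dist(f(\pi_i^h,a_r),\cdot)=10(n+1)+1+\ell_{(\cdot)}$ for both. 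So the ``second forced vertex that contradicts the residual Diophantine relation'' you are counting on does not exist for this diagonal.

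The paper closes this case with the one ingredient your plan never invokes: the solution vertex $s_i^{j^*}\in S\cap X_i$ of the \textsc{Multicolored Resolving Set} instance, whose existence is exactly what the yes-instance hypothesis provides. For $j^*=j$ one gets $\dist(s_i^{j^*},x)=20(n+1)-\ell_x$ while $\dist(s_i^{j^*},y)=\min(20(n+1)+2+\ell_y,\,40(n+1)-\ell_y)$, which differ on the diagonal; the $j^*\neq j$ subcase is an analogous computation routed through $c_r$. (The same vertex, via Lemma~\ref{forcedSet}, also disposes of the leftover pair $\{p_i^h,q_i^h\}$ in the $j\neq j'$ subcase.) You should add this to your plan; without it the same-$i$, same-$h$, same-$j$ diagonal is unprovable by your stated means.
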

\begin{proof}
First, we show that every pair $\{x,y\}\in P(s_i^j,p_i^h)\times P(q_i^h,\text{mid}(P^{3-h}(i,j,p_i^{h})))$ for $i\in [n],j\in [m]$ and $h\in \{1,2\}$ is resolved by $S'$.
We fix arbitrary integers $i\in [n],j\in [m]$ and $h\in \{1,2\}$.
Suppose that $x\in P(s_i^j,p_i^h)$ and $y\in P(q_i^h,\text{mid}(P^{3-h}(i,j,p_i^{h})))$.
For a vertex $x\in P(s_i^j,p_i^h)$, let $P(x,p_i^h)$ be the subpath of $P(s_i^j,p_i^h)$ from $x$ to $p_i^h$ and $|P(x,p_i^h)|=\ell_x$.
For a vertex $y\in P(q_i^h,\text{mid}(P^{3-h}(i,j,p_i^{h})))$, let $P(y,q_i^h)$ be the subpath of $P(q_i^h,\text{mid}(P^{3-h}(i,j,p_i^{h})))$ from $y$ to $q_i^h$ and $|P(y,q_i^h)|=\ell_y$.
Then $\dist(f^h(i,j,a_1),x)=3+\ell_x$.
$\dist(f^h(i,j,a_1),y)=3+\ell_y$.
Thus every pair $\{x,y\}$ is resolved by $f^h(i,j,a_1)$ unless $\ell_x=\ell_y$.
Suppose that $S'\cap X_i=\{s_i^{j^*}\}$.
For the pair $\{x,y\}$ such that $\ell_x=\ell_y$,
there are two cases.
Case 1: $j^*=j$.
Then $\dist(s_i^{j^*},x)=|P(s_i^j,p_i^h)|-\ell_x=20(n+1)-\ell_x$.
$\dist(s_i^{j^*},y)=\text{min }(|P(s_i^j,p_i^h)|+2+\ell_y,1+|P^{3-h}(i,j,p_i^h)|/2+|P(q_i^h,\text{mid}(P^{3-h}(i,j,p_i^{h})))|-\ell_y)=
\text{min }(20(n+1)+2+\ell_y,40(n+1)-\ell_y)\neq 20(n+1)-\ell_y$.
Thus in this case every pair $\{x,y\}$ such that $\ell_x=\ell_y$ is resolved by $s_i^{j^*}$.
Case 2: $j^*\neq j$.
$\dist(s_i^{j^*},s_i^j)=\text{min}_{r\in \{1,2,3\}}(|P(s_i^{j^*},c_r)|+|P(s_i^j,c_r)|)=20(n+1)-10\lambda^*+20(n+1)-10\lambda$ for some $\lambda,\lambda^*\in [n]$.
Then $\dist(s_i^{j^*},x)=\text{min}(|P(s_i^{j^*},p_i^h)|+\ell_x,\dist(s_i^{j^*},s_i^j)+|P(s_i^j,p_i^h)|-\ell_x)=\text{min}(20(n+1)+\ell_x,60(n+1)-10\lambda-10\lambda^*-\ell_x)$.
$\dist(s_i^{j^*},y)=\text{min }(|P(s_i^{j^*},p_i^h)|+2+\ell_y,|P(s_i^{j^*},p_i^{3-h})|+1+|P^{3-h}(i,j,p_i^h)|/2+|P(q_i^h,\text{mid}(P^{3-h}(i,j,p_i^{h})))|-\ell_y)=
\text{min }(20(n+1)+2+\ell_y,60(n+1)-\ell_y)$.
Thus every pair $\{x,y\}$ such that $\ell_x=\ell_y$ is resolved by $s_i^{j^*}$.
As a result, every pair $\{x,y\}$ is resolved by $f^h(i,j,a_1)$ or $s_i^{j^*}$.

Next we show that every pair $\{x,y\}\in P(s_i^j,p_i^h)\times P(q_i^h,\text{mid}(P^{3-h}(i,j',p_i^{h})))$ for $i\in [n],j,j'\in [m]$ and $h\in \{1,2\}$ such that $j\neq j'$ is resolved by $S'$.
We fix arbitrary integers $i\in [n],j,j'\in [m]$ and $h\in \{1,2\}$ such that $j\neq j'$.
Suppose that $x\in P(s_i^j,p_i^h)$ and $y\in P(q_i^h,\text{mid}(P^{3-h}(i,j',p_i^h)))$.
We define $\ell_x$ and $\ell_y$ in a similar way to that of $\ell_x$ and $\ell_y$ in the first paragraph.
$\dist(f^{mid}(i,j',3-h),y)=1+|P(q_i^h,\text{mid}(P^{3-h}(i,j',p_i^{h})))|-\ell_y=30(n+1)-\ell_y$.
$\dist(f^{mid}(i,j',3-h),x)=\text{min }(1+|P^{3-h}(i,j',p_i^h)|/2+|P(s_i^{j'},p_i^h)|-1+\ell_x,2+|P^{3-h}(i,j',p_i^h)|/2+|P(s_i^{j'},p_i^{3-h})|+|P(s_i^{j'},p_i^h)|-\ell_x)=
\text{min }(30(n+1)+\ell_x,2+50(n+1)-\ell_x)$.
Thus every pair $\{x,y\}$ is resolved by $f^{mid}(i,j',3-h)$ unless $\ell_x=\ell_y=0$, i.e. except the pair $\{p_i^h,q_i^h\}$.
According to Lemma~\ref{forcedSet}, $\{p_i^h,q_i^h\}$ is resolved by $S'$.
Thus every pair $\{x,y\}$ is resolved by $S'$.

Then we show that every pair $\{x,y\}\in P(s_i^j,p_i^h)\times P(q_i^{3-h},\text{mid}(P^h(i,j',p_i^{3-h})))$ for $i\in [n],j,j'\in [m]$ and $h\in \{1,2\}$ is resolved by $S'$.
We fix arbitrary integers $i\in [n],j,j'\in [m]$ and $h\in \{1,2\}$.
Suppose that $x\in P(s_i^j,p_i^h)$ and $y\in P(q_i^{3-h},\text{mid}(P^h(i,j',p_i^{3-h})))$.
We define $\ell_x$ and $\ell_y$ in a similar way to that of $\ell_x$ and $\ell_y$ in the first paragraph.
Let $j^*\in [m]$ be an integer such that $j^*\neq j$.
Then $\dist(f^{3-h}(i,j^*,a_1),y)=3+\ell_y$.
$\dist(f^{3-h}(i,j^*,a_1),x)=\text{min }(1+|P^{3-h}(i,j,a_1)|+|P(s_i^j,p_i^h)|-\ell_x,3+|P(\pi_i^{3-h},a_1)|+|P(\pi_i^h,a_1)|+\ell_x)=
\text{min }(40(n+1)+1-\ell_x,20(n+1)+3+\ell_x)\geq 20(n+1)+2$ if $x\neq s_i^j$ and $\dist(f^{3-h}(i,j^*,a_1),s_i^j)=3+20(n+1)$.
Thus any pair $\{x,y\}$ such that $\ell_y<20(n+1)-1$ is resolved by $f^{3-h}(i,j^*,a_1)$.
For the pair $\{x,y\}$ such that $20(n+1)-1\leq \ell_y\leq 30(n+1)-1$,
$\dist(f^{mid}(i,j',h),y)=1+|P(q_i^{3-h},\text{mid}(P^h(i,j',p_i^{3-h})))|-\ell_y=30(n+1)-\ell_y\leq 10(n+1)+1$.
If $j'=j$, then $\dist(f^{mid}(i,j',h),x)=\text{min }(2+|P^h(i,j,p_i^{3-h})|/2+\ell_x,2+|P^h(i,j,p_i^{3-h})|/2+|P(s_i^j,p_i^h)|-\ell_x)=
\text{min }(2+10(n+1)+\ell_x,2+30(n+1)-\ell_x)\geq 10(n+1)+2$.
If $j'\neq j$, then $\dist(f^{mid}(i,j',h),x)=2+|P^h(i,j,p_i^{3-h})|/2+\ell_x\geq 10(n+1)+2$.
As a result, every pair $\{x,y\}$ is resolved by $f^{3-h}(i,j^*,a_1)$ or $f^{mid}(i,j',h)$.

Finally we show that every pair $\{x,y\}\in P(s_i^j,p_i^h)\times P(q_{i'}^{h'},\text{mid}(P^{3-h'}(i',j',p_{i'}^{h'})))$ for $i,i'\in [n],j,j'\in [m]$ and $h,h'\in \{1,2\}$ such that $i\neq i'$ is resolved by $S'$.
We fix arbitrary integers $i,i'\in [n],j,j'\in [m]$ and $h,h'\in \{1,2\}$ such that $i\neq i'$.
Suppose that $x\in P(s_i^j,p_i^h)$ and $y\in P(q_{i'}^{h'},\text{mid}(P^{3-h'}(i',j',p_{i'}^{h'})))$.
We define $\ell_x$ and $\ell_y$ in a similar way to that of $\ell_x$ and $\ell_y$ in the first paragraph.
Then $\dist(f^h(i,j,a_1),x)=3+\ell_x$ if $x\neq s_i^j$ and $\dist(f^h(i,j,a_1),s_i^j)=2+20(n+1)$.
$\dist(f^h(i,j,a_1),y)=\text{min }(3+|P(\pi_i^h,a_1)|+|P(\pi_{i'}^{h'},a_1)|+\ell_y,2+|P(\pi_i^h,a_1)|+|P(\pi_{i'}^{3-h'},a_1)|+|P^{3-h'}(i',j',p_{i'}^{h'})|/2+|P(q_{i'}^{h'},\text{mid}(P^{3-h'}(i',j',p_{i'}^{h'})))|-\ell_y)
=\text{min }(3+20(n+1)+\ell_y,1+60(n+1)-\ell_y)\geq 3+20(n+1)>\dist(f^h(i,j,a_1),x)$.
Thus every pair $\{x,y\}$ is resolved by $f^h(i,j,a_1)$.
This completes the proof for the lemma.
\end{proof}

\begin{lemma} \label{UxS}
Every pair $\{x,y\}\in \bigcup_{i\in [n]}U_i\times \bigcup_{i\in [n]}S_i$ is resolved by $S'$.
\end{lemma}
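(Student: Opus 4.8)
The plan is to resolve every such pair with a single, uniformly chosen forced vertex --- the one sitting next to $y$. Write $x\in P(s_i^j,p_i^h)$ for some $i\in[n]$, $j\in[m]$, $h\in\{1,2\}$. Every vertex of $\bigcup_{i\in[n]}S_i$ lies on a path $P(\pi_{i'}^{h'},w)$ with $w\in\{a_r,c_r\}$ for appropriate $i'\in[n]$, $h'\in\{1,2\}$, $r\in\{1,2,3\}$; say $y\in P(\pi_{i'}^{h'},w)$. I claim that the forced vertex $f(\pi_{i'}^{h'},w)\in F\subseteq S'$ --- attached to $N_{w}(\pi_{i'}^{h'},w)$, the neighbour of $w$ on that length-$10(n+1)$ path --- already resolves $\{x,y\}$. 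So the whole argument reduces to two estimates: $\dist(f(\pi_{i'}^{h'},w),y)\le 10(n+1)$ for all $y\in P(\pi_{i'}^{h'},w)$, and $\dist(f(\pi_{i'}^{h'},w),x)\ge 10(n+1)+1$ for all $x$ lying on any path $P(s_i^j,p_i^h)$; together they give $\dist(f(\pi_{i'}^{h'},w),x)\neq\dist(f(\pi_{i'}^{h'},w),y)$, which is all we need.

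The first estimate is immediate: $f(\pi_{i'}^{h'},w)$ hangs off the vertex at path-distance $1$ from $w$ on $P(\pi_{i'}^{h'},w)$, so its distance to the vertex of that path at path-distance $\ell$ from $w$ equals $1+|\ell-1|\le 10(n+1)$ (the only competing route, through $\pi_{i'}^{h'}$, has length $20(n+1)-\ell\ge 10(n+1)$, and any $w$-to-$\pi$ route avoiding the dedicated path is strictly longer). For the second estimate I would run the standard gateway analysis. The path $P(s_i^j,p_i^h)$ carries no forced-vertex gadget in its interior and is attached to the rest of $G'$ only at $s_i^j$, at $N_{s_i^j}(s_i^j,p_i^h)$ (joined to $\pi_i^{3-h}$ by $P^{3-h}(i,j,p_i^h)$ of length $20(n+1)$), and at $p_i^h$ (adjacent to $\pi_i^h$). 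Since $f(\pi_{i'}^{h'},w)$ is at distance $2$ from $w$ and at distance $10(n+1)$ from $\pi_{i'}^{h'}$: $\dist(f(\pi_{i'}^{h'},w),p_i^h)=\dist(f(\pi_{i'}^{h'},w),\pi_i^h)+1\ge 10(n+1)+1$ (using $\dist(w,\pi_i^h)=10(n+1)$ via $P(\pi_i^h,w)$, which is attained exactly when $f(\pi_{i'}^{h'},w)$ itself lies on $P(\pi_i^h,w)$, i.e.\ when $(i',h')=(i,h)$); $\dist(f(\pi_{i'}^{h'},w),s_i^j)\ge 2+|P(s_i^j,w)|\ge 10(n+1)+12$, since $|P(s_i^j,a_r)|\ge 20(n+1)+10$ and $|P(s_i^j,c_r)|\ge 10(n+1)+10$; and $\dist(f(\pi_{i'}^{h'},w),N_{s_i^j}(s_i^j,p_i^h))\ge\min\{\dist(f(\pi_{i'}^{h'},w),s_i^j)+1,\ \dist(f(\pi_{i'}^{h'},w),\pi_i^{3-h})+20(n+1)\}\ge 10(n+1)+13$. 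As $x$ is reached from the nearest of these three gateways by adding a nonnegative path length, $\dist(f(\pi_{i'}^{h'},w),x)\ge 10(n+1)+1$ in all cases.

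I do not anticipate a real obstacle; the two points that want care are (a) that the lower bound $\dist(f(\pi_{i'}^{h'},w),x)\ge 10(n+1)+1$ must survive the ``diagonal'' case $(i',h')=(i,h)$, where it is in fact tight (attained at $x=p_i^h$) --- but it is still strictly above $\dist(f(\pi_{i'}^{h'},w),y)\le 10(n+1)$, so $\{x,y\}$ is resolved regardless --- and (b) the routine check that no path bouncing between the vertices of $W$ and the sets $X_i$ undercuts the three gateway routes, which follows from the same lower bounds on $W$-to-$X_i$ distances already used in Lemma~\ref{resolve}.
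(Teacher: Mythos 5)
Your proof is correct and follows essentially the same approach as the paper: a single forced vertex sitting at distance $2$ from the hub $w\in\{a_r,c_r\}$ shared by the two paths, separating all of $P(\pi_{i'}^{h'},w)$ (within distance about $10(n+1)$) from all of $\bigcup_i U_i$ (at distance at least $10(n+1)+1$). The only cosmetic difference is the choice of witness --- you use $f(\pi_{i'}^{h'},w)$ while the paper uses $f(s_i^j,a_r)$ (resp.\ $f(s_i^j,c_r)$), both adjacent to the same hub, so the threshold computation is the same.
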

\begin{proof}
We show that every pair $\{x,y\}\in P(s_i^j,p_i^h)\times (P(\pi_{i'}^{h'},a_r)\cup P(\pi_{i'}^{h'},c_r))$ for $i,i'\in [n],j\in [m],h'\in \{1,2\}$ and $r\in \{1,2,3\}$ is resolved by $S'$.
We fix arbitrary integers $i,i'\in [n],j\in [m]$,$h,h'\in \{1,2\}$ and $r\in \{1,2,3\}$.
Suppose that $x\in P(s_i^j,p_i^h)$ and $y\in P(\pi_{i'}^{h'},a_r)$.
For a vertex $x\in P(s_i^j,p_i^h)$, let $P(x,p_i^h)$ be the subpath of $P(s_i^j,p_i^h)$ from $x$ to $p_i^h$ and $|P(x,p_i^h)|=\ell_x$.
For a vertex $y\in P(\pi_{i'}^{h'},a_r)$, let $P(y,\pi_{i'}^{h'})$ be the subpath of $P(\pi_{i'}^{h'},a_r)$ from $y$ to $\pi_{i'}^{h'}$ and $|P(y,\pi_{i'}^{h'})|=\ell_y$.
Then $\dist(f(s_i^j,a_r),y)=2+|P(\pi_{i'}^{h'},a_r)|-\ell_y$.
Suppose that $|P(s_i^j,a_r)|=20(n+1)+10p$ for some $p\in [n]$.
$\dist(f(s_i^j,a_r),x)=\text{min }(3+|P(\pi_i^h,a_r)|+\ell_x,|P(s_i^j,a_r)|+|P(s_i^j,p_i^h)|-\ell_x)=\text{min }(3+10(n+1)+\ell_x,40(n+1)+10p-\ell_x)\geq 3+10(n+1)>\dist(f(s_i^j,a_r),y)$.
Thus every pair $\{x,y\}$ is resolved by $f(s_i^j,a_r)$.
Similarly, we can show that every vertex pair of $P(s_i^j,p_i^h)\times P(\pi_{i'}^{h'},c_r)$ for $i,i'\in [n],j\in [m],h'\in \{1,2\}$ and $r\in \{1,2,3\}$ is resolved by $f(s_i^j,c_r)$.
This completes the proof for the lemma.
\end{proof}

\begin{lemma} \label{UxH}
Every pair $\{x,y\}\in \bigcup_{i\in [n]}U_i\times \bigcup_{i\in [n]}H_i$ is resolved by $S'$.
\end{lemma}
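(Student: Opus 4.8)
The plan is to follow the template of Lemmas~\ref{UxPi}, \ref{UxL} and \ref{UxS}. I would fix $x\in P(s_i^j,p_i^h)$ and let $y$ lie on one of the three paths $P(s_{i'}^{j'},a_r)$, $P(s_{i'}^{j'},b_r)$, $P(s_{i'}^{j'},c_r)$ comprising $H_{i',r}$, and set $\ell_x=|P(s_i^j,x)|\in[0,20(n+1)]$ and $\ell_y=|P(s_{i'}^{j'},y)|$. The argument is organised along three axes: $i=i'$ versus $i\neq i'$; $j=j'$ versus $j\neq j'$; and which of $a_r,b_r,c_r$ carries $y$. The lengths of the three paths are $20(n+1)+10p_r$, $20(n+1)+5p_r+1$ and $20(n+1)-10p_r$ for the $p_r$ with $(r,p_r)\in A_{j'}$, so the $c_r$--path is strictly shorter and the $a_r$--path strictly longer than $20(n+1)$; this asymmetry dictates which forced vertices to pick. (Pairs $\{x,x\}$ are ignored; $x=s_i^j$ is the case $\ell_x=0$ and is covered uniformly.)

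The resolvers fall into two families. On the ``$U$ side'' (near $x$): $f^h(i,j^{\ast},a_1)$ for some $j^{\ast}\neq j$, for which $\dist(f^h(i,j^{\ast},a_1),x)=3+20(n+1)-\ell_x$ (as computed inside the proof of Lemma~\ref{UxU}); $f^{mid}(i,j,h)$ and $f^{mid}(i,j,3-h)$, which are within $O(n)$ of $x$ with the affine distance formulas recorded in Lemma~\ref{UxU}; and the chosen solution vertex $s_i^{j^{\star}}\in S\cap X_i$ (available since $S\subseteq S'$). On the ``$H$ side'' (near $y$): $f(s_i^j,a_r)$ and $f(s_i^j,c_r)$, which sit at distance $2$ from $a_r$ and from $c_r$, so that $\dist(f(s_i^j,a_r),y)=2+\dist(a_r,y)$ and $\dist(f(s_i^j,c_r),y)=2+\dist(c_r,y)$ for $y$ on the corresponding path, while $\dist(f(s_i^j,a_r),x)\ge 3+10(n+1)$ and $\dist(f(s_i^j,c_r),x)\ge 3+10(n+1)$ (reaching $x$ from $a_r$ or $c_r$ forces traversal of a length-$10(n+1)$ $\pi$--path and then at least $1+(20(n+1)-\ell_x)$ further edges, or of a length-$\ge 10(n+2)$ $s$--path); and, when $y$ lies on the $b_r$--path, the pair $f^1(u_r^{i_r},v_r^{i_r}),f^2(u_r^{i_r},v_r^{i_r})$ with $i_r=p_r$, so that $|P(s_{i'}^{j'},b_r)|+|P(u_r^{i_r},b_r)|=M$ and $s_{i'}^{j'}$ resolves exactly $\{u_r^{i_r},v_r^{i_r}\}$ in $\mathcal P_r$ by Lemma~\ref{resolve}.

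The execution is then case by case. Whenever $y$ is within distance $10(n+1)$ of the endpoint ($a_r$ or $c_r$) of its path, $f(s_i^j,a_r)$ resp.\ $f(s_i^j,c_r)$ is $\le 2+10(n+1)$ from $y$ but $\ge 3+10(n+1)$ from $x$, so $\{x,y\}$ is resolved at once. In the complementary regime -- $y$ within distance $<10(n+1)$ of $s_{i'}^{j'}$, together with the whole $b_r$--path -- the two distances may land in a common window, and I would use the two-vertex trick already used in Lemmas~\ref{HxH}, \ref{UxU} and \ref{UxL}: if a first forced vertex $f_1$ fails to resolve $\{x,y\}$, its failure equation pins $\ell_y$ to an affine function of $\ell_x$, and a companion $f_2$ that traverses the same shortest paths but ``from the other end'' -- $f^h(i,j,a_1)$ in place of $f^h(i,j^{\ast},a_1)$, $f^{mid}(i,j',3-h)$ or $f^{mid}(i,j',h)$, the solution vertex $s_i^{j^{\star}}$, or $f^2(u_r^{i_r},v_r^{i_r})$ in place of $f^1(u_r^{i_r},v_r^{i_r})$ -- changes its distance to exactly one of $x,y$ and separates the pair. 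On the $b_r$--path this is the decisive point: $\dist(f^1(u_r^{i_r},v_r^{i_r}),x)$ and $\dist(f^2(u_r^{i_r},v_r^{i_r}),x)$ differ by exactly $1$ (they agree except on the first edge out of $a_r$/$c_r$), whereas both $f^1$ and $f^2$ reach a $b_r$--path vertex $y$ through $b_r$ at the same distance $40(n+1)+1-\ell_y$, so the relation imposed by $f^1$ cannot also hold for $f^2$.

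I expect the real difficulty to be concentrated in two places. First, the $b_r$--path: its length $20(n+1)+5p_r+1$ is intermediate, so no single ``close versus far'' comparison works and one must carry out the $f^1/f^2$ pairing together with a complete determination, for each of $x$ and $y$, of which of the four routes (through $b_r$, through $a_r$, through $c_r$, through $s_{i'}^{j'}$) is shortest. Second, the near-$s_{i'}^{j'}$ regime, where a vertex $y$ close to $s_{i'}^{j'}$ has its distances distorted by shortcuts between distinct $s$--paths (two such paths joined at a common $c_{r''}$ have total length $\le 40n+20$); this distortion is exactly what makes the $i=i'$ versus $i\neq i'$ and $j=j'$ versus $j\neq j'$ splits necessary, and it dictates that $s_i^{j^{\star}}$ is useful as a resolver only when $j^{\star}$ coincides with the relevant second index. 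Everything else reduces to computing a bounded number of shortest-path lengths and comparing, just as in the preceding lemmas.
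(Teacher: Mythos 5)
Your overall strategy matches the paper's: fix $x\in P(s_i^j,p_i^h)$, split according to which of the three paths to $a_r,b_r,c_r$ carries $y$ and whether the $s$-vertices coincide, and for each case exhibit a pair of forced vertices such that the failure equation of the first forces the second to succeed. Your treatment of the $b_r$-paths via the pair $f^1(u_r^{i_r},v_r^{i_r}),f^2(u_r^{i_r},v_r^{i_r})$ with $(r,i_r)\in A_{j'}$ is a genuinely different route from the paper, which instead uses $f^{mid}(i,j,h)$ and $f^{mid}(i,j,3-h)$; your version does appear to work, because with that choice of $i_r$ the three routes from $f^{1}$ to $s_{i'}^{j'}$ through $a_r$, $b_r$, $c_r$ all have length $40(n+1)$, so every $b_r$-path vertex other than $s_{i'}^{j'}$ is reached through $b_r$ (where $f^1$ and $f^2$ agree) while $x$ is reached through $a_r$ or $c_r$ (where they differ by exactly $1$). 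This is essentially the paper's Lemma~\ref{HxH} argument transplanted, and it is a legitimate alternative.

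The gap is in the $a_r$- and $c_r$-path cases, in the regime where $y$ is far from the hub. Your direct inequality ($\dist(f(s_i^j,a_r),y)\le 2+10(n+1)$ versus $\dist(f(s_i^j,a_r),x)\ge 3+10(n+1)$) only covers $y$ within distance $10(n+1)$ of $a_r$, i.e.\ less than half of a path of length $20(n+1)+10\lambda$. For the rest you invoke a ``two-vertex trick'' but your companion list ($f^h(i,j,a_1)$, the $f^{mid}$'s, $s_i^{j^\star}$, $f^2(u_r^{i_r},v_r^{i_r})$) omits the gadgets the construction provides for exactly this purpose: $f(\pi_i^h,a_r)$, $f(\pi_i^h,c_r)$ and $f(\pi_{i'}^{3-h},a_r)$, which sit two closer to $x$ along the route through $\pi_i^h$ but two farther from every $y$ on an $s$-to-$a_r$ path, and which the paper pairs with $f(s_i^j,a_r)$ and $f(s_{i'}^{j'},a_r)$ throughout. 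None of your listed companions is verified to cover this regime, and at least one family of pairs slips through: for $y\in P(s_{i'}^{j'},a_r)$ with $\dist(a_r,y)=10(n+1)+1$ and $x=s_i^j$ the distances to $f^h(i,j^*,a_1)$ coincide, and $s_i^{j^\star}$ and the $f^{mid}$'s give $\dist(\cdot,x)=\dist(\cdot,y)$ on a whole diagonal $\ell_x=\ell_y$ when $y$ lies near $s_i^j$ on $P(s_i^j,a_r)$. Relatedly, your formula $\dist(f(s_i^j,a_r),y)=2+\dist(a_r,y)$ is wrong when $y$ lies on $P(s_i^j,a_r)$ itself: the gadget is attached to the neighbour of $a_r$ on that very path, so the distance is $\dist(a_r,y)$ without the $+2$; this offset of $2$ between same-path and other-path vertices is precisely the quantity the paper's $f(s\cdot,a_r)$-versus-$f(\pi\cdot,a_r)$ comparison exploits, so the omission is not cosmetic. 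To repair the proof you should add the $f(\pi\cdot,a_r)$ and $f(\pi\cdot,c_r)$ gadgets to your resolver set and run the $\pm2$ comparison of the paper in the far-from-hub regime.
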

\begin{proof}
First we show that every pair $\{x,y\}\in P(s_i^j,p_i^h)\times (P(s_i^j,a_r)\cup P(s_i^j,c_r))$ for $i\in [n],j\in [m],h\in \{1,2\}$ and $r\in \{1,2,3\}$ is resolved by $S'$.
We fix arbitrary integers $i\in [n],j\in [m],h\in \{1,2\}$ and $r\in \{1,2,3\}$.
Suppose that $x\in P(s_i^j,p_i^h)$ and $y\in P(s_i^j,a_r)$.
For a vertex $x\in P(s_i^j,p_i^h)$, let $P(s_i^j,x)$ be the subpath of $P(s_i^j,p_i^h)$ from $s_i^j$ to $x$ and $|P(s_i^j,x)|=\ell_x$.
For a vertex $y\in P(s_i^j,a_r)$, let $P(s_i^j,y)$ be the subpath of $P(s_i^j,a_r)$ from $s_i^j$ to $y$ and $|P(s_i^j,y)|=\ell_y$.
Let $|P(s_i^j,a_r)|=20(n+1)+10\lambda$ for some $\lambda\in [n]$.
Then $\dist(f(s_i^j,a_r),x)=\text{min }(3+|P(\pi_i^h,a_r)|+|P(s_i^j,p_i^h)|-\ell_x,|P(s_i^j,a_r)|+\ell_x)=\text{min }(3+30(n+1)-\ell_x,20(n+1)+10\lambda+\ell_x)$.
$\dist(f(s_i^j,a_r),y)=|P(s_i^j,a_r)|-\ell_y=20(n+1)+10\lambda-\ell_y$.
$\dist(f(\pi_i^h,a_r),x)=\text{min }(1+|P(\pi_i^h,a_r)|+|P(s_i^j,p_i^h)|-\ell_x,2+|P(s_i^j,a_r)|+\ell_x)=\text{min }(1+30(n+1)-\ell_x,2+20(n+1)+10\lambda+\ell_x)$.
$\dist(f(\pi_i^h,a_r),y)=2+|P(s_i^j,a_r)|-\ell_y=2+20(n+1)+10\lambda-\ell_y$.
For the pair $\{x,y\}$ which is not resolved by $f(s_i^j,a_r)$,
it satisfies that $\dist(f(s_i^j,a_r),x)=\dist(f(s_i^j,a_r),y)=3+|P(\pi_i^h,a_r)|+|P(s_i^j,p_i^h)|-\ell_x$.
Thus $\dist(f(\pi_i^h,a_r),x)<\dist(f(s_i^j,a_r),x)=\dist(f(s_i^j,a_r),y)<\dist(f(\pi_i^h,a_r),y)$.
It follows that every pair $\{x,y\}$ is resolved by $f(s_i^j,a_r)$ or $f(\pi_i^h,a_r)$.
Similarly, we can show that every vertex pair of $P(s_i^j,p_i^h)\times P(s_i^j,c_r)$ is resolved by $f(s_i^j,c_r)$ or $f(\pi_i^h,c_r)$.

Next we show that every vertex pair of $P(s_i^j,p_i^h)\times P(s_i^j,b_r)$ for $i\in [n],j\in [m],h\in \{1,2\}$ and $r\in \{1,2,3\}$ is resolved by $S'$.
We fix arbitrary integers $i\in [n],j\in [m],h\in \{1,2\}$ and $r\in \{1,2,3\}$.
Suppose that $x\in P(s_i^j,p_i^h)\setminus \{s_i^j\}$ and $y\in P(s_i^j,b_r)\setminus \{s_i^j\}$.
We define $\ell_x$ and $\ell_y$ in a similar way to that of $\ell_x$ and $\ell_y$ in the first paragraph.
Then $\dist(f^{mid}(i,j,3-h),x)=|P^{3-h}(i,j,p_i^h)|/2+\ell_x=10(n+1)+\ell_x$ and
$\dist(f^{mid}(i,j,3-h),y)=2+|P^{3-h}(i,j,p_i^h)|/2+\ell_y=2+10(n+1)+\ell_y$.
For the vertex pair $\{x,y\}$ which is not resolved by $f^{mid}(i,j,3-h)$, i.e. $\ell_x=2+\ell_y$,
$\dist(f^{mid}(i,j,h),x)=2+|P^h(i,j,p_i^{3-h})|/2+\ell_x=2+10(n+1)+\ell_x>\dist(f^{mid}(i,j,h),y)=2+|P^h(i,j,p_i^{3-h})|/2+\ell_y=10(n+1)+2+\ell_y=10(n+1)+\ell_x$.
Thus every pair $\{x,y\}$ is resolved by $f^{mid}(i,j,3-h)$ or $f^{mid}(i,j,h)$.

Then we show that every pair $\{x,y\}\in P(s_i^j,p_i^h)\times (P(s_{i'}^{j'},a_r)\cup P(s_{i'}^{j'},c_r))$ for $i,i'\in [n],j,j'\in [m],h\in \{1,2\}$ and $r\in \{1,2,3\}$ such that $i\neq i'$ or $j\neq j'$ is resolved by $S'$.
We fix arbitrary integers $i,i'\in [n],j,j'\in [m],h\in \{1,2\}$ and $r\in \{1,2,3\}$ such that $i\neq i'$ or $j\neq j'$.
Suppose that $x\in P(s_i^j,p_i^h)$ and $y\in P(s_{i'}^{j'},a_r)$.
We define $\ell_x$ and $\ell_y$ in a similar way to that of $\ell_x$ and $\ell_y$ in the first paragraph.
Let $|P(s_i^j,a_r)|=20(n+1)+10\lambda$ and $|P(s_{i'}^{j'},a_r)|=20(n+1)+10\lambda'$ for some $\lambda,\lambda'\in [n]$.
Then $\dist(f(s_{i'}^{j'},a_r),x)=\dist(f(\pi_{i'}^{3-h},a_r),x)=\text{min }(3+|P(\pi_i^h,a_r)|+|P(s_i^j,p_i^h)|-\ell_x,2+|P(s_i^j,a_r)|+\ell_x)=\text{min }(3+30(n+1)-\ell_x,2+20(n+1)+10\lambda+\ell_x)$.
$\dist(f(s_{i'}^{j'},a_r),y)=\dist(f(\pi_{i'}^{3-h},a_r),y)-2=|P(s_{i'}^{j'},a_r)|-\ell_y$ if $y\neq a_r$ and
$\dist(f(s_{i'}^{j'},a_r),a_r)=\dist(f(\pi_{i'}^{3-h},a_r),a_r)=2$.
It follows that every pair $\{x,y\}$ is resolved by $f(s_{i'}^{j'},a_r)$ or $f(\pi_{i'}^{3-h},a_r)$.
Similarly, we can show that every vertex pair of $P(s_i^j,p_i^h)\times P(s_{i'}^{j'},c_r)$ is resolved by $f(s_{i'}^{j'},c_r)$ or $f(\pi_{i'}^{3-h},c_r)$.

Finally we show that every vertex pair of $P(s_i^j,p_i^h)\times P(s_{i'}^{j'},b_r)$ for $i,i'\in [n],j,j'\in [m],h\in \{1,2\}$ and $r\in \{1,2,3\}$ such that $i\neq i'$ or $j\neq j'$ is resolved by $S'$.
We fix arbitrary integers $i,i'\in [n],j,j'\in [m],h\in \{1,2\}$ and $r\in \{1,2,3\}$ such that $i\neq i'$ or $j\neq j'$.
Suppose that $x\in P(s_i^j,p_i^h)$ and $y\in P(s_{i'}^{j'},b_r)$.
We define $\ell_x$ and $\ell_y$ in a similar way to that of $\ell_x$ and $\ell_y$ in the first paragraph.
Let $|P(s_i^j,b_r)|=20(n+1)+5\lambda+1$ and $|P(s_{i'}^{j'},b_r)|=20(n+1)+5\lambda'+1$ and for some $\lambda,\lambda'\in [n]$.
There are two cases.
Case 1: $i=i'$ and $j\neq j'$.
$\dist(f^{mid}(i,j,3-h),x)=|P^{3-h}(i,j,p_i^h)|/2+\ell_x=10(n+1)+\ell_x$ if $x\neq s_i^j$ and $\dist(f^{mid}(i,j,3-h),s_i^j)=2+10(n+1)$.
$\dist(f^{mid}(i,j,3-h),y)=\text{min }(2+|P^{3-h}(i,j,p_i^h)|/2+|P(s_{i'}^{j'},p_i^{3-h})|+\ell_y,2+|P^{3-h}(i,j,p_i^h)|/2+|P(s_i^j,b_r)|+|P(s_{i'}^{j'},b_r)|-\ell_y)
=\text{min }(2+30(n+1)+\ell_y,4+50(n+1)+5\lambda+5\lambda'-\ell_y)\geq 2+30(n+1)>\dist(f^{mid}(i,j,3-h),x)$.
Thus in this case, every pair $\{x,y\}$ is resolved by $f^{mid}(i,j,3-h)$.
Case 2: $i\neq i'$.
$\dist(f^{mid}(i,j,3-h),s_{i'}^{j'})=1+|P^{3-h}(i,j,p_i^h)|/2+\text{min}_{r\in \{1,2,3\}}(|P(\pi_i^{3-h},c_r)|+|P(s_{i'}^{j'},c_r)|)$.
$\dist(f^{mid}(i,j,3-h),y)=\text{min }(\dist(f^{mid}(i,j,3-h),s_{i'}^{j'})+\ell_y,2+|P^{3-h}(i,j,p_i^h)|/2+|P(s_i^j,b_r)|+|P(s_{i'}^{j'},b_r)|-\ell_y)
=\text{min }(1+40(n+1)-10\lambda'+\ell_y,4+50(n+1)+5\lambda+5\lambda'-\ell_y)>30(n+1)+5>\dist(f^{mid}(i,j,3-h),x)$.
Thus in this case, every pair $\{x,y\}$ is resolved by $f^{mid}(i,j,3-h)$.
It follows that every pair $\{x,y\}$ is resolved by $f^{mid}(i,j,3-h)$.
This completes the proof for the lemma.
\end{proof}

\begin{lemma}  \label{UxR}
Every pair $\{x,y\}\in \bigcup_{i\in [n]}U_i\times \bigcup_{r\in \{1,2,3\}}R_r$ is resolved by $S'$.
\end{lemma}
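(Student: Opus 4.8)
The plan, continuing the pattern of Lemmas~\ref{RxR} and~\ref{HxH}, is to show that for arbitrary $i,j,h$ and $r$, every vertex $x\in P(s_i^j,p_i^h)$ and every vertex $y$ lying on one of the six paths of $R_r$ receive different distance values from some vertex of $S'$. The whole argument runs through one robust separator, the forced vertex $f^1(u_r^{i'},v_r^{i'})$, with a single fallback to $f(\pi_1^1,c_r)$; here $i'$ is the index for which $y$ lies on a path incident to $u_r^{i'}$ or $v_r^{i'}$ (take $i'=1$ if $y\in\{a_r,b_r,c_r\}$).

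First I would prove a lower bound on the distances from $U$ to $R_r$. Since $W_r=\{a_r,b_r,c_r\}$ separates $R_r$ from the rest of $G'$, every walk from $x$ to $R_r$ meets $W_r$. Writing $\ell_x=|P(s_i^j,x)|$, one checks that $\dist(x,a_r)=\min\{|P(s_i^j,a_r)|+\ell_x,\ 10(n+1)+1+\dist(x,p_i^h)\}\ge 10(n+1)+1$, likewise for $c_r$, while $\dist(x,b_r)\ge 20(n+1)+6$ (there is no short path from $\pi_i^h$ to $b_r$). Substituting these into the separator bound, every $x\in U$ is at distance at least $20n+30$ from each of $u_r^{i'}$, $v_r^{i'}$ and the two neighbours of the connecting vertex of $F^1(u_r^{i'},v_r^{i'})$ that lie on the $a_r$- and $c_r$-paths; since those four vertices are the only exits from that gadget, $\dist(f^1(u_r^{i'},v_r^{i'}),x)\ge 20n+32$.

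Next, $\dist(f^1(u_r^{i'},v_r^{i'}),y)\le 2+\dist(\{u_r^{i'},v_r^{i'}\},y)$. If $y$ lies on one of the two $a_r$-paths (of length $20(n+1)-10i'$) or one of the two $b_r$-paths (of length $20(n+1)-5i'-1$), or $y\in\{a_r,b_r\}$, this bound is already strictly below $20n+32$, so $f^1(u_r^{i'},v_r^{i'})$ separates $x$ from $y$. The only remaining case is $y$ on $P(c_r,u_r^{i'})$ or $P(c_r,v_r^{i'})$, a path of length $20(n+1)+10i'$, and here I would use the identity $\dist(c_r,y)+\dist(\{u_r^{i'},v_r^{i'}\},y)=20(n+1)+10i'$: if $\dist(\{u_r^{i'},v_r^{i'}\},y)\le 20n+29$ then $f^1(u_r^{i'},v_r^{i'})$ still separates, and otherwise $\dist(c_r,y)\le 10i'-10\le 10n-10$, so that $\dist(f(\pi_1^1,c_r),y)\le 2+\dist(c_r,y)\le 10n-8<10n+9\le\dist(c_r,x)-2\le\dist(f(\pi_1^1,c_r),x)$ and $f(\pi_1^1,c_r)$ separates (this also handles $y=c_r$). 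Since $10i'-10\le 10n-10$, the two thresholds overlap, so these long paths are covered entirely.

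The main obstacle is purely arithmetic. Turning the inequality sketch above into a proof requires pinning down the exact distances from $x$ to $a_r,b_r,c_r,p_i^h$ and to the attachment points of $f^1(u_r^{i'},v_r^{i'})$ and $f(\pi_1^1,c_r)$, for $x$ at every position along $P(s_i^j,p_i^h)$, and checking that every separating inequality is strict rather than merely weak --- the routine but voluminous bookkeeping that, together with the remaining cells of Table~\ref{tab}, occupies most of the paper. The one delicate point is the threshold matching in the previous paragraph: it succeeds precisely because the two $c_r$-paths have length $20(n+1)+10i'$ with $i'\le n$, so that no vertex in the interior of such a path is simultaneously out of reach of $f^1(u_r^{i'},v_r^{i'})$ and out of reach of $f(\pi_1^1,c_r)$; a different choice of slopes in Section~\ref{sec:MRS} would force an extra, left-side forced vertex into this case.
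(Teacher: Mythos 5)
Your proposal is correct in outline and takes a genuinely different route from the paper. The paper splits $R_r$ by hub and uses a different resolver family for each: $f^1(u_r^{i'},v_r^{i'})$ for the $a_r$-paths, the pair $f^1,f^2(u_r^{i'},v_r^{i'})$ with a shift-by-one tie-break for the $b_r$-paths, and the three left-side vertices $f(s_i^j,c_r)$, $f(\pi_i^h,c_r)$, $f(\pi_i^{3-h},c_r)$ for the $c_r$-paths, in each case computing exact distance formulas and casing on whether $\lambda\le i'$ (i.e.\ which of the two routes from the gadget to $x$ is shorter). You instead establish one uniform separator bound: since every walk from $U$ to the interior of $R_r$ crosses $W_r$, and $\dist(a_r,U),\dist(c_r,U)\ge 10(n+1)+1$ while $\dist(b_r,U)\ge 20(n+1)+6$, every $x\in U$ satisfies $\dist(f^1(u_r^{i'},v_r^{i'}),x)\ge 30(n+1)-10i'+2\ge 20n+32$ (tight at $x=p_i^h$, $i'=n$), whereas $\dist(f^1(u_r^{i'},v_r^{i'}),y)\le 2+\dist(\{u_r^{i'},v_r^{i'}\},y)$ is below that threshold for every $y$ on the $a_r$- and $b_r$-paths and for every $y$ on a $c_r$-path with $\dist(\{u_r^{i'},v_r^{i'}\},y)\le 20n+29$; the residual segment near $c_r$ then has $\dist(c_r,y)\le 10i'-10$, where $f(\pi_1^1,c_r)$ separates because $\dist(f(\pi_1^1,c_r),y)\le 10n-8<10n+9\le\dist(c_r,x)-2\le\dist(f(\pi_1^1,c_r),x)$. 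I checked the thresholds and they do overlap with slack at least one, including the boundary cases $y\in\{a_r,b_r,c_r\}$ and $x=p_i^h$. What your approach buys is the elimination of the exact-distance bookkeeping and of the $\lambda$-versus-$i'$ case split, and it shows as a by-product that the paper's recourse to $f^2(u_r^{i'},v_r^{i'})$ on the $b_r$-paths is not actually needed here (the two distances can never be equal, since they differ by at least $10(n+1)-5i'+1$); what the paper's approach buys is exact distance expressions consistent in style with the neighbouring lemmas, without having to justify the separator lower bounds as a separate preliminary step. The remaining work in your version is, as you say, routine verification that the stated minima for $\dist(a_r,x)$, $\dist(b_r,x)$, $\dist(c_r,x)$ are not undercut by detours through the $\Pi$, $L$ or gadget edges, which they are not.
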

\begin{proof}
First we show that every pair $\{x,y\}\in P(s_i^j,p_i^h)\times (P(u_r^{i'},a_r)\cup P(v_r^{i'},a_r))$ for $i,i'\in [n],j\in [m],h\in \{1,2\}$ and $r\in \{1,2,3\}$ is resolved by $S'$.
We fix arbitrary integers $i,i'\in [n],j\in [m],h\in \{1,2\}$ and $r\in \{1,2,3\}$.
Suppose that $x\in P(s_i^j,p_i^h)$, $y\in P(u_r^{i'},a_r)$.
For a vertex $x\in P(s_i^j,p_i^h)$, let $P(s_i^j,x)$ be the subpath of $P(s_i^j,p_i^h)$ from $s_i^j$ to $x$ and $|P(s_i^j,x)|=\ell_x$.
For a vertex $y\in P(a_r,u_r^{i'})$, let $P(u_r^{i'},y)$ be the subpath of $P(a_r,u_r^{i'})$ from $u_r^{i'}$ to $y$ and $|P(u_r^{i'},y)|=\ell_{y}$.
Let $|P(s_i^j,a_r)|=20(n+1)+10\lambda$ for some $\lambda\in [n]$ and $|P(u_r^{i'},a_r)|=20(n+1)-10i'$.
There are two cases.
Case 1: $\lambda\leq i'$.
$\dist(f^1(u_r^{i'},v_r^{i'}),x)=\text{min }(1+|P(a_r,u_r^{i'})|+|P(s_i^j,a_r)|+\ell_x,1+|P(a_r,u_r^{i'})|+|P(\pi_i^h,a_r)|+|P(s_i^j,\pi_i^h)|-\ell_x)=
\text{min }(40(n+1)-10(i'-\lambda)+1+\ell_x,50(n+1)+1-10i'-\ell_x)$.
$\dist(f^1(u_r^{i'},v_r^{i'}),y)=1+\ell_{y}$ if $y\neq u_r^{i'}$ and $\dist(f^1(u_r^{i'},v_r^{i'}),y)=2$ if $y=u_r^{i'}$.
Thus $\dist(f^1(u_r^{i'},v_r^{i'}),x)\geq 30(n+1)-10i'+1>\dist(f^1(u_r^{i'},v_r^{i'}),y)$.
Case 2: $\lambda>i'$.
$\dist(f^1(u_r^{i'},v_r^{i'}),x)=\text{min }(1+|P(c_r,u_r^{i'})|+|P(s_i^j,c_r)|+\ell_x,1+|P(a_r,u_r^{i'})|+|P(\pi_i^h,a_r)|+|P(s_i^j,\pi_i^h)|-\ell_x)=
\text{min }(40(n+1)-10(\lambda-i')+1+\ell_x,50(n+1)+1-10i'-\ell_x)$.
$\dist(f^1(u_r^{i'},v_r^{i'}),y)=1+\ell_{y}$ if $y\neq u_r^{i'}$ and $\dist(f^1(u_r^{i'},v_r^{i'}),y)=2$ if $y=u_r^{i'}$.
Thus $\dist(f^1(u_r^{i'},v_r^{i'}),x)\geq 30(n+1)-10i'+1>\dist(f^1(u_r^{i'},v_r^{i'}),y)$.
Thus every pair $\{x,y\}$ is resolved by $f^1(u_r^{i'},v_r^{i'})$.
Similarly, we can show that every vertex pair of $P(s_i^j,p_i^h)\times P(v_r^{i'},a_r)$ is resolved by $f^1(u_r^{i'},v_r^{i'})$.

Next we show that every pair $\{x,y\}\in P(s_i^j,p_i^h)\times (P(u_r^{i'},b_r)\cup P(v_r^{i'},b_r))$ for $i,i'\in [n],j\in [m],h\in \{1,2\}$ and $r\in \{1,2,3\}$ is resolved by $S'$.
We fix arbitrary integers $i,i'\in [n],j\in [m],h\in \{1,2\}$ and $r\in \{1,2,3\}$.
Suppose that $x\in P(s_i^j,p_i^h)$, $y\in P(u_r^{i'},b_r)$.
We define $\ell_x$ and $\ell_y$ in a similar way to that of $\ell_x$ and $\ell_y$ in the first paragraph.
Let $|P(s_i^j,a_r)|=20(n+1)+10\lambda$ for some $\lambda\in [n]$ and $|P(u_r^{i'},a_r)|=20(n+1)-10i'$.
There are two cases.
Case 1: $\lambda\leq i'$.
$\dist(f^1(u_r^{i'},v_r^{i'}),x)=\text{min }(1+|P(a_r,u_r^{i'})|+|P(s_i^j,a_r)|+\ell_x,1+|P(a_r,u_r^{i'})|+|P(\pi_i^h,a_r)|+|P(s_i^j,\pi_i^h)|-\ell_x)$.
$\dist(f^2(u_r^{i'},v_r^{i'}),x)=\dist(f^1(u_r^{i'},v_r^{i'}),x)-1=\text{min }(|P(a_r,u_r^{i'})|+|P(s_i^j,a_r)|+\ell_x,|P(a_r,u_r^{i'})|+|P(\pi_i^h,a_r)|+|P(s_i^j,\pi_i^h)|-\ell_x)$.
$\dist(f^1(u_r^{i'},v_r^{i'}),y)=\dist(f^2(u_r^{i'},v_r^{i'}),y)=2+\ell_{y}$.
In this case, for a pair $\{x,y\}$ which is not resolved by $f^1(u_r^{i'},v_r^{i'})$, $\dist(f^2(u_r^{i'},v_r^{i'}),x)=\dist(f^1(u_r^{i'},v_r^{i'}),x)-1<\dist(f^1(u_r^{i'},v_r^{i'}),y)=\dist(f^2(u_r^{i'},v_r^{i'}),y)$.
Case 2: $\lambda>i'$.
$\dist(f^1(u_r^{i'},v_r^{i'}),x)=\text{min }(1+|P(c_r,u_r^{i'})|+|P(s_i^j,c_r)|+\ell_x,1+|P(a_r,u_r^{i'})|+|P(\pi_i^h,a_r)|+|P(s_i^j,\pi_i^h)|-\ell_x)$.
$\dist(f^2(u_r^{i'},v_r^{i'}),x)=\dist(f^1(u_r^{i'},v_r^{i'}),x)-1=\text{min }(|P(c_r,u_r^{i'})|+|P(s_i^j,c_r)|+\ell_x,|P(a_r,u_r^{i'})|+|P(\pi_i^h,a_r)|+|P(s_i^j,\pi_i^h)|-\ell_x)$.
$\dist(f^1(u_r^{i'},v_r^{i'}),y)=\dist(f^2(u_r^{i'},v_r^{i'}),y)=2+\ell_{y}$.
Similar to Case 1, in this case, for a pair $\{x,y\}$ which is not resolved by $f^1(u_r^{i'},v_r^{i'})$, $\dist(f^2(u_r^{i'},v_r^{i'}),x)<\dist(f^2(u_r^{i'},v_r^{i'}),y)$.
Thus every pair $\{x,y\}$ is resolved by $f^1(u_r^{i'},v_r^{i'})$ or $f^2(u_r^{i'},v_r^{i'})$.
Similarly, we can show that every vertex pair of $P(s_i^j,p_i^h)\times P(v_r^{i'},b_r)$ is resolved by $f^1(u_r^{i'},v_r^{i'})$ or $f^2(u_r^{i'},v_r^{i'})$.

Finally we show that every pair $\{x,y\}\in P(s_i^j,p_i^h)\times (P(u_r^{i'},c_r)\cup P(v_r^{i'},c_r))$ for $i,i'\in [n],j\in [m],h\in \{1,2\}$ and $r\in \{1,2,3\}$ is resolved by $S'$.
We fix arbitrary integers $i,i'\in [n],j\in [m],h\in \{1,2\}$ and $r\in \{1,2,3\}$.
Suppose that $x\in P(s_i^j,p_i^h)$, $y\in P(u_r^{i'},c_r)$.
We define $\ell_x$ and $\ell_y$ in a similar way to that of $\ell_x$ and $\ell_y$ in the first paragraph.
Let $|P(s_i^j,c_r)|=20(n+1)-10\lambda$ for some $\lambda\in [n]$ and $|P(u_r^{i'},c_r)|=20(n+1)+10i'$.
Then $\dist(f(s_i^j,c_r),x)=\text{min }(3+|P(\pi_i^h,c_r)|+|P(s_i^j,p_i^h)|-\ell_x,|P(s_i^j,c_r)|+\ell_x)=\text{min }(3+30(n+1)-\ell_x,20(n+1)-10\lambda+\ell_x)$.
$\dist(f(\pi_i^h,c_r),x)=\text{min }(1+|P(\pi_i^h,c_r)|+|P(s_i^j,p_i^h)|-\ell_x,2+|P(s_i^j,c_r)|+\ell_x)=\text{min }(1+30(n+1)-\ell_x,2+20(n+1)-10\lambda+\ell_x)$.
$\dist(f(\pi_i^{3-h},c_r),x)=\text{min }(3+|P(\pi_i^h,c_r)|+|P(s_i^j,p_i^h)|-\ell_x,2+|P(s_i^j,c_r)|+\ell_x)=\text{min }(3+30(n+1)-\ell_x,2+20(n+1)-10\lambda+\ell_x)$.
$\dist(f(s_i^j,c_r),y)=\dist(f(\pi_i^h,c_r),y)=\dist(f(\pi_i^{3-h},c_r),y)=2+|P(c_r,u_r^{i'})|-\ell_y=2+20(n+1)+10i'-\ell_y$.
For a pair $\{x,y\}$ which is not resolved by $f(s_i^j,c_r)$, either $f(\pi_i^h,c_r)$ or $f(\pi_i^{3-h},c_r)$ resolves it.
Thus every pair $\{x,y\}$ is resolved by $f(s_i^j,c_r)$, $f(\pi_i^h,c_r)$ or $f(\pi_i^{3-h},c_r)$.
Similarly, we can show that every vertex pair of $P(s_i^j,p_i^h)\times P(v_r^{i'},c_r)$ is resolved by $f(s_i^j,c_r)$, $f(\pi_i^h,c_r)$ or $f(\pi_i^{3-h},c_r)$.
This completes the proof for the lemma.
\end{proof}

\begin{lemma}  \label{PixH}
Every pair $\{x,y\}\in \bigcup_{i\in [n]}\Pi_i\times \bigcup_{i\in [n]}H_i$ is resolved by $S'$.
\end{lemma}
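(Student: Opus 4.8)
The plan is to follow the same pattern as Lemmas~\ref{UxPi}--\ref{UxH}: split the statement according to whether the two vertices lie over the same index $i$, and within that according to the relations among the auxiliary indices $j,j'$ and $r,r'$, and in each resulting case exhibit one or two forced vertices of $F\subseteq S'$ (or, when convenient, the chosen vertex $s_i^{j^\ast}\in S\cap X_i$) that resolve the pair. Concretely, write $x\in \Pi^h(i,j,r)$ and $y\in H_{i',r'}$; by symmetry it suffices to treat $y\in P(s_{i'}^{j'},a_{r'})$, $y\in P(s_{i'}^{j'},b_{r'})$ and $y\in P(s_{i'}^{j'},c_{r'})$ in turn. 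I parametrize $x$ by $\ell_x:=\dist(\pi_i^h,x)\in\{0,\dots,20(n+1)\}$ and $y$ by $\ell_y:=\dist(s_{i'}^{j'},y)$.

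The required tools are already available from the preceding lemmas. The forced vertices $f^{h}(i,j,a_r)$ and $f^{h}(i,j,c_r)$ sit next to $\pi_i^h$ on two different branches of $\Pi^h(i,j,r)$, so (exactly as in Lemma~\ref{PixPi}) together they recover both $\ell_x$ and the branch of $x$: $\dist(f^{h}(i,j,a_r),x)=\ell_x$ for $x\neq\pi_i^h$ while $\dist(f^{h}(i,j,c_r),x)=2+\ell_x$. Symmetrically, $f(s_{i'}^{j'},a_{r'})$ and $f(\pi_{i'}^{3-h},a_{r'})$ sit next to $a_{r'}$, so (as in the middle of Lemma~\ref{UxH}) they differ by exactly $2$ on $y$ but agree on $x$, pinning down $\ell_y$ and the branch of $y$. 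The vertices $f^{mid}(i,j,h)$ (at the middle of $P^{h}(i,j,p_i^{3-h})$, at distance $10(n+1)$ from both $\pi_i^h$ and $s_i^j$) and $f^{ecc}(i,j,h,r)$ (at distance $10(n+1)+1$ from $\pi_i^h$ on the $a_r$-branch) serve as symmetric probes: their distances to $x$ and to $y$ depend on $\ell_x$ and $\ell_y$ in incompatible ways. The recurring scheme is then: choose a first forced vertex $f$ so that $\dist(f,x)$ and $\dist(f,y)$ lie in different regimes (one bounded by roughly $20(n+1)$, the other at least roughly $30(n+1)$, or one increasing and the other decreasing in the relevant parameter), conclude that $f$ resolves the pair except on a short arithmetic ``exceptional line'' in $(\ell_x,\ell_y)$, and kill that line with a second forced vertex attached to a different auxiliary index $j^\ast$.

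For the case analysis: if $i\neq i'$, any shortest path from $x$ to $y$ leaves the $i$-block through $\pi_i^h$ (then some $c_d$) or through $s_i^j$ (then some $c_d$), and the return into the $i'$-block costs at least $10(n+1)$ more than $\ell_x$ plus a constant; hence $f^{mid}(i,j,h)$ (or $f^{h}(i,j,a_r)$ when $x$ is extreme, $\ell_x=20(n+1)$) sees $x$ at distance at most about $20(n+1)$ but $y$ at distance at least about $30(n+1)$, and resolves the pair — this is precisely the ``$i\neq i'$'' argument of Lemmas~\ref{UxPi} and \ref{UxH}. If $i=i'$, I distinguish by the branch of $y$. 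For $y$ on an $a$- or $c$-branch, I combine $f^{h}(i,j,a_r)$/$f^{h}(i,j,c_r)$ with $f(s_i^{j'},a_{r'})$ and $f(\pi_i^{3-h},a_{r'})$ (resp.\ the $c_{r'}$-analogues), splitting once more according to $r=r'$ vs.\ $r\neq r'$ and $j=j'$ vs.\ $j\neq j'$, since these shift the relevant distances by the small constants built into the construction. For $y$ on the $b$-branch, the first probe is $f^{mid}(i,j,h)$ (symmetric in $u$ and $v$ but asymmetric between the two halves of $\Pi$), and its exceptional line is broken by $f^{mid}(i,j',h)$ or $f^{ecc}(i,j,h,r)$, exactly as in the $\{y,y'\}$ subcase of Lemma~\ref{HxH} and in Lemma~\ref{LxL}.

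As with the rest of Section~\ref{ss:difficult-direction}, I expect no new conceptual difficulty; the work is bookkeeping. The genuinely delicate case is $i=i'$ with $y$ on a $b$-branch and $x$ near one of the attachment points $\pi_i^h$ or $s_i^j$: there both $\dist(f^{mid},x)$ and $\dist(f^{mid},y)$ can equal $10(n+1)+(\text{small})+\ell$, so a single $f^{mid}$ leaves a full line $\ell_x\pm\ell_y=\text{const}$ unresolved, and one must verify that the second probe's exceptional line has a different slope or offset — which uses in an essential way the exact path lengths $20(n+1)$ and $10(n+1)$, the $+1/+2$ offsets on the $b$-branches and at $q_i^h$, and the length $30(n+1)$ of the paths $P(q_i^h,\text{mid}(\cdot))$. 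Checking the finitely many such (in)equalities, together with confirming that no pair escapes the partition — in particular the degenerate pairs $\{p_i^h,q_i^h\}$, already handled by Lemma~\ref{forcedSet} — completes the proof.
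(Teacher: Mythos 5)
Your proposal matches the paper's proof in its essential structure: the same decomposition by the branch of $y$ and by whether $s_i^j=s_{i'}^{j'}$ and $r=r'$, the same parametrization by $\ell_x=\dist(\pi_i^h,x)$ and $\ell_y$, and the same key device for the $a$- and $c$-branches of pairing $f(s_{i'}^{j'},a_{r'})$ with a forced vertex of the form $f(\pi^{3-h},a_{r'})$ sitting on another branch at $a_{r'}$, so that the two probes agree on $x$ but differ by exactly $2$ on $y$. The only divergence is the $b$-branch, where the paper needs just the single probe $f^h(i,j,a_r)$ --- every route from the neighbourhood of $\pi_i^h$ to a vertex of $P(s_{i'}^{j'},b_{r'})$ has length strictly exceeding $20(n+1)\geq\dist(f^h(i,j,a_r),x)$ in all three subcases --- so no exceptional line arises there and your second probe, while presumably workable, is unnecessary.
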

\begin{proof}
We show that every pair $\{x,y\}\in (P^{h}(i,j,a_r)\cup P^{h}(i,j,b_r)\cup P^{h}(i,j,c_r)\cup P^{h}(i,j,p_i^{3-h}))\times (P(s_{i'}^{j'},a_{r'})\cup P(s_{i'}^{j'},b_{r'})\cup P(s_{i'}^{j'},c_{r'}))$ for $i,i'\in [n],j,j'\in [m],h\in \{1,2\}$ and $r,r'\in \{1,2,3\}$ is resolved by $S'$.
We fix arbitrary integers $i,i'\in [n],j,j'\in [m],h\in \{1,2\}$ and $r,r'\in \{1,2,3\}$.
Suppose that $x_1\in P^{h}(i,j,a_r)$, $x_2\in P^{h}(i,j,b_r)$, $x_3\in P^{h}(i,j,c_r)$ and $x_4\in P^{h}(i,j,p_i^{3-h})$.
Suppose that $y_1\in P(s_{i'}^{j'},a_{r'})$, $y_2\in P(s_{i'}^{j'},b_{r'})$ and $y_3\in P(s_{i'}^{j'},c_{r'})$.
For a vertex $x_{\mu}$ for $\mu\in \{1,2,3,4\}$, let $\ell_{x_{\mu}}=\dist(\pi_i^h,x_{\mu})$.
For a vertex $y_{\nu}$ for $\nu\in \{1,2,3\}$, let $\ell_{y_{\nu}}=\dist(s_i^j,y_{\nu})$.
Let $|P(s_i^j,a_{r'})|=20(n+1)+10\lambda$ and $|P(s_{i'}^{j'},a_{r'})|=20(n+1)+10\lambda'$ for some $\lambda,\lambda'\in [n]$.
There are three cases.
Case 1: $s_i^j=s_{i'}^{j'}$ and $r'=r$.
Then $\dist(f(s_{i'}^{j'},a_{r'}),x_1)=\text{min }(2+|P(\pi_i^h,a_r)|+\ell_{x_1},|P(s_i^j,a_r)|+|P^{h}(i,j,a_r)|-1-\ell_{x_1})=\text{min }(2+10(n+1)+\ell_{x_1},40(n+1)+10\lambda-1-\ell_{x_1})$.
$\dist(f(s_{i'}^{j'},a_{r'}),y_1)=|P(s_i^j,a_r)|-\ell_{y_1}=20(n+1)+10\lambda-\ell_{y_1}$ if $y_1\neq a_r$ and $\dist(f(s_{i'}^{j'},a_{r'}),a_r)=2$.
$\dist(f(\pi_i^{3-h},a_{r'}),x_1)=\text{min }(2+|P(\pi_i^h,a_r)|+\ell_{x_1},1+|P(s_i^j,a_r)|+|P^{h}(i,j,a_r)|-\ell_{x_1})=\text{min }(2+10(n+1)+\ell_{x_1},40(n+1)+10\lambda+1-\ell_{x_1})$.
$\dist(f(\pi_i^{3-h},a_{r'}),y_1)=2+|P(s_i^j,a_r)|-\ell_{y_1}=20(n+1)+10\lambda+2-\ell_{y_1}$.
Let $\gamma\in P^h(i,j,a_r)$ be the vertex such that $\dist(\gamma,\pi_i^h)=20(n+1)-1$.
Obviously the pair $\{s_i^j,\gamma\}$ is resolved by $f^h(i,j,a_r)$.
For the pair $\{x_1,y_1\}$ which is not resolved by $f(s_{i'}^{j'},a_{r'})$ and $y_1\neq s_i^j$,
it satisfies that $\dist(f(s_{i'}^{j'},a_{r'}),x_1)=\dist(f(s_{i'}^{j'},a_{r'}),y_1)=\dist(f(\pi_i^{3-h},a_{r'}),x_1)<\dist(f(\pi_i^{3-h},a_{r'}),y_1)$.
Thus in this case, every pair $\{x_1,y_1\}$ is resolved by $f(s_{i'}^{j'},a_{r'})$, $f(\pi_i^{3-h},a_{r'})$ or $f^h(i,j,a_r)$.
Case 2: $s_i^j\neq s_{i'}^{j'}$ and $r'=r$.
Then $\dist(f(s_{i'}^{j'},a_{r'}),x_1)=\text{min }(2+|P(\pi_i^h,a_{r'})|+\ell_{x_1},1+|P(s_i^j,a_{r'})|+|P^{h}(i,j,a_{r'})|-\ell_{x_1})=\text{min }(2+10(n+1)+\ell_{x_1},40(n+1)+10\lambda+1-\ell_{x_1})$.
$\dist(f(s_{i'}^{j'},a_{r'}),y_1)=|P(s_{i'}^{j'},a_{r'})|-\ell_{y_1}=20(n+1)+10\lambda'-\ell_{y_1}$ if $y_1\neq a_{r'}$ and $\dist(f(s_{i'}^{j'},a_{r'}),a_{r'})=2$.
$\dist(f(\pi_i^{3-h},a_{r'}),x_1)=\text{min }(2+|P(\pi_i^h,a_{r'})|+\ell_{x_1},|P(s_i^j,a_{r'})|+|P^{h}(i,j,a_{r'})|+1-\ell_{x_1})=\text{min }(2+10(n+1)+\ell_{x_1},40(n+1)+10\lambda+1-\ell_{x_1})$.
$\dist(f(\pi_i^{3-h},a_{r'}),y_1)=2+|P(s_{i'}^{j'},a_{r'})|-\ell_{y_1}=20(n+1)+10\lambda'+2-\ell_{y_1}$.
For the pair $\{x_1,y_1\}$ which is not resolved by $f(s_{i'}^{j'},a_{r'})$,
it satisfies that $\dist(f(s_{i'}^{j'},a_{r'}),x_1)=\dist(f(s_{i'}^{j'},a_{r'}),y_1)=\dist(f(\pi_i^{3-h},a_{r'}),x_1)<\dist(f(\pi_i^{3-h},a_{r'}),y_1)$.
Thus in this case, every pair $\{x_1,y_1\}$ is resolved by $f(s_{i'}^{j'},a_{r'})$ or $f(\pi_i^{3-h},a_{r'})$.
Case 3: $s_i^j\neq s_{i'}^{j'}$ and $r'\neq r$.
Then $\dist(f(s_{i'}^{j'},a_{r'}),x_1)=\text{min }(2+|P(\pi_i^h,a_{r'})|+\ell_{x_1},2+|P(s_i^j,a_{r'})|+1+|P^{h}(i,j,a_{r'})|-\ell_{x_1})=\text{min }(2+10(n+1)+\ell_{x_1},40(n+1)+10\lambda+3-\ell_{x_1})$.
$\dist(f(s_{i'}^{j'},a_{r'}),y_1)=|P(s_{i'}^{j'},a_{r'})|-\ell_{y_1}=20(n+1)+10\lambda'-\ell_{y_1}$ if $y_1\neq a_{r'}$ and $\dist(f(s_{i'}^{j'},a_{r'}),a_{r'})=2$.
$\dist(f(\pi_i^{3-h},a_{r'}),x_1)=\text{min }(2+|P(\pi_i^h,a_{r'})|+\ell_{x_1},2+|P(s_i^j,a_{r'})|+1+|P^{h}(i,j,a_{r'})|-\ell_{x_1})=\text{min }(2+10(n+1)+\ell_{x_1},40(n+1)+10\lambda+3-\ell_{x_1})$.
$\dist(f(\pi_i^{3-h},a_{r'}),y_1)=2+|P(s_{i'}^{j'},a_{r'})|-\ell_{y_1}=20(n+1)+10\lambda'+2-\ell_{y_1}$.
For the pair $\{x_1,y_1\}$ which is not resolved by $f(s_{i'}^{j'},a_{r'})$,
it satisfies that $\dist(f(s_{i'}^{j'},a_{r'}),x_1)=\dist(f(s_{i'}^{j'},a_{r'}),y_1)=\dist(f(\pi_i^{3-h},a_{r'}),x_1)<\dist(f(\pi_i^{3-h},a_{r'}),y_1)$.
Thus in this case, every pair $\{x_1,y_1\}$ is resolved by $f(s_{i'}^{j'},a_{r'})$ or $f(\pi_i^{3-h},a_{r'})$.
It follows that every pair $\{x_1,y_1\}$ is resolved by $f^h(i,j,a_r)$, $f(s_{i'}^{j'},a_{r'})$ or $f(\pi_i^{3-h},a_{r'})$.
In a similar way, we can show that every vertex pair $\{x_2,y_1\}$, $\{x_3,y_1\}$ and $\{x_4,y_1\}$ are resolved by $S'$.
Also in a similar way, we can show that every vertex pair $\{x_1,y_3\}$, $\{x_2,y_3\}$, $\{x_3,y_3\}$ and $\{x_4,y_3\}$ are resolved by $f(s_{i'}^{j'},c_{r'})$, $f(\pi_i^{3-h},c_{r'})$ or $f^h(i,j,c_r)$.
For a pair $\{x_1,y_2\}$, $\dist(f^h(i,j,a_r),x_1)=\ell_{x_1}$ if $x_1\neq \pi_i^h$ and $\dist(f^h(i,j,a_r),\pi_i^h)=2$.
$\dist(f^h(i,j,a_r),b_{r'})=2+|P(\pi_i^h,a_{r'})|+|P(a_{r'},v_{r'}^n)|+|P(b_{r'},v_{r'}^n)|+|P(s_{i'}^{j'},b_{r'})|-\ell_{y_2}$.
There are three cases.
Case 1: $i=i'$ and $j=j'$.
$\dist(f^h(i,j,a_r),y_2)=\text{min }(|P^h(i,j,a_r)|+1+\ell_{y_2},\dist(f^h(i,j,a_r),b_{r'})+|P(s_{i'}^{j'},b_{r'})|-\ell_{y_2})=
\text{min }(20(n+1)+1+\ell_{y_2},55n+5\lambda+71-\ell_{y_2})>\dist(f^h(i,j,a_r),x_1)$.
Thus in this case, every pair $\{x_1,y_2\}$ is resolved by $f^h(i,j,a_r)$.
Case 2: $i'=i$ and $j\neq j'$.
Then $\dist(f^h(i,j,a_r),y_2)=\text{min }(2+|P^h(i,j,b_{r'})|+\ell_{y_2}-1,\dist(f^h(i,j,a_r),b_{r'})+|P(s_{i'}^{j'},b_{r'})|-\ell_{y_2})=
\text{min }(20(n+1)+1+\ell_{y_2},55n+5\lambda+71-\ell_{y_2})$ if $y_2\neq s_{i'}^{j'}$ and $\dist(f^h(i,j,a_r),s_{i'}^{j'})=3+20(n+1)$.
Thus $\dist(f^h(i,j,a_r),y_2)\geq 20(n+1)+2>\dist(f^h(i,j,a_r),x_1)$.
In this case, every pair $\{x_1,y_2\}$ is resolved by $f^h(i,j,a_r)$.
Case 3: $i'\neq i$.
$\dist(f^h(i,j,a_r),s_{i'}^{j'})=\text{min}_{d\in \{1,2,3\}}(2+|P(\pi_i^h,c_d)|+|P(s_{i'}^{j'},c_d)|)$.
$\dist(f^h(i,j,a_r),y_2)=\text{min }(\dist(f^h(i,j,a_r),s_{i'}^{j'})+\ell_{y_2},\dist(f^h(i,j,a_r),b_{r'})+|P(s_{i'}^{j'},b_{r'})|-\ell_{y_2})> 20(n+1)\geq \dist(f^h(i,j,a_r),x_1)$.
Thus in this case, every pair $\{x_1,y_2\}$ is resolved by $f^h(i,j,a_r)$.
In a similar way, we can show that every vertex pair $\{x_2,y_2\}$, $\{x_3,y_2\}$ and $\{x_4,y_2\}$ are resolved by $S'$.
This completes the proof for the lemma.
\end{proof}

\begin{lemma}  \label{PixL}
Every pair $\{x,y\}\in \bigcup_{i\in [n]}\Pi_i\times \bigcup_{i\in [n]}L_i$ is resolved by $S'$.
\end{lemma}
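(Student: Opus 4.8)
\section*{Proof proposal (plan)}

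The plan is to follow the template of Lemmas~\ref{UxL} and~\ref{PixH}: fix $x$ on a subpath of $\Pi^{h}(i,j,r)$ and $y$ on a subpath $P(q_{i'}^{h'},\text{mid}(P^{3-h'}(i',j',p_{i'}^{h'})))$ of $L_{i'}^{h'}$, write $\ell_x=\dist(\pi_i^h,x)\in[0,20(n+1)]$ and $\ell_y=\dist(q_{i'}^{h'},y)\in[0,30(n+1)-1]$, and split into the cases $i\ne i'$; $i=i'$ with $h=h'$; and $i=i'$ with $h\ne h'$. Two families of resolvers will carry the argument. First, a forced vertex $f^{h}(i,j,\star)$ with $\star\in\{a_r,c_r\}$ chosen \emph{not} to be the subpath containing $x$; such a vertex sits one edge off a neighbour of $\pi_i^h$, so $\dist(f^{h}(i,j,\star),x)=2+\ell_x$. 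Second, the forced vertex $f^{mid}(i',j',3-h')$, attached exactly to $\text{mid}(P^{3-h'}(i',j',p_{i'}^{h'}))$, i.e.\ to the far endpoint of the subpath of $L_{i'}^{h'}$ carrying $y$, so that $\dist(f^{mid}(i',j',3-h'),y)=30(n+1)-\ell_y$. The recurring structural facts are that every $\Pi$-path meets the rest of $G'$ only at $\pi_i^h$ and at a neighbour of $s_i^j$, every $L$-subpath only at $q_{i'}^{h'}$ and at its $\text{mid}$ endpoint, and $\dist(\pi_i^{h_1},\pi_{i''}^{h_2})=20(n+1)$ for $(i,h_1)\ne(i'',h_2)$; these let me read off shortest-path values after dismissing the single alternative route through the $s$-side.

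For $i\ne i'$ the argument is clean: a type-(1) resolver gives $\dist(f^{h}(i,j,\star),x)=2+\ell_x\le 2+20(n+1)$, while any walk from it to $y$ must pass through $\pi_i^h$ and then reach the component of $y$ either through $q_{i'}^{h'}$ (cost $\ge 3+20(n+1)+\ell_y$) or through $\pi_{i'}^{3-h'}$ and the $\text{mid}$ endpoint (cost $\ge 2+30(n+1)-\ell_y$); a one-line computation shows this is always at least $3+20(n+1)$, so the pair is resolved.

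For $i=i'$ and $h=h'$ the type-(1) resolver gives $\dist(f^{h}(i,j,\star),x)=2+\ell_x$ and $\dist(f^{h}(i,j,\star),y)=3+\ell_y$ (the shortest route to $y$ leaves through $\pi_i^h$, then $q_i^h$, then runs along $L_i^h$; the detour via $\pi_i^{3-h}$ is longer for every admissible $\ell_y$), so only pairs with $\ell_x=\ell_y+1$ survive, and for those the type-(2) resolver $f^{mid}(i,j',3-h)$ has $\dist(f^{mid}(i,j',3-h),x)=1+30(n+1)+\ell_x>30(n+1)\ge 30(n+1)-\ell_y=\dist(f^{mid}(i,j',3-h),y)$. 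For $i=i'$ and $h\ne h'$ (so $h'=3-h$) the type-(1) resolver gives $\dist(f^{h}(i,j,\star),x)=2+\ell_x$ and $\dist(f^{h}(i,j,\star),y)=2+\min\!\big(20(n+1)+1+\ell_y,\,40(n+1)-1-\ell_y\big)$, since $y$ now also sees $\pi_i^h$ through its $\text{mid}$ endpoint at distance $10(n+1)$; this resolves the pair except when $\ell_x+\ell_y=40(n+1)-1$, and on that thin set I pass to $f^{mid}(i,j',h)$, attached to $\text{mid}(P^{h}(i,j',p_i^{3-h}))$ — the far endpoint of the subpath of $L_i^{3-h}$ carrying $y$ — comparing $\dist(f^{mid}(i,j',h),y)=30(n+1)-\ell_y$ with $\dist(f^{mid}(i,j',h),x)=1+10(n+1)+\ell_x$.

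The main obstacle is precisely the last case when in addition $j=j'$ and $x$ lies on $P^{h}(i,j,p_i^{3-h})$, since then $x$ and $y$ hang off the \emph{same} vertex $\text{mid}(P^{h}(i,j,p_i^{3-h}))$ — one on the $\Pi$-side, one on the $L$-side — and $f^{mid}(i,j,h)$ can be equidistant to both; here the final tie must be broken either with the chosen solution vertex $s_i^{j^{*}}\in S'\cap X_i$, whose distances to $x$ and to $y$ are easily seen to differ (by twice the distance from $\text{mid}(P^{h}(i,j,p_i^{3-h}))$), or with the further forced vertex $f^{mid}(i,j,3-h)$. Carrying the bookkeeping over the four choices of subpath for $x$, over whether $j=j'$, and over these collision sub-cases — while re-verifying each shortest-path value against the route through $s_i^j$ — is the bulk of the work, but no idea beyond the two resolver families above is needed.
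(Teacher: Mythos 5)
Your overall architecture is the same as the paper's: primary resolvers $f^{h}(i,j,\cdot)$ sitting one edge off $\pi_i^h$, secondary resolvers of the $f^{mid}$ type, and a case split that isolates the pairs symmetric about $\text{mid}(P^{h}(i,j,p_i^{3-h}))$ as the hard core. Two things go wrong in the details. The smaller one: several of your asserted shortest-path values for the $f^{mid}$ resolvers ignore the route that re-enters the $\Pi$-paths through $s_i^{j}$. For instance, in the $i=i'$, $h=h'$, $j=j'$ case, $\text{mid}(P^{3-h}(i,j,p_i^{h}))$ reaches $s_i^j$ in $10(n+1)+1$ steps, so $\dist(\text{mid}(P^{3-h}(i,j,p_i^{h})),x)=\min(30(n+1)+\ell_x,\,30(n+1)+2-\ell_x)$ and hence $\dist(f^{mid}(i,j,3-h),x)=30(n+1)+3-\ell_x$ for $\ell_x\ge 1$, not $1+30(n+1)+\ell_x$. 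The comparison with $\dist(f^{mid}(i,j,3-h),y)=30(n+1)+1-\ell_x$ still yields a difference of $2$, so the conclusion survives, but not for the reason you state; each such formula has to be re-derived against the $s_i^{j}$ shortcut.

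The genuine gap is in the hard sub-case ($j=j'$, $x\in P^{h}(i,j,p_i^{3-h})$, and $x,y$ both at distance $d$ from $\text{mid}(P^{h}(i,j,p_i^{3-h}))$, one on each side). Your first tie-breaker $s_i^{j^*}$ does not work unless $j^*=j$, which you cannot assume. For $j^*\neq j$ the shortest path from $s_i^{j^*}$ to $y$ goes $s_i^{j^*}\to\pi_i^h\to\text{mid}(P^{h}(i,j,p_i^{3-h}))\to y$ and has length $30(n+1)+1+d=20(n+1)+1+\ell_x$, while the shortest path to $x$ goes $s_i^{j^*}\to\pi_i^h\to x$ and has the \emph{same} length $20(n+1)+1+\ell_x$ whenever $2d\le \dist(s_i^{j^*},s_i^j)-20(n+1)$ — which happens for all $d\le 10$, say — so the claim that the two distances "differ by $2d$" is false; both routes approach through $\pi_i^h$, not through $\text{mid}$. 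Your fallback $f^{mid}(i,j,3-h)$ does work: both of its shortest paths to $x$ and to $y$ pass through $s_i^j$, giving distances $20(n+1)+3-d$ and $20(n+1)+3+d$ respectively (for $d\le 10(n+1)-1$; the boundary value $d=10(n+1)$ also checks out), so the lemma can be completed along your lines once this is actually verified rather than offered as an alternative. For comparison, the paper breaks exactly this tie with $f^{ecc}(i,j,3-h,r)$ — the $F^{ecc}$ gadgets exist in the construction precisely for this family of pairs, which is a signal worth noticing when choosing resolvers.
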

\begin{proof}
First we show that every pair $\{x,y\}\in (P^{h}(i,j,a_r)\cup P^{h}(i,j,b_r)\cup P^{h}(i,j,c_r)\cup P^{h}(i,j,p_i^{3-h}))\times P(q_i^h,\text{mid}(P^{3-h}(i,j',p_i^{h})))$ for $i\in [n],j,j'\in [m],h\in \{1,2\}$ and $r\in \{1,2,3\}$ is resolved by $S'$.
We fix arbitrary integers $i\in [n],j,j'\in [m],h\in \{1,2\}$ and $r\in \{1,2,3\}$.
Suppose that $x_1\in P^{h}(i,j,a_r)$, $x_2\in P^{h}(i,j,b_r)$, $x_3\in P^{h}(i,j,c_r)$ and $x_4\in P^{h}(i,j,p_i^{3-h})$.
Suppose that $y\in P(q_i^h,\text{mid}(P^{3-h}(i,j',p_i^{h})))$.
For a vertex $x_{\mu}$ for $\mu\in \{1,2,3,4\}$, let $\ell_{x_{\mu}}=\dist(\pi_i^h,x_{\mu})$.
For a vertex $y$, let $\ell_y=\dist(q_i^h,y)$.
Then $\dist(f^h(i,j,a_r),x_1)=\ell_{x_1}$ if $x_1\neq \pi_i^h$ and $\dist(f^h(i,j,a_r),\pi_i^h)=2$.
$\dist(f^h(i,j,a_r),y)=\dist(f^h(i,j,b_r),y)=\dist(f^h(i,j,c_r),y)=\dist(f^h(i,j,p_i^{3-h}),y)=3+\ell_y$.
For the pair $\{x_1,y\}$ that is not resolved by $f^h(i,j,a_r)$, $\dist(f^h(i,j,b_r),y)=\dist(f^h(i,j,a_r),y)=\dist(f^h(i,j,a_r),x_1)=\dist(f^h(i,j,b_r),x_1)-2<\dist(f^h(i,j,b_r),x_1)$.
Thus every pair $\{x_1,y\}$ is resolved by $f^h(i,j,a_r)$ or $f^h(i,j,b_r)$.
In a similar way, we can show that every vertex pair $\{x_2,y\}$, $\{x_3,y\}$ and $\{x_4,y\}$ are resolved by $S'$.

Next we show that every pair $\{x,y\}\in P(q_i^{3-h},\text{mid}(P^{h}(i,j',p_i^{3-h})))\times (P^{h}(i,j,a_r)\cup P^{h}(i,j,b_r)\cup P^{h}(i,j,c_r)\cup P^{h}(i,j,p_i^{3-h}))$ for $i\in [n],j,j'\in [m],h\in \{1,2\}$ and $r\in \{1,2,3\}$ is resolved by $S'$.
We fix arbitrary integers $i\in [n],j,j'\in [m],h\in \{1,2\}$ and $r\in \{1,2,3\}$.
Suppose that $x_1\in P^{h}(i,j,a_r)$, $x_2\in P^{h}(i,j,b_r)$, $x_3\in P^{h}(i,j,c_r)$ $x_4\in P^{h}(i,j,p_i^{3-h})$ and $y\in P(q_i^{3-h},\text{mid}(P^{h}(i,j',p_i^{3-h})))$.
For a vertex $x_{\mu}$ for $\mu\in \{1,2,3,4\}$, let $\ell_{x_{\mu}}=\dist(\pi_i^h,x_{\mu})$.
For a vertex $y$, let $\ell_y=\dist(q_i^{3-h},y)$.
Then $\dist(f^h(i,j,a_r),x_1)=\ell_{x_1}$ if $x_1\neq \pi_i^h$ and $\dist(f^h(i,j,a_r),\pi_i^h)=2$.
$\dist(f^h(i,j,a_r),y)=\dist(f^h(i,j,b_r),y)=\dist(f^h(i,j,c_r),y)=
\text{min }(2+|P^h(i,j',p_i^{3-h})|/2+|P(q_i^{3-h},\text{mid}(P^{h}(i,j',p_i^{3-h})))|-\ell_y,2+|P(\pi_i^h,a_r)|+|P(\pi_i^{3-h},a_r)|+1+\ell_y)=
\text{min }(1+40(n+1)-\ell_y,3+20(n+1)+\ell_y)$.
For the pair $\{x_1,y\}$ that is not resolved by $f^h(i,j,a_r)$, $\dist(f^h(i,j,b_r),y)=\dist(f^h(i,j,a_r),y)=\dist(f^h(i,j,a_r),x_1)=\dist(f^h(i,j,b_r),x_1)-2<\dist(f^h(i,j,b_r),x_1)$.
Thus every pair $\{x_1,y\}$ is resolved by $f^h(i,j,a_r)$ or $f^h(i,j,b_r)$.
In a similar way, we can show that every vertex pair $\{x_2,y\}$ and $\{x_3,y\}$ are resolved by $S'$.
For the pair $\{x_4,y\}$, there are two cases.
Case 1: $j'\neq j$.
In this case, the analysis is similar to that of $\{x_1,y\}$ above and every pair $\{x_4,y\}$ is resolved by $f^h(i,j,p_i^{3-h})$ or $f^h(i,j,a_r)$.
Case 2: $j'=j$.
$\dist(f^h(i,j,p_i^{3-h}),x_4)=\ell_{x_4}$ if $x_4\neq \pi_i^h$ and $\dist(f^h(i,j,p_i^{3-h}),\pi_i^h)=2$.
$\dist(f^h(i,j,p_i^{3-h}),y)=
\text{min }(|P^h(i,j',p_i^{3-h})|/2+|P(q_i^{3-h},\text{mid}(P^{h}(i,j',p_i^{3-h})))|-\ell_y,2+|P(\pi_i^h,a_r)|+|P(\pi_i^{3-h},a_r)|+1+\ell_y)=
\text{min }(40(n+1)-1-\ell_y,3+20(n+1)+\ell_y)$.
For the pair $\{x_4,y\}$ which is not resolved by $f^h(i,j,p_i^{3-h})$, it satisfies that $\dist(f^h(i,j,p_i^{3-h}),x_4)=\dist(f^h(i,j,p_i^{3-h}),y)=40(n+1)-1-\ell_y=\ell_{x_4}$, i.e.
$\dist(f^{mid}(i,j,h),x_4)=\dist(f^{mid}(i,j,h),y)$ and $10(n+1)<\ell_{x_4}\leq 20(n+1), 20(n+1)-1\leq \ell_y<30(n+1)-1$.
For such pairs, $\dist(f^{ecc}(i,j,3-h,r),x_4)=2+30(n+1)-\ell_{x_4}<2+20(n+1)<\dist(f^{ecc}(i,j,3-h,r),y)=50(n+1)+1-\ell_y$.
Thus in this case, every pair $\{x_4,y\}$ is resolved by $f^h(i,j,p_i^{3-h})$ or $f^{ecc}(i,j,3-h,r)$.

Finally we show that every pair $\{x,y\}\in (P^{h}(i,j,a_r)\cup P^{h}(i,j,b_r)\cup P^{h}(i,j,c_r)\cup P^{h}(i,j,p_i^{3-h}))\times P(q_{i'}^{h'},\text{mid}(P^{3-h'}(i',j',p_{i'}^{h'})))$ for $i,i'\in [n],j,j'\in [m],h,h'\in \{1,2\}$ and $r\in \{1,2,3\}$ such that $i\neq i'$ is resolved by $S'$.
We fix arbitrary integers $i,i'\in [n],j,j'\in [m],h,h'\in \{1,2\}$ and $r\in \{1,2,3\}$ such that $i\neq i'$.
Suppose that $x_1\in P^{h}(i,j,a_r)$, $x_2\in P^{h}(i,j,b_r)$, $x_3\in P^{h}(i,j,c_r)$ and $x_4\in P^{h}(i,j,p_i^{3-h})$.
Suppose that $y\in P(q_{i'}^{3-h'},\text{mid}(P^{h'}(i',j',p_{i'}^{3-h'})))$.
For a vertex $x_{\mu}$ for $\mu\in \{1,2,3,4\}$, let $\ell_{x_{\mu}}=\dist(\pi_i^h,x_{\mu})$.
For a vertex $y$, let $\ell_y=\dist(q_{i'}^{3-h'},y)$.
For the pair $\{x_1,y\}$, $\dist(f^h(i,j,a_r),x_1)=\ell_{x_1}$ if $x_1\neq \pi_i^h$ and $\dist(f^h(i,j,a_r),\pi_i^h)=2$.
$\dist(f^h(i,j,a_r),y)=\text{min }(2+|P(\pi_i^h,a_r)|+|P(\pi_{i'}^{h'},a_r)|+1+\ell_y,2+|P(\pi_i^h,a_r)|+|P(\pi_{i'}^{3-h'},a_r)|+|P^{3-h'}(i',j',p_{i'}^{h'})|/2+|P(q_{i'}^{h'},\text{mid}(P^{3-h'}(i',j',p_{i'}^{h'})))|-\ell_y)=
\text{min }(3+20(n+1)+\ell_y,1+60(n+1)-\ell_y)\geq 3+20(n+1)>\dist(f^h(i,j,a_r),x_1)$.
Thus every pair $\{x_1,y\}$ is resolved by $f^h(i,j,a_r)$.
Similarly, we can show that every pair $\{x_2,y\}$, $\{x_3,y\}$ and $\{x_4,y\}$ are resolved by $f^h(i,j,b_r)$, $f^h(i,j,c_r)$ and $f^h(i,j,p_i^{3-h})$ respectively.
This completes the proof for the lemma.
\end{proof}

\begin{lemma}  \label{PixS}
Every pair $\{x,y\}\in \bigcup_{i\in [n]}\Pi_i\times \bigcup_{i\in [n]}S_i$ is resolved by $S'$.
\end{lemma}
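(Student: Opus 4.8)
I would follow the template of Lemmas~\ref{SxS}, \ref{PixPi} and \ref{PixH}. Fix all indices. Write $x$ as a vertex of one of the four paths $P^{h}(i,j,a_r)$, $P^{h}(i,j,b_r)$, $P^{h}(i,j,c_r)$, $P^{h}(i,j,p_i^{3-h})$ that make up $\Pi^{h}(i,j,r)$, and set $\ell_x=\dist(\pi_i^h,x)\in[0,20(n+1)]$; write $y$ as a vertex of $P(\pi_{i'}^{h'},a_{r'})$ or $P(\pi_{i'}^{h'},c_{r'})$, with $\ell_y$ its distance to the $W$-endpoint $a_{r'}$ (respectively $c_{r'}$). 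The only forced vertices I need are the four ``hub'' gadgets attached next to $\pi_i^h$, namely $f^{h}(i,j,a_r)$, $f^{h}(i,j,b_r)$, $f^{h}(i,j,c_r)$ and $f^{h}(i,j,p_i^{3-h})$; let $\alpha_1$ denote the one path among the four on which $x$ lies, and let $\alpha_2$ be any other. Two facts do all the work: (i) from $f^{h}(i,j,\alpha_1)$, and equally from $f^{h}(i,j,\alpha_2)$, every shortest path to $y$ exits $\Pi_i$ through $\pi_i^h$, so $\dist(f^{h}(i,j,\alpha_1),y)=\dist(f^{h}(i,j,\alpha_2),y)=2+\dist(\pi_i^h,y)$; and (ii) $\dist(f^{h}(i,j,\alpha_1),x)=\ell_x$ for $x\ne\pi_i^h$, whereas $\dist(f^{h}(i,j,\alpha_2),x)=2+\ell_x$ (the competing route through $s_i^j$ has length at least $40(n+1)+2-\ell_x\ge 2+\ell_x$). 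Granting (i) and (ii): since the two hub gadgets agree on their distance to $y$ but differ by exactly $2$ on their distance to $x$, at least one of them resolves $\{x,y\}$ whenever $x\ne\pi_i^h$; and the pair $\{\pi_i^h,y\}$ is immediate, as $\dist(f^{h}(i,j,\alpha_1),\pi_i^h)=2<\dist(f^{h}(i,j,\alpha_1),y)$. So the lemma reduces to justifying (i)--(ii) together with the harmless bookkeeping of $\dist(\pi_i^h,y)$.

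For that last quantity I split on the indices, only to check that (i) is meaningful. If $i=i'$ and $h=h'$, then $y$ hangs directly off $\pi_i^h$ and $\dist(\pi_i^h,y)=10(n+1)-\ell_y$. If $i=i'$ and $h\ne h'$, then $y$ hangs off $\pi_i^{3-h}$; using the length-$10(n+1)$ connector $P(\pi_i^{h},a_{r'})$ and $\dist(\pi_i^h,\pi_i^{3-h})=20(n+1)$, one gets $\dist(\pi_i^h,y)=\min(10(n+1)+\ell_y,\,30(n+1)-\ell_y)=10(n+1)+\ell_y$ since $\ell_y\le 10(n+1)$. If $i\ne i'$, then using $\dist(\pi_i^h,a_{r'})=\dist(\pi_{i'}^{h'},a_{r'})=10(n+1)$ and $\dist(\pi_i^h,\pi_{i'}^{h'})=20(n+1)$, the same computation again gives $\dist(\pi_i^h,y)=10(n+1)+\ell_y$. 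In particular the boundary case $\ell_y=10(n+1)$, i.e.\ $y=\pi_{i'}^{h'}$, needs nothing special, and the ``at least one hub gadget resolves'' conclusion applies uniformly.

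The real content --- and the only genuine obstacle --- is establishing (i) and (ii) in this densely connected graph, that is, certifying that no competing route is shorter. From a hub gadget to $y$ one must rule out the routes through $s_i^j$, through $\pi_i^{3-h}$, through the midpoint connectors $P(q_i^h,\text{mid}(P^{3-h}(i,j,p_i^{h})))$, and through the various vertices of $W$, checking that each exceeds $2+\dist(\pi_i^h,y)$; the analogous check is required for $\dist(\pi_i^h,y)$ itself; and one must observe that the $f^{ecc}(i,j,h,r)$- and $f^{mid}(i,j,h)$-gadgets sit on their paths without shortening anything, so that $\dist(f^{h}(i,j,\alpha_1),x)=\ell_x$ holds verbatim. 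The one delicate position is $x=N_{s_i^j}(s_i^j,\alpha_1)$, where $\ell_x=20(n+1)$ and the two routes to $x$ behind (ii) tie; but the tie still yields the value $2+\ell_x$, so the constant-$2$ gap between the two hub gadgets survives and the pair is resolved. Once these routine (if tedious) distance estimates are in place, every pair of $\bigcup_{i\in[n]}\Pi_i\times\bigcup_{i\in[n]}S_i$ is resolved by $S'$.
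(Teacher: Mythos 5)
Your argument is correct, but it resolves the pairs with different witnesses than the paper does. The paper's proof anchors at $f(s_i^j,a_{r'})$ (the gadget sitting next to $a_{r'}$ on $P(s_i^j,a_{r'})$): from there every $y\in P(\pi_{i'}^{h'},a_{r'})$ is at distance $2+\ell_{y}\le 2+10(n+1)$ while every $x\in\Pi^h(i,j,r)$ is at distance $\ge 2+10(n+1)$, so the only possible tie is the pair $\{\pi_i^h,\pi_{i'}^{h'}\}$, which is then killed by $f^h(i,j,a_r)$; this is a one-sided separation argument with a single exceptional pair, at the cost of a case split on $r=r'$ versus $r\ne r'$. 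You instead reuse the two-witness mechanism that the paper deploys in Lemmas~\ref{PixPi} and~\ref{PixR}: the hub gadgets $f^h(i,j,\alpha_1)$ and $f^h(i,j,\alpha_2)$ agree on their distance to any $y\in S$ (both equal $2+\dist(\pi_i^h,y)$, since every route from either gadget into $S$ must exit through $\pi_i^h$ --- the exits via $N_{s_i^j}(\cdot)$ or via the $q_i^{3-h}$-connector cost at least $20(n+1)$ before reaching $W$), while their distances to $x$ differ by exactly $2$, so one of them always resolves the pair. Your identification of the delicate points is accurate: $x=\pi_i^h$ needs the separate observation $\dist(f^h(i,j,\alpha_1),\pi_i^h)=2<2+\dist(\pi_i^h,y)$, and at $\ell_x=20(n+1)$ the tie between the two routes to $x$ from $f^h(i,j,\alpha_2)$ still produces the value $2+\ell_x$, so the gap of $2$ survives. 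What your route buys is uniformity --- no case distinction on whether $r=r'$, no exceptional pair, and the same mechanism as the neighbouring lemmas --- at the price of having to certify the exit-through-$\pi_i^h$ property for two gadgets rather than bounding distances from one; both sets of estimates are of comparable tedium and both are valid.
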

\begin{proof}
We show that every pair $\{x,y\}\in (P^{h}(i,j,a_r)\cup P^{h}(i,j,b_r)\cup P^{h}(i,j,c_r)\cup P^{h}(i,j,p_i^{3-h}))\times (P(\pi_{i'}^{h'},a_{r'})\cup P(\pi_{i'}^{h'},c_{r'}))$ for $i,i'\in [n],j\in [m],h,h'\in \{1,2\}$ and $r,r'\in \{1,2,3\}$ is resolved by $S'$.
We fix arbitrary integers $i,i'\in [n],j\in [m],h,h'\in \{1,2\}$ and $r,r'\in \{1,2,3\}$.
Suppose that $x_1\in P^{h}(i,j,a_r)$, $x_2\in P^{h}(i,j,b_r)$, $x_3\in P^{h}(i,j,c_r)$ and $x_4\in P^{h}(i,j,p_i^{3-h})$.
Suppose that $y_1\in P(\pi_{i'}^{h'},a_{r'})$ and $y_2\in P(\pi_{i'}^{h'},c_{r'})$.
For a vertex $x_{\mu}$ for $\mu\in \{1,2,3,4\}$, let $\ell_{x_{\mu}}=\dist(\pi_i^h,x_{\mu})$.
For a vertex $y_1$, let $\ell_{y_1}=\dist(a_{r'},y_1)$.
For a vertex $y_2$, let $\ell_{y_2}=\dist(c_{r'},y_2)$.
Let $|P(s_i^j,a_{r'})|=20(n+1)+10\lambda$ for some $\lambda\in [n]$.
For a pair $\{x_1,y_1\}$, $\dist(f(s_i^j,a_{r'}),y_1)=2+\ell_{y_1}$.
$\dist(f(s_i^j,a_{r'}),x_1)=\text{min }(2+|P(\pi_i^h,a_{r'})|+\ell_{x_1},|P(s_i^j,a_{r'})|-1+|P^{h}(i,j,a_r)|-\ell_{x_1})=
\text{min }(2+10(n+1)+\ell_{x_1},40(n+1)+10\lambda-1-\ell_{x_1})$ if $r=r'$ and $\dist(f(s_i^j,a_{r'}),x_1)=\text{min }(2+|P(\pi_i^h,a_{r'})|+\ell_{x_1},|P(s_i^j,a_{r'})|+1+|P^{h}(i,j,a_r)|-\ell_{x_1})=
\text{min }(2+10(n+1)+\ell_{x_1},40(n+1)+10\lambda+1-\ell_{x_1})\geq 2+10(n+1)\geq \dist(f(s_i^j,a_{r'}),y_1)$ if $r\neq r'$.
It follows that $\dist(f(s_i^j,a_{r'}),x_1)\geq 2+10(n+1)\geq \dist(f(s_i^j,a_{r'}),y_1)$.
$\dist(f(s_i^j,a_{r'}),x_1)=\dist(f(s_i^j,a_{r'}),y_1)$ only when $x_1=\pi_i^h$ and $y_1=\pi_{i'}^{h'}$.
If $i'\neq i$ or $h'\neq h$, obviously the pair $\{\pi_i^h,\pi_{i'}^{h'}\}$ is resolved by $f^h(i,j,a_r)$.
Thus every pair $\{x_1,y_1\}$ is resolved by $f(s_i^j,a_{r'})$ or $f^h(i,j,a_r)$.
In a similar way, we can show that every vertex pair $\{x_2,y_1\}$, $\{x_3,y_1\}$ and $\{x_4,y_1\}$ are resolved by $f(s_i^j,a_{r'})$ or $f^h(i,j,a_r)$.
Also in a similar way, we can show that every vertex pair $\{x_1,y_2\}$, $\{x_2,y_2\}$, $\{x_3,y_2\}$ and $\{x_4,y_2\}$ are resolved by $f(s_i^j,c_{r'})$ or $f^h(i,j,a_r)$.
This completes the proof for the lemma.
\end{proof}

\begin{lemma} \label{PixR}
Every pair $\{x,y\}\in \bigcup_{i\in [n]}\Pi_i\times \bigcup_{r\in \{1,2,3\}}R_r$ is resolved by $S'$.
\end{lemma}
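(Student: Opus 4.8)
The plan is to handle Lemma~\ref{PixR} with the scheme used for Lemmas~\ref{PixH}, \ref{PixL} and~\ref{PixS}. First I would fix arbitrary indices $i,i'\in[n]$, $j\in[m]$, $h\in\{1,2\}$ and $r,r'\in\{1,2,3\}$, let $x$ denote a generic vertex of one of the four subpaths $P^{h}(i,j,a_r)$, $P^{h}(i,j,b_r)$, $P^{h}(i,j,c_r)$, $P^{h}(i,j,p_i^{3-h})$ making up $\Pi^h(i,j,r)$, and let $y$ denote a generic vertex of one of the six subpaths $P(a_{r'},u_{r'}^{i'})$, $P(b_{r'},u_{r'}^{i'})$, $P(c_{r'},u_{r'}^{i'})$, $P(a_{r'},v_{r'}^{i'})$, $P(b_{r'},v_{r'}^{i'})$, $P(c_{r'},v_{r'}^{i'})$ making up $R_{r'}$. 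The natural parameters are $\ell_x=\dist(\pi_i^h,x)$ (so $0\le\ell_x\le 20(n+1)$) and $\ell_y$, the distance of $y$ to the $u_{r'}^{i'}$- or $v_{r'}^{i'}$-endpoint of its subpath.

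The two structural facts I would lean on are: (i) $\{a_{r'},b_{r'},c_{r'}\}$ separates $R_{r'}$ from $\pi_i^h$, with $\dist(\pi_i^h,a_{r'})=\dist(\pi_i^h,c_{r'})=10(n+1)$, while every route from $\pi_i^h$ to a vertex of a $b_{r'}$-subpath must first round $u_{r'}^{i'}$ or $v_{r'}^{i'}$, so such a vertex is at distance at least $30(n+1)-10n$ from $\pi_i^h$; and (ii) each of the four forced vertices $f^{h}(i,j,a_r)$, $f^{h}(i,j,b_r)$, $f^{h}(i,j,c_r)$, $f^{h}(i,j,p_i^{3-h})$ sits one step off $\pi_i^h$, hence has distance $\ell_x$ to every vertex of its own subpath, distance $2+\ell_x\le 2+20(n+1)$ to every other vertex of $\Pi^h(i,j,r)$, and distance at least $2+10(n+1)$ to every vertex of $R_{r'}$.

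The core is a two-gadget argument, carried out first for $x\in P^{h}(i,j,a_r)$ with anchor $f^{h}(i,j,a_r)$; the other three choices of $x$-subpath repeat verbatim with $f^{h}(i,j,b_r)$, $f^{h}(i,j,c_r)$, $f^{h}(i,j,p_i^{3-h})$ as anchor, since all four gadgets share the same ``close to $\Pi_i$, far from $R_{r'}$'' profile. If $y$ lies on a $b_{r'}$-subpath, fact~(i) gives $\dist(f^{h}(i,j,a_r),y)\ge 2+30(n+1)-10n>2+20(n+1)\ge\dist(f^{h}(i,j,a_r),x)$, so the pair is resolved. If $y$ lies on an $a_{r'}$-subpath, one computes $\dist(f^{h}(i,j,a_r),y)=2+30(n+1)-10i'-\ell_y$ (via $\pi_i^h$ and $a_{r'}$), so the pair is resolved by $f^{h}(i,j,a_r)$ unless $\ell_x$ lies in the window $[2+10(n+1),20(n+1)]$ and $\ell_x+\ell_y=2+30(n+1)-10i'$. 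For these residual pairs I would switch to the $R$-side forced vertex $f(s_i^j,a_{r'})$, which is at distance $2+\dist(a_{r'},y)$ from $y$ but at distance $\min\bigl(2+10(n+1)+\ell_x,\ 40(n+1)+10a+1-\ell_x\bigr)$ from $x$, where $20(n+1)+10a=|P(s_i^j,a_{r'})|$ and the two candidate shortest paths detour through $\pi_i^h$ respectively through $s_i^j$. Combining the equality that would keep the pair unresolved by $f(s_i^j,a_{r'})$ with the one from $f^{h}(i,j,a_r)$ then forces $0=2+20(n+1)$ in one branch and forces $2\ell_x$ to be odd in the other; either way no pair escapes both gadgets. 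The $c_{r'}$-subpaths are symmetric, using $f(s_i^j,c_{r'})$; the case $r'\ne r$ is identical, with the $s_i^j$-detour from the $R$-side gadget to $x$ running through one extra $W$-vertex; and the $u$- versus $v$-subpaths differ only in the (easy) $b_{r'}$ case.

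The step I expect to be the main obstacle is precisely this residual window --- $x$ near $s_i^j$ (so $\ell_x$ close to $20(n+1)$) and $y$ near $a_{r'}$ or $c_{r'}$ --- because there several candidate shortest paths (through $\pi_i^h$, through $s_i^j$, and, when $r'\ne r$, through a further $W$-vertex) have comparable lengths, so for each gadget one must first pin down which route realizes the distance and only then verify the arithmetic. This is the same bookkeeping that dominates the proof of Lemma~\ref{HxH}, so the bulk of the write-up will be a disciplined case split on $r=r'$ versus $r\ne r'$, on $u$- versus $v$-subpaths, and on which of the two $R$-side routes to $x$ is shorter, with every final inequality reducing to a parity obstruction or to an obviously false linear identity.
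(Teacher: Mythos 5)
Your proposal is correct, but the mechanism you use to eliminate the residual pairs differs genuinely from the paper's. The paper anchors at the same four gadgets $f^h(i,j,a_r),\dots,f^h(i,j,p_i^{3-h})$ and uses the same separation (each is within $2+20(n+1)$ of every vertex of $\Pi^h(i,j,r)$ but at distance at least $2+10(n+1)$ from $R_{r'}$); however, for $y$ on an $a_{r'}$- or $c_{r'}$-subpath it kills the residual pairs with a \emph{second $\Pi$-side} gadget, exploiting that all four gadgets $f^h(i,j,\cdot)$ are \emph{equidistant} from any fixed vertex of $R_{r'}$ (every shortest route leaves through $\pi_i^h$) while their distances to a fixed $x$ differ by exactly $2$ depending on whether $x$ lies on that gadget's own subpath --- so no arithmetic is needed. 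You instead switch to the $R$-side gadget $f(s_i^j,a_{r'})$ and derive a contradiction from two linear equations; your two branches ($0=2+20(n+1)$ and $2\ell_x=50(n+1)+10a+1$ being odd) do check out, but this route obliges you to determine which of the two candidate shortest paths from $f(s_i^j,a_{r'})$ to $x$ is realized, which the paper's equidistance trick sidesteps. For the $b_{r'}$-subpaths the comparison reverses: your single-anchor argument via the magnitude gap $\dist(\pi_i^h,y)\ge 30(n+1)-10n>2+20(n+1)$ is a genuine simplification of the paper's use of the pair $f^1(u_{r'}^{i'},v_{r'}^{i'}),f^2(u_{r'}^{i'},v_{r'}^{i'})$ (equidistant from the $b_{r'}$-subpath, offset by $1$ towards $\Pi_i$). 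One caution: the $c_{r'}$-subpaths are not literally symmetric to the $a_{r'}$-subpaths, since the connecting vertices of $F^1$ and $F^2$ create a length-$2$ shortcut around $u_{r'}^{i'}$ and $v_{r'}^{i'}$, so $\dist(\pi_i^h,y)$ for $y$ on a $c_{r'}$-subpath may be realized by the detour through $a_{r'}$; in that regime the distance already exceeds $2+20(n+1)$ and the pair is resolved outright, so only the direct route via $c_{r'}$ yields residual pairs and your parity argument still closes the case --- but this needs to be said explicitly in the write-up.
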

\begin{proof}
We show that every pair $\{x,y\}\in (P^{h}(i,j,a_r)\cup P^{h}(i,j,b_r)\cup P^{h}(i,j,c_r)\cup P^{h}(i,j,p_i^{3-h}))\times (P(u_{r'}^{i'},a_{r'})\cup P(v_{r'}^{i'},a_{r'})\cup P(u_{r'}^{i'},c_{r'})\cup P(v_{r'}^{i'},c_{r'}))$ for $i,i'\in [n],j\in [m],h\in \{1,2\}$ and $r,r'\in \{1,2,3\}$ is resolved by $S'$.
We fix arbitrary integers $i,i'\in [n],j\in [m],h\in \{1,2\}$ and $r,r'\in \{1,2,3\}$.
Suppose that $x_1\in P^{h}(i,j,a_r)$, $x_2\in P^{h}(i,j,b_r)$, $x_3\in P^{h}(i,j,c_r)$ and $x_4\in P^{h}(i,j,p_i^{3-h})$.
Suppose that $y_1\in P(u_{r'}^{i'},a_{r'})$,  $y_2\in P(v_{r'}^{i'},a_{r'})$, $z_1\in P(u_{r'}^{i'},c_{r'})$ and $z_2\in P(v_{r'}^{i'},c_{r'})$.
For a vertex $x_{\mu}$ for $\mu\in \{1,2,3,4\}$, let $\ell_{x_{\mu}}=\dist(\pi_i^h,x_{\mu})$.
For a vertex $y_{\nu}$ for $\nu\in \{1,2\}$, let $\ell_{y_{\nu}}=\dist(a_{r'},y_{\nu})$.
For a vertex $z_{\eta}$ for $\eta\in \{1,2\}$, let $\ell_{y_{\eta}}=\dist(c_{r'},z_{\eta})$.
Then $|P(u_{r'}^{i'},a_{r'})|=|P(v_{r'}^{i'},a_{r'})|=20(n+1)-10i'$ and $|P(u_{r'}^{i'},c_{r'})|=|P(v_{r'}^{i'},c_{r'})|=20(n+1)+10i'$.
For a pair $\{x_1,y_1\}$, $\dist(f^h(i,j,a_r),x_1)=\ell_{x_1}$ and $\dist(f^h(i,j,a_r),\pi_i^h)=2$.
$\dist(f^h(i,j,a_r),y_1)=\dist(f^h(i,j,b_r),y_1)=\dist(f^h(i,j,c_r),y_1)=\dist(f^h(i,j,p_i^{3-h}),y_1)=2+|P(\pi_i^h,a_{r'})|+\ell_{y_1}=2+10(n+1)+\ell_{y_1}$.
For the pair $\{x_1,y_1\}$ that is not resolved by $f^h(i,j,a_r)$, $\dist(f^h(i,j,a_r),x_1)=\dist(f^h(i,j,a_r),y_1)=\dist(f^h(i,j,b_r),y_1)=\dist(f^h(i,j,b_r),x_1)-2$.
Thus every pair $\{x_1,y_1\}$ is resolved by $f^h(i,j,a_r)$ or $f^h(i,j,b_r)$.
In a similar way, we can show that every vertex pair $\{x_{\mu},y_{\nu}\}$ for $\mu\in \{1,2,3,4\},\nu\in \{1,2\}$ is resolved by $S'$.
For a pair $\{x_1,z_1\}$, $\dist(f^h(i,j,a_r),x_1)=\ell_{x_1}$ and $\dist(f^h(i,j,a_r),\pi_i^h)=2$.
$\dist(f^h(i,j,a_r),z_1)=\dist(f^h(i,j,b_r),z_1)=
\text{min }(2+|P(\pi_i^h,c_{r'})|+\ell_{z_1},2+|P(\pi_i^h,a_{r'})|+|P(a_{r'},u_{r'}^{i'})|+|P(c_{r'},u_{r'}^{i'})|-2-\ell_{z_1})=
\text{min }(2+10(n+1)+\ell_{z_1},50(n+1)-\ell_{z_1})$ if $|P(c_{r'},u_{r'}^{i'})|-\ell_{z_1}\geq 2$.
$\dist(f^h(i,j,a_r),z_1)=\dist(f^h(i,j,b_r),z_1)=2+|P(\pi_i^h,a_{r'})|+|P(a_{r'},u_{r'}^{i'})|+|P(c_{r'},u_{r'}^{i'})|-\ell_{z_1}$ if $|P(c_{r'},u_{r'}^{i'})|-\ell_{z_1}<2$.
For the pair $\{x_1,z_1\}$ that is not resolved by $f^h(i,j,a_r)$, $\dist(f^h(i,j,a_r),x_1)=\dist(f^h(i,j,a_r),z_1)=\dist(f^h(i,j,b_r),z_1)=\dist(f^h(i,j,b_r),x_1)-2$.
Thus every pair $\{x_1,z_1\}$ is resolved by $f^h(i,j,a_r)$ or $f^h(i,j,b_r)$.
In a similar way, we can show that every vertex pair $\{x_{\mu},z_{\nu}\}$ for $\mu\in \{1,2,3,4\},\nu\in \{1,2\}$ is resolved by $S'$.

Then we show that every pair $\{x,y\}\in (P^{h}(i,j,a_r)\cup P^{h}(i,j,b_r)\cup P^{h}(i,j,c_r)\cup P^{h}(i,j,p_i^{3-h}))\times (P(u_{r'}^{i'},b_{r'})\cup P(v_{r'}^{i'},b_{r'}))$ for $i,i'\in [n],j\in [m],h\in \{1,2\}$ and $r,r'\in \{1,2,3\}$ is resolved by $S'$.
We fix arbitrary integers $i,i'\in [n],j\in [m],h\in \{1,2\}$ and $r,r'\in \{1,2,3\}$.
Suppose that $x_1\in P^{h}(i,j,a_r)$, $x_2\in P^{h}(i,j,b_r)$, $x_3\in P^{h}(i,j,c_r)$ and $x_4\in P^{h}(i,j,p_i^{3-h})$.
Suppose that $w_1\in P(u_{r'}^{i'},b_{r'})$ and $w_2\in P(v_{r'}^{i'},b_{r'})$.
For a vertex $x_{\mu}$ for $\mu\in \{1,2,3,4\}$, let $\ell_{x_{\mu}}=\dist(\pi_i^h,x_{\mu})$.
For a vertex $w_{\nu}$ for $\nu\in \{1,2\}$, let $\ell_{w_{\nu}}=\dist(b_{r'},w_{\nu})$.
Then let $|P(s_i^j,a_{r'})|=20(n+1)+10\lambda$ for some $\lambda\in [n]$, $|P(u_{r'}^{i'},b_{r'})|=20(n+1)-5i'-1$ and $|P(v_{r'}^{i'},b_{r'})|=20(n+1)-5i'-2$.
For a pair $\{x_1,w_1\}$, $\dist(f^1(u_{r'}^{i'},v_{r'}^{i'}),w_1)=\dist(f^2(u_{r'}^{i'},v_{r'}^{i'}),w_1)=2+|P(u_{r'}^{i'},b_{r'})|-\ell_{w_1}$.
For the distance between $f^{\eta}(u_{r'}^{i'},v_{r'}^{i'})$ and $x_1$ for $\eta\in \{1,2\}$, there are two cases.
Case 1: $\lambda\leq i'$.
$\dist(f^1(u_{r'}^{i'},v_{r'}^{i'}),x_1)=\dist(f^2(u_{r'}^{i'},v_{r'}^{i'}),x_1)+1=\text{min }(1+|P(a_{r'},u_{r'}^{i'})|+|P(s_i^j,a_{r'})|-1+|P^h(i,j,a_r)|-\ell_{x_1},1+|P(a_{r'},u_{r'}^{i'})|+|P(\pi_i^h,a_{r'})|+\ell_{x_1})$ if $r=r'$.
$\dist(f^1(u_{r'}^{i'},v_{r'}^{i'}),x_1)=\dist(f^2(u_{r'}^{i'},v_{r'}^{i'}),x_1)+1=\text{min }(1+|P(a_{r'},u_{r'}^{i'})|+|P(s_i^j,a_{r'})|+1+|P^h(i,j,a_r)|-\ell_{x_1},1+|P(a_{r'},u_{r'}^{i'})|+|P(\pi_i^h,a_{r'})|+\ell_{x_1})$ if $r\neq r'$.
In this case, for the pair $\{x_1,w_1\}$ which is not resolved by $f^1(u_{r'}^{i'},v_{r'}^{i'})$, $\dist(f^2(u_{r'}^{i'},v_{r'}^{i'}),x_1)=\dist(f^1(u_{r'}^{i'},v_{r'}^{i'}),x_1)-1<\dist(f^1(u_{r'}^{i'},v_{r'}^{i'}),y)=\dist(f^2(u_{r'}^{i'},v_{r'}^{i'}),y)$.
Case 2: $\lambda>i'$.
$\dist(f^1(u_{r'}^{i'},v_{r'}^{i'}),x_1)=\dist(f^2(u_{r'}^{i'},v_{r'}^{i'}),x_1)+1=\text{min }(1+|P(c_{r'},u_{r'}^{i'})|+|P(s_i^j,c_{r'})|+1+|P^h(i,j,a_r)|-\ell_{x_1},1+|P(a_{r'},u_{r'}^{i'})|+|P(\pi_i^h,a_{r'})|+\ell_{x_1})$.
In this case, for the pair $\{x_1,w_1\}$ which is not resolved by $f^1(u_{r'}^{i'},v_{r'}^{i'})$, $\dist(f^2(u_{r'}^{i'},v_{r'}^{i'}),x_1)=\dist(f^1(u_{r'}^{i'},v_{r'}^{i'}),x_1)-1<\dist(f^1(u_{r'}^{i'},v_{r'}^{i'}),y)=\dist(f^2(u_{r'}^{i'},v_{r'}^{i'}),y)$.
In a similar way, we can show that every vertex pair $\{x_{\mu},w_{\nu}\}$ for $\mu\in \{1,2,3,4\},\nu\in \{1,2\}$ is resolved by $S'$.
This completes the proof for the lemma.
\end{proof}

\begin{lemma} \label{LxH}
Every pair $\{x,y\}\in \bigcup_{i\in [n]}L_i\times \bigcup_{i\in [n]}H_i$ is resolved by $S'$.
\end{lemma}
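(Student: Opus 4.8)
The plan is to follow the template already used in Lemmas~\ref{LxL},~\ref{UxL} and~\ref{UxH}. Fix $x\in L_i^h$ lying on the path $P(q_i^h,\text{mid}(P^{3-h}(i,j,p_i^h)))$ and write $\ell_x:=|P(q_i^h,x)|$, so $\ell_x\in\{0,\dots,30(n+1)-1\}$; fix $y\in H_{i',r}$ on one of $P(s_{i'}^{j'},a_r)$, $P(s_{i'}^{j'},b_r)$, $P(s_{i'}^{j'},c_r)$. I would split into three regimes: (1) $i'=i$ and $j'=j$; (2) $i'=i$ and $j'\neq j$; (3) $i'\neq i$. In every regime the primary resolver is $f^{mid}(i,j,3-h)$, for which the computation in the proof of Lemma~\ref{LxL} already gives $\dist(f^{mid}(i,j,3-h),x)=30(n+1)-\ell_x$; the work is to pin down $\dist(f^{mid}(i,j,3-h),y)$ and, where needed, to add a second resolver.

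For the distance to $y$ I would enumerate the candidate shortest walks out of $\text{mid}(P^{3-h}(i,j,p_i^h))$: one towards $N_{s_i^j}(s_i^j,p_i^h)$ and then into the $s_i^j$-star, and one towards $\pi_i^{3-h}$ and then along $P(\pi_i^{3-h},a_r)$ (length $10(n+1)$) to $a_r$ and down the appropriate path. In regime~(2) every such walk must leave and re-enter the $(i,j)$-gadget, so $\dist(f^{mid}(i,j,3-h),y)\geq 2+30(n+1)>\dist(f^{mid}(i,j,3-h),x)$ and $f^{mid}(i,j,3-h)$ resolves $\{x,y\}$ outright; in regime~(3) the detour through some $c_{r^*}$ and an index $i'\neq i$ again forces the distance above $30(n+1)-\ell_x$ except for a short list of boundary collisions (morally $y$ at or adjacent to $a_r$ with $x$ near the midpoint), which are broken by an $a_r$-type forced vertex such as $f(\pi_1^1,a_r)$ or $f(s_1^1,a_r)$, exactly as $a_r$-collisions are broken in Lemmas~\ref{HxH} and~\ref{PixH}. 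Regime~(1) is the delicate one, $x$ and $y$ being close in $G'$; there I would pair $f^{mid}(i,j,3-h)$ with $f^h(i,j,a_r)$ (which lies at distance $3+\ell_x$ from $x$, with its shortest route to $y$ on the $a_r$- and $c_r$-paths controlled by $\pi_i^h$) and, for $y$ on the $b_r$-path, additionally with $f^{mid}(i,j,h)$ and $f^h(i,j,b_r)$, exploiting the $+1,+2$ offsets in the $b_r$-path lengths; whenever a tie-break reduces to resolving some $\{u_r^{i''},v_r^{i''}\}$ I would invoke Claim~\ref{basic}. The degenerate cases --- $x=y$, or $x$ or $y$ equal to an endpoint $p_i^h,q_i^h$ already covered by Lemma~\ref{forcedSet} --- are immediate.

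The main obstacle is the case analysis in regime~(1): $x$ and $y$ sit symmetrically with respect to several forced vertices, so one must check that no pair $(\ell_x,\ell_y)$ that defeats $f^{mid}(i,j,3-h)$ also defeats the chosen secondary resolver. As throughout Section~\ref{ss:difficult-direction} this is a finite verification over the constants $10(n+1)$, $20(n+1)$, $30(n+1)$ and the $\pm1,\pm2$ parities introduced by the $b_r$-paths and by the forced-vertex gadgets, and it needs no idea beyond those used in Lemmas~\ref{LxL},~\ref{UxL} and~\ref{UxH}.
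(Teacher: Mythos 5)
Your plan diverges from the paper's proof and, as written, contains a concrete error. The paper does not use $f^{mid}(i,j,3-h)$ here and does not split on $i'=i$ versus $i'\neq i$ at all: for $y$ on an $a_r$- or $c_r$-path it uses the two forced vertices $f(\pi_i^h,a_r)$ and $f(\pi_i^{3-h},a_r)$ (resp.\ the $c_r$-versions), which are \emph{equidistant} from $y$ (both at $2+\text{dist}(a_r,y)$) while their distances to $x$ differ by exactly $2$ in each branch of the relevant $\min$, so whichever of the two fails on a pair, the other resolves it; for $y$ on a $b_r$-path it uses $f^1(u_r^{i_r},v_r^{i_r})$ and $f^2(u_r^{i_r},v_r^{i_r})$ (with $i_r$ the index of the pair resolved by $s_{i'}^{j'}$) via the same ``equal on one side, offset on the other'' mechanism. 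This avoids enumerating collision pairs altogether.

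The gap in your proposal is that you underestimate $\text{dist}(f^{mid}(i,j,3-h),y)$. From $\text{mid}(P^{3-h}(i,j,p_i^h))$ one can walk $10(n+1)$ steps to $\pi_i^{3-h}$ and then along the length-$10(n+1)$ path $P(\pi_i^{3-h},a_r)$ (or $P(\pi_i^{3-h},c_r)$), reaching $a_r$ at distance $20(n+1)$ without passing through $s_i^j$ or $s_{i'}^{j'}$. Hence for $y$ in the $a_r$-half of \emph{any} $P(s_{i'}^{j'},a_r)$ one gets $\text{dist}(f^{mid}(i,j,3-h),y)=1+20(n+1)+\text{dist}(a_r,y)$, which refutes your regime-(2) lower bound of $2+30(n+1)$ and collides with $\text{dist}(f^{mid}(i,j,3-h),x)=30(n+1)-\ell_x$ along the entire line $\ell_x+\text{dist}(a_r,y)=10(n+1)-1$ --- roughly $10(n+1)$ pairs per path, in every regime, not a short boundary list. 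Your secondary resolver in regime (1) does not close all of these either: for $y=a_r$ and $\ell_x=10(n+1)-1$ one has $\text{dist}(f^h(i,j,a_r),a_r)=2+10(n+1)=3+\ell_x=\text{dist}(f^h(i,j,a_r),x)$, so both of your chosen vertices fail on that pair simultaneously. The remedy is not to patch the $f^{mid}$-based analysis case by case but to switch to resolvers whose distances to the $H$-side coincide while their distances to the $L$-side are rigidly offset, which is exactly what the paper does.
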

\begin{proof}
First we show that every pair $\{x,y\}\in P(q_i^h,\text{mid}(P^{3-h}(i,j,p_i^{h})))\times (P(s_{i'}^{j'},a_r)\cup P(s_{i'}^{j'},c_r))$ for $i,i'\in [n],j,j'\in [m],h\in \{1,2\}$ and $r\in \{1,2,3\}$ is resolved by $S'$.
We fix arbitrary integers $i,i'\in [n],j,j'\in [m],h\in \{1,2\}$ and $r\in \{1,2,3\}$.
Suppose that $x\in P(q_i^h,\text{mid}(P^{3-h}(i,j,p_i^{h})))$ and $y\in P(s_{i'}^{j'},a_r)$.
For a vertex $x\in P(q_i^h,\text{mid}(P^{3-h}(i,j,p_i^{h})))$, let $P(q_i^h,x)$ be the subpath of $P(q_i^h,\text{mid}(P^{3-h}(i,j,p_i^{h})))$ from $q_i^h$ to $x$ and $|P(q_i^h,x)|=\ell_x$.
For a vertex $y\in P(s_{i'}^{j'},a_r)$, let $P(a_r,y)$ be the subpath of $P(s_{i'}^{j'},a_r)$ from $a_r$ to $y$ and $|P(a_r,y)|=\ell_y$.
Then $\dist(f(\pi_i^h,a_r),x)=\text{min }(|P(\pi_i^h,a_r)|+1+\ell_x,2+|P(\pi_i^{3-h},a_r)|+|P^{3-h}(i,j,p_i^h)|/2+|P(q_i^h,\text{mid}(P^{3-h}(i,j,p_i^{h})))|-\ell_x)=
\text{min }(10(n+1)+1+\ell_x,50(n+1)+1-\ell_x)$.
$\dist(f(\pi_i^{3-h},a_r),x)=\text{min }(2+|P(\pi_i^h,a_r)|+1+\ell_x,|P(\pi_i^{3-h},a_r)|+|P^{3-h}(i,j,p_i^h)|/2+|P(q_i^h,\text{mid}(P^{3-h}(i,j,p_i^{h})))|-\ell_x)=
\text{min }(10(n+1)+3+\ell_x,50(n+1)-1-\ell_x)$.
$\dist(f(\pi_i^h,a_r),y)=\dist(f(\pi_i^{3-h},a_r),y)=2+\ell_y$.
For a pair $\{x,y\}$ which is not resolved by $f(\pi_i^h,a_r)$, there are two cases.
Case 1: $\dist(f(\pi_i^{3-h},a_r),x)=\dist(f(\pi_i^{3-h},a_r),y)=10(n+1)+1+\ell_x=2+\ell_y$.
In this case, $\dist(f(\pi_i^{3-h},a_r),x)=10(n+1)+3+\ell_x>2+\ell_y=\dist(f(\pi_i^{3-h},a_r),y)$.
Case 2: $\dist(f(\pi_i^{3-h},a_r),x)=\dist(f(\pi_i^{3-h},a_r),y)=50(n+1)+1-\ell_x=2+\ell_y$.
In this case, $\dist(f(\pi_i^{3-h},a_r),x)=50(n+1)-1-\ell_x<2+\ell_y=\dist(f(\pi_i^{3-h},a_r),y)$.
It follows that every pair $\{x,y\}$ is resolved by $f(\pi_i^h,a_r)$ or $f(\pi_i^{3-h},a_r)$.
Similarly we can show that every pair of $P(q_i^h,\text{mid}(P^{3-h}(i,j,p_i^{h})))\times P(s_{i'}^{j'},c_r)$ is resolved by $f(\pi_i^h,c_r)$ or $f(\pi_i^{3-h},c_r)$.

Then we show that every pair $\{x,y\}\in P(q_i^h,\text{mid}(P^{3-h}(i,j,p_i^{h})))\times (P(s_{i'}^{j'},b_r)\setminus \{s_{i'}^{j'}\})$ for $i,i'\in [n],j,j'\in [m],h\in \{1,2\}$ and $r\in \{1,2,3\}$ is resolved by $S'$.
We fix arbitrary integers $i,i'\in [n],j,j'\in [m],h\in \{1,2\}$ and $r\in \{1,2,3\}$.
Suppose that $x\in P(q_i^h,\text{mid}(P^{3-h}(i,j,p_i^{h})))$ and $y\in P(s_{i'}^{j'},b_r)\setminus \{s_{i'}^{j'}\}$.
We define $\ell_x$ and $\ell_y$ in a similar way to that of $\ell_x$ and $\ell_y$ in the first paragraph.
Suppose that $s_{i'}^{j'}$ resolves the pair $\{u_r^{i_r},v_r^{i_r}\}$ for some $i_r\in [n]$, i.e. $|P(a_r,u_r^{i_r})|=20(n+1)-10i_r$.
Then $\dist(f^1(u_r^{i_r},v_r^{i_r}),x)=\dist(f^2(u_r^{i_r},v_r^{i_r}),x)+1=\text{min }(1+|P(a_r,u_r^{i_r})|+|P(\pi_i^h,a_r)|+1+\ell_x,1+|P(a_r,u_r^{i_r})|+|P(\pi_i^{3-h},a_r)|+|P^{3-h}(i,j,p_i^h)|/2+|P(q_i^h,\text{mid}(P^{3-h}(i,j,p_i^{h})))|-\ell_x)=
\text{min }(2+30(n+1)-10i_r+\ell_x,70(n+1)-10i_r-\ell_x)$.
$\dist(f^1(u_r^{i_r},v_r^{i_r}),y)=\dist(f^2(u_r^{i_r},v_r^{i_r}),y)=2+|P(b_r,v_r^{i_r})|+\ell_y=20(n+1)-5i_r+\ell_y$.
Thus for a vertex pair $\{x,y\}$ which is not resolved by $f^1(u_r^{i_r},v_r^{i_r})$, $\dist(f^1(u_r^{i_r},v_r^{i_r}),x)=\dist(f^1(u_r^{i_r},v_r^{i_r}),y)=\dist(f^2(u_r^{i_r},v_r^{i_r}),y)>\dist(f^2(u_r^{i_r},v_r^{i_r}),x)$.
It follows that every pair $\{x,y\}$ is resolved by $f^1(u_r^{i_r},v_r^{i_r})$ or $f^2(u_r^{i_r},v_r^{i_r})$.
This completes the proof for the lemma.
\end{proof}

\begin{lemma} \label{LxS}
Every pair $\{x,y\}\in \bigcup_{i\in [n]}L_i\times \bigcup_{i\in [n]}S_i$ is resolved by $S'$.
\end{lemma}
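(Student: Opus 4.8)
The plan is to prove Lemma~\ref{LxS} with a \emph{single} forced vertex, either $f(s_i^j,a_r)$ or $f(s_i^j,c_r)$, so that---in contrast to the lemmas that compare two large path families---no case analysis on how the indices are related is needed.

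I fix arbitrary $i,i'\in[n]$, $j\in[m]$, $h,h'\in\{1,2\}$ and $r\in\{1,2,3\}$, take $x\in P(q_i^h,\text{mid}(P^{3-h}(i,j,p_i^{h})))\subseteq L_i^h$ and $y\in P(\pi_{i'}^{h'},a_r)\subseteq S_{i'}^{h'}$; the subcase $y\in P(\pi_{i'}^{h'},c_r)$ is handled identically with $c_r$ in place of $a_r$. Write $\ell_x=\dist(q_i^h,x)$, so $0\le\ell_x\le 30(n+1)-1$. Since the connecting vertex of the gadget $F(s_i^j,a_r)$ is the neighbor of $a_r$ on $P(s_i^j,a_r)$, walking from $f(s_i^j,a_r)$ to $a_r$ and then along $P(\pi_{i'}^{h'},a_r)$ gives at once the upper bound $\dist(f(s_i^j,a_r),y)\le 2+|P(\pi_{i'}^{h'},a_r)|=2+10(n+1)$.

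For a matching lower bound on $\dist(f(s_i^j,a_r),x)$ I argue as follows. Every walk leaving $f(s_i^j,a_r)$ first reaches the connecting vertex of its gadget, and from there it either steps to $a_r$ or is forced to run along the entire path $P(s_i^j,a_r)$ before it can leave $s_i^j$; the latter option already has length at least $|P(s_i^j,a_r)|\ge 20(n+1)+10>10(n+1)+3$. In the former option I use that $P(q_i^h,\text{mid}(P^{3-h}(i,j,p_i^{h})))$ meets the rest of $G'$ only at its two endpoints, so that $\dist(a_r,x)$ is the minimum of $\dist(a_r,q_i^h)+\ell_x$ and $\dist(a_r,\text{mid}(P^{3-h}(i,j,p_i^{h})))+(30(n+1)-1-\ell_x)$; together with $\dist(a_r,q_i^h)=|P(\pi_i^h,a_r)|+1=10(n+1)+1$ and $\dist(a_r,\text{mid}(P^{3-h}(i,j,p_i^{h})))=|P(\pi_i^{3-h},a_r)|+|P^{3-h}(i,j,p_i^h)|/2=20(n+1)$ this yields $\dist(a_r,x)\ge 10(n+1)+1$, hence $\dist(f(s_i^j,a_r),x)\ge 10(n+1)+3$. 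Combining the two bounds, $\dist(f(s_i^j,a_r),x)\ge 10(n+1)+3>10(n+1)+2\ge\dist(f(s_i^j,a_r),y)$, so $f(s_i^j,a_r)\in S'$ resolves $\{x,y\}$, which is exactly what is required.

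I expect the main obstacle to be the careful justification of the two shortest-path identities $\dist(a_r,q_i^h)=10(n+1)+1$ and $\dist(a_r,\text{mid}(P^{3-h}(i,j,p_i^{h})))=20(n+1)$: one has to rule out the competing detours---through the $\Pi$-paths, through $\pi_i^{3-h}$, back through some $s_i^{j''}$, or across the $b_r$ or $c_r$ branch---which follows since $W=\bigcup_{r}\{a_r,b_r,c_r\}$ is a feedback vertex set of $G'$ and every ``crossing'' connection in $G'$ has length at least $10(n+1)$. After that the remaining inequalities are arithmetic, and substituting $c_r$ for $a_r$ throughout disposes of the symmetric subcase.
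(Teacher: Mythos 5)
Your proposal is correct and follows essentially the same route as the paper: both resolve every pair in $L_i^h\times S_{i'}^{h'}$ with the single forced vertex $f(s_i^j,a_r)$ (resp.\ $f(s_i^j,c_r)$), establishing $\dist(f(s_i^j,a_r),y)\le 2+10(n+1)$ and $\dist(f(s_i^j,a_r),x)=\min(10(n+1)+3+\ell_x,\,50(n+1)+1-\ell_x)\ge 10(n+1)+3$ via exactly the two entry points $q_i^h$ and $\text{mid}(P^{3-h}(i,j,p_i^h))$ that you identify. The only difference is presentational: the paper writes the distances as explicit minima over the two routes rather than separating an upper bound on the $S$-side from a lower bound on the $L$-side, but the arithmetic is identical.
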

\begin{proof}
We show that every pair $\{x,y\}\in P(q_i^h,\text{mid}(P^{3-h}(i,j,p_i^{h})))\times (P(\pi_{i'}^{h'},a_r)\cup P(\pi_{i'}^{h'},c_r))$ for $i,i'\in [n],j\in [m],h,h'\in \{1,2\}$ and $r\in \{1,2,3\}$ is resolved by $S'$.
We fix arbitrary integers $i,i'\in [n],j\in [m],h,h'\in \{1,2\}$ and $r\in \{1,2,3\}$.
Suppose that $x\in P(q_i^h,\text{mid}(P^{3-h}(i,j,p_i^{h})))$ and $y\in P(\pi_{i'}^{h'},a_r)$.
For a vertex $x\in P(q_i^h,\text{mid}(P^{3-h}(i,j,p_i^{h})))$, let $P(q_i^h,x)$ be the subpath of $P(q_i^h,\text{mid}(P^{3-h}(i,j,p_i^{h})))$ from $q_i^h$ to $x$ and $|P(q_i^h,x)|=\ell_x$.
For a vertex $y\in P(\pi_{i'}^{h'},a_r)$, let $P(a_r,y)$ be the subpath of $P(\pi_{i'}^{h'},a_r)$ from $a_r$ to $y$ and $|P(a_r,y)|=\ell_y$.
Then $\dist(f(s_i^j,a_r),y)=2+\ell_y\leq 2+10(n+1)$.
$\dist(f(s_i^j,a_r),x)=\text{min }(2+|P(\pi_i^h,a_r)|+1+\ell_x,2+|P(\pi_i^{3-h},a_r)|+|P^{3-h}(i,j,p_i^h)|/2+|P(q_i^h,\text{mid}(P^{3-h}(i,j,p_i^{h})))|-\ell_x)=
\text{min }(10(n+1)+3+\ell_x,50(n+1)+1-\ell_x)\geq 3+10(n+1)> \dist(f(s_i^j,a_r),x)$.
Thus every pair $\{x,y\}$ is resolved by $f(s_i^j,a_r)$.
Similarly we can show that every pair of $P(q_i^h,\text{mid}(P^{3-h}(i,j,p_i^{h})))\times P(\pi_{i'}^{h'},c_r)$ is resolved by $f(s_i^j,c_r)$.
This completes the proof for the lemma.
\end{proof}

\begin{lemma} \label{LxR}
Every pair $\{x,y\}\in \bigcup_{i\in [n]}L_i\times \bigcup_{r\in \{1,2,3\}}R_r$ is resolved by $S'$.
\end{lemma}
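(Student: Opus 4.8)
The plan is to split the argument according to which of the six leg families composing $R_{r'}$ contains $y$, just as in the earlier $R$-lemmas. Fix $i\in[n]$, $h\in\{1,2\}$, $j\in[m]$ with $x\in P(q_i^h,\text{mid}(P^{3-h}(i,j,p_i^{h})))$, write $\ell_x=\dist(q_i^h,x)\in\{0,\dots,30(n+1)-1\}$, and fix $r'\in\{1,2,3\}$, $i'\in[n]$ so that $y$ lies on a leg attached to $u_{r'}^{i'}$ or $v_{r'}^{i'}$. The two facts that drive everything are: (i) every walk from the ``$R$-side'' of $G'$ into $L_i$ must pass through $\pi_i^1$, $\pi_i^2$ or some $s_i^{j''}$, and the routes through $s_i^{j''}$ are always much longer; (ii) $\dist(a_{r'},\pi_i^{h''})=\dist(c_{r'},\pi_i^{h''})=10(n+1)$ for every $h''\in\{1,2\}$.

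\textbf{Case 1: $y\in P(a_{r'},u_{r'}^{i'})\cup P(a_{r'},v_{r'}^{i'})\cup P(b_{r'},u_{r'}^{i'})\cup P(b_{r'},v_{r'}^{i'})$.} I claim $f^1(u_{r'}^{i'},v_{r'}^{i'})$ already resolves $\{x,y\}$. Walking from $f^1(u_{r'}^{i'},v_{r'}^{i'})$ through the connecting vertex of $F^1(u_{r'}^{i'},v_{r'}^{i'})$ and then along the relevant leg gives $\dist(f^1(u_{r'}^{i'},v_{r'}^{i'}),y)\le 2+20(n+1)$ (each of these legs has length less than $20(n+1)$). On the other hand, by (i) every shortest $f^1(u_{r'}^{i'},v_{r'}^{i'})$--$x$ path passes through $\pi_i^1$ or $\pi_i^2$, and by (ii) together with $|P(a_{r'},u_{r'}^{i'})|\le|P(c_{r'},u_{r'}^{i'})|$ the nearer of these is at distance $1+|P(a_{r'},u_{r'}^{i'})|+10(n+1)=1+30(n+1)-10i'$; as $x$ is one step further inside $L_i$, $\dist(f^1(u_{r'}^{i'},v_{r'}^{i'}),x)\ge 2+30(n+1)-10i'\ge 2+20(n+1)+10$, which strictly exceeds the bound on $\dist(f^1(u_{r'}^{i'},v_{r'}^{i'}),y)$.

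\textbf{Case 2: $y\in P(c_{r'},u_{r'}^{i'})\cup P(c_{r'},v_{r'}^{i'})$.} This is the delicate case: when $y$ is near $c_{r'}$ the vertex $f^1(u_{r'}^{i'},v_{r'}^{i'})$ lies at distance about $20(n+1)+10i'$ from $y$, which can exceed its distance to an $x$ near $q_i^h$, so a single $R$-side forced vertex no longer suffices. Instead I use the three forced vertices $f(\pi_i^h,c_{r'})$, $f(\pi_i^{3-h},c_{r'})$, $f(s_i^{j},c_{r'})$ sitting on the $c_{r'}$-legs. Each of them lies at distance exactly $2$ from $c_{r'}$, and reaching $y$ is cheapest via $c_{r'}$, so all three sit at the same distance $2+\dist(c_{r'},y)$ from $y$. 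Their distances to $x$, computed through the two competing routes $q_i^h\to\pi_i^h\to\cdots\to x$ and $c_{r'}\to\pi_i^{3-h}\to\text{mid}(P^{3-h}(i,j,p_i^h))\to\cdots\to x$ (routes through $s_i^{j''}$ being strictly longer by (i)), are
\begin{align*}
\dist(f(\pi_i^h,c_{r'}),x)&=\min\{1+10(n+1)+\ell_x,\ 1+50(n+1)-\ell_x\},\\
\dist(f(\pi_i^{3-h},c_{r'}),x)&=\min\{3+10(n+1)+\ell_x,\ 50(n+1)-1-\ell_x\},\\
\dist(f(s_i^{j},c_{r'}),x)&=\min\{3+10(n+1)+\ell_x,\ 1+50(n+1)-\ell_x\}.
\end{align*}
A short case split on $\ell_x$ at the break-points of these minima (namely $\ell_x\le 20(n+1)-2$, $\ell_x=20(n+1)-1$, $\ell_x\ge 20(n+1)$) shows the three numbers are never all equal --- the first is the outlier when $\ell_x\le 20(n+1)-2$, the third when $\ell_x=20(n+1)-1$, the second when $\ell_x\ge 20(n+1)$ --- hence they cannot all equal $2+\dist(c_{r'},y)$, so at least one of the three forced vertices resolves $\{x,y\}$.

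Cases 1 and 2 together cover all pairs in $\bigcup_{i\in[n]}L_i\times\bigcup_{r\in\{1,2,3\}}R_r$, proving the lemma. The step I expect to require real care is Case 2: both the structural point (three forced vertices equidistant to $y$ but pairwise non-equidistant to $x$) and the somewhat fiddly verification that the displayed distances are the genuine shortest-path lengths --- every alternative route either leaves $L_i$ through some $s_i^{j''}$ or makes an extra detour through $a_{r'}$ or $c_{r'}$, and thereby costs at least about $10(n+1)$ more. The remaining bookkeeping is of the same flavour as in the preceding lemmas.
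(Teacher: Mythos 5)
Your proposal is correct and follows essentially the same route as the paper: resolve pairs with $y$ on the $a_{r'}$- and $b_{r'}$-legs via $f^1(u_{r'}^{i'},v_{r'}^{i'})$ (the paper additionally falls back on $f^2(u_{r'}^{i'},v_{r'}^{i'})$ for the $b$-legs, but your uniform bound $\dist(f^1,x)\ge 2+30(n+1)-10i' > 2+20(n+1) \ge \dist(f^1,y)$ is consistent with the paper's own distance formulas and shows $f^1$ alone suffices there), and resolve pairs with $y$ on the $c_{r'}$-legs via the same triple $f(\pi_i^h,c_{r'})$, $f(\pi_i^{3-h},c_{r'})$, $f(s_i^j,c_{r'})$, which are equidistant from $y$ but never pairwise equidistant from $x$. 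Your explicit break-point analysis of the three minima matches the paper's (terser) case distinction, and the distance expressions agree with those in the paper's proof.
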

\begin{proof}
We show that every pair $\{x,y\}\in P(q_i^h,\text{mid}(P^{3-h}(i,j,p_i^{h})))\times (P(u_r^{i'},a_r)\cup P(v_r^{i'},a_r)\cup P(u_r^{i'},b_r)\cup P(v_r^{i'},b_r)\cup P(u_r^{i'},c_r)\cup P(v_r^{i'},c_r))$ for $i,i'\in [n],j\in [m],h\in \{1,2\}$ and $r\in \{1,2,3\}$ is resolved by $S'$.
We fix arbitrary integers $i,i'\in [n],j\in [m],h\in \{1,2\}$ and $r\in \{1,2,3\}$.
Suppose that $x\in P(q_i^h,\text{mid}(P^{3-h}(i,j,p_i^{h})))$, $y_1\in P(u_r^{i'},a_r)$, $y_2\in P(v_r^{i'},a_r)$,
$z_1\in P(u_r^{i'},b_r)$, $z_2\in P(v_r^{i'},b_r)$, $w_1\in P(u_r^{i'},c_r)$, $w_2\in P(v_r^{i'},c_r)$.
For a vertex $x\in P(q_i^h,\text{mid}(P^{3-h}(i,j,p_i^{h})))$, let $P(q_i^h,x)$ be the subpath of $P(q_i^h,\text{mid}(P^{3-h}(i,j,p_i^{h})))$ from $q_i^h$ to $x$ and $|P(q_i^h,x)|=\ell_x$.
For a vertex $y_1\in P(u_r^{i'},a_r)$, let $P(y_1,u_r^{i'})$ be the subpath of $P(u_r^{i'},a_r)$ from $y_1$ to $u_r^{i'}$ and let $|P(y_1,u_r^{i'})|=\ell_{y_1}$.
For a vertex $y_2\in P(v_r^{i'},a_r)$, let $P(y_2,v_r^{i'})$ be the subpath of $P(v_r^{i'},a_r)$ from $y_2$ to $v_r^{i'}$ and let $|P(y_2,v_r^{i'})|=\ell_{y_2}$.
For a vertex $z_1\in P(u_r^{i'},b_r)$, let $P(z_1,u_r^{i'})$ be the subpath of $P(u_r^{i'},b_r)$ from $z_1$ to $u_r^{i'}$ and let $|P(z_1,u_r^{i'})|=\ell_{z_1}$.
For a vertex $z_2\in P(v_r^{i'},b_r)$, let $P(z_2,v_r^{i'})$ be the subpath of $P(v_r^{i'},b_r)$ from $z_2$ to $v_r^{i'}$ and let $|P(z_2,v_r^{i'})|=\ell_{z_2}$.
For a vertex $w_1\in P(u_r^{i'},c_r)$, let $P(w_1,u_r^{i'})$ be the subpath of $P(u_r^{i'},c_r)$ from $w_1$ to $u_r^{i'}$ and let $|P(w_1,u_r^{i'})|=\ell_{w_1}$.
For a vertex $w_2\in P(v_r^{i'},c_r)$, let $P(w_2,v_r^{i'})$ be the subpath of $P(v_r^{i'},c_r)$ from $w_2$ to $v_r^{i'}$ and let $|P(w_2,v_r^{i'})|=\ell_{w_2}$.
For a pair $\{x,y_1\}$, $\dist(f^1(u_r^{i'},v_r^{i'}),y_1)=1+\ell_{y_1}$ if $y_1\neq u_r^{i'}$ and $\dist(f^1(u_r^{i'},v_r^{i'}),u_r^{i'})=2$.
$\dist(f^1(u_r^{i'},v_r^{i'}),x)=\text{min }(1+|P(a_r,u_r^{i'})|+|P(\pi_i^h,a_r)|+1+\ell_x,1+|P(a_r,u_r^{i'})|+|P(\pi_i^{3-h},a_r)|+|P^{3-h}(i,j,p_i^h)|/2+|P(q_i^h,\text{mid}(P^{3-h}(i,j,p_i^{h})))|-\ell_x)=
\text{min }(2+30(n+1)-10i'+\ell_x,70(n+1)-10i'-\ell_x)>1+20(n+1)-10i'\geq \dist(f^1(u_r^{i'},v_r^{i'}),y_1)$.
Thus every pair $\{x,y_1\}$ is resolved by $f^1(u_r^{i'},v_r^{i'})$.
Similarly, every pair $\{x,y_2\}$ is resolved by $f^1(u_r^{i'},v_r^{i'})$.
For a pair $\{x,z_1\}$, $\dist(f^2(u_r^{i'},v_r^{i'}),x)=\dist(f^1(u_r^{i'},v_r^{i'}),x)-1$.
$\dist(f^1(u_r^{i'},v_r^{i'}),z_1)=\dist(f^2(u_r^{i'},v_r^{i'}),z_1)=2+\ell_{z_1}$.
Thus for a pair $\{x,z_1\}$ that is not resolved by $f^1(u_r^{i'},v_r^{i'})$,
$\dist(f^1(u_r^{i'},v_r^{i'}),x)=\dist(f^1(u_r^{i'},v_r^{i'}),z_1)=\dist(f^2(u_r^{i'},v_r^{i'}),z_1)>\dist(f^2(u_r^{i'},v_r^{i'}),x)$.
It follows that every pair $\{x,z_1\}$ is resolved by $f^1(u_r^{i'},v_r^{i'})$ or $f^2(u_r^{i'},v_r^{i'})$.
Similarly, every pair $\{x,z_2\}$ is resolved by $f^1(u_r^{i'},v_r^{i'})$ or $f^2(u_r^{i'},v_r^{i'})$.
For a pair $\{x,w_1\}$, $\dist(f(\pi_i^h,c_r),w_1)=\dist(f(\pi_i^{3-h},c_r),w_1)=\dist(f(s_i^j,c_r),w_1)=2+|P(c_r,u_r^{i'})|-\ell_{w_1}=2+20(n+1)+10i'-\ell_{w_1}$.
$\dist(f(\pi_i^h,c_r),x)=\text{min }(|P(\pi_i^h,a_r)|+1+\ell_x,2+|P(\pi_i^{3-h},a_r)|+|P^{3-h}(i,j,p_i^h)|/2+|P(q_i^h,\text{mid}(P^{3-h}(i,j,p_i^{h})))|-\ell_x)=
\text{min }(10(n+1)+1+\ell_x,50(n+1)+1-\ell_x)$.
$\dist(f(\pi_i^{3-h},c_r),x)=\text{min }(2+|P(\pi_i^h,a_r)|+1+\ell_x,|P(\pi_i^{3-h},a_r)|+|P^{3-h}(i,j,p_i^h)|/2+|P(q_i^h,\text{mid}(P^{3-h}(i,j,p_i^{h})))|-\ell_x)=
\text{min }(10(n+1)+3+\ell_x,50(n+1)-1-\ell_x)$.
$\dist(f(s_i^j,c_r),x)=\text{min }(2+|P(\pi_i^h,a_r)|+1+\ell_x,2+|P(\pi_i^{3-h},a_r)|+|P^{3-h}(i,j,p_i^h)|/2+|P(q_i^h,\text{mid}(P^{3-h}(i,j,p_i^{h})))|-\ell_x)=
\text{min }(10(n+1)+3+\ell_x,50(n+1)+1-\ell_x)$.
For a pair $\{x,w_1\}$ which is not resolved by $f(s_i^j,c_r)$, either $f(\pi_i^h,c_r)$ or $f(\pi_i^{3-h},c_r)$ resolves it.
Thus every pair $\{x,w_1\}$ is resolved by $f(s_i^j,c_r)$, $f(\pi_i^h,c_r)$ or $f(\pi_i^{3-h},c_r)$.
Similarly, we can show that every pair $\{x,w_2\}$ is resolved by $f(s_i^j,c_r)$, $f(\pi_i^h,c_r)$ or $f(\pi_i^{3-h},c_r)$.
This completes the proof for the lemma.
\end{proof}

\begin{lemma} \label{SxH}
Every pair $\{x,y\}\in \bigcup_{i\in [n]}S_i\times \bigcup_{i\in [n]}H_i$ is resolved by $S'$.
\end{lemma}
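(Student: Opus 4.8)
The plan is to follow the same pattern as Lemmas~\ref{UxH}, \ref{HxH}, \ref{PixH}, and \ref{LxH}: cut the problem down to a small list of sub‑cases by symmetry and by comparing indices, and in each sub‑case name one or two forced vertices of $S'$ and a one‑line inequality between two affine functions of path‑offsets that shows the pair is resolved. First, I would use the $a_r\!\leftrightarrow\!c_r$ symmetry of the construction (every $a$‑path out of a $\pi_i^h$ or $s_i^j$ has a mirror $c$‑path, and the gadgets $f(\cdot,a_r),f(\cdot,c_r)$ are placed symmetrically) to reduce to the case $x\in P(\pi_i^h,a_r)$; the case $x\in P(\pi_i^h,c_r)$ is identical after exchanging $a$ and $c$. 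Writing $\ell_x=\dist(a_r,x)\in\{0,\dots,10(n+1)\}$, I will repeatedly use the identities already established in Lemma~\ref{SxS}: $\dist(f(s,a_r),x)=2+\ell_x$ for any forced vertex $f(s,a_r)$ sitting next to $a_r$, and $\dist(f(\pi_i^h,a_r),x)=\ell_x$ for $x\neq a_r$ while $\dist(f(\pi_i^h,a_r),a_r)=2$.

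Then I would split on which $H$‑path carries $y$. \emph{Case $y\in P(s_{i'}^{j'},a_{r'})\cup P(s_{i'}^{j'},c_{r'})$:} I would use the ``outer/inner'' pair of forced vertices near $a_{r'}$, namely $f(\pi_{i'}^{3-h'},a_{r'})$ and $f(s_{i'}^{j'},a_{r'})$ (replacing the latter by $f(\pi_i^h,a_r)$ when $r'=r$), exactly as in Lemma~\ref{UxH}: these two differ by $2$ in distance to $y$ but agree on $x$ only on a set of pairs where the other one then strictly separates them, because $x$ stays within distance $10(n+1)+O(1)$ of $a_r$ whereas $y$ is at distance $\ge 20(n+1)-O(n)$ from $a_{r'}$ unless $y$ itself is near $a_{r'}$, in which case the offset $2$ between the inner and outer vertex does the separation. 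As in the three cases of Lemma~\ref{PixH} one has to branch further on whether $s_i^j$ equals $s_{i'}^{j'}$ and whether $r=r'$ (and keep $f^h(i,j,a_r)$ in reserve for the corner pair $x=\pi_i^h$), but each branch is an elementary comparison of the same shape. \emph{Case $y\in P(s_{i'}^{j'},b_{r'})$:} since $b$‑paths carry no $f(\cdot,b_{r'})$ gadget, I would let $\{u_{r'}^{i_{r'}},v_{r'}^{i_{r'}}\}$ be the pair of $\mathcal P$ resolved by the unique vertex of $S'\cap X_{i'}$ (which exists because $(G,n,\chi,\mathcal P)$ is a yes‑instance), so $\dist(f^1(u_{r'}^{i_{r'}},v_{r'}^{i_{r'}}),y)=\dist(f^2(u_{r'}^{i_{r'}},v_{r'}^{i_{r'}}),y)=2+|P(b_{r'},v_{r'}^{i_{r'}})|+\dist(b_{r'},y)$, while $\dist(f^1(u_{r'}^{i_{r'}},v_{r'}^{i_{r'}}),x)=\dist(f^2(u_{r'}^{i_{r'}},v_{r'}^{i_{r'}}),x)+1$ (the shortest route to the $S$‑point $x$ enters through $a_{r'}$ or $c_{r'}$, not through $u_{r'}^{i_{r'}}$ or $v_{r'}^{i_{r'}}$), so if $f^1$ fails to resolve the pair then $f^2$ does; the remaining pairs, where $\dist(f^1,x)\neq\dist(f^1,y)$ fails, are handled by the $a_{r'}$‑gadgets using $\dist(f(s,a_{r'}),x)=2+\ell_x\le 2+10(n+1)$ against the larger distance to the $b$‑point — this is the argument of Lemmas~\ref{HxH} and \ref{LxH} verbatim.

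The main obstacle is bookkeeping rather than a new idea: in each of the roughly dozen sub‑cases one must certify that the claimed shortest‑path lengths are correct, i.e.\ rule out detours through the $\pi$‑vertices or through $W=\bigcup_r\{a_r,b_r,c_r\}$ that could collapse one of the ``large'' distances. Here the saving grace is exactly as in the earlier lemmas: all $S$‑paths have the single length $10(n+1)$ and all the auxiliary connecting paths have length $10(n+1)$ or $20(n+1)$, so any competing route is visibly longer than the intended one, and every sub‑case reduces to comparing two explicit affine functions of the offsets $\ell_x$ and $\ell_y$. I would organize the write‑up exactly as Lemma~\ref{PixH}: fix all indices $i,i',j,j',h,r,r'$, enumerate the sub‑cases on ($s_i^j$ vs.\ $s_{i'}^{j'}$) and on ($r$ vs.\ $r'$), and in each give the two distance identities and the single inequality separating the pair.
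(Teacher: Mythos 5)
Your plan is sound and, once the deferred bookkeeping is carried out, would yield a correct proof; it has the same overall architecture as the paper's argument (fix all indices, split on which sub-path of $H$ contains $y$ and on whether $r=r'$ and $s_i^j=s_{i'}^{j'}$, then compare explicit affine functions of the offsets). The difference lies in the choice of witnesses. You anchor at $a_{r'}$ and use the inner/outer pair $f(s_{i'}^{j'},a_{r'})$ and $f(\pi_{i'}^{\cdot},a_{r'})$ (which agree on $x$ and differ by $2$ on $y$), plus the pair $f^1,f^2(u_{r'}^{\cdot},v_{r'}^{\cdot})$ for $b$-paths. The paper instead anchors at $\pi_{i'}^h$: it uses the single witness $f^h(i',j,a_r)$ (resp.\ $f^h(i',j,b_r)$), which is at distance exactly $2+\dist(\pi_{i'}^h,x)\le 2+10(n+1)$ from every vertex of $S_{i'}$ but at distance at least $2+10(n+1)$ (resp.\ at least $20(n+1)$) from every vertex of $H$; each sub-case then closes with one inequality plus a patch for the single tight pair $\{a_{r'},a_r\}$ with $r\ne r'$, handled by $f(\pi_{i'}^h,a_{r'})$. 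Both routes work, but the paper's is more economical because all of $S_{i'}$ hangs off $\pi_{i'}^h$ at radius $10(n+1)$ while $H$ never comes that close to $\pi_{i'}^h$. One detail you should fix in your $b$-path case: the pair $\{u_{r'}^{i_{r'}},v_{r'}^{i_{r'}}\}$ must be the one matched to the endpoint $s_{i'}^{j'}$ of $y$'s path, i.e.\ $|P(s_{i'}^{j'},b_{r'})|=20(n+1)+5i_{r'}+1$ exactly as in Lemma~\ref{HxH}, not ``the pair resolved by the unique vertex of $S'\cap X_{i'}$''; with a mismatched index the detour $f^\eta\to a_{r'}\to s_{i'}^{j'}\to y$ can compete with the route through $b_{r'}$ and the identity $\dist(f^1,y)=\dist(f^2,y)$ is no longer automatic.
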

\begin{proof}
We show that every pair $\{x,y\}\in (P(\pi_{i'}^h,a_{r'})\cup P(\pi_{i'}^h,c_{r'}))\times (P(s_i^j,a_r)\cup P(s_i^j,b_r)\cup P(s_i^j,c_r))$ for $i,i'\in [n],j\in [m],h\in \{1,2\}$ and $r,r'\in \{1,2,3\}$ is resolved by $S'$.
We fix arbitrary integers $i,i'\in [n],j\in [m],h\in \{1,2\}$ and $r,r'\in \{1,2,3\}$.
Suppose that $x_1\in P(\pi_{i'}^h,a_{r'})$, $x_2\in P(\pi_{i'}^h,c_{r'})$, $y_1\in P(s_i^j,a_r)$, $y_2\in P(s_i^j,a_r)$ and $y_3\in P(s_i^j,c_r)$.
For a vertex $x_1\in P(\pi_{i'}^h,a_{r'})$, let $P(\pi_{i'}^h,x_1)$ be the subpath of $P(\pi_{i'}^h,a_{r'})$ from $\pi_{i'}^h$ to $x_1$ and let $|P(\pi_{i'}^h,x_1)|=\ell_{x_1}$.
For a vertex $x_2\in P(\pi_{i'}^h,c_{r'})$, let $P(\pi_{i'}^h,x_2)$ be the subpath of $P(\pi_{i'}^h,c_{r'})$ from $\pi_{i'}^h$ to $x_2$ and let $|P(\pi_{i'}^h,x_2)|=\ell_{x_2}$.
For a vertex $y_1\in P(s_i^j,a_r)$, let $P(s_i^j,y_1)$ be the subpath of $P(s_i^j,a_r)$, from $s_i^j$ to $y_1$ and let $|P(s_i^j,y_1)|=\ell_{y_1}$.
For a vertex $y_2\in P(s_i^j,b_r)\setminus \{s_i^j\}$, let $P(s_i^j,y_2)$ be the subpath of $P(s_i^j,b_r)$, from $s_i^j$ to $y_2$ and let $|P(s_i^j,y_2)|=\ell_{y_2}$.
For a vertex $y_3\in P(s_i^j,c_r)$, let $P(s_i^j,y_3)$ be the subpath of $P(s_i^j,c_r)$, from $s_i^j$ to $y_3$ and let $|P(s_i^j,y_3)|=\ell_{y_3}$.
Let $|P(s_i^j,a_r)|=20(n+1)+10\lambda$ for some $\lambda\in [n]$.
For a vertex pair $\{x_1,y_1\}$, $\dist(f^h(i',j,a_r),x_1)=2+\ell_{x_1}$.
For the distance between $f^h(i',j,a_r)$ and $y_1$, there are two cases.
Case 1: $i'=i$.
Then $\dist(f^h(i',j,a_r),y_1)=\text{min }(|P^h(i,j,a_r)|+\ell_{y_1}-1,2+|P(\pi_i^h,a_r)|+|P(s_i^j,a_r)|-\ell_{y_1})=\text{min }(20(n+1)+\ell_{y_1}-1,30(n+1)+10\lambda+2-\ell_{y_1})$ if $y_1\neq s_i^j$ and
$\dist(f^h(i',j,a_r),s_i^j)=20(n+1)+1$.
Thus $\dist(f^h(i',j,a_r),y_1)\geq 2+10(n+1)\geq \dist(f^h(i',j,a_r),x_1)$.
$f^h(i',j,a_r)$ does not resolve $\{x_1,y_1\}$ only when $x_1=a_{r'}$ and $y_1=a_r$ with $r\neq r'$.
The pair $\{a_{r'},a_r\}$ is resolved by $f(\pi_{i'}^h,a_{r'})$.
Thus in this case, every pair $\{x_1,y_1\}$ is resolved by $f^h(i',j,a_r)$ or $f(\pi_{i'}^h,a_{r'})$.
Case 2: $i'\neq i$.
Then $\dist(f^h(i',j,a_r),s_i^j)=\text{min}_{d\in \{1,2,3\}}(2+|P(\pi_{i'}^h,c_{d})|+|P(s_i^j,c_{d})|)$.
$\dist(f^h(i',j,a_r),y_1)=\text{min }(\dist(f^h(i',j,a_r),s_i^j)+\ell_{y_1},2+|P(\pi_{i'}^h,a_r)|+|P(s_i^j,a_r)|-\ell_{y_1})\geq 2+10(n+1)\geq \dist(f^h(i',j,a_r),x_1)$.
$f^h(i',j,a_r)$ does not resolve $\{x_1,y_1\}$ only when $x_1=a_{r'}$ and $y_1=a_r$ with $r\neq r'$..
The pair $\{a_{r'},a_r\}$ is resolved by $f(\pi_{i'}^h,a_{r'})$.
It follows that every pair $\{x_1,y_1\}$ is resolved by $f^h(i',j,a_r)$ or $f(\pi_{i'}^h,a_{r'})$.
In a similar way, we can show that every pair $\{x_2,y_1\}$, $\{x_1,y_3\}$ and $\{x_2,y_3\}$ are resolved by $S'$.
For a vertex pair $\{x_1,y_2\}$, $\dist(f^h(i',j,b_r),x_1)=2+\ell_{x_1}$.
For the distance between $f^h(i',j,b_r)$ and $y_2$, there are two cases.
Case 1: $i'=i$.
Then $\dist(f^h(i',j,b_r),y_2)=\text{min }(|P^h(i',j,b_r)|+\ell_{y_2}-1,2+|P(\pi_i^h,a_r)|+|P(a_r,v_r^n)|+|P(b_r,v_r^n)|+|P(s_i^j,b_r)|-\ell_{y_2})\geq 20(n+1)>\dist(f^h(i',j,b_r),x_1)$.
Case 2: $i'\neq i$.
Then $\dist(f^h(i',j,b_r),s_i^j)=\text{min}_{d\in \{1,2,3\}}(2+|P(\pi_{i'}^h,c_{d})|+|P(s_i^j,c_{d})|)$.
$\dist(f^h(i',j,b_r),y_2)=\text{min }(\dist(f^h(i',j,b_r),s_i^j)+\ell_{y_2},2+|P(\pi_{i'}^h,a_r)|+|P(a_r,v_r^n)|+|P(b_r,v_r^n)|+|P(s_i^j,b_r)|-\ell_{y_2})>20(n+1)>\dist(f^h(i',j,b_r),x_1)$.
Thus in both cases, every pair $\{x_1,y_2\}$ is resolved by $f^h(i',j,b_r)$.
In a similar way, we can show that every pair $\{x_2,y_2\}$ is resolved by $f^h(i',j,b_r)$.
This completes the proof for the lemma.
\end{proof}

\begin{lemma} \label{SxR}
Every pair $\{x,y\}\in \bigcup_{i\in [n]}S_i\times \bigcup_{r\in \{1,2,3\}}R_r$ is resolved by $S'$.
\end{lemma}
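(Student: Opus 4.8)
The plan is to resolve each pair $\{x,y\}$ with $x\in S_i$ and $y\in R_{r'}$ by exhibiting a single forced vertex that separates them, exactly in the style of the preceding completeness lemmas. Using the $a_r\leftrightarrow c_r$ symmetry of the construction, I would reduce to the case $x\in P(\pi_i^h,a_r)$ for fixed $i\in[n]$, $h\in\{1,2\}$, $r\in\{1,2,3\}$, and set $\ell_x=\dist(a_r,x)$, so that $0\le \ell_x\le 10(n+1)$. The probes I intend to use are the forced vertices $f(\pi_i^h,a_r)$, $f(s_i^j,a_r)$, $f(\pi_i^h,c_r)$ and $f(s_i^j,c_r)$ for an arbitrary $j\in[m]$; all of them lie in $F\subseteq S'$. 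The distance facts I would record first are: $\dist(f(\pi_i^h,a_r),x)=\ell_x$ when $x\neq a_r$ and $=2$ when $x=a_r$, hence always at most $10(n+1)$; $\dist(f(s_i^j,a_r),x)=2+\ell_x$; $\dist(f(\pi_i^h,c_r),x)=20(n+1)-\ell_x$; and $\dist(f(s_i^j,c_r),x)=2+20(n+1)-\ell_x$. In particular $\{f(\pi_i^h,a_r),f(s_i^j,a_r)\}$ and $\{f(\pi_i^h,c_r),f(s_i^j,c_r)\}$ are each a pair of probes whose distances to $x$ differ by exactly $2$ (away from the degenerate endpoint $x=a_r$).

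For pairs with $r'\neq r$ I expect a one-probe argument. Every walk from $f(\pi_i^h,a_r)$ into $R_{r'}$ has to reach one of the hubs $a_{r'},b_{r'},c_{r'}$, and the only way to do so in fewer than $40(n+1)$ steps is to climb to $\pi_i^h$ and descend one of the length-$10(n+1)$ paths $P(\pi_i^h,a_{r'})$ or $P(\pi_i^h,c_{r'})$; leaving through $a_r$ forces a passage through an $s$-vertex and costs at least $40(n+1)+20$. Hence $\dist(f(\pi_i^h,a_r),y)\ge 20(n+1)$ for every $y\in R_{r'}$, while $\dist(f(\pi_i^h,a_r),x)\le 10(n+1)$, so $f(\pi_i^h,a_r)$ resolves the pair.

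For $r'=r$ I would split according to which of the three spokes of $R_r$ contains $y$. If $y$ lies on $P(a_r,u_r^{i'})$ or $P(a_r,v_r^{i'})$, then both $f(\pi_i^h,a_r)$ and $f(s_i^j,a_r)$ are at distance $2+\dist(a_r,y)$ from $y$; comparing with their distances $\ell_x$ (or $2$) and $2+\ell_x$ to $x$, at least one of the two differs from $2+\dist(a_r,y)$ (if $\ell_x=2+\dist(a_r,y)$ then $2+\ell_x=4+\dist(a_r,y)$), and the pair is resolved, the coincidence $x=y=a_r$ being excluded. If $y$ lies on $P(c_r,u_r^{i'})$ or $P(c_r,v_r^{i'})$, the same mechanism applies to $\{f(\pi_i^h,c_r),f(s_i^j,c_r)\}$: both are at distance $2+\dist(c_r,y)$ from $y$, while their distances to $x$ are $20(n+1)-\ell_x$ and $2+20(n+1)-\ell_x$. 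Finally, if $y$ lies on $P(b_r,u_r^{i'})$ or $P(b_r,v_r^{i'})$, then a walk from $f(\pi_i^h,a_r)$ to $y$ must pass through $a_r$ and then through $u_r^{i'}$ or $v_r^{i'}$ (or through $b_r$, which is even farther), so $\dist(f(\pi_i^h,a_r),y)\ge 2+\min_{i'}|P(a_r,u_r^{i'})|=2+(10(n+1)+10)>10(n+1)\ge \dist(f(\pi_i^h,a_r),x)$, and $f(\pi_i^h,a_r)$ resolves the pair. The case $x\in P(\pi_i^h,c_r)$ is handled by the analogous argument with $c_r$ in place of $a_r$ (the constants in the $b_r$-spoke estimate change but the conclusion is the same).

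The main work, and the only genuinely delicate point, is to justify the quoted distances: one must check that the walks named above really are the shortest ones, i.e.\ that neither the forced-vertex gadgets hanging off the various paths, nor the paths $\Pi^h(i,j,\cdot)$, $P(\pi_i^h,\cdot)$ and the vertices $p_i^h$, provide an unforeseen shortcut between the ``$S$-side'' and the ``$R$-side''. As in the earlier completeness lemmas, this rests on the fact that $M=40(n+1)$ dwarfs every $O(n)$-length detour, so any route other than the intended one is strictly longer. What remains is the routine bookkeeping of the degenerate positions ($x$ or $y$ equal to $a_r,b_r,c_r,\pi_i^h,u_r^{i'}$ or $v_r^{i'}$, or $\ell_x\le 1$); in each of these the inequalities above still apply, and the only way $x$ and $y$ can coincide (both being $a_r$) is already ruled out because $\{x,y\}$ is a pair of distinct vertices.
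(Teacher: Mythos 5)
Your proposal is correct, but it takes a genuinely different route from the paper's proof. The paper resolves essentially every pair of $S\times R$ with the single probe $f^h(i,j,a_r)$, which sits at distance $2$ from $\pi_i^h$: every vertex of $S_i$ is within distance $2+10(n+1)$ of it, every vertex of $R$ is at distance at least $2+10(n+1)$ from it (any route into $R$ must enter through $a_{r'}$, $b_{r'}$ or $c_{r'}$), and equality on both sides occurs only for $x\in\{a_r,c_r\}$ and $y\in\{a_{r'},c_{r'}\}$; the four leftover hub pairs are then dispatched by $f(\pi_i^h,a_r)$ and $f(\pi_i^h,c_r)$. You instead work from the hub end, using $f(\pi_i^h,a_r)$, $f(s_i^j,a_r)$, $f(\pi_i^h,c_r)$, $f(s_i^j,c_r)$, splitting on $r'=r$ versus $r'\neq r$ and on which spoke of $R_{r}$ contains $y$; the same-$r$ spoke cases rely on the device of two probes whose distances to $y$ coincide while their distances to $x$ differ by exactly $2$ (a trick the paper deploys in other lemmas, e.g.\ the $H\times R$ case, but not here), and the remaining cases on magnitude separation. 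Both arguments are sound; the paper's is shorter and avoids the case analysis, while yours trades uniformity for more local, hub-centered computations. One small imprecision worth fixing: in the $r'\neq r$ case, your claim that the only sub-$40(n+1)$ route from $f(\pi_i^h,a_r)$ to a hub of $R_{r'}$ climbs to $\pi_i^h$ is not quite right --- one can also exit through $a_r$ and climb any other $\pi_{i''}^{h''}$, at cost $2+20(n+1)$ --- but this does not disturb the lower bound $\dist(f(\pi_i^h,a_r),y)\geq 20(n+1)$ on which your conclusion rests.
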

\begin{proof}
We show that every pair $\{x,y\}\in (P(\pi_i^h,a_r)\cup P(\pi_i^h,c_r))\times (P(u_{r'}^{i'},a_{r'})\cup P(v_{r'}^{i'},a_{r'})\cup P(u_{r'}^{i'},b_{r'})\cup P(v_{r'}^{i'},b_{r'})\cup P(u_{r'}^{i'},c_{r'})\cup P(v_{r'}^{i'},c_{r'}))$ for $i,i'\in [n],h\in \{1,2\}$ and $r,r'\in \{1,2,3\}$ is resolved by $S'$.
We fix arbitrary integers $i,i'\in [n],j\in [m],h\in \{1,2\}$ and $r,r'\in \{1,2,3\}$.
Suppose that $x_1\in P(\pi_i^h,a_r)$, $x_2\in P(\pi_i^h,c_r)$, $y_1\in P(u_{r'}^{i'},a_{r'})$, $y_2\in P(v_{r'}^{i'},a_{r'})$,
$z_1\in P(u_{r'}^{i'},b_{r'})$, $z_2\in P(v_{r'}^{i'},b_{r'})$, $w_1\in P(u_{r'}^{i'},c_{r'})$, $w_2\in P(v_{r'}^{i'},c_{r'})$.
For a vertex $x_1\in P(\pi_i^h,a_r)$, let $P(\pi_i^h,x_1)$ be the subpath of $P(\pi_i^h,a_r)$ from $\pi_i^h$ to $x_1$ and let $|P(\pi_i^h,x_1)|=\ell_{x_1}$.
For a vertex $x_2\in P(\pi_i^h,c_r)$, let $P(\pi_i^h,x_2)$ be the subpath of $P(\pi_i^h,c_r)$ from $\pi_i^h$ to $x_2$ and let $|P(\pi_i^h,x_2)|=\ell_{x_2}$.
Then $\dist(f^h(i,j,a_r),x_1)=2+\ell_{x_1}\leq 2+10(n+1)$ and $\dist(f^h(i,j,a_r),x_2)=2+\ell_{x_2}\leq 2+10(n+1)$.
$\dist(f^h(i,j,a_r),a_{r'})=\dist(f^h(i,j,a_r),c_{r'})=2+10(n+1)$.
$\dist(f^h(i,j,a_r),b_{r'})=2+|P(\pi_i^h,a_{r'})|+|P(a_{r'},v_{r'}^n)|+|P(b_{r'},v_{r'}^n)|>2+10(n+1)$.
We see that any shortest path from $f^h(i,j,a_r)$ to a vertex of $\{y_1,y_2,z_1,z_2,w_1,w_2\}$ goes through $a_{r'},b_{r'}$ or $c_{r'}$.
Thus the distance from $f^h(i,j,a_r)$ to any vertex of $\{y_1,y_2,z_1,z_2,w_1,w_2\}$ is at least $2+10(n+1)$ and the equality holds only when $y_1=y_2=a_{r'}$ or $w_1=w_2=c_{r'}$.
Obviously $f(\pi_i^h,a_r)$ resolves the pairs $\{a_r,a_{r'}\}$ and $\{a_r,c_{r'}\}$ and $f(\pi_i^h,c_r)$ resolves the pairs $\{c_r,a_{r'}\}$ and $\{c_r,c_{r'}\}$ with $r\neq r'$.
As a result, every vertex pair of $\bigcup_{i\in [n]}S_i\times \bigcup_{r\in \{1,2,3\}}R_r$ is resolved by $f^h(i,j,a_r)$, $f(\pi_i^h,a_r)$ or $f(\pi_i^h,c_r)$.
This completes the proof for the lemma.
\end{proof}


\begin{lemma} \label{HxR}
Every pair $\{x,y\}\in \bigcup_{i\in [n]}H_i\times \bigcup_{r\in \{1,2,3\}}R_r$ is resolved by $S'$.
\end{lemma}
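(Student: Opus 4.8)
The union $H=\bigcup_{i}H_i$ and the union $R=\bigcup_{r}R_r$ lie on the two sides of the separator $W=\bigcup_{r}\{a_r,b_r,c_r\}$, so every shortest path between an $x\in H_i$ and a $y\in R_r$ leaves the left side through a vertex of $W$; all the distance estimates below come from routing such a path to the vertex of $W$ it exits through and then continuing into $R_r$. The argument is a case analysis on which of the three branches $P(s_i^j,a_{r_1})$, $P(s_i^j,b_{r_1})$, $P(s_i^j,c_{r_1})$ contains $x$ and, in the hard branch, which of the six branches of $R_{r_2}$ contains $y$, further split according to whether $r_1=r_2$ and, when $s_i^j$ is the vertex of $X_i$ picked by $S$ or coincides with the $s$-vertex attached to $y$'s branch, into a handful of degenerate sub-cases; it follows the pattern of Lemmas~\ref{UxH} and~\ref{HxH}.

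The case $x\in P(s_i^j,a_{r_1})$ (the $c_{r_1}$-branch being symmetric) is routine. I would use the two forced vertices $f(s_i^j,a_{r_1})$ and $f(\pi_i^h,a_{r_1})$, both at distance exactly $2$ from $a_{r_1}$. Using the path-length values recorded in the construction (the $s$-paths to $W$ have length at least $20(n+1)$, the $\pi$-paths to $a_r,c_r$ have length $10(n+1)$), one checks that for $x\ne a_{r_1}$ the distance from $f(s_i^j,a_{r_1})$ to $x$ equals $|P(s_i^j,a_{r_1})|-\text{dist}(s_i^j,x)$, that the distance from $f(\pi_i^h,a_{r_1})$ to $x$ is exactly $2$ larger, and that the only short route from $f(s_i^j,a_{r_1})$ into $R$ goes through $a_{r_1}$, so that its distance to any $y\in R$ equals $2+\text{dist}(a_{r_1},y)\ge\text{dist}(f(\pi_i^h,a_{r_1}),y)$. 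Hence, if $f(s_i^j,a_{r_1})$ does not resolve $\{x,y\}$, then passing to $f(\pi_i^h,a_{r_1})$ increases the distance to $x$ by $2$ without increasing the distance to $y$, so $f(\pi_i^h,a_{r_1})$ resolves the pair. The endpoint $x=a_{r_1}\in W$ is handled directly: it is strictly nearer to $f(s_i^j,a_{r_1})$ (or, depending on $y$'s branch, to $f^1(u_{r_2}^{i'},v_{r_2}^{i'})$) than every interior vertex of $R_r$ on a branch through $a_{r_1}$, and on the other branches of $R_r$ it falls under the generic estimate.

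The hard case is $x\in P(s_i^j,b_{r_1})$, where no forced vertex is attached near the $W$-end of a $b$-path; here I would mimic Lemma~\ref{HxH}. If $y$ lies on an $a_{r_2}$- or $c_{r_2}$-branch of $R_r$, then the pair $f(s_i^j,a_{r_2}),f(\pi_i^h,a_{r_2})$ (or its $c$-analogue) works by the same ``one distance grows, the other stays'' mechanism, once one observes that $x\in P(s_i^j,b_{r_1})$ is seen by both vertices only through $s_i^j$. If $y$ lies on a $b_{r_2}$-branch, the resolver is the vertex $s_{\eta}^{\tau}\in S$ that resolves $\{u_{r_2}^{i'},v_{r_2}^{i'}\}$ together with $f^1(u_{r_2}^{i'},v_{r_2}^{i'})$ and $f^2(u_{r_2}^{i'},v_{r_2}^{i'})$ (the last two differing by exactly $1$ in their distance to the $b_{r_2}$-branch) and $f^{mid}(i,j,1)$ to dispose of the degenerate pair $\{s_i^j,y\}$; this is precisely the combination Lemma~\ref{HxH} uses for pairs of $b$-branch vertices. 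One then runs this through all three branches of $R_r$, with a few further sub-cases when $i=i'$.

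The only genuine difficulty throughout is routing bookkeeping: in each sub-case one must certify that the shortest path from the chosen resolver to $y$ really exits $W$ through the branch we used, rather than saving a bounded number of steps by looping through some $\pi_{i''}^{h''}$ (for $x$ on an $a$/$c$-branch) or through a $\Pi$-path or one of the auxiliary paths $P(q_i^h,\text{mid}(P^{3-h}(i,j',p_i^h)))$ (for $x$ on a $b$-branch). This is exactly where the chosen lengths $10(n+1)$, $20(n+1)$, $20(n+1)+10\lambda$, $20(n+1)+5\lambda+1$ and $20(n+1)\pm10i'$ of the various paths are used; once these routing facts are in place, every sub-case collapses to the short distance comparison sketched above, and the lemma follows.
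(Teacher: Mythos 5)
Your plan matches the paper's proof in its overall architecture: the same case analysis on which branch of $H_{i}$ contains $x$ and which branch of $R_{r'}$ contains $y$, the same reliance on the pair $f(s_i^j,a_r)$, $f(\pi_i^h,a_r)$ (and the $c$-analogues) sitting at distance $2$ from $a_r$ with the ``distance to $x$ differs by $2$, distance to $y$ does not increase'' mechanism, and the same use of the $f^1/f^2(u_{r'}^{i'},v_{r'}^{i'})$ gadgets for cross-$r$ pairs. The one subcase where you genuinely depart from the paper is $x$ on a $b_r$-path of $H$ against $y$ on a $b_r$-path of $R$ with $r=r'$: the paper resolves these with $f^{mid}(i,j,h)$ and $f^{ecc}(i,j,h,r)$, whereas you propose importing the $s_{\eta}^{\tau}$-plus-$f^1,f^2$ combination. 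A check of the distances suggests your choice also works (indeed $s_{\eta}^{\tau}$ fails only when the two vertices are equidistant from $b_r$, and $f^1(u_r^{i'},v_r^{i'})$ separates exactly those pairs), and it has the merit of mirroring the argument of Lemma~\ref{RxR}; but note that $f^2$ contributes nothing here, since both $f^1$ and $f^2$ reach every vertex of the relevant $b$-paths through $u_r^{i'}$ or $v_r^{i'}$ rather than through the $a$/$c$-branches where they differ. Two smaller caveats: your assertion that a vertex $x$ on a $b_{r_1}$-path is ``seen by both resolvers only through $s_i^j$'' is not literally true for $x$ near $b_{r_1}$ (the shortest route from $f(s_i^j,a_{r_2})$ may exit through $a_{r_2}$ and re-enter through $b_{r_1}$), though the conclusion survives because both candidate routes exceed $20(n+1)$ while the distance to $y$ on an $a_{r_2}$-branch stays below it; and, as you acknowledge, the proposal defers precisely the routing verification that constitutes the bulk of the paper's proof, so it is a correct plan rather than a complete argument.
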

\begin{proof}
First we show that every pair $\{x,y\}\in P(s_i^j,a_r)\times (P(u_{r'}^{i'},a_{r'})\cup P(v_{r'}^{i'},a_{r'})\cup P(u_{r'}^{i'},b_{r'})\cup P(v_{r'}^{i'},b_{r'})\cup P(u_{r'}^{i'},c_{r'})\cup P(v_{r'}^{i'},c_{r'}))$ for $i,i'\in [n], j\in [m]$ and $r,r'\in \{1,2,3\}$ is resolved by $S'$.
We fix arbitrary integers $i,i'\in [n], h\in\{1,2\}, j\in [m]$ and $r,r'\in \{1,2,3\}$.
Suppose that $x\in P(s_i^j,a_r)$, $y_1\in P(u_{r'}^{i'},a_{r'})$, $y_2\in P(v_{r'}^{i'},a_{r'})$,
$z_1\in P(u_{r'}^{i'},b_{r'})$, $z_2\in P(v_{r'}^{i'},b_{r'})$, $w_1\in P(u_{r'}^{i'},c_{r'})$, $w_2\in P(v_{r'}^{i'},c_{r'})$.
For a vertex $x\in P(s_i^j,a_r)$, let $P(x,a_r)$ be the subpath of $P(s_i^j,a_r)$ from $a_r$ to $x$ and let $|P(x,a_r)|=\ell_x$.
For a vertex $y_1\in P(u_{r'}^{i'},a_{r'})$, let $P(y_1,u_{r'}^{i'})$ be the subpath of $P(u_{r'}^{i'},a_{r'})$ from $y_1$ to $u_{r'}^{i'}$ and let $|P(y_1,u_{r'}^{i'})|=\ell_{y_1}$.
For a vertex $y_2\in P(v_{r'}^{i'},a_{r'})$, let $P(y_2,v_{r'}^{i'})$ be the subpath of $P(v_{r'}^{i'},a_{r'})$ from $y_2$ to $v_{r'}^{i'}$ and let $|P(y_2,v_{r'}^{i'})|=\ell_{y_2}$.
For a vertex $z_1\in P(u_{r'}^{i'},b_{r'})$, let $P(z_1,u_{r'}^{i'})$ be the subpath of $P(u_{r'}^{i'},b_{r'})$ from $z_1$ to $u_{r'}^{i'}$ and let $|P(z_1,u_{r'}^{i'})|=\ell_{z_1}$.
For a vertex $z_2\in P(v_{r'}^{i'},b_{r'})$, let $P(z_2,v_{r'}^{i'})$ be the subpath of $P(v_{r'}^{i'},b_{r'})$ from $z_2$ to $v_{r'}^{i'}$ and let $|P(z_2,v_{r'}^{i'})|=\ell_{z_2}$.
For a vertex $w_1\in P(u_{r'}^{i'},c_{r'})$, let $P(w_1,u_{r'}^{i'})$ be the subpath of $P(u_{r'}^{i'},c_{r'})$ from $w_1$ to $u_{r'}^{i'}$ and let $|P(w_1,u_{r'}^{i'})|=\ell_{w_1}$.
For a vertex $w_2\in P(v_{r'}^{i'},c_{r'})$, let $P(w_2,v_{r'}^{i'})$ be the subpath of $P(v_{r'}^{i'},c_{r'})$ from $w_2$ to $v_{r'}^{i'}$ and let $|P(w_2,v_{r'}^{i'})|=\ell_{w_2}$.
Then $\dist(f(s_i^j,a_r),x)=\dist(f(\pi_i^h,a_r),x)-2=\ell_x$ if $x\neq a_r$ and $\dist(f(s_i^j,a_r),a_r)=\dist(f(\pi_i^h,a_r),a_r)=2$.
For a vertex pair $\{x,y_1\}$, there are two cases.
Case 1: $r'=r$.
$\dist(f(s_i^j,a_r),y_1)=\dist(f(\pi_i^h,a_r),y_1)=2+|P(u_{r'}^{i'},a_{r'})|-\ell_{y_1}$.
For a vertex pair $\{x,y_1\}$ that is not resolved by $f(s_i^j,a_r)$, $\dist(f(s_i^j,a_r),x)=\dist(f(s_i^j,a_r),y_1)=\dist(f(\pi_i^h,a_r),y_1)<\dist(f(\pi_i^h,a_r),x)$.
Thus in this case, every pair $\{x,y_1\}$ is resolved by $f(s_i^j,a_r)$ or $f(\pi_i^h,a_r)$.
Case 2: $r'\neq r$.
$\dist(f(s_i^j,a_r),y_1)=\dist(f(\pi_i^h,a_r),y_1)+2=2+|P(\pi_i^h,a_r)|+|P(\pi_i^h,a_{r'})|+|P(u_{r'}^{i'},a_{r'})|-\ell_{y_1}$.
For a vertex pair $\{x,y_1\}$ that is not resolved by $f(s_i^j,a_r)$, $\dist(f(\pi_i^h,a_r),x)>\dist(f(s_i^j,a_r),x)=\dist(f(s_i^j,a_r),y_1)>\dist(f(\pi_i^h,a_r),y_1)$.
Thus in this case, every pair $\{x,y_1\}$ is resolved by $f(s_i^j,a_r)$ or $f(\pi_i^h,a_r)$.
Similarly, every pair $\{x,y_2\}$ is resolved by $f(s_i^j,a_r)$ or $f(\pi_i^h,a_r)$.
For a vertex pair $\{x,z_1\}$, there are two cases.
Case 1: $r'=r$.
$\dist(f(s_i^j,a_r),z_1)=\dist(f(\pi_i^h,a_r),z_1)=\text{min }(2+|P(u_{r'}^{i'},a_{r'})|+\ell_{z_1},2+|P(u_{r'}^{n},a_{r'})|+|P(u_{r'}^{n},b_{r'})|+|P(u_{r'}^{i'},b_{r'})|-\ell_{z_1})$.
For a vertex pair $\{x,z_1\}$ that is not resolved by $f(s_i^j,a_r)$, $\dist(f(s_i^j,a_r),x)=\dist(f(s_i^j,a_r),z_1)=\dist(f(\pi_i^h,a_r),z_1)<\dist(f(\pi_i^h,a_r),x)$.
Thus in this case, every pair $\{x,z_1\}$ is resolved by $f(s_i^j,a_r)$ or $f(\pi_i^h,a_r)$.
Case 2: $r'\neq r$.
$\dist(f(\pi_i^h,a_r),b_{r'})=\text{min}_{\alpha\in [m]}(2+|P(s_i^{\alpha},a_r)|+|P(s_i^{\alpha},b_{r'})|)$.
Then $\dist(f(\pi_i^h,a_r),z_1)=\text{min }(|P(\pi_i^h,a_r)|+|P(\pi_i^h,a_{r'})|+|P(u_{r'}^{i'},a_{r'})|+\ell_{z_1},\dist(f(\pi_i^h,a_r),b_{r'})+|P(u_{r'}^{i'},b_{r'})|-\ell_{z_1})>30(n+1)>\dist(f(\pi_i^h,a_r),x)$.
Thus in this case, every pair $\{x,z_1\}$ is resolved by $f(\pi_i^h,a_r)$.
Similarly, every pair $\{x,z_2\}$ is resolved by $f(s_i^j,a_r)$ or $f(\pi_i^h,a_r)$.
For a vertex pair $\{x,w_1\}$, there are two cases.
Case 1: $r'=r$.
$\dist(f(s_i^j,a_r),w_1)=\text{min }(2+|P(u_r^{i'},a_{r'})|-2+\ell_{w_1},2+|P(\pi_i^h,a_r)|+|P(\pi_i^h,c_r)|+|P(u_r^{i'},c_r)|-\ell_{w_1})$ if $\ell_{w_1}\geq 2$.
$\dist(f(s_i^j,a_r),w_1)=2+|P(u_r^{i'},a_{r'})|+\ell_{w_1}$ if $\ell_{w_1}<2$.
$\dist(f(\pi_i^h,a_r),w_1)=\text{min }(2+|P(u_r^{i'},a_{r'})|-2+\ell_{w_1},|P(\pi_i^h,a_r)|+|P(\pi_i^h,c_r)|+|P(u_r^{i'},c_r)|-\ell_{w_1})$ if $\ell_{w_1}\geq 2$.
$\dist(f(\pi_i^h,a_r),w_1)=2+|P(u_r^{i'},a_{r'})|+\ell_{w_1}$ if $\ell_{w_1}<2$.
It follows that $\dist(f(s_i^j,a_r),w_1)\geq \dist(f(\pi_i^h,a_r),w_1)$.
For a pair $\{x,w_1\}$ that is not resolved by $f(s_i^j,a_r)$, $\dist(f(\pi_i^h,a_r),x)>\dist(f(s_i^j,a_r),x)=\dist(f(s_i^j,a_r),w_1)\geq \dist(f(\pi_i^h,a_r),w_1)$.
Thus in this case, every pair $\{x,w_1\}$ is resolved by $f(s_i^j,a_r)$ or $f(\pi_i^h,a_r)$.
Case 2: $r'\neq r$.
$\dist(f(s_i^j,a_r),w_1)=\text{min }(2+|P(\pi_i^h,a_r)|+|P(\pi_i^h,a_{r'})|+|P(u_{r'}^{i'},a_{r'})|-2+\ell_{w_1},2+|P(\pi_i^h,a_r)|+|P(\pi_i^h,c_{r'})|+|P(u_{r'}^{i'},c_{r'})|-\ell_{w_1})$ if $\ell_{w_1}\geq 2$.
$\dist(f(s_i^j,a_r),w_1)=2+|P(\pi_i^h,a_r)|+|P(\pi_i^h,a_{r'})|+|P(u_{r'}^{i'},a_{r'})|+\ell_{w_1}$ if $\ell_{w_1}<2$.
$\dist(f(\pi_i^h,a_r),w_1)=\text{min }(|P(\pi_i^h,a_r)|+|P(\pi_i^h,a_{r'})|+|P(u_{r'}^{i'},a_{r'})|+\ell_{w_1}-2,|P(\pi_i^h,a_r)|+|P(\pi_i^h,c_{r'})|+|P(u_{r'}^{i'},c_{r'})|-\ell_{w_1})$ if $\ell_{w_1}\geq 2$.
$\dist(f(\pi_i^h,a_r),w_1)=|P(\pi_i^h,a_r)|+|P(\pi_i^h,a_{r'})|+|P(u_{r'}^{i'},a_{r'})|+\ell_{w_1}$ if $\ell_{w_1}<2$.
It follows that $\dist(f(s_i^j,a_r),w_1)>\dist(f(\pi_i^h,a_r),w_1)$.
For a pair $\{x,w_1\}$ that is not resolved by $f(s_i^j,a_r)$, $\dist(f(\pi_i^h,a_r),x)>\dist(f(s_i^j,a_r),x)=\dist(f(s_i^j,a_r),w_1)>\dist(f(\pi_i^h,a_r),w_1)$.
Thus in this case, every pair $\{x,w_1\}$ is resolved by $f(s_i^j,a_r)$ or $f(\pi_i^h,a_r)$.
Similarly, every pair $\{x,w_2\}$ is resolved by $f(s_i^j,a_r)$ or $f(\pi_i^h,a_r)$.

Then we show that every pair $\{x,y\}\in P(s_i^j,c_r)\times (P(u_{r'}^{i'},a_{r'})\cup P(v_{r'}^{i'},a_{r'})\cup P(u_{r'}^{i'},b_{r'})\cup P(v_{r'}^{i'},b_{r'})\cup P(u_{r'}^{i'},c_{r'})\cup P(v_{r'}^{i'},c_{r'}))$ for $i,i'\in [n], j\in [m]$ and $r,r'\in \{1,2,3\}$ is resolved by $S'$.
We fix arbitrary integers $i,i'\in [n], h\in\{1,2\}, j\in [m]$ and $r,r'\in \{1,2,3\}$.
Suppose that $x\in P(s_i^j,c_r)$, $y_1\in P(u_{r'}^{i'},a_{r'})$, $y_2\in P(v_{r'}^{i'},a_{r'})$,
$z_1\in P(u_{r'}^{i'},b_{r'})$, $z_2\in P(v_{r'}^{i'},b_{r'})$, $w_1\in P(u_{r'}^{i'},c_{r'})$, $w_2\in P(v_{r'}^{i'},c_{r'})$.
We define $\ell_x,\ell_{y_1},\ell_{y_2},\ell_{z_1},\ell_{z_2},\ell_{w_1}$ and $\ell_{w_2}$ in a similar way to that of $\ell_x,\ell_{y_1}$ in the first paragraph.
Then $\dist(f(s_i^j,c_r),x)=\dist(f(\pi_i^h,c_r),x)-2=\ell_x$ if $x\neq c_r$ and $\dist(f(s_i^j,c_r),c_r)=\dist(f(\pi_i^h,c_r),c_r)=2$.
For a pair $\{x,y_1\}$, there are two cases.
Case 1: $r'=r$.
$\dist(f(s_i^j,c_r),y_1)=\text{min }(2+|P(u_r^{i'},c_{r'})|-2+\ell_{y_1},2+|P(\pi_i^h,c_r)|+|P(\pi_i^h,a_r)|+|P(u_r^{i'},a_r)|-\ell_{y_1})$ if $\ell_{y_1}\geq 2$.
$\dist(f(s_i^j,c_r),y_1)=2+|P(u_r^{i'},c_{r'})|+\ell_{y_1}$ if $\ell_{y_1}<2$.
$\dist(f(\pi_i^h,c_r),y_1)=\text{min }(2+|P(u_r^{i'},c_{r'})|-2+\ell_{y_1},|P(\pi_i^h,c_r)|+|P(\pi_i^h,a_r)|+|P(u_r^{i'},a_r)|-\ell_{y_1})$ if $\ell_{y_1}\geq 2$.
$\dist(f(\pi_i^h,c_r),y_1)=2+|P(u_r^{i'},c_{r'})|+\ell_{y_1}$ if $\ell_{y_1}<2$.
It follows that $\dist(f(s_i^j,c_r),y_1)\geq \dist(f(\pi_i^h,c_r),y_1)$.
For a pair $\{x,y_1\}$ that is not resolved by $f(s_i^j,c_r)$, $\dist(f(\pi_i^h,c_r),x)>\dist(f(s_i^j,c_r),x)=\dist(f(s_i^j,c_r),y_1)\geq \dist(f(\pi_i^h,c_r),y_1)$.
Thus in this case, every pair $\{x,y_1\}$ is resolved by $f(s_i^j,c_r)$ or $f(\pi_i^h,c_r)$.
Case 2: $r'\neq r$.
$\dist(f(s_i^j,c_r),y_1)=\dist(f(\pi_i^h,c_r),y_1)+2=2+|P(\pi_i^h,c_r)|+|P(\pi_i^h,a_{r'})|+|P(u_{r'}^{i'},a_{r'})|-\ell_{y_1}$.
It follows that $\dist(f(s_i^j,c_r),y_1)>\dist(f(\pi_i^h,c_r),y_1)$.
For a pair $\{x,y_1\}$ that is not resolved by $f(s_i^j,c_r)$, $\dist(f(\pi_i^h,c_r),x)>\dist(f(s_i^j,c_r),x)=\dist(f(s_i^j,c_r),y_1)>\dist(f(\pi_i^h,c_r),y_1)$.
Thus in this case, every pair $\{x,y_1\}$ is resolved by $f(s_i^j,c_r)$ or $f(\pi_i^h,c_r)$.
Similarly, every pair $\{x,y_2\}$ is resolved by $f(s_i^j,c_r)$ or $f(\pi_i^h,c_r)$.
For a pair $\{x,z_1\}$, there are two cases.
Case 1: $r'=r$.
Let $s_i^{j^*}$ be a vertex which resolves the pair $\{u_r^n,v_r^n\}$.
Then $|P(s_i^{j^*},c_r)|+|P(s_i^{j^*},b_r)|=40(n+1)-5n+1$.
$\dist(f(\pi_i^h,c_r),b_r)=2+|P(s_i^{j^*},c_r)|+|P(s_i^{j^*},b_r)|=40(n+1)-5n+3$.
$\dist(f(\pi_i^h,c_r),z_1)=\text{min }(2+|P(u_r^{i'},c_{r'})|+\ell_{z_1},\dist(f(\pi_i^h,c_r),b_r)+|P(u_r^{i'},b_r)|-\ell_{z_1})>20(n+1)>\dist(f(\pi_i^h,c_r),x)$.
Thus in this case, every pair $\{x,z_1\}$ is resolved by $f(\pi_i^h,c_r)$.
Case 2: $r'\neq r$.
$\dist(f(\pi_i^h,c_r),b_{r'})=\text{min}_{\alpha\in [m]}(2+|P(s_i^{\alpha},c_r)|+|P(s_i^{\alpha},b_{r'})|)$.
Then $\dist(f(\pi_i^h,c_r),z_1)=\text{min }(|P(\pi_i^h,c_r)|+|P(\pi_i^h,a_{r'})|+|P(u_{r'}^{i'},a_{r'})|+\ell_{z_1},\dist(f(\pi_i^h,c_r),b_{r'})+|P(u_{r'}^{i'},b_{r'})|-\ell_{z_1})>20(n+1)>\dist(f(\pi_i^h,c_r),x)$.
Thus in this case, every pair $\{x,z_1\}$ is resolved by $f(\pi_i^h,c_r)$.
Similarly, every pair $\{x,z_2\}$ is resolved by $f(\pi_i^h,c_r)$.
For a pair $\{x,w_1\}$, there are two cases.
Case 1: $r'=r$.
$\dist(f(s_i^j,c_r),w_1)=\dist(f(\pi_i^h,c_r),w_1)=2+|P(u_{r'}^{i'},c_{r'})|-\ell_{w_1}$.
For a vertex pair $\{x,w_1\}$ that is not resolved by $f(s_i^j,c_r)$, $\dist(f(s_i^j,c_r),x)=\dist(f(s_i^j,c_r),w_1)=\dist(f(\pi_i^h,c_r),w_1)<\dist(f(\pi_i^h,c_r),x)$.
Thus in this case, every pair $\{x,w_1\}$ is resolved by $f(s_i^j,c_r)$ or $f(\pi_i^h,c_r)$.
Case 2: $r'\neq r$.
$\dist(f(\pi_i^h,c_r),w_1)=\text{min }(|P(\pi_i^h,c_r)|+|P(\pi_i^h,c_{r'})|+|P(u_{r'}^{i'},c_{r'})|-\ell_{w_1},|P(\pi_i^h,c_r)|+|P(\pi_i^h,a_{r'})|+|P(u_{r'}^{i'},a_{r'})|+\ell_{w_1}-2)$ if $\ell_{w_1}\geq 2$.
$\dist(f(\pi_i^h,c_r),w_1)=|P(\pi_i^h,c_r)|+|P(\pi_i^h,a_{r'})|+|P(u_{r'}^{i'},a_{r'})|+\ell_{w_1}$ if $\ell_{w_1}<2$.
Thus $\dist(f(\pi_i^h,c_r),w_1)\geq 20(n+1)>\dist(f(\pi_i^h,c_r),x)$.
Similarly, every pair $\{x,w_2\}$ is resolved by $f(s_i^j,c_r)$ or $f(\pi_i^h,c_r)$.

Finally we show that every pair $\{x,y\}\in P(s_i^j,b_r)\times (P(u_{r'}^{i'},a_{r'})\cup P(v_{r'}^{i'},a_{r'})\cup P(u_{r'}^{i'},b_{r'})\cup P(v_{r'}^{i'},b_{r'})\cup P(u_{r'}^{i'},c_{r'})\cup P(v_{r'}^{i'},c_{r'}))$ for $i,i'\in [n], j\in [m]$ and $r,r'\in \{1,2,3\}$ is resolved by $S'$.
We fix arbitrary integers $i,i'\in [n], h\in\{1,2\},j\in [m]$ and $r,r'\in \{1,2,3\}$.
Suppose that $x\in P(s_i^j,b_r)$, $y_1\in P(u_{r'}^{i'},a_{r'})$, $y_2\in P(v_{r'}^{i'},a_{r'})$,
$z_1\in P(u_{r'}^{i'},b_{r'})$, $z_2\in P(v_{r'}^{i'},b_{r'})$, $w_1\in P(u_{r'}^{i'},c_{r'})$, $w_2\in P(v_{r'}^{i'},c_{r'})$.
We define $\ell_x,\ell_{y_1},\ell_{y_2},\ell_{z_1},\ell_{z_2},\ell_{w_1}$ and $\ell_{w_2}$ in a similar way to that of $\ell_x,\ell_{y_1}$ in the first paragraph.
For a pair $\{x,y_1\}$, there are two cases.
Case 1: $r=r'$.
$\dist(f(\pi_i^h,a_r),x)=\text{min }(2+|P(s_i^j,a_r)|+|P(s_i^j,b_r)|-\ell_x,2+|P(a_r,u_r^n)|+|P(b_r,u_r^n)|+\ell_x)>20(n+1)$.
$\dist(f(\pi_i^h,a_r),y_1)=|P(a_r,u_r^{i'})|-\ell_{y_1}<20(n+1)$.
Thus in this case, every pair $\{x,y_1\}$ is resolved by $f(\pi_i^h,a_r)$.
Similarly, every pair $\{x,y_2\}$ is resolved by $f(\pi_i^h,a_r)$.
Case 2: $r\neq r'$.
$\dist(f^1(u_{r'}^{i'},v_{r'}^{i'}),y_1)=1+\ell_{y_1}$ if $y_1\neq u_{r'}^{i'}$ and $\dist(f^1(u_{r'}^{i'},v_{r'}^{i'}),u_{r'}^{i'})=2$.
$\dist(f^1(u_{r'}^{i'},v_{r'}^{i'}),x)=\text{min }(\dist(f^1(u_{r'}^{i'},v_{r'}^{i'}),s_i^j)+|P(s_i^j,b_r)|-\ell_{x},\dist(f^1(u_{r'}^{i'},v_{r'}^{i'}),b_r)+\ell_{x})>20(n+1)>\dist(f^1(u_{r'}^{i'},v_{r'}^{i'}),y_1)$.
Thus in this case, every pair $\{x,y_1\}$ is resolved by $f^1(u_{r'}^{i'},v_{r'}^{i'})$.
Similarly, every pair $\{x,y_2\}$ is resolved by $f^1(u_{r'}^{i'},v_{r'}^{i'})$.
For a pair $\{x,z_1\}$, there are two cases.
Suppose that $P(s_i^j,b_r)=20(n+1)+5\lambda+1$ for some $\lambda\in [n]$.
Case 1: $r=r'$.
$\dist(f^{mid}(i,j,h),x)=|P^h(i,j,p_i^{3-h})|/2+2+|P(s_i^j,b_r)|-\ell_x=30(n+1)+5\lambda+3-\ell_x$.
$\dist(f^{mid}(i,j,h),z_1)=\text{min }(2+|P^h(i,j,p_i^{3-h})|/2+|P(s_i^j,b_r)|+|P(b_r,u_r^{i'})|-\ell_{z_1},1+|P^h(i,j,p_i^{3-h})|/2+|P(\pi_i^h,a_r)|+|P(a_r,u_r^{i'})|+\ell_{z_1})=\text{min }(50(n+1)+5\lambda+2-5i'-\ell_{z_1},40(n+1)+1-10i'+\ell_{z_1})$.
$\dist(f^{ecc}(i,j,h,r),x)=|P^h(i,j,a_r)|/2+1+|P(s_i^j,b_r)|-\ell_x=\dist(f^{mid}(i,j,h),x)-1$.
$\dist(f^{ecc}(i,j,h,r),z_1)=\text{min }(1+|P^h(i,j,a_r)|/2+|P(s_i^j,b_r)|+|P(b_r,u_r^{i'})|-\ell_{z_1},2+|P^h(i,j,a_r)|/2+|P(\pi_i^h,a_r)|+|P(a_r,u_r^{i'})|+\ell_{z_1})=\text{min }(50(n+1)+5\lambda+1-5i'-\ell_{z_1},40(n+1)+2-10i'+\ell_{z_1})$.
For a vertex pair $\{x,z_1\}$ that is not resolved by $f^{mid}(i,j,h)$,
$\dist(f^{mid}(i,j,h),x)=\dist(f^{mid}(i,j,h),z_1)=1+|P^h(i,j,p_i^{3-h})|/2+|P(\pi_i^h,a_r)|+|P(a_r,u_r^{i'})|+\ell_{z_1}>\dist(f^{ecc}(i,j,h,r),x)$.
If $\dist(f^{ecc}(i,j,h,r),z_1)=1+|P^h(i,j,a_r)|/2+|P(s_i^j,b_r)|+|P(b_r,u_r^{i'})|-\ell_{z_1}$, then obviously $f^{ecc}(i,j,h,r)$ resolves this pair.
Otherwise, $\dist(f^{ecc}(i,j,h,r),z_1)=40(n+1)+2-10i'+\ell_{z_1}>\dist(f^{mid}(i,j,h),z_1)>\dist(f^{ecc}(i,j,h,r),x)$.
It follows that every pair $\{x,z_1\}$ is resolved by $f^{mid}(i,j,h)$ or $f^{ecc}(i,j,h,r)$.
Similarly, every pair $\{x,z_2\}$ is resolved by $f^{mid}(i,j,h)$ or $f^{ecc}(i,j,h,r)$.
Case 2: $r\neq r'$.
$\dist(f^1(u_{r'}^{i'},v_{r'}^{i'}),z_1)=2+\ell_{z_1}<20(n+1)$.
$\dist(f^1(u_{r'}^{i'},v_{r'}^{i'}),x)=\text{min }(\dist(f^1(u_{r'}^{i'},v_{r'}^{i'}),s_i^j)+|P(s_i^j,b_r)|-\ell_{x},\dist(f^1(u_{r'}^{i'},v_{r'}^{i'}),b_r)+\ell_{x})>20(n+1)>\dist(f^1(u_{r'}^{i'},v_{r'}^{i'}),z_1)$.
Thus in this case, every pair $\{x,z_1\}$ is resolved by $f^1(u_{r'}^{i'},v_{r'}^{i'})$.
Similarly, every pair $\{x,z_2\}$ is resolved by $f^1(u_{r'}^{i'},v_{r'}^{i'})$.
For a pair $\{x,w_1\}$, there are two cases.
Case 1: $r=r'$.
$\dist(f(\pi_i^h,c_r),b_r)=\text{min}_{\alpha\in [m]}(2+|P(s_i^{\alpha},c_r)|+|P(s_i^{\alpha},b_r)|)=3+40(n+1)-5n>30(n+1)$.
$\dist(f(\pi_i^h,c_r),x)=\text{min }(2+|P(s_i^j,c_r)|+|P(s_i^j,b_r)|-\ell_x,\dist(f(\pi_i^h,c_r),b_r)+\ell_x)$.
$\dist(f(s_i^j,c_r),x)=\text{min }(|P(s_i^j,c_r)|+|P(s_i^j,b_r)|-\ell_x,\dist(f(\pi_i^h,c_r),b_r)+\ell_x)$.
$\dist(f(\pi_i^h,c_r),w_1)=\dist(f(s_i^j,c_r),w_1)=2+|P(c_r,u_{r'}^{i'})|-\ell_{w_1}=2+20(n+1)+10i'-\ell_{w_1}<30(n+1)$.
For a pair $\{x,w_1\}$ that is not resolved by $f(\pi_i^h,c_r)$, $\dist(f(\pi_i^h,c_r),x)=\dist(f(\pi_i^h,c_r),w_1)=2+|P(s_i^j,c_r)|+|P(s_i^j,b_r)|-\ell_x=\dist(f(s_i^j,c_r),w_1)>\dist(f(s_i^j,c_r),x)=|P(s_i^j,c_r)|+|P(s_i^j,b_r)|-\ell_x$.
Thus in this case, every pair $\{x,w_1\}$ is resolved by $f(\pi_i^h,c_r)$ or $f(s_i^j,c_r)$.
Case 2: $r\neq r'$.
$\dist(f^1(u_{r'}^{i'},v_{r'}^{i'}),w_1)=1+\ell_{w_1}$ if $w_1\neq u_{r'}^{i'}$ and $\dist(f^1(u_{r'}^{i'},v_{r'}^{i'}),u_{r'}^{i'})=2$.
$\dist(f^1(u_{r'}^{i'},v_{r'}^{i'}),x)=\text{min }(\dist(f^1(u_{r'}^{i'},v_{r'}^{i'}),s_i^j)+|P(s_i^j,b_r)|-\ell_{x},\dist(f^1(u_{r'}^{i'},v_{r'}^{i'}),b_r)+\ell_{x})>30(n+1)>\dist(f^1(u_{r'}^{i'},v_{r'}^{i'}),w_1)$.
Thus in this case, every pair $\{x,w_1\}$ is resolved by $f^1(u_{r'}^{i'},v_{r'}^{i'})$.
Similarly, every pair $\{x,w_2\}$ is resolved by $f^1(u_{r'}^{i'},v_{r'}^{i'})$.
This completes the proof for the lemma.
\end{proof}

\begin{lemma}  \label{FxG}
For any vertex $v_f\in {\cal F}$, every vertex pair $\{x,y\}\in \{v_f\}\times V(G')\setminus \{v_f\}$ is resolved by $S'$.
\end{lemma}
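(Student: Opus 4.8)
The plan is to avoid distance computations entirely and argue purely from the local structure of a forced vertex gadget. Fix $v_f\in{\cal F}$ and let $\{f,f',c\}$ be the forced vertex gadget (a triangle) containing $v_f$, where $f,f'$ are its two false twins (each of degree exactly $2$ in $G'$) and $c$ is its connecting vertex. Without loss of generality the chosen forced vertex of this gadget is $f$, so $f\in F\subseteq S'$, whereas $f'\notin S'$ and $c\notin S'$ (a connecting vertex is never one of the $s_i^j$ and is never a chosen forced vertex). The single structural fact used throughout is that $N_{G'}(f)=\{f',c\}$ and $N_{G'}(f')=\{f,c\}$: any walk leaving $f$ or $f'$ and reaching a vertex outside $\{f,f'\}$ must pass through $c$, whence
\[
\dist(f,w)=\dist(f',w)=1+\dist(c,w)\qquad\text{for every }w\in V(G')\setminus\{f,f'\}.
\]

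I would then split on which of the three gadget vertices $v_f$ is. If $v_f=f$, then $f\in S'$ already resolves every pair $\{f,y\}$ with $y\ne f$, since $\dist(f,f)=0<1\le\dist(f,y)$. If $v_f=c$, then the pair $\{c,f\}$ is resolved by $f\in S'$ (distances $1$ and $0$); the pair $\{c,f'\}$ is deferred to the next paragraph; and for $y\notin\{f,f'\}$ the displayed identity gives $\dist(f,y)=1+\dist(c,y)\ge 2>1=\dist(f,c)$ (using $\dist(c,y)\ge1$ as $y\ne c$), so $f\in S'$ resolves $\{c,y\}$.

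Finally, if $v_f=f'$: the pair $\{f',f\}$ is resolved by $f\in S'$; for any $y\notin\{f,c\}$ we have $y\notin N_{G'}[f]$, hence $\dist(f,y)\ge 2>1=\dist(f,f')$, so $f$ resolves $\{f',y\}$; and the remaining pair $\{f',c\}$ — which is also the leftover case above — is resolved by any $s\in S\subseteq S'$ (such $s$ exists because ${\cal F}\neq\emptyset$ forces $n\ge1$, and $S\cap{\cal F}=\emptyset$ guarantees $s\notin\{f,f',c\}$), since $\dist(f',s)=1+\dist(c,s)\ne\dist(c,s)$. This exhausts all cases, so the lemma follows.

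I do not expect a genuine obstacle here: the only points requiring a little care are the bookkeeping that exactly one of $f,f'$ lies in $S'$ while $c\notin S'$, and the single pair $\{f',c\}$, which needs a resolver other than $f$ — the two twins being equidistant from everything outside the gadget — for which the inclusion $S\subseteq S'$ together with $S\cap{\cal F}=\emptyset$ is precisely what is needed.
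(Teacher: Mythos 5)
Your proof is correct and follows essentially the same route as the paper's: the chosen false twin $f\in S'$ resolves every pair involving a gadget vertex except the single pair $\{f',c\}$ (unchosen twin versus connecting vertex), which is then resolved by a solution vertex outside the gadget via the identity $\dist(w,f')=\dist(w,c)+1$. Your version merely makes the case split on which of the three gadget vertices $v_f$ is more explicit; no substantive difference.
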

\begin{proof}
Without loss of generality, suppose that $v_1,v_2,v_c\in F^1(u_r^i,v_r^i)$ for some $r\in \{1,2,3\},i\in [n]$, where $v_c$ is the connecting vertex of $F^1(u_r^i,v_r^i)$, $v_1,v_2$ are the false twins and $v_1\in S'$. Then obviously every vertex pair of $\{v_1\}\times V(G')\setminus \{v_1\}$ is resolved by $v_1$. Every vertex pair of $\{v_2\}\times V(G')\setminus \{v_2\}$ is resolved $v_1$ except the vertex pair $\{v_2,v_c\}$. Let $w_f$ be an arbitrary vertex of $S'\setminus F^1(u_r^i,v_r^i)$. Then there is a shortest path from $w_f$ to $v_2$ going through $v_c$. Thus $\dist(w_f,v_2)=\dist(w_f,v_c)+1$ and $\{v_2,v_c\}$ is resolved by $w_f$. For any vertex $u\in V(G')\setminus F^1(u_r^i,v_r^i)$, $\dist(v_1,u)>\dist(v_1,v_c)=1$. Then the correctness of the lemma follows.
\end{proof}

With Lemmas~\ref{UxU}-~\ref{FxG}, we show that every pair of distinct vertices of $G'$ is resolved by some vertex of $S'$.
It follows that Lemma~\ref{completeness} is true and this proves the completeness of the reduction.

Finally, with Lemmas~\ref{multiRSnp},~\ref{soundness},~\ref{completeness} and~\ref{pathwidth} in hand, we can prove the correctness of Theorem~\ref{thm:main}.

\bibliography{refs}

\begin{thebibliography}{10}

\bibitem{BelmonteFGR17}
R{\'{e}}my Belmonte, Fedor~V. Fomin, Petr~A. Golovach, and M.~S. Ramanujan.
\newblock Metric dimension of bounded tree-length graphs.
\newblock {\em {SIAM} J. Discret. Math.}, 31(2):1217--1243, 2017.
\newblock \href {https://doi.org/10.1137/16M1057383}
  {\path{doi:10.1137/16M1057383}}.

\bibitem{BonnetP19}
{\'{E}}douard Bonnet and Nidhi Purohit.
\newblock Metric dimension parameterized by treewidth.
\newblock In Bart M.~P. Jansen and Jan~Arne Telle, editors, {\em 14th
  International Symposium on Parameterized and Exact Computation, {IPEC} 2019,
  September 11-13, 2019, Munich, Germany}, volume 148 of {\em LIPIcs}, pages
  5:1--5:15. Schloss Dagstuhl - Leibniz-Zentrum f{\"{u}}r Informatik, 2019.
\newblock \href {https://doi.org/10.4230/LIPIcs.IPEC.2019.5}
  {\path{doi:10.4230/LIPIcs.IPEC.2019.5}}.

\bibitem{DiazPSL17}
Josep D{\'{\i}}az, Olli Pottonen, Maria~J. Serna, and Erik~Jan van Leeuwen.
\newblock Complexity of metric dimension on planar graphs.
\newblock {\em J. Comput. Syst. Sci.}, 83(1):132--158, 2017.
\newblock \href {https://doi.org/10.1016/j.jcss.2016.06.006}
  {\path{doi:10.1016/j.jcss.2016.06.006}}.

\bibitem{Eppstein15}
David Eppstein.
\newblock Metric dimension parameterized by max leaf number.
\newblock {\em J. Graph Algorithms Appl.}, 19(1):313--323, 2015.
\newblock \href {https://doi.org/10.7155/jgaa.00360}
  {\path{doi:10.7155/jgaa.00360}}.

\bibitem{EpsteinLW15}
Leah Epstein, Asaf Levin, and Gerhard~J. Woeginger.
\newblock The (weighted) metric dimension of graphs: Hard and easy cases.
\newblock {\em Algorithmica}, 72(4):1130--1171, 2015.
\newblock \href {https://doi.org/10.1007/s00453-014-9896-2}
  {\path{doi:10.1007/s00453-014-9896-2}}.

\bibitem{FernauHHMS15}
Henning Fernau, Pinar Heggernes, Pim van~'t Hof, Daniel Meister, and Reza Saei.
\newblock Computing the metric dimension for chain graphs.
\newblock {\em Inf. Process. Lett.}, 115(9):671--676, 2015.
\newblock \href {https://doi.org/10.1016/j.ipl.2015.04.006}
  {\path{doi:10.1016/j.ipl.2015.04.006}}.

\bibitem{MDdef1}
Frank Harary and Robert~A Melter.
\newblock On the metric dimension of a graph.
\newblock {\em Ars Combin}, 2(1):191--195, 1976.

\bibitem{HartungN13}
Sepp Hartung and Andr{\'{e}} Nichterlein.
\newblock On the parameterized and approximation hardness of metric dimension.
\newblock In {\em Proceedings of the 28th Conference on Computational
  Complexity, {CCC} 2013, K.lo Alto, California, USA, 5-7 June, 2013}, pages
  266--276. {IEEE} Computer Society, 2013.
\newblock \href {https://doi.org/10.1109/CCC.2013.36}
  {\path{doi:10.1109/CCC.2013.36}}.

\bibitem{karp1972reducibility}
Richard~M Karp.
\newblock Reducibility among combinatorial problems.
\newblock In {\em Complexity of computer computations}, pages 85--103.
  Springer, 1972.

\bibitem{KhullerRR96}
Samir Khuller, Balaji Raghavachari, and Azriel Rosenfeld.
\newblock Landmarks in graphs.
\newblock {\em Discret. Appl. Math.}, 70(3):217--229, 1996.
\newblock \href {https://doi.org/10.1016/0166-218X(95)00106-2}
  {\path{doi:10.1016/0166-218X(95)00106-2}}.

\bibitem{kirousis1985interval}
Lefteris~M Kirousis and Christos~H Papadimitriou.
\newblock Interval graphs and seatching.
\newblock {\em Discrete Mathematics}, 55(2):181--184, 1985.

\bibitem{MDdef2}
Peter~J Slater.
\newblock Leaves of trees.
\newblock {\em Congr. Numer}, 14(37):549--559, 1975.

\end{thebibliography}
\end{document}